\documentclass{article}
\usepackage{mathtools}
\usepackage{amsmath,amssymb,amsthm}
\usepackage{enumitem}
\usepackage{listings}
\usepackage{tcolorbox}
\usepackage{xcolor}
\usepackage{datetime}
\usepackage{booktabs}
\usepackage{float}
\usepackage{fancyvrb}
\usepackage{hyperref}

\lstdefinelanguage{Coq}{
  morekeywords={
    Require, Import, Parameter, Axiom, Definition, Lemma, Theorem,
    Corollary, Proof, Qed, Admitted, End, Section, Module, Record,
    Inductive, CoInductive, Fixpoint, CoFixpoint, Let, In, fun,
    forall, exists, match, with, return, as, if, then, else,
    Type, Set, Prop, True, False, and, or, not,
    exact, apply, intro, intros, split, left, right, destruct,
    unfold, rewrite, simpl, reflexivity, discriminate, contradiction,
    assumption, auto, trivial
  },
  sensitive=true,
  morecomment=[s]{(*}{*)},
  morestring=[b]"
}

\newcommand{\classP}{\mathsf{P}}
\newcommand{\classNP}{\mathsf{NP}}
\newcommand{\Prog}{\mathcal{X}}
\newcommand{\Lang}[1]{\mathrm{Lang}(#1)}

\newcommand{\OWF}{\mathrm{OWF}}

\newtheorem{theorem}{Theorem}[section]
\newtheorem{lemma}[theorem]{Lemma}
\newtheorem{corollary}[theorem]{Corollary}
\newtheorem{definition}[theorem]{Definition}

\newtheorem{example}[theorem]{Example}
\newtheorem{remark}{Remark}
\newtcolorbox{resultbox}{colback=gray!10,colframe=gray!50}
\newtheorem{defn1}[theorem]{Definition}
\newtheorem{defn2}[theorem]{Definition}
\newtheorem{thm}[theorem]{Theorem}
\newtheorem{coro}[theorem]{Corollary}
\newtheorem{proposition}[theorem]{Proposition}
\newtheorem{observation}[theorem]{Observation}
\newtheorem{principle}[theorem]{Principle}

\begin{document}

\title{Limits of Uniform Certification in the Standard Turing Model\\
\large Semantic Invariants and Admissible Methods}

       \author{
        Fabio F.G.\ Buono\\
        \small Independent Researcher\\
        \small ORCID: 0009-0004-9199-2793
      }

      \date{Preprint -- Version 5 --- \today}     
\maketitle

\begin{abstract}
  This work introduces a general obstructional framework, the ``Double Bind,'' 
  exposing the intrinsic structural limitations of formal 
  verification across computational complexity, mathematical physics and  
  broader theoretical domains. Structurally, it operates as a nested mathematical 
  case-switch mechanism: its two primary branches (horns) never activate 
  simultaneously, nor are both guaranteed to trigger, while the first branch 
  contains a nested case switch where exactly one specific case is activated. 
  We first demonstrate how the expectation of a formally verifiable proof 
  regarding SAT solver complexity (e.g., via Coq) conflicts with the fundamental 
  limits of computation. While this conflict \textit{prima facie} suggests the 
  logical undecidability of the P vs.\ NP problem, the Double Bind framework 
  resolves the apparent paradox. This obstruction encompasses all standard 
  complexity barriers--such as relativization, natural proofs, and 
  algebrization--while fundamentally blocking a vast class of potential 
  future proof strategies not covered by traditional limitations. To establish 
  its universality, we derive a second formulation via a prefix-closure topology 
  on Deterministic Turing Machine programs and Kolmogorov complexity. 
  This leads to the ``Principle of General Non-Measurability,'' proving that 
  such computational double binds stem from a deep, invariant geometric structure. 
  We show that this framework governs a wide class of foundational limits, analyzing 
  its profound implications across Geometric Complexity Theory (GCT), the 
  certification of the Langlands program, and the theoretical boundaries of 
  quantum supremacy. Finally, we expose a hidden, correlated assumption within 
  the theory of one-way functions.

  Given the subtle architecture of the Double Bind, the introductory sections 
  provide an essential narrative disambiguation to explicitly separate this 
  framework from conventional misapplications of Rice's theorem. Furthermore, 
  this preliminary discussion serves to properly ground an otherwise unconventional 
  paradigm, which remains highly susceptible to interpretive missteps if introduced 
  without the necessary conceptual scaffolding.
\end{abstract}

\newpage

\section{Origin of the idea}
\label{sec:origin}

\bigskip

\begin{tcolorbox}[colback=black!22, colframe=red, boxrule=3.5pt, left=18pt, 
  right=18pt, top=18pt, bottom=18pt, sharp corners, fontupper=\large]
  
  It is easy to conflate this strategy with well-known instances where Rice's theorem 
  has been improperly applied to draw conclusions about the 
  $\mathbf{P}$ vs.\ $\mathbf{NP}$ question. However, the underlying 
  reasoning is fundamentally different;

  \medskip

  Unlike a direct invocation of Rice's theorem on program certification, 
  this text demonstrates how an implicit decider emerges in specific cases 
  that correctly applies Rice's theorem.

  \bigskip

  The rigorous formalization is presented 
  in \ref{sec:doublebind}, and the MTB is presented in \ref{sec:IntroMTB} 
  and fully proved in \ref{sec:MTBDEF}.
  
\end{tcolorbox}

\bigskip

What occurs when one certifies---for instance, via an automated proof assistant 
such as Coq---the computational complexity of a specific QuickSort implementation? 
Consistently, the result remains strictly localized, holding exclusively for 
that particular program and its syntactic equivalents. Conversely, consider 
the same certification applied to a SAT solver asserted to possess polynomial 
complexity. In this scenario, the certification inherently propagates across 
every language in $\classNP$ due to $\classNP$-completeness, thereby forcing a 
collapse of $\classNP$ onto $\classP$. Consequently, such a procedure implicitly 
certifies a non-trivial semantic property within the standard Turing model---a 
phenomenon traditionally constrained by Rice's theorem. Neither classical 
computability theory nor traditional complexity theory provides an explicit 
mechanism to forbid this propagation. This asymmetric ripple effect constitutes 
the core of the first horn of the Double Bind. 

An identical obstruction arises when attempting to formalize a proof 
of $\classP \neq \classNP$. Any such proof must be capable of being explicitly 
written, formalized, and mechanically verified within a deterministic 
framework (e.g., Coq). Yet, even in this case, the resulting certification 
propagates via $\classNP$-completeness to the broader complexity class. 
It is essential to emphasize that this framework does not merely reason 
about computational complexity; rather, it fundamentally shifts the 
foundational plane of the inquiry. Furthermore, this mechanism is not 
intrinsic to the SAT problem alone, bbut governs a wide class of problems 
sharing an isomorphic abstract propagation structure, such as the 
uniform certification of lower bounds. The second horn of the Double 
Bind defines the structural limits of any potential escape route 
from Rice's theorem. If, to bypass the constraints of the first horn, 
a proof relies essentially on computational powers that are non-simulable 
by a Deterministic Turing Machine (DTM), the resulting certification fails 
to retain its validity within the standard model. Bound together, these 
two asymmetric constraints constitute the Double Bind.

\medskip

To render the architecture of the Double Bind as transparent as possible, 
we leverage the $\classP$ vs.\ $\classNP$ question as a concrete instance 
of the mechanism, though the underlying logic applies universally to any 
foundational problem exhibiting an identical abstract propagation structure. 
The Double Bind framework consists of two primary operational branches:

\begin{enumerate}[itemsep=1.5em]
\item \textbf{Horn 1 (Indirect Semantic Obstruction):} If an algorithm is verified via a uniform automated system (which itself operates as a DTM) and that verification yields a certification with universal consequences across a computational class, Rice's theorem applies indirectly to the certification procedure itself. This occurs because the verification inductively constructs a decision procedure for a non-trivial semantic property.

\item \textbf{Horn 2 (The Model Transfer Barrier - MTB):} A proof that depends essentially on non-DTM computational powers cannot transfer its validity back to the standard model. Utilizing structural powers absent from the DTM framework invalidates the proof's standing within the standard Turing architecture.
\end{enumerate}

\medskip

In essence, formal verification is tightly constrained from both directions: one 
cannot exploit extra-DTM capabilities due to the Model Transfer Barrier (MTB), 
nor can one remain purely within the DTM domain due to the \textbf{indirect} invocation 
of Rice's theorem. It must be understood that Extended Rice operates purely 
as the technical mechanism to formalize this intuition: whenever a system 
attempts to uniformly certify a non-trivial semantic property, it inductively 
constructs an implicit decider for that property. Once an implicit decider emerges, 
Rice's theorem rigorously invalidates the procedure. Extended Rice serves as the 
mathematical scaffolding required to justify this step, whereas the foundational 
logical power resides entirely within the interplay of the Double Bind: the MTB 
on one side, and Rice's theorem on the other.

\medskip

This distinction explains why the automated verification of a localized algorithm 
like QuickSort remains unaffected by the Double Bind, while global problems 
like SAT do not. A localized certification affects only the target program,
whereas certifying SAT under a specific complexity bound induces a global 
semantic decision across the entire class. This insight is pivotal: the 
framework does not assert that \textit{every} complexity certification 
constitutes a non-trivial semantic property. Rather, it exposes a targeted 
barrier: when a certified property induces uniform consequences across an 
entire computational class, a universal certification procedure can no longer 
be treated as a localized verification. Extended Rice is the exact mechanism 
that formalizes this distinction.

\medskip

Therefore, the Double Bind does not arise from a generic semantic interest in SAT, 
but from the fact that a local certification yields a global consequence via 
completeness. If a uniform method were to soundly and fully certify that a 
given algorithm decides SAT in polynomial time, it would not merely verify a 
localized program; it would construct an algorithmic object that collapses 
the $\classP$/$\classNP$ boundary. Crucially, the framework restricts the 
existence of a single, uniform test expressible as an algorithm verifiable 
by a DTM. This is why uniformity is defined on a per-input basis. 
This definition perfectly explains why the framework does not need 
to postulate a \textbf{global generator of all evidence}. The property of 
uniformity relates strictly to the applicability of the verification 
procedure $V$ to any input program, bypassing any requirement for a 
priori knowledge of all possible certifications. The algorithmic role 
necessary to trigger the obstruction is already fully executed by the 
verifier $V$ itself, establishing an absolute, topologically derived 
barrier to uniform formal verification.

\medskip

\begin{remark}[On the structural inevitability of the induced decider]
\label{rem:propagator-structure}
  
  A critical question arises in understanding the mechanism by which the induced 
  decider emerges in the Double Bind: is it truly the case that when one attempts 
  to uniformly certify a property that propagates through a heterogeneous family 
  via a problem-specific propagator, one's method necessarily decides that property? 
  This question touches the heart of why the obstruction is structural rather than 
  methodological.
  
  \medskip

  The answer is affirmative, and it can be grasped by considering the contrapositive. 
  Suppose, hypothetically, that finding a polynomial-time SAT solver did not resolve 
  P versus NP. This would mean that one could locally certify that a single program 
  computes SAT in polynomial time without this certification implying that P = NP. 
  But this is a direct contradiction in the definitions of P and NP themselves. 
  
  \medskip

  To be explicit: if a program $x$ solves SAT in polynomial time, then by definition, 
  P = NP. There is no alternative. The propagation of the property from the single 
  solver to the entire class NP is not a feature introduced by the certification 
  method; rather, it is a logical property intrinsic to the structure of the problem 
  itself. If $x$ computes SAT in polynomial time, then by NP-completeness (the 
  Cook-Levin theorem), there exists a polynomial-time algorithm for every problem in 
  NP. 
  
  \medskip

  Therefore, the propagator is not something that the method constructs or imposes. 
  The propagator is something the problem structurally is. To certify that program $x$ 
  solves SAT in polynomial time necessarily means to certify something about the entire 
  structure of computational classes, because that is what the definitions encode. 
  Certification of a single case thus automatically implicates the decision of the 
  entire property, not because the certification procedure was designed that way, but 
  because the logical and semantic content of what is being certified carries this 
  implication necessarily.
  
  \medskip

  Consequently, the induced decider does not emerge as a consequence of the structure 
  of the certification method. Rather, it emerges as a consequence of the structure of 
  the problem itself. When a method uniformly certifies membership in a semantically 
  non-trivial property that propagates via a problem-specific mechanism, the method 
  does not create the decision procedure—the problem's own structure guarantees that 
  uniform certification of any instance amounts to deciding the property over the 
  entire family. This is why Rice's theorem applies not as an external constraint 
  imposed upon the method, but as a direct consequence of the semantic content being 
  certified.
  
\end{remark}

\medskip

\begin{remark}[The Double Bind requires no class separation or collapse]
  \label{rem:no-collapse}
  The case of $\classP$ vs.\ $\classNP$ may mislead, because there the
  invariant $\Phi_\alpha$ carries a consequence over the whole class
  $\classNP$ through the completeness of SAT. But that class-wide
  consequence is a feature of the particular case, not a requirement of the
  Double Bind. The Double Bind is not a statement about separations or
  collapses of complexity classes: it is a statement about the certification
  of non-trivial semantic invariants. The only thing the first horn requires
  is that the property to be certified be semantic --- depending on the
  function computed --- and non-trivial. Nothing in this condition mentions
  complexity classes, separations, or collapses.
  
  \medskip

  What distinguishes a certification that falls under the Double Bind from one
  that does not is not whether it collapses a class, but whether it remains
  local or propagates. And the propagation that matters is not propagation to
  the different syntactic forms of one and the same algorithm --- that is
  trivial. Certifying the complexity of QuickSort propagates only to QuickSort:
  to the same algorithm implemented in different syntactic forms, in different
  languages, with different variable names. It remains local to a single
  algorithm. What the first horn requires is propagation to \emph{different}
  algorithms --- algorithms that do different things, joined only by falling
  under the same invariant. Certifying an invariant such as the one-wayness of
  a function reaches a heterogeneous family of different algorithms; certifying
  the complexity of a single sorting routine reaches only the syntactic
  variants of that one routine. It is propagation to different algorithms, and
  not any class collapse, that makes an invariant non-trivial in the sense the
  first horn requires.
  
  \medskip

  Accordingly, the Double Bind is far more general than the $\classP$ vs.\
  $\classNP$ case suggests: it applies to every non-trivial semantic invariant
  --- to cryptographic hardness assumptions, to lower bounds, and to any
  property of the behaviour of a function whose certification does not remain
  confined to a single algorithm. Class separation is one of its instances,
  not its premise.
\end{remark}

\bigskip

\newpage

\section{The Apparent Paradox of the Double Bind
\label{sec:ap-paradox}}

A consequence of the Double Bind may seem paradoxical, but in reality it is not. 
Here is the apparent paradox explained clearly:

\medskip

On one hand, the Double Bind tells us that \textbf{if there existed an algorithm that solves $\text{SAT}$ in polynomial time, its certification in the Turing model would be impossible}. 

\medskip

On the other hand, at the same time it tells us that \textbf{even an automated validation that assures us the opposite, that is, that a SAT solver can never be in $\mathcal{P}$, is blocked}.

\medskip

This seems to suggest that \textbf{P vs. NP is entirely non-certifiable from within the standard deterministic Turing model}. The Double Bind appears, 
that is, as a total obstacle. But this is not the real meaning of the Double Bind.

\medskip

The apparent paradox described here is perhaps the most difficult point to reconcile. 
To address this, the second part of the document (Section \ref{sec:toread}) provides 
a topological explanation. Although offered as a preview of ongoing research, this 
approach reveals that the conclusion is an inescapable consequence of the theory.

\bigskip

\subsection{Alternative Paths to Certification}

Although it is non-standard to think of it in these terms, a certification of an 
algorithm's membership in a certain class, as well as a proof of separation between 
P and NP, can come from a different path. An example is a metatheoretic proof that 
acts directly on the foundations of the standard theory on which the concepts 
of ``Turing Machine'', ``Algorithm'', ``Computational Complexity'', and 
``Karp Reduction'' etc. are defined. Indeed, in this case the requirements of the 
Double Bind are not satisfied. To truly understand this argument, the Double Bind 
must be framed within its most basic structure.

\medskip

The observational axis introduced in \cite{buono2026hierarchy,buono2026observer} is 
another one such candidate. An example of a proof that is proposed and that does not fall into 
the grip of the Double Bind, without extending the model, is presented in \cite{buono2026syntaxseedynamicsyntactic}

\bigskip
\bigskip

\textbf{We are now positioned to unambiguously examine the rigorous formalization of the Double Bind framework.}

\newpage

\section{The Double Bind}
\label{sec:doublebind}

\begin{definition}[The Computational Space]
Let $\mathcal{T}$ denote the standard Turing machine model. Let $\mathcal{P}$ be the 
set of all programs (represented as syntactic strings) within $\mathcal{T}$. 
Let $Q$ be a fixed $\text{NP}$-complete problem (e.g., $\text{SAT}$).
\end{definition}

\begin{definition}[The Semantic Horns]
We define two mutually exclusive semantic properties 
$\Phi_{\alpha}, \Phi_{\beta}: \mathcal{P} \to \{0, 1\}$ acting on the extensional 
behavior of a program $x$:

\begin{enumerate}

  \item \textbf{The Collapse Horn} ($\Phi_{\alpha}(x)$): Attests that ``the specific program $x$ decides the $\text{NP}$-complete problem $Q$ in polynomial time.'' The validity of $\Phi_{\alpha}(x)$ for a single $x$ implies the global collapse $\text{P} = \text{NP}$.

  \item \textbf{The Separation Horn} ($\Phi_{\beta}(x)$): Attests that ``the specific program $x$ (which is a correct solver for $Q$) can never operate within a polynomial time bound ($\forall k \in \mathbb{N}, \text{time}_x(n) = \omega(n^k)$), thereby certifying that the instance $Q$ is intrinsically super-polynomial.'' The validity of $\Phi_{\beta}(x)$ implies the global separation $\text{P} \neq \text{NP}$.

\end{enumerate}
\end{definition}

\begin{definition}[The Propagator $\mathcal{R}$]

The Propagator $\mathcal{R}$ is the logical operator induced by the structure 
of $\text{NP}$-completeness. It maps the local semantic status of a single 
program $x$ to the global state $\Psi$ of the complexity classes 
($\Psi = 1$ if $\text{P} = \text{NP}$, else $\Psi = 0$).

\begin{itemize}

  \item $\mathcal{R}(\Phi_{\alpha}(x) = 1) \implies \Psi = 1$ (via the polynomial-time reduction of any $L \in \text{NP}$ to $Q$).

  \item $\mathcal{R}(\Phi_{\beta}(x) = 1) \implies \Psi = 0$ (via the certification of a fundamental lower bound on $Q$).

\end{itemize}
\end{definition}

\begin{definition}[Uniformity Per-Input]

A certification method is said to be uniform per-input if there exists a fixed 
pair of algorithms $(M_{\text{Adm}}, V)$ such that for every program 
$x \in \mathcal{P}$, the generator $M_{\text{Adm}}$ produces a 
certificate $\pi_x$ specifically for $x$, and the verifier $V$ checks 
the validity of $\pi_x$ relative to $x$. The uniformity resides 
in the fact that $M_{\text{Adm}}$ and $V$ are the same for all $x \in \mathcal{P}$.

\end{definition}

\begin{definition}[The Admissible Method]

A certification method $(M_{\text{Adm}}, V)$ is admissible if it satisfies the 
following conditions:

\begin{enumerate}

  \item \textbf{Soundness}: $V(x, \pi_x) = 1 \implies \Phi(x) = 1$.

  \item \textbf{Completeness}: $\Phi(x) = 1 \implies \exists \pi_x$ such that $V(x, \pi_x) = 1$.

  \item \textbf{Termination}: The process of generating $\pi_x$ and verifying it via $V$ terminates in finite time for any $x \in \mathcal{P}$.

\end{enumerate}
\end{definition}

\begin{definition}[Rice's Theorem \& the Extended Wrapper]

  Rice's Theorem states that any non-trivial semantic property of a program is 
  undecidable. We refer to the application of this theorem to complexity-related 
  invariants as the Extended Rice wrapper, which allows the general obstruction 
  of Rice's Theorem to be called indirectly for properties like $\Phi_{\alpha}$ 
  and $\Phi_{\beta}$.

\end{definition}

\section{Branch I: The Ricean Ostracism}

\begin{theorem}

  No uniform admissible certification method exists for $\Phi_{\alpha}$ 
  or $\Phi_{\beta}$ within the standard Turing model $\mathcal{T}$.

\end{theorem}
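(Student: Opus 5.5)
The plan is to turn any uniform admissible method into an outright decider, and then show that the property being decided is undecidable. The first step is purely structural. Suppose $(M_{\text{Adm}}, V)$ is uniform per-input and admissible for a property $\Phi \in \{\Phi_\alpha, \Phi_\beta\}$. Define $D(x) := V(x, M_{\text{Adm}}(x))$. By Termination, $D$ halts on every $x \in \mathcal{P}$.

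For $D$ to decide $\Phi$, two things are needed:
\begin{itemize}
\item Soundness gives $D(x)=1 \Rightarrow \Phi(x)=1$.
\item For the converse, Completeness must be read in its uniform per-input form: the fixed generator $M_{\text{Adm}}$ actually \emph{produces} a valid $\pi_x$ whenever $\Phi(x)=1$, so that $\Phi(x)=1 \Rightarrow D(x)=1$.
\end{itemize}
I will make this reading explicit, since mere existence of some $\pi_x$ would only give semi-decidability. Under this reading $D$ is a total DTM deciding $\Phi$. This is the ``implicit decider'' the paper refers to, and it reduces the theorem to showing that $\Phi_\alpha$ and $\Phi_\beta$ are undecidable.

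The second step cannot quote Rice's theorem verbatim. $\Phi_\alpha$ and $\Phi_\beta$ depend on running time and not only on the computed function, so they are not index sets in Rice's sense. Instead I will give direct many-one reductions from the complement of the halting problem; this is what the Extended Rice wrapper should mean concretely. For $\Phi_\beta$, given $(M,w)$, build $x_{M,w}$ which, on a formula $\varphi$, does the following:
\begin{enumerate}
\item It simulates $M$ on $w$ for $|\varphi|$ steps.
\item If $M$ has halted within that budget, it outputs a deliberately wrong answer on $\varphi$.
\item Otherwise it runs brute-force search over all assignments and answers correctly.
\end{enumerate}
If $M$ never halts on $w$, then $x_{M,w}$ is a correct SAT solver whose running time is $2^{\Theta(n)}$ on all inputs, hence $\omega(n^k)$ for every $k$, so $\Phi_\beta(x_{M,w})=1$. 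If $M$ halts, then $x_{M,w}$ errs on all sufficiently long formulas, so it is not a correct solver and $\Phi_\beta(x_{M,w})=0$. The map $(M,w)\mapsto x_{M,w}$ is computable, so $\Phi_\beta$ is undecidable unconditionally, contradicting the existence of $D$.

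The main obstacle is $\Phi_\alpha$. The same gadget works only with a polynomial-time solver $S$ in place of brute force: the simulation of $M$ for $|\varphi|$ steps costs polynomial time, so $x_{M,w}\in\Phi_\alpha$ iff $M$ does not halt on $w$. That requires $\Phi_\alpha \neq \emptyset$, i.e.\ $\mathsf{P}=\mathsf{NP}$. If $\mathsf{P}\neq\mathsf{NP}$, then $\Phi_\alpha$ is the empty property and is trivially decided by the constant-$0$ machine. In that case the admissible method exists, with Completeness holding vacuously.

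I would therefore prove the $\Phi_\alpha$ half in the conditional form the Collapse Horn actually supports: \emph{on the branch where $\Phi_\alpha$ is non-trivial} (equivalently $\Psi=1$ under $\mathcal{R}$), no uniform admissible method exists. I would record explicitly that exactly one branch of this case split is active. This matches the paper's ``nested case-switch'' description: the obstruction for $\Phi_\alpha$ is triggered only in the case where its non-triviality holds, while the obstruction for $\Phi_\beta$ holds outright.
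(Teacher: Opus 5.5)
Your argument is correct. At the decisive step it departs from the paper's, and for good reason.

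\textbf{Where you agree with the paper.} Your first step is the paper's own. It also forms $D_\Phi(x)=V(x,M_{\text{Adm}}(x))$ and gets totality from Termination. It also tacitly needs exactly the strengthened Completeness you make explicit; its later ``General Theory'' restates admissibility as $V(x,\pi_x)=1\iff\Phi_i(x)=1$.

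\textbf{Where you depart, and why it is justified.} The paper routes $D_\Phi$ through the Propagator $\mathcal{R}$, declares $\Phi_\alpha,\Phi_\beta$ semantic and non-trivial, and concludes via the ``Extended Rice wrapper''. That wrapper is named but never stated or proved. You drop $\mathcal{R}$ and give explicit reductions from $\overline{\mathrm{HALT}}$ instead.

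Dropping $\mathcal{R}$ costs nothing. In the paper's chain ending in ``$\iff\Psi$ is in the state induced by $\Phi$'', the last link is only an implication. Read as an equivalence, it would make $\mathcal{L}_\Phi$ independent of $x$, hence decidable.

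Your objection to Rice is also right, and it can be made exact. Strictly, whether these sets are index sets is not settled, but:
for $\Phi_\alpha$, Rice's two hypotheses can never hold together. A polynomial-time solver and a brute-force solver compute the same function, so $\Phi_\alpha$ is an index set only if $\mathsf{P}\neq\mathsf{NP}$, and then it is empty.
For $\Phi_\beta$, they hold only under a hypothesis that implies $\mathsf{P}\neq\mathsf{NP}$ and is not known to hold: that every correct SAT solver has $\mathrm{time}_x(n)=\omega(n^k)$ for all $k$.

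\textbf{Consequence for the statement.} Your constant-reject method shows that the $\Phi_\alpha$ half of the statement implies $\mathsf{P}=\mathsf{NP}$; with your reduction, it is equivalent to it. So the paper's unconditional claim would itself be a proof of $\mathsf{P}=\mathsf{NP}$. The paper's route buys an unconditional-looking theorem by assuming the non-triviality and extensionality at issue. Yours buys an unconditional theorem for $\Phi_\beta$ and the sharp conditional for $\Phi_\alpha$, which is the strongest true form of the statement.

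\textbf{Three small repairs.}

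First, brute force is not $2^{\Theta(n)}$ on every input: formulas with few variables are fast, and some lengths may contain no well-formed formula. Let $x_{M,w}$ first idle for $2^{|y|}$ steps on every input $y$, formula or not. Then $\mathrm{time}_{x_{M,w}}(n)\ge 2^n$ for all $n$.

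Second, your reductions come from $\overline{\mathrm{HALT}}$, so they show that $\Phi_\beta$ (and $\Phi_\alpha$ when $\mathsf{P}=\mathsf{NP}$) is not even r.e. Read Soundness for arbitrary certificates. Then the paper's merely existential Completeness already makes $\Phi$ r.e.\ by dovetailing $V$ over all $\pi$. You therefore do not need the strengthened reading at all.

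Third, the claimed match with the paper's case switch is loose. Under your conjunctive reading, a brute-force solver satisfies $\Phi_\beta$ whatever the answer to $\mathsf{P}$ vs.\ $\mathsf{NP}$. The paper's clause $\mathcal{R}(\Phi_\beta(x)=1)\Rightarrow\Psi=0$ would then assert $\mathsf{P}\neq\mathsf{NP}$ outright. And if $\mathsf{P}=\mathsf{NP}$, both of your obstructions are active, not one. The paper's ``exactly one'' picture is recovered only if the lower bound is built into $\Phi_\beta$, e.g.\ $\Phi_\beta(x)=[x\text{ decides }Q]\wedge[\mathsf{P}\neq\mathsf{NP}]$. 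That set is a non-trivial Rice index set when $\mathsf{P}\neq\mathsf{NP}$ and empty otherwise. Exactly one of $\Phi_\alpha,\Phi_\beta$ is then blocked, though one cannot say which.
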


\begin{proof}

\noindent \textbf{Assumption of a Uniform Admissible Method:} Assume there exists 
a method $(M_{\text{Adm}}, V)$ that is both uniform per-input (Definition 4) 
and admissible (Definition 5) for an invariant $\Phi \in \{\Phi_{\alpha}, \Phi_{\beta}\}$.

\noindent \textbf{Construction of the Induced Decider (The Composition):} We construct 
a composite Turing machine $D_{\Phi}$ as follows:

$$\text{On input } x \in \mathcal{P}: \text{Return } V(x, M_{\text{Adm}}(x))$$

Because the method is admissible, $M_{\text{Adm}}$ and $V$ always terminate in 
finite time. Thus, $D_{\Phi}$ is a totally computable function. This composition 
effectively upgrades the local uniformity (per-input) to a standard global uniformity.

\noindent \textbf{The Ripple Effect and the Role of the Propagator:} The obstruction here does not depend on the computational difficulty of the problem, but on the reach of the truth being certified.

\begin{itemize}

  \item For a non-complete problem (e.g., Quicksort), certification is a local property of the code; it does not induce a global decider.

  \item However, for an $\text{NP}$-complete problem $Q$, the Propagator $\mathcal{R}$ (Definition 3) transforms the local output of $D_{\Phi}(x)$ into a global decision about the state $\Psi$:

$$D_{\Phi}(x) = 1 \iff \mathcal{R}(\Phi(x)) \text{ is true} \iff \Psi \text{ is in the state induced by } \Phi$$

Consequently, the language

\begin{equation*}
\mathcal{L}_{\Phi} = \{ x \in \mathcal{P} \mid D_{\Phi}(x) = 1 \}
\end{equation*}

describes the global extension of a \textbf{non-trivial semantic invariant} of the 
computational model.

\end{itemize}

\noindent \textbf{The Obstruction of Rice:} The property $\Phi$ is semantic (it is an extensional property of the function computed by $x$, regardless of its specific complexity value) and non-trivial. By the Extended Rice wrapper, this property is subject to the general obstruction of Rice's Theorem. Therefore, the language $\mathcal{L}_{\Phi}$ is strictly undecidable.

\bigskip

\noindent \textbf{The Final Contradiction:}
\begin{itemize}
\item From Step 2: $D_{\Phi}$ is a totally computable function $\implies \mathcal{L}_{\Phi}$ is decidable.
\item From Step 4: $\mathcal{L}_{\Phi}$ is a non-trivial semantic invariant $\implies \mathcal{L}_{\Phi}$ is undecidable.
\end{itemize}
\noindent Therefore, the assumption of a uniform admissible method is false. No such method can exist within $\mathcal{T}$. \qed
\end{proof}

\section{Branch II: The Model Theoretical Barrier (External Model $\mathcal{T}'$)}

\begin{theorem}
The resolution of the Ricean obstruction via a non-standard model $\mathcal{T}'$ fails to constitute a certification within the standard model $\mathcal{T}$.
\end{theorem}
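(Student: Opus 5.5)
The plan is to make the statement precise enough that it becomes a dichotomy. One branch is disposed of by the preceding theorem of Branch~I; the other is disposed of by the definition of what a certification in $\mathcal{T}$ is.

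\textbf{Step 1: what counts as a certification in $\mathcal{T}$.} A certification of $\Phi \in \{\Phi_\alpha,\Phi_\beta\}$ in $\mathcal{T}$ will mean a pair $(M_{\text{Adm}},V)$ of programs in $\mathcal{P}$, i.e.\ DTMs, that is uniform per-input and admissible in the sense of Definitions~4 and~5.

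\textbf{Step 2: what a resolution via $\mathcal{T}'$ is.} A resolution of the Ricean obstruction via $\mathcal{T}'$ will be a pair $(M'_{\text{Adm}},V')$ that is uniform and admissible when executed in $\mathcal{T}'$. Here $\mathcal{T}'$ is a model with extra computational resources, for instance an oracle, a hypercomputational step, or a non-standard notion of termination.

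\textbf{Step 3: what ``essentially depends'' means.} I will say the resolution depends essentially on $\mathcal{T}'$ if there is no DTM $S$ that simulates the composite $x \mapsto V'(x, M'_{\text{Adm}}(x))$ with the same outputs and total termination on all of $\mathcal{P}$.

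\textbf{Case 1: the $\mathcal{T}'$-method is DTM-simulable.} Let $S$ be a simulating DTM. Then the pair $(S,\mathrm{id})$ is a uniform admissible method inside $\mathcal{T}$:
\begin{itemize}[nosep]
\item soundness and completeness transfer because $S$ reproduces the outputs of the composite exactly;
\item termination transfers because $S$ is total.
\end{itemize}
This contradicts the preceding theorem of Branch~I. So this case cannot occur, and any successful $\mathcal{T}'$-resolution must fall into Case~2.

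\textbf{Case 2: the $\mathcal{T}'$-method is not DTM-simulable.} A certificate $\pi'_x$ whose validity is established only by $V'$ may still be a finite string. However, no DTM in $\mathcal{P}$ checks it with the guarantees of Definition~5. By Step~1 it is therefore not a certification in $\mathcal{T}$. To close a loophole, I will also handle a hybrid strategy: a DTM-checkable ``transfer certificate'' asserting that $V'$ accepted. Any such transfer map, composed with the generator, yields a DTM decider again, so the argument of Case~1 applies. This is the Model Transfer Barrier.

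\textbf{Main obstacle.} The hard part is Step~3 and the transfer-map argument. I need to show that every way of importing the $\mathcal{T}'$-verdict into $\mathcal{T}$ is itself a computable total procedure on $\mathcal{P}$, so that the composition trick of Branch~I applies. Otherwise a soundness argument carried out in a stronger meta-theory might seem to escape the dichotomy.

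My first attempt would be to restrict ``certification in $\mathcal{T}$'' to objects whose checking is mechanical, as in Coq-style kernel checking. Any external meta-theoretic input would then be counted as part of $\mathcal{T}'$. This makes the dichotomy exhaustive by definition.

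The price is that the theorem becomes close to definitional. I would state that explicitly rather than hide it. The conclusion should also be stated as conditional on the Branch~I theorem, since that theorem is the only substantive ingredient of the argument.
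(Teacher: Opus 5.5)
Your proposal is correct and follows essentially the paper's own route: the paper likewise argues that escaping the Branch~I theorem forces non-computable components in $\mathcal{T}'$, that a certificate counts for $\mathcal{T}$ only if its verifier is executable in $\mathcal{T}$ (a stipulation you rightly flag as near-definitional), and that any translation of $\mathcal{T}'$-validity back into $\mathcal{T}$ would reinstate the Ricean contradiction, which is exactly your Case~1 and transfer-map argument. One small correction: in Case~2, if only $M'_{\text{Adm}}$ is non-computable then $V'$ may itself be a DTM, so your claim that no DTM checks $\pi'_x$ fails there; what excludes that sub-case is your Step~1 requirement that both components be DTMs, which is stricter than the paper's stipulation (it mentions only $V$).
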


\begin{proof}
To bypass the contradiction in Theorem 1, one must replace the computable components $(M_{\text{Adm}}, V)$ with non-computable operators $\mathcal{O}$ in a higher-order model $\mathcal{T}'$. However, the Model Theoretical Barrier (MTB) stipulates that for a certificate to be ``formally valid'' for the Turing model $\mathcal{T}$, its verifier $V$ must be executable within $\mathcal{T}$. 

If $V$ requires $\mathcal{T}'$ to operate, the resulting proof is a statement about the capabilities of $\mathcal{T}'$, not a syntactic certificate valid in $\mathcal{T}$. The MTB prevents the translation of $\mathcal{T}'$-validity back into $\mathcal{T}$-certifiability, as such a translation would re-introduce the Ricean undecidability within $\mathcal{T}$. \qed
\end{proof}

\section{Synthesis: The Final Double Bind}

The ``Double Bind'' is the structural deadlock resulting from the intersection of the 
two branches. Any attempt to uniformly certify $\text{P}$ vs $\text{NP}$ is forced 
into a binary failure:

\begin{align*}
  \mathbf{Path\ A} &: \text{Uniformity per-input} \xrightarrow{\text{Composition}} \text{Standard Uniformity} \\
  &\quad \xrightarrow{\text{Propagator } \mathcal{R}} \text{Semantic Invariant} \xrightarrow{\text{Rice}} \mathbf{Contradiction} \\[0.5em]
  \mathbf{Path\ B} &: \text{Exit to } \mathcal{T}' \xrightarrow{\text{Bypass Rice}} \text{Hyper-computation} \\
  &\quad \xrightarrow{\text{MTB}} \mathbf{Invalidity \text{ in } \mathcal{T}}
\end{align*}
  
\noindent \textbf{Conclusion:} The Double Bind establishes that while $\text{P} \neq \text{NP}$ (or $\text{P} = \text{NP}$) may be a mathematical truth, the uniform certification of this truth is structurally impossible within the standard Turing model. The Propagator $\mathcal{R}$ ensures that any local attempt to certify a solver inevitably triggers the global ``immune response'' of the model's intrinsic undecidability via Rice's Theorem. This result does not imply the logical independence of $\text{P}$ vs $\text{NP}$ from ZFC, but establishes a meta-computational barrier between syntactic verification and semantic invariants.

\begin{remark}[Absence of Necessity for Class Collapses]
\label{rem:no-class-collapse}
    The efficacy of the Double Bind does not require the semantic invariant to 
    induce an actual collapse of complexity classes (a phenomenon specific to 
    the completeness of $\text{SAT}$). The sole constraint required for the 
    activation of the first branch of the barrier is that the analyzed property 
    be semantic and non-trivial-namely, that it propagates to a heterogeneous 
    family of textually disjoint algorithms, thereby distinguishing itself from 
    purely local properties (such as the computational time of $\texttt{QuickSort}$) 
    that remain fully certifiable within the DTM. 
    Conversely, the condition that triggers the second branch is the utilization 
    of computational powers that the DTM cannot simulate, as rigorously defined 
    in Appendix~\ref{sec:MTBDEF}.
\end{remark}

\section{The Inescapable Propagation of the Propagator $\mathcal{R}$}

\begin{remark}[The Propagation Effect is Inevitable and Indisputable]

The propagating effect of the Propagator $\mathcal{R}$ is inevitable and indisputable 
because it is not an assumption of this paper, but rather a direct consequence 
of the very definition of NP-completeness. There is no way to circumvent it: it 
is an algebraic fact (Polynomial Reduction).

\end{remark}

\begin{proposition}[The Algebraic Foundation of NP-Completeness]

  The definition of $\text{NP}$-completeness establishes that any 
  problem $L \in \text{NP}$ can be reduced to $Q$ in polynomial time 
  via a function $f$. If you certify that there exists a program $x$ 
  that solves $Q$ in polynomial time, you have automatically created 
  an algorithm for every $L \in \text{NP}$ (namely: $x(f(w))$).

  \medskip

  There is no way to ``restrict'' this truth to the program $x$ alone: 
  the moment $x$ lies in $\text{P}$, the entire class $\text{NP}$ collapses 
  into $\text{P}$ by definition. The Propagator $\mathcal{R}$ is simply
  the name we assign to this mathematical law.

\end{proposition}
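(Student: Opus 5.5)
The plan is to prove the proposition directly from the definitions of $\classNP$-completeness and of the class $\classP$, as the standard closure of $\classP$ under polynomial-time many-one reductions. Fix $Q$ as in the definition of the computational space, and suppose a program $x$ decides $Q$ in time bounded by $c\,n^k$ for constants $c,k$. Let $L \in \classNP$ be arbitrary. Since $Q$ is $\classNP$-complete, there is a Karp reduction $f$ computed by a DTM $M_f$ running in time at most $d\,n^m$, with $w \in L \iff f(w) \in Q$. I will construct the composite machine $M_L$ that, on input $w$, computes $f(w)$ via $M_f$ and then runs $x$ on $f(w)$, outputting its answer. This is exactly the algorithm $x(f(w))$ named in the statement.

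The key steps, in order, are as follows. First, correctness: $M_L$ accepts $w$ iff $x$ accepts $f(w)$ iff $f(w)\in Q$ iff $w\in L$. Second, the running time. Since $M_f$ writes at most one symbol per step, $|f(w)| \le d\,|w|^m$. Hence the total time is at most $d\,|w|^m + c\,(d\,|w|^m)^k = O(|w|^{mk})$, which is polynomial. Third, since $L$ was arbitrary, $\classNP \subseteq \classP$. Together with the trivial inclusion $\classP \subseteq \classNP$, this gives $\classP=\classNP$, i.e.\ $\Psi=1$. This is precisely the first clause of the Propagator $\mathcal{R}$. The "no restriction to $x$ alone" part of the statement is then simply the observation that the derivation used nothing about $x$ beyond $\Phi_\alpha(x)=1$. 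So the implication $\Phi_\alpha(x)=1 \Rightarrow \Psi=1$ holds for every program $x$ uniformly, and $\mathcal{R}$ can be read as the name of this implication.

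The only step requiring care is the composition bound. One has to note that the output length of $f$ is polynomially bounded, so that a polynomial applied to a polynomial stays polynomial. One also has to observe that the composite machine is itself a legitimate DTM program, obtained by sequential composition with a copy of the intermediate tape. Neither point is a genuine obstacle; both are routine textbook facts (cf.\ Cook--Levin and Karp), so I expect the proof to be short.

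I will explicitly refrain from claiming more than this. The proposition establishes only the logical implication from a polynomial-time $Q$-solver to the collapse. It does not by itself establish any of the decidability or undecidability claims made in the surrounding sections.
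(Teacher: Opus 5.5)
Your proof is correct and follows the same route as the paper, which justifies the proposition only by pointing to the composite algorithm $x(f(w))$ and appealing to the definition of $\classNP$-completeness. You make explicit what the paper leaves implicit, namely the correctness of the composition and the bound $|f(w)|\le d\,|w|^m$ that keeps the total running time polynomial; your closing caveat is also apt, since the argument yields only the implication $\Phi_\alpha(x)=1\Rightarrow\Psi=1$ and none of the decidability claims the paper builds on it.
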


\begin{remark}[Independence from Implementation]

  It is implementation-independent: it does not matter how you write 
  your SAT solver or which programming language you use. If the 
  extensional behavior of that program (its mathematical function) 
  is polynomial, the property propagates. You may change the syntax, 
  you may optimize the code, but you cannot alter the fact that $Q$ 
  is the ``central hub'' of $\text{NP}$. If the central hub falls, 
  the entire network falls with it.

\end{remark}

\begin{proposition}[The Inevitability of the Implicit Decider]

  As already explained even if you do not wish to construct a global decider, 
  the existence of a uniform certification method constructs one for you. 
  If there exists a method that states ``Yes, this specific SAT solver 
  is in $\text{P}$'', that method is, in fact, an oracle 
  that answers the question ``$\text{P} = \text{NP}$?''. The 
  propagating effect transforms your ``small'' local certifier 
  into a ``large'' global decider. It is not a matter of user 
  choice; it is an emergent property of the system.

\end{proposition}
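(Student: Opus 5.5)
The plan is to make the informal statement precise in two stages: first construct the implicit decider, then relate its acceptance set to the global state $\Psi$ through the Propagator $\mathcal{R}$. For the first stage I reuse the composition from the proof of the Branch I theorem. Given a uniform per-input method $(M_{\text{Adm}},V)$ that is admissible for $\Phi_\alpha$, I set $D_{\Phi_\alpha}(x)=V(x,M_{\text{Adm}}(x))$. Termination makes $D_{\Phi_\alpha}$ total. Soundness gives $D_{\Phi_\alpha}(x)=1\Rightarrow\Phi_\alpha(x)=1$.

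For the converse, $\Phi_\alpha(x)=1\Rightarrow D_{\Phi_\alpha}(x)=1$, Completeness as stated in the definition of the admissible method is not enough. It only asserts that \emph{some} $\pi_x$ exists, not that $M_{\text{Adm}}$ outputs it. I would therefore read the Uniformity Per-Input definition as saying that $M_{\text{Adm}}$ produces the certificate for $x$ whenever one exists. Under that reading, $D_{\Phi_\alpha}$ decides exactly $\mathcal{L}_{\Phi_\alpha}=\{x\mid\Phi_\alpha(x)=1\}$. This justifies the first sentence of the proposition: uniform certification constructs a decider whether or not one intended it.

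For the second stage I invoke the Proposition on the Algebraic Foundation of NP-Completeness. If $\Phi_\alpha(x)=1$ for some $x$, then composing $x$ with the reductions $f$ gives $\Psi=1$. Conversely, if $\Psi=1$, some program decides $Q$ in polynomial time, so $\mathcal{L}_{\Phi_\alpha}\neq\emptyset$. Hence
\[
\Psi=1 \iff \exists x\; D_{\Phi_\alpha}(x)=1 .
\]
This is the precise content of ``$D_{\Phi_\alpha}$ is an oracle for $\classP=\classNP$?'': the global question becomes the non-emptiness of the language decided by $D_{\Phi_\alpha}$. The analogous argument for $\Phi_\beta$ uses the second clause of the Propagator and yields $\Psi=0\iff\exists x\;D_{\Phi_\beta}(x)=1$.

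I expect the main obstacle to be the word ``oracle''. A total decider for $\mathcal{L}_{\Phi_\alpha}$ only turns $\Psi$ into a $\Sigma_1$-type question, namely whether $D_{\Phi_\alpha}$ accepts some $x$. Dovetailing over all programs then gives only a semi-decision procedure, which halts if $\Psi=1$ and otherwise may run forever. There is a further difficulty. If $\classP\neq\classNP$, then $\mathcal{L}_{\Phi_\alpha}=\emptyset$, so $\Phi_\alpha$ is trivial and the constant-$0$ machine decides it; Rice's theorem then gives no contradiction.

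To handle this I would first try to prove the proposition only in the weak, conditional form. Uniform certification yields a total decider $D_{\Phi_\alpha}$ whose non-emptiness is equivalent to $\Psi=1$, and when $\Phi_\alpha$ is non-trivial, Rice's theorem applies to $D_{\Phi_\alpha}$. If that is all that can be established, the claim that $D_{\Phi_\alpha}$ is unconditionally an oracle for $\classP=\classNP$ has to be weakened to this statement.
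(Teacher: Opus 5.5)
Your composition $D_{\Phi_\alpha}(x)=V(x,M_{\text{Adm}}(x))$ is exactly the step the paper relies on. The proposition has no proof of its own; it points back to the Branch~I proof. Your account of what that step does and does not give is correct, and sharper than the paper's.

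\begin{itemize}
\item Completeness as defined does not force $M_{\text{Adm}}$ to output the witness, so your strengthened reading is needed.
\item The paper's pointwise chain ``$D_\Phi(x)=1 \iff \Psi$ is in the induced state'' fails whenever $\classP=\classNP$. Any $x$ that does not solve $Q$ then has $D_\Phi(x)=0$ while $\Psi=1$. The correct relation is your existential one.
\item Your triviality observation can be made exact. If $\classP\neq\classNP$, then $\Phi_\alpha\equiv 0$ and ``$V$ always rejects'' is a uniform admissible method. If $\classP=\classNP$, no such method exists (see below). So, under either reading of completeness, a uniform admissible method for $\Phi_\alpha$ exists iff $\classP\neq\classNP$.
\item Consequently the Branch~I conclusion this proposition is meant to support is provably equivalent to $\classP=\classNP$. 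No argument of this shape can establish it.
\item Since $\Psi$ is a single truth value, one of the two constant machines always ``answers'' it. The word ``oracle'' carries no computability content, as you suspected.
\end{itemize}
Retreating to a conditional statement is the right move.

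Two steps of your fallback fail, however.

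\textbf{The $\Phi_\beta$ analogue.} The Propagator's second clause is not a consequence of NP-completeness, so you cannot invoke it. Exhaustive search over assignments decides SAT correctly in worst-case time $2^{\Omega(n/\log n)}$, which is $\omega(n^k)$ for every $k$. Its membership in $\Phi_\beta$ is therefore provable. For that $x$ the clause is literally the assertion $\classP\neq\classNP$, and your equivalence $\Psi=0\iff\exists x\,D_{\Phi_\beta}(x)=1$ is equivalent to $\classP\neq\classNP$ itself. $\Phi_\beta$ concerns one program's running time and propagates nowhere. Both it and its complement are non-empty whatever the truth of $\classP$ vs $\classNP$, so its undecidability says nothing about $\Psi$.

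\textbf{The appeal to Rice.} ``Rice's theorem applies to $D_{\Phi_\alpha}$'' is wrong as stated, because Rice needs a property of the partial function computed. Under $\classP=\classNP$, $\Phi_\alpha$ separates a polynomial-time SAT decider from exhaustive search, although both compute the characteristic function of SAT. In fact $\Phi_\alpha$ is extensional iff it is trivial, so Rice never bites. The conclusion survives through a direct reduction from the complement of the halting problem:
\begin{itemize}
\item Given $(M,w)$, let $x$ on input $\varphi$ simulate $M$ on $w$ for $|\varphi|$ steps.
\item Then $x$ runs a fixed polynomial-time SAT decider. It outputs that decider's answer if $M$ has not yet halted, and the negated answer otherwise.
\item Then $\Phi_\alpha(x)=1$ iff $M$ does not halt on $w$.
\end{itemize}
So under $\classP=\classNP$ the set $\mathcal{L}_{\Phi_\alpha}$ is not even recursively enumerable. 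With these two corrections your conditional statement is correct, and it is as much as the composition argument yields.
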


\newpage

\section{The Double Bind as a Universal Structural Law of Turing-Complete Systems}

\bigskip

\subsection{Universality Beyond P vs NP}

The remarkable insight is that the proof does not hold exclusively for the P vs NP 
problem, but rather wherever there exists a computational representation of a problem 
and wherever there exists a propagating property. The $\phi$ of the first horn are 
not necessarily limited to two; they number two precisely because the problem admits 
two possible outcomes. Were there three outcomes, we would have three $\phi$.

\bigskip

\begin{theorem}[The Double Bind as a Structural Invariant]

The ``Double Bind'' is not a characteristic specific to $\text{P}$ versus $\text{NP}$, 
but constitutes a structural law governing every certification system within a 
Turing-complete model.

\medskip

The obstruction is valid for any system satisfying the following three fundamental 
requirements:

\begin{enumerate}

  \item \textbf{Computational Representation:} The problem must be expressible as a semantic property of programs or strings in a computational model $\mathcal{T}$.

  \item \textbf{Semantic Foundation:} The truth value must pertain to extensional behavior (the behavior of the system), not to syntactic form (the representation).

  \item \textbf{Presence of a Propagator $\mathcal{R}$:} A reduction mechanism must exist (such as $\text{NP}$-completeness, Turing-completeness, or any other form of universal reduction) that propagates the truth of a local instance to the truth of an entire set or class.

\end{enumerate}

\medskip

Under these conditions, any uniform admissible certification method for any possible 
outcome necessarily induces a total computable decider for a non-trivial semantic 
invariant of $\mathcal{T}$, contradicting Rice's Theorem.

\end{theorem}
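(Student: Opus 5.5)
The plan is to abstract the argument given for Branch I into a schema parametrized by the three requirements, and then to check, requirement by requirement, that the schema yields an honest instance of Rice's Theorem. Fix a Turing-complete model $\mathcal{T}$ with an acceptable numbering of programs $\mathcal{P}$, and let $\Phi_1,\dots,\Phi_m:\mathcal{P}\to\{0,1\}$ be the properties attached to the $m$ possible outcomes, as in the case $m=2$ with $\Phi_\alpha,\Phi_\beta$. Suppose $(M_{\text{Adm}},V)$ is a method for some $\Phi=\Phi_i$ that is uniform per-input and admissible in the sense of the definitions of Section~\ref{sec:doublebind}. As in the proof of the Branch~I theorem, I would form the composite $D_\Phi(x)=V(x,M_{\text{Adm}}(x))$. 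Termination makes $D_\Phi$ total computable, and soundness gives $D_\Phi(x)=1\Rightarrow\Phi(x)=1$.

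For the converse direction I would strengthen completeness to the generator form $\Phi(x)=1\Rightarrow V(x,M_{\text{Adm}}(x))=1$. The bare existential clause $\exists\pi_x$ does not suffice: it only makes $\{x:\Phi(x)=1\}$ computably enumerable, not decidable. With this strengthening, $D_\Phi$ decides $L_\Phi=\{x:\Phi(x)=1\}$. The remaining task is to show that $L_\Phi$ is undecidable, by showing it is a non-trivial index set and invoking Rice.

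Requirement~2 (semantic foundation) is meant to give closure under extensional equivalence: $\varphi_x=\varphi_y\Rightarrow(\Phi(x)\leftrightarrow\Phi(y))$. Requirement~1 places $\Phi$ on programs of $\mathcal{T}$. Requirement~3 is meant to supply non-triviality: the propagator $\mathcal{R}$ should produce witnesses $x_0\in L_\Phi$ and $x_1\notin L_\Phi$ drawn from a heterogeneous family of textually different algorithms. With both in hand, Rice's Theorem, in the Extended Rice wrapper form, contradicts the decidability of $L_\Phi$, and ranging over $i$ covers every outcome.

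The main obstacle is the index-set step, and I expect it to fail as stated for the motivating examples. Properties such as ``$x$ decides $Q$ in polynomial time'' depend on running time, so they are not closed under $\varphi_x=\varphi_y$. A fast and a slow program computing the same function disagree on $\Phi_\alpha$, so Rice does not apply directly. I would first try to repair this by replacing $\Phi$ with its extensional closure $\hat\Phi(x)\iff\exists y\,(\varphi_y=\varphi_x\wedge\Phi(y))$; for $\Phi_\alpha$ this is ``$\varphi_x=\chi_Q$ and $Q\in\classP$''. However, $\hat\Phi$ is not the property the method certifies, so the contradiction would have to pass through that reformulation.

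Alternatively, I would prove undecidability of time-bounded properties directly, by a reduction from the halting problem: run $x$ against $\chi_Q$ with a clock and pad the runtime according to whether a simulated machine halts. This is the route I would try first, since it handles non-extensional properties.

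A second gap concerns non-triviality, which must be established independently of the open question. If $\classP\neq\classNP$, then $L_{\Phi_\alpha}=\emptyset$ is trivially decidable. So requirement~3 cannot be used to supply witnesses on both sides. The statement should therefore be restricted to those $\Phi_i$ for which both a member and a non-member are exhibited unconditionally. On that restriction the theorem follows, with the propagator playing no essential role in the undecidability itself; its role is only to explain the non-locality of the certified property.
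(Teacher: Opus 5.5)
Your plan is the paper's plan. The paper's argument for this schema is the Branch~I proof, restated for $N$ outcomes in Case~A of the general theory: form the composite $D_\Phi(x)=V(x,M_{\text{Adm}}(x))$, then invoke the ``Extended Rice wrapper''. Your objections land exactly on the step the paper never argues.

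\begin{itemize}
\item The paper simply declares $\Phi$ ``an extensional property of the function computed by $x$, regardless of its specific complexity value'' and non-trivial.
\item The wrapper is only named, never stated as a lemma.
\item The claim that $D_\Phi$ ``decides the global state $\Psi$'' has no computational content, since $\Psi$ is a single truth value.
\item Your completeness remark is right. The paper's later Definition~4, with $V(x,\pi_x)=1\iff\Phi_i(x)=1$ for the generated $\pi_x$, is the form the argument actually needs.
\end{itemize}

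Combining your two objections gives a crisp conclusion. $\Phi_\alpha$ is a non-trivial index set in neither world: it is empty if $\classP\neq\classNP$, and non-extensional if $\classP=\classNP$. So Rice can never be the mechanism for it.

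Your non-triviality objection in fact refutes the statement as written, so your restriction is forced, not merely cautious.

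\begin{itemize}
\item Requirement~3 gives $\Phi_i(x)=1\Rightarrow\Psi=\psi_i$ for all $x$. With at least two mutually exclusive outcomes, every $\Phi_i$ attached to a non-actual outcome is empty. For it, ($M_{\text{Adm}}$ arbitrary, $V\equiv 0$) is a uniform admissible method whose induced decider decides a trivial set, so it contradicts nothing.
\item For $\Phi_\alpha$ there is an exact equivalence. Suppose $\classP=\classNP$ and fix a polynomial-time solver $S$. Send $(M,w)$ to the program that, on input $\phi$, simulates $M$ on $w$ for $|\phi|$ steps, then outputs $0$ if $M$ has halted and $S(\phi)$ otherwise. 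This program lies in $\Phi_\alpha$ iff $M$ never halts on $w$, so $\Phi_\alpha$ is not even c.e.
\item Hence ``no uniform admissible method for $\Phi_\alpha$'' is equivalent to $\classP=\classNP$. Any proof of that instance would prove $\classP=\classNP$.
\item The same construction with a padded brute-force solver as base shows that $\Phi_\beta$ is undecidable unconditionally, reading $\Phi_\beta$ literally as a property of the given program. But brute force lies in $\Phi_\beta$, so the propagation $\Phi_\beta(x)=1\Rightarrow\classP\neq\classNP$ claimed for it is just $\classP\neq\classNP$ restated, not something a reduction supplies.
\end{itemize}

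So $\Phi_\alpha$ has a genuine propagator but open non-triviality, while $\Phi_\beta$ has provable undecidability but no established propagator. This confirms concretely that $\mathcal{R}$ contributes nothing to undecidability.

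A few adjustments to your own repairs:

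\begin{itemize}
\item \textbf{Drop the extensional closure.} Besides not being the certified set, $\hat\Phi_\alpha$ is empty when $\classP\neq\classNP$, so it reproduces the same problem.
\item \textbf{Use wrong outputs in the clocked reduction, not runtime padding.} Padding separates members from non-members only if a polynomial-time correct solver for $Q$ exists. It therefore presupposes $\classP=\classNP$.
\item \textbf{Qualify ``on that restriction the theorem follows''.} This holds via Rice only when Requirement~2 means genuine extensionality. The motivating properties are not index sets, and for non-extensional properties non-triviality does not yield undecidability: ``$x$ halts on the empty input within $|x|$ steps'' is non-trivial and decidable. Each such property needs its own reduction.
\item \textbf{Note the reach of the restricted theorem.} Under a valid propagator, an unconditional member of $\Phi_i$ is already a proof of $\Psi=\psi_i$. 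So your restricted theorem reaches propagating properties only after the global question has been answered.
\end{itemize}
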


\newpage

\subsection{The Modularity of $\Phi$}

The two ``cases'' ($phi$) are not a limitation of the theorem, but rather a 
direct reflection of the cardinality of possible outcomes for the problem under 
consideration.

\begin{observation}[Cardinality of Outcomes]

\medskip

In $\text{P}$ vs $\text{NP}$, the outcome is binary 
($\text{P} = \text{NP}$ or $\text{P} \neq \text{NP}$), 
hence we have exactly two $\Phi$. If we were confronted with a 
problem admitting $n$ possible outcomes 
(for instance, a complexity classification spanning $n$ distinct levels), 
we would possess $n$ cases in the first horn: 
$\Phi_1, \Phi_2, \dots, \Phi_n$. The obstruction imposed by Rice's 
Theorem remains identical for every 
individual case: regardless of the number of possible outcomes, 
the act of certifying any one of them uniformly induces a computable 
decider for that specific semantic property. Notably, the various 
branches of the first horn are mutually exclusive, 
whereas the two principal horns can both engage in some instances—, 
yet never at the same time.

\end{observation}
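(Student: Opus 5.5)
The plan is to reduce the $n$-outcome case to $n$ independent copies of the argument already given for Branch~I, plus a short combinatorial check for the exclusivity claims. Fix a problem whose outcome space is $\{o_1,\dots,o_n\}$, and associate to each outcome $o_i$ a property $\Phi_i:\mathcal{P}\to\{0,1\}$. We choose $\Phi_i(x)=1$ exactly when program $x$ witnesses $o_i$, in the sense that a propagator $\mathcal{R}$ maps $\Phi_i(x)=1$ to the global state $\Psi=o_i$. This generalizes the pair $\Phi_\alpha,\Phi_\beta$ of the Definition of the Semantic Horns.

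\textbf{Mutual exclusivity of the first-horn branches.} Since $\mathcal{R}(\Phi_i(x)=1)\Rightarrow\Psi=o_i$, and the global state takes exactly one value, $\Phi_i(x)=\Phi_j(y)=1$ for $i\neq j$ would force $o_i=o_j$. This is impossible, so at most one $\Phi_i$ is ever instantiated.

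\textbf{Per-case obstruction.} For each fixed $i$, we repeat verbatim the composition step of the Branch~I theorem. Assuming a uniform admissible pair $(M_{\text{Adm}},V)$ for $\Phi_i$, we form $D_{\Phi_i}(x)=V(x,M_{\text{Adm}}(x))$. Termination makes $D_{\Phi_i}$ total. We then invoke the Extended Rice wrapper on $\mathcal{L}_{\Phi_i}$. Nothing in that argument used $n=2$, so the obstruction is identical for every $i$. This gives the claim that ``the obstruction remains identical for every individual case.''

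\textbf{Relation between the two principal horns.} Horn~1 is triggered by a DTM-executable verifier and Horn~2 by a verifier requiring $\mathcal{T}'$. A single method's verifier is either executable in $\mathcal{T}$ or it is not, so one method cannot engage both horns at once. Different attempts, however, may engage different horns. This is how I would read ``both can engage in some instances, yet never at the same time.''

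\textbf{Main obstacle.} The hard step is justifying, for each $\Phi_i$, the two hypotheses the Branch~I argument silently needs. I expect this to be where the plan may fail.

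First, $D_{\Phi_i}$ must actually decide $\Phi_i$. Soundness gives $D_{\Phi_i}(x)=1\Rightarrow\Phi_i(x)=1$. The converse does not follow from Completeness, which only asserts that some $\pi_x$ exists. We would need the stronger requirement that $M_{\text{Adm}}$ outputs a valid certificate whenever one exists. I would try first to build this into the reading of uniformity per-input, namely that $M_{\text{Adm}}$ produces ``a certificate $\pi_x$ specifically for $x$.'' I would state it explicitly as an assumption.

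Second, Rice's theorem applies only to index sets, that is, properties of the partial function computed. Time bounds such as ``polynomial time'' are not invariant under changing programs for the same function. Non-triviality of $\Phi_i$ is also delicate: if outcome $o_i$ is false, $\Phi_i$ is empty, hence trivial and decidable. So for each $i$ I would need to either recast $\Phi_i$ as a genuine index-set property of the computed function, or replace Rice by a direct reduction from the halting problem. For the latter, the natural attempt is to pad a fixed witness program with a simulation of $M$ on $w$ so that the property holds iff $M$ halts on $w$. Carrying this out while keeping $\Phi_i$ non-trivial in every possible global state is the step I expect to resist, and I would flag it as the point where the general observation needs an extra hypothesis rather than following from the earlier results.
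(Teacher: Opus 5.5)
Your decomposition is the paper's own. Its ``General Theory of Double Bind'' argues the statement by exactly your three moves: reuse of the Branch~I composition $D_{\Phi_i}(x)=V(x,M_{\text{Adm}}(x))$ for each $i$, exclusivity from the uniqueness of $\Psi$, and the $\mathcal{T}$ versus $\mathcal{T}'$ split of its ``Non-Simultaneity'' observation. Your exclusivity and horn-switch parts are fine as written. Your first worry is real, and the paper removes it only by fiat. Definition~4 of that section requires $V(x,\pi_x)=1\iff\Phi_i(x)=1$ for the generated certificate, which is precisely the strengthening you propose. Under the merely existential completeness of the earlier Definition of the Admissible Method, $\{x: D_{\Phi}(x)=1\}$ is just some decidable subset of $\Phi$, and Rice says nothing about it.

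Your second worry is not a missing hypothesis but a refutation of the per-case claim, and your own exclusivity step delivers it. If $\Phi_i(x)=1$ forces $\Psi=o_i$, then every $\Phi_i$ attached to a non-actual outcome is identically $0$. For $n\ge 2$ at least $n-1$ of them are therefore empty. For each such $\Phi_i$, a constant $M_{\text{Adm}}$ together with $V\equiv 0$ is sound, complete and terminating under any reading of admissibility. So ``the obstruction remains identical for every individual case'' is false, and requiring every $\Phi_i$ to be non-trivial contradicts the propagator hypothesis itself.

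Even the one surviving case is not unconditional. Take a padding construction of the kind you suggest: on input $\varphi$, simulate $M$ on $w$ for $|\varphi|$ steps; if it has halted, decide $\varphi$ by brute force, otherwise by a polynomial-time solver. This shows $\Phi_\alpha$ undecidable when $\classP=\classNP$, while $\Phi_\alpha=\emptyset$ is decidable when $\classP\neq\classNP$. With admissibility read as in Definition~4, ``no uniform admissible method for $\Phi_\alpha$'' is therefore equivalent to $\classP=\classNP$. Establishing it is the same as proving $\classP=\classNP$, and no application of Rice's theorem can deliver that.

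The paper does not bridge this. It asserts that $\Phi$ is extensional, which fails for running-time properties exactly as you say, and its ``Extended Rice wrapper'' is a label, not a proved result. Its $\Phi_\beta$ escapes emptiness only because its propagator is not a valid inference: correct super-polynomial SAT solvers exist whatever the answer. If $\Phi_\beta(x)=1\Rightarrow\classP\neq\classNP$ held as stated, exhibiting brute-force search would already prove $\classP\neq\classNP$. What can actually be established is only the combinatorial part of the observation: the mutual exclusivity of the $\Phi_i$ and the exhaustive split between verification in $\mathcal{T}$ and in $\mathcal{T}'$.
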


\begin{remark}[Extensions]

  The first horn of Double Bind generalizes to a ``Multi-Phi'' structure when the problem 
  admits more than two outcomes. However, the logical trap remains structurally 
  invariant:

$$\text{Local Certification} \xrightarrow{\mathcal{R}} \text{Global Invariant} \xrightarrow{\text{Rice}} \text{Contradiction.}$$

  No matter how many outcomes the dilemma possesses, the mechanism of obstruction 
  in the first horn operates identically: every path toward uniform admissible 
  certification of any outcome is blocked by the same Ricean undecidability.

\end{remark}

\subsection{Exemplary Applications of the Universal Obstruction}

\begin{example}[Type Systems and Formal Verification]

  If one attempts to certify that a single program possesses a security property 
  that is ``universal'' in scope—meaning that by reduction, it would imply the 
  security of all programs belonging to that class—the certification system 
  encounters the Double Bind against Rice's Theorem.

  \medskip
  
  The Propagator $\mathcal{R}$ here is induced by the type system's universality 
  claims: proving one program secure within the type does not merely characterize 
  that program, but implicitly induces a decision about the security of the entire class.

\end{example}

\begin{example}[Cryptographic One-Way Functions]

  If one seeks to certify that a specific function is a ``one-way function'' (OWF),
  such that this certification, by structural necessity, implies the existence of OWF as a general and non-trivial property, the system encounters the Ricean obstruction.

  \medskip

  The Propagator $\mathcal{R}$ is the reduction from the existence of OWF to the 
  existence of pseudorandom generators, collision-resistant hash families, 
  and other cryptographic primitives. Certifying a single function as OWF 
  propagates to a global statement about the cryptographic landscape, 
  triggering the obstruction.

\end{example}

\subsection{The Generalization Principle}

\begin{principle}[Universality of the First Horn Binding Obstruction]

  For any problem $\Pi$ in a Turing-complete model $\mathcal{T}$ that satisfies the three requirements (Computational Representation, Semantic Property, and Propagating Mechanism), the following holds:

\begin{itemize}

  \item If $\Pi$ admits $n$ possible outcomes, then $n$ semantic horns exist: $\Phi_1, \Phi_2, \dots, \Phi_n$.

  \item Any attempt at uniform admissible certification of outcome $i$ induces a total computable decider for the global state corresponding to outcome $i$.

  \item This decider contradicts Rice's Theorem, as it decides a non-trivial semantic property of programs in $\mathcal{T}$.

\end{itemize}

Therefore, a ``first Horn $n$-ary obstruction'' necessarily obstructs the 
formation of a uniform certification method.

\medskip

The Double Bind is not a peculiarity of P vs NP, but a manifestation of a deeper 
structural law: the incompatibility between local semantic certification and global 
Propagators in Turing-complete models.

\end{principle}

\subsection{Conclusion: A Structural Law of Certification Systems}

The Double Bind represents a fundamental obstruction in the architecture 
of certification systems within Turing-complete models. It is not an 
artifact of the specific problem (P vs NP), nor is it a limitation of 
current proof techniques. Rather, it is a structural law 
that emerges from the interaction of three elements:

\begin{enumerate}
\item The computability of the underlying model,
\item The semanticity of the property being certified,
\item The propagating nature of reductions that universally link local instances to global invariants.
\end{enumerate}

This law holds universally for all problems satisfying these three conditions, 
irrespective of the number of possible outcomes, the specificity of the domain, 
or the sophistication of the certification method. The obstruction cannot be 
circumvented by technical ingenuity alone; it can only be transcended by exiting 
the Turing-complete framework or by avoiding the Propagator mechanism altogether, 
as occurs in proofs that rely on purely structural (non-propagating) properties.

\bigskip

\section*{Generalized Demonstration Independent of Problem Outcome Cardinality}

Thus we observe the generalized demonstration independent of the number of outcomes 
of the problem. Here it becomes clear that the double bind is a case switch: 
the extended rice horn or the MTB horn is activated, but in some cases both c
ould be activated, though never simultaneously. And the extended 
rice horn has $N$ branches that can \textbf{never} be activated together and 
that depend on the possible outcomes of the problem.

The system analyzes the nature of the certification and ``activates'' the 
corresponding obstruction. If you attempt to remain within the model $\mathcal{T}$, 
you activate the switch toward Extended Rice (which in turn branches into $N$ 
horns depending on the outcomes of the problem). If you attempt to 
exit the model, you activate the switch toward the MTB.

\bigskip

\section*{The General Theory of Double Bind}

\subsection*{Fundamental Definitions}

\textbf{Definition 1 (Computational Model).} Let $\mathcal{T}$ be a standard model of 
computation and $\mathcal{P}$ be the set of all programs within $\mathcal{T}$.

\textbf{Definition 2 (The Semantic Invariants - The $N$-Horns).} Let $\Psi$ be a global 
property of the model $\mathcal{T}$ that can have $N$ possible mutually exclusive 
outcomes $\{\psi_1, \psi_2, \dots, \psi_n\}$. We define $N$ semantic properties 
$\Phi_1, \dots, \Phi_n: \mathcal{P} \to \{0, 1\}$ such that for each $i \in \{1, \dots, n\}$:
\[
\Phi_i(x) = 1 \iff \text{The program } x \text{ certifies that the global state } \Psi = \psi_i.
\]
These are the $N$-Horns of the problem. Because the outcomes of $\Psi$ are mutually 
exclusive, only one $\Phi_i$ can be true for a correct solver $x$.

\textbf{Definition 3 (The Propagator $\mathcal{R}$).} The Propagator $\mathcal{R}$ is 
the logical operator that transforms a local certification of a program $x$ into a 
global truth about $\Psi$.
\[
\mathcal{R}(\Phi_i(x) = 1) \implies \Psi = \psi_i
\]
This implies that any uniform method to certify $\Phi_i$ is implicitly a method 
to decide the global state $\Psi$.

\textbf{Definition 4 (Uniform Admissible Certification).} A method $(M_{\text{Adm}}, V)$ 
is uniform per-input and admissible if it provides a computable way to generate ($\pi_x$) 
and verify ($V$) a certificate such that $V(x, \pi_x) = 1 \iff \Phi_i(x) = 1$ for 
any $x \in \mathcal{P}$.

\newpage

\subsection*{The Logic of the Double Bind (The Case Switch)}

The ``Double Bind'' is a structural deadlock. Any attempt to certify $\Phi_i$ triggers 
a logical switch between two insurmountable barriers. The path taken depends on the 
model of verification chosen:

\subsubsection*{CASE A: Verification within the Standard Model $\mathcal{T}$}

If the certification $(M_{\text{Adm}}, V)$ is implemented within $\mathcal{T}$:

\begin{enumerate}

\item \textbf{Induced Decider:} The uniformity per-input allows the construction of a total 
computable function $D_{\Phi}(x) = V(x, M_{\text{Adm}}(x))$.

\item \textbf{Activation of the Propagator:} $D_{\Phi}$ does not just analyze $x$; via 
$\mathcal{R}$, it decides the global state $\Psi$.

\item \textbf{The Extended Rice Obstruction:} Since $\Phi_i$ is a non-trivial semantic invariant, 
the Extended Rice Theorem dictates that $D_{\Phi}$ cannot be computable.

\item \textbf{The Multi-Horn Conflict:} This obstruction applies to all $N$ horns. No matter 
which outcome $\psi_i$ is the true one, the act of certifying it uniformly induces a decider 
that Rice prohibits. Thus, all $N$ branches of the Extended Rice horn are blocked.

\end{enumerate}

\subsubsection*{CASE B: Verification via an External Model $\mathcal{T}'$}

If the certification $(M_{\text{Adm}}, V)$ is implemented using an external, non-standard model 
$\mathcal{T}'$ (e.g., an Oracle) to bypass Rice:

\begin{enumerate}

\item \textbf{The Model Theoretical Barrier (MTB):} A certificate $\pi$ generated and verified 
in $\mathcal{T}'$ is not a syntactic certificate valid within $\mathcal{T}$.

\item \textbf{Loss of Model Validity:} The verification $V$ is no longer a computable function 
in $\mathcal{T}$. The ``proof'' is no longer a certification of the Turing model's properties, 
but a statement about the capabilities of $\mathcal{T}'$.

\end{enumerate}

\newpage
\subsection*{Formal Theorem of the Barrier}

\textbf{Theorem.} Let $\Phi_1, \dots, \Phi_n$ be $N$ mutually exclusive propagating semantic 
properties in $\mathcal{T}$. No uniform admissible certification method exists for any $\Phi_i$ 
within $\mathcal{T}$.

\textbf{Proof:}

The certification attempt triggers a Switch Mechanism:
\[
\text{Certification Attempt} \implies \begin{cases} 
\text{Verify in } \mathcal{T} \xrightarrow{\text{Propagator } \mathcal{R}} 
\text{Induced Decider} \xrightarrow{\text{Extended Rice}} \mathbf{\text{Contradiction}} 
\\[0.5em]
\text{Verify in } \mathcal{T}' \xrightarrow{\text{Bypass Rice}} \text{Extra-Turing-computation} 
\xrightarrow{\text{MTB}} \mathbf{\text{Invalidity in } \mathcal{T}}
\end{cases}
\]

\textbf{Observations:}

\begin{itemize}

\item \textbf{Non-Simultaneity:} The layered structure ensures that the two barriers (generally) cannot be 
active at the same time. One either stays in $\mathcal{T}$ (activating Rice) or exits to 
$\mathcal{T}'$ (activating MTB).

\item \textbf{Independence from Complexity:} The obstruction is not caused by the ``hardness'' 
of calculating $\Phi_i$, but by the logical topology of the model. The Propagator $\mathcal{R}$ 
ensures that local certification $\equiv$ global decision.

\item \textbf{Exclusivity of Horns:} Since the $\Phi_i$ are mutually exclusive, the 
``Extended Rice'' horn is not a single wall, but a set of $N$ walls. The certifier must pick 
one horn to prove, but regardless of the choice, the resulting global decider is blocked by Rice. 

\end{itemize}

\hfill $\square$

\section{The Theorem of first horn in Its Correct Abstract Form}

\begin{theorem}[The Meta-Computational Barrier]
\label{thm:meta-barrier}
Let $\Pi$ be a problem with computational representation in $\mathcal{T}$, with $N$ 
mutually exclusive outcomes $\{\psi_1, \dots, \psi_n\}$, and let $\mathcal{R}_\Pi$ 
be the Propagator induced by the structure of $\Pi$. If $\mathcal{R}_\Pi$ is 
non-trivial --- that is, if local certification of $\Phi_i(x)$ for a single $x$ 
induces a global decision on $\Psi$ --- then no uniform admissible method 
exists for any $\Phi_i$ in $\mathcal{T}$.
\end{theorem}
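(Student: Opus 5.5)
The plan is to follow the template of the Branch~I theorem and the Case~A analysis of the General Theory, now with $\Pi$, its outcomes $\{\psi_1,\dots,\psi_n\}$ and the abstract Propagator $\mathcal{R}_\Pi$ in place of SAT and NP-completeness. I argue by contradiction. Fix an arbitrary index $i$ and suppose a uniform per-input admissible method $(M_{\text{Adm}},V)$ exists for $\Phi_i$ in $\mathcal{T}$. Since $i$ is arbitrary, refuting this single case handles all $N$ horns at once. This matches the Observation on Cardinality of Outcomes: the horns are mutually exclusive, but the obstruction is applied to each one separately.

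\textbf{Step 1 (induced decider).} As in the Branch~I proof, I form the composite machine $D_{\Phi_i}(x) = V(x, M_{\text{Adm}}(x))$.
\begin{itemize}
\item Termination in the Admissible Method definition makes $D_{\Phi_i}$ total computable in $\mathcal{T}$.
\item Soundness gives $D_{\Phi_i}(x)=1 \Rightarrow \Phi_i(x)=1$.
\item For the converse I cannot use Completeness alone, since it only asserts that some $\pi_x$ exists. I will read the uniform per-input definition as saying that $M_{\text{Adm}}$ produces the certificate for $x$. Equivalently, I use Definition~4 of the General Theory, which demands $V(x,\pi_x)=1 \iff \Phi_i(x)=1$ for the generated $\pi_x$.
\end{itemize}
Hence $\mathcal{L}_{\Phi_i}=\{x : D_{\Phi_i}(x)=1\}$ equals $\{x:\Phi_i(x)=1\}$ and is decidable.

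\textbf{Step 2 (using the hypothesis on $\mathcal{R}_\Pi$).} This is where the statement's hypothesis must do its work. I will show that non-triviality of $\mathcal{R}_\Pi$ makes $\Phi_i$ a non-trivial semantic property in the sense required by the Extended Rice wrapper.

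\emph{Semantic.} $\Phi_i$ is defined on the extensional behaviour of $x$ (Definition~2 of the General Theory). So it is invariant under passing to any program computing the same function, and the Independence from Implementation remark applies verbatim.

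\emph{Non-trivial.} I argue from the hypothesis. Because local certification of $\Phi_i(x)$ for a single $x$ induces a global decision on $\Psi$, the property cannot hold of every program. Otherwise every program, including ones computing functions unrelated to $\Pi$, would witness $\Psi=\psi_i$, and the propagation would be vacuous, contradicting non-triviality of $\mathcal{R}_\Pi$. Per Remark~\ref{rem:no-collapse}, $\Phi_i$ is also required to reach a heterogeneous family of distinct algorithms, which I take to supply the positive side.

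\textbf{Step 3 (contradiction).} With $\Phi_i$ semantic and non-trivial, the Extended Rice wrapper declares $\{x:\Phi_i(x)=1\}$ undecidable. This contradicts Step~1, so no uniform admissible method exists for $\Phi_i$, and since $i$ was arbitrary, for any of the $N$ horns.

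The main obstacle is Step~2, specifically the positive half of non-triviality. To invoke Rice I must exhibit at least one program with $\Phi_i(x)=1$, yet the hypothesis on $\mathcal{R}_\Pi$ is an implication $\Phi_i(x)=1\Rightarrow\Psi=\psi_i$. That implication says nothing about whether $\Phi_i$ is inhabited when $\psi_i$ is not the true outcome. My first attempt is to read ``non-trivial Propagator'' as including the existence of a heterogeneous family witnessing $\Phi_i$, as Remark~\ref{rem:no-class-collapse} suggests, and to state explicitly that this reading is being used. If $\Phi_i$ is empty, the decider $D\equiv 0$ is trivially computable, and the argument then establishes the barrier only for inhabited horns.
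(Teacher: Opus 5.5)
The gap is in Step~2, and it cannot be closed: under the statement's own hypotheses, the conclusion is false for every horn except at most one. You say the implication $\Phi_i(x)=1\Rightarrow\Psi=\psi_i$ ``says nothing'' about whether $\Phi_i$ is inhabited when $\psi_i$ is not the actual outcome. In fact its contrapositive settles exactly that. The outcomes are mutually exclusive, so $\Psi$ has at most one actual value $\psi_{i^*}$. For every $i\neq i^*$ the consequent is false, hence $\Phi_i(x)=0$ for all $x\in\mathcal{P}$. Such a $\Phi_i$ is trivial, and the fixed pair $M_{\text{Adm}}(x)=\varepsilon$, $V\equiv 0$ is a uniform admissible method for it. Soundness and Completeness hold vacuously, Termination is immediate, and even the biconditional of Definition~4 holds because both sides are always false.

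So for $n\ge 2$ the assertion fails for at least $n-1$ indices, and ``since $i$ was arbitrary'' is precisely the step that breaks. Concretely, if $\mathrm{P}\neq\mathrm{NP}$ then $\Phi_\alpha=\emptyset$ and the constant-reject machine certifies it. Your proposed patch, reading non-triviality of $\mathcal{R}_\Pi$ as inhabitation of every $\Phi_i$, does not help. Two inhabited horns would force two distinct values of $\Psi$, so for $n\ge 2$ the hypotheses become inconsistent and the theorem holds only vacuously. Your caveat ``only for inhabited horns'' is therefore not a side remark but the whole content: at most one horn is covered.

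Even for that one horn $\Phi_{i^*}$, the Propagator does no work. Under the reading you adopt in Step~1, ``a uniform admissible method exists for $\Phi_i$'' is literally the statement ``$\Phi_i$ is decidable''. What remains is Rice's theorem, whose hypotheses must be supplied independently of $\mathcal{R}_\Pi$.

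\begin{itemize}
\item \emph{Negative half of non-triviality.} Your argument (``otherwise the propagation would be vacuous'') derives no contradiction: $\Phi_{i^*}\equiv 1$ is compatible with $\Phi_{i^*}(x)=1\Rightarrow\Psi=\psi_{i^*}$.
\item \emph{Extensionality.} The extensionality you import from Definition~2 is a stipulation that the paper's own instance does not meet. If $\mathrm{P}=\mathrm{NP}$, a polynomial-time SAT solver and brute-force search compute the same function yet differ on $\Phi_\alpha$. So the Independence-from-Implementation remark does not make $\Phi_\alpha$ a property of $\varphi_x$; its undecidability in that case would need a direct reduction from the halting problem, not Rice. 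If $\mathrm{P}\neq\mathrm{NP}$, $\Phi_\alpha$ is empty.
\item \emph{Extended Rice.} The ``Extended Rice wrapper'' is only a name in the paper, not a theorem that extends Rice past these hypotheses.
\end{itemize}

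The paper's own support for this statement is the Branch~I / Case~A template you reproduced. It simply asserts that $\Phi$ is semantic and non-trivial, so it offers no repair. What your argument actually proves is this: if some $\Phi_i$ is extensional and non-trivial, it has no uniform admissible method in $\mathcal{T}$. That is Rice's theorem, with $\mathcal{R}_\Pi$ playing no role.
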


\noindent
The Propagator enters as a hypothesis, not as an object defined by the theory. 
Precisely as non-triviality enters Rice's Theorem as a hypothesis.

\subsection*{Why the Propagator Is Not Defined}

This choice carries an important consequence for the scope of the result. 
Defining the Propagator in general would produce two detrimental effects:

\begin{enumerate}
\item \textbf{Restriction of scope:} A formal definition would exclude all Propagators that do not fall within the defined form, even if structurally analogous.
\item \textbf{Shift of proof burden:} The development would need to demonstrate that each specific Propagator falls within the general definition, multiplying the cases to be treated.
\end{enumerate}

By not defining it, the theorem becomes a \textit{theorem schema} --- a template 
that instantiates each time a specific problem exhibits its own natural Propagator. 
As Rice's Theorem is a schema that instantiates over each specific non-trivial semantic property.

\section{The Langlands Program as an Instance of the Double Bind}

The Langlands Program is the type of structure to which the Double Bind applies, 
but with a Propagator of a nature entirely different from that of NP-completeness.

\subsection*{The Langlands Propagator}

In P versus NP, the Propagator is a polynomial reduction --- a computational object. 
In Langlands, the Propagator is a correspondence between mathematically profoundly 
distinct categories:

\begin{equation}
\mathcal{R}_{\text{Langlands}}: \text{Galois Representations} \longleftrightarrow \text{Automorphic Forms}
\end{equation}

The propagating structure here is not computational; it is categorical. 
Certifying that a specific object on the Galois side corresponds to a 
specific object on the automorphic side instantaneously implies a global 
truth about the correspondence of the entire class.

\subsection*{Why the Double Bind Applies}

The three requirements are satisfied:

\begin{enumerate}
\item \textbf{Computational representation.} A formal proof within the Langlands Program, to be certified, must be representable on a DTM, as with any other formalizable mathematical proof.

\item \textbf{Semantic property.} The Langlands correspondence is a property of the behavior of the mathematical objects involved, not of their syntactic representation. Two syntactically distinct objects can be in correspondence; the property depends on behavior, not form.

\item \textbf{Non-trivial Propagator.} The functorial structure of the Langlands correspondence is globally propagating by construction. You cannot certify an isolated case without implying the structure of the entire correspondence, because the correspondence is defined precisely as a universal law between classes.
\end{enumerate}

This means something very precise about the Langlands Program:

\begin{quote}
No uniform admissible method can automatically certify that a specific 
object satisfies the Langlands correspondence, because such certification 
implicitly induces a decider for a non-trivial global semantic property.
\end{quote}

But the most interesting aspect is not the impossibility itself. It is that 
the Double Bind structurally explains why the Langlands Program is so 
difficult to formalize automatically, despite specific cases having 
been proven --- such as Wiles's proof of Fermat's Last Theorem,
which is essentially a local instance of the correspondence.

\subsection*{The Double Bind as a Law of Universal Correspondence}

Thus the Double Bind is not a law about computational complexity; it is a 
law about the structure of universal correspondences in mathematics. Wherever 
a correspondence exists that propagates local truth to global
truth --- whether it be a polynomial reduction, a functorial correspondence, 
a categorical duality --- the Double Bind activates against any attempt 
at uniform automated certification. This is a result that unifies 
in a single framework barriers that seemed to belong to completely 
separate domains: computational complexity theory, number theory, 
category theory. The Propagator is the concept that connects them all.

\section{Cryptographic Implications}
\label{sec:crypto}

\begin{corollary}[Cryptographic hardness]\label{cor:crypto}
Under standard Hypothesis, no admissible proof method
within the standard Turing model can formally certify the hardness
assumptions on which current cryptographic constructions rest.
\end{corollary}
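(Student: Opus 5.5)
The plan is to obtain Corollary~\ref{cor:crypto} as an instance of Theorem~\ref{thm:meta-barrier}, with the ``standard Hypothesis'' read as the existence of one-way functions. First I fix the invariant. For a program $x \in \mathcal{P}$, let $\Phi_{\OWF}(x)=1$ iff the function computed by $x$ is one-way: it is polynomial-time computable, and every probabilistic polynomial-time adversary inverts it with at most negligible probability. I also fix the complementary outcome, so that the problem has $N=2$ mutually exclusive outcomes, $\psi_1$ (``OWF exist'') and $\psi_2$ (``OWF do not exist''). This places the corollary in the Multi-Phi template of the Observation on Cardinality of Outcomes.

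Second, I check the three requirements of the Double Bind as a Structural Invariant theorem.
\begin{enumerate}
\item \emph{Computational representation.} $\Phi_{\OWF}$ is a predicate on program strings.
\item \emph{Semantic foundation.} Whether a function is hard to invert depends only on the function. So $\Phi_{\OWF}$ is extensional, provided the polynomial-time clause is read as ``some program computes this function within a polynomial bound'' rather than a bound on $x$ itself. I will state this reading explicitly.
\item \emph{Non-trivial Propagator.} $\mathcal{R}_{\OWF}$ is the standard chain of reductions from OWF to pseudorandom generators and related primitives, as in the earlier Example on Cryptographic One-Way Functions. It sends $\Phi_{\OWF}(x)=1$ for a single $x$ to the global state $\Psi=\psi_1$.
\end{enumerate}
Non-triviality of $\Phi_{\OWF}$ is exactly where the standard Hypothesis enters. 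If OWF exist, the property is non-empty. The identity program is never one-way, so the property is not universal either.

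Third, I run the Case A argument. Given a uniform admissible $(M_{\text{Adm}}, V)$, the composite $D(x)=V(x,M_{\text{Adm}}(x))$ is total by Termination. By Soundness and Completeness it decides $\{x : \Phi_{\OWF}(x)=1\}$. Since this set is a non-trivial extensional property, Rice's theorem (the Extended Rice wrapper) makes it undecidable, which is a contradiction. Case B is then dismissed by the Model Theoretical Barrier theorem: any method that escapes Rice by using non-DTM operators does not yield a certification valid in $\mathcal{T}$.

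The step I expect to be the main obstacle is the extensionality and non-triviality check in the second step, more than the Rice step. Specifically, I must ensure that $\Phi_{\OWF}$ is a property of partial computable functions, so that Rice applies verbatim. I must also ensure the Hypothesis is strong enough to make it non-trivial. I would try to handle this by restricting to total programs and invoking the version of Rice's theorem for index sets. I should also flag the scope of the conclusion. What is excluded is a \emph{uniform} certifier, one that succeeds on every candidate $x$. A proof of one-wayness for one specific function is not ruled out by this argument. The corollary should therefore be read with ``admissible'' in the sense of the Admissible Method definition.
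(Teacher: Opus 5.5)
Your argument is correct, with one repair noted below, in the technical sense the paper gives to ``admissible'': a single pair $(M_{\mathrm{Adm}},V)$ that works on every $x\in\mathcal{P}$. Its engine is the paper's. The paper's proof observes that one-wayness is semantic and non-trivial and invokes the Extended Rice Principle, which is exactly your Case~A.

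You differ in routing this through Theorem~\ref{thm:meta-barrier}, and that detour adds nothing. The propagator (the OWF-to-PRG chain), the second outcome $\psi_2$, and the Case~B appeal to the MTB play no role in the contradiction. The contradiction uses only extensionality and non-triviality, as Remark~\ref{rem:no-class-collapse} itself says. Case~B is also already excluded by the words ``within the standard Turing model'' in the statement.

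What your version does buy, and the paper's one-liner lacks, is a justification of the Rice step. The paper simply asserts non-triviality. You locate it in the Hypothesis: if OWF did not exist, $\Phi_{\OWF}$ would be empty and the always-reject method would be vacuously admissible. Your extensional reading of the polynomial-time clause is what makes $\Phi_{\OWF}$ an index set at all, since a polynomial bound on the running time of $x$ itself is not a property of $\varphi_x$.

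Your scope caveat is also correct, and it marks where the paper overreaches in Example~\ref{ex:rsa}. The argument excludes a certifier that succeeds uniformly on every program. It does not exclude a proof that one specific function (say, the RSA function) is one-way, nor a proof that OWF exist, and those are the assumptions current constructions actually rest on.

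The repair concerns completeness. The composite $D(x)=V(x,M_{\mathrm{Adm}}(x))$ decides $\Phi_{\OWF}$ only if completeness holds for the \emph{generated} certificate, i.e.\ $\Phi_{\OWF}(x)=1 \Rightarrow V(x,M_{\mathrm{Adm}}(x))=1$. That is the biconditional form of Definition~4 in the General Theory section. The Admissible Method definition states completeness only existentially, and the paper's Branch~I takes the same step silently. Under the existential form your conclusion still holds, but by Rice--Shapiro rather than Rice. Soundness for every certificate plus existential completeness make $\Phi_{\OWF}$ recursively enumerable. Rice--Shapiro would then put a finite subfunction of a one-way function into $\Phi_{\OWF}$, which is impossible because one-way functions are total.

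Finally, drop the restriction to total programs. Partial programs simply fail $\Phi_{\OWF}$, so Rice applies over all of $\mathcal{P}$ as it stands. For the promise problem on total programs, Rice is false in general: ``$\varphi_x(0)=0$'' is decidable there. In that setting you would need a property-specific reduction. One such reduction uses, for each $n$, a polynomial-time program that computes a fixed OWF on length-$m$ inputs as long as $\varphi_n(n)$ has not halted within $m$ steps, and the identity afterwards. This program is one-way iff $\varphi_n(n)$ diverges.
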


\begin{proof}
The hardness assumptions on which cryptographic constructions rest are
semantic properties of the function computed: that a function is
one-way, that it is not invertible in polynomial time, and the like.
They depend only on the function computed, not on the syntactic form of
the program, and they are non-trivial.
By the Extended Rice Principle, no admissible method can uniformly 
certify a non-trivial semantic property. Hence no admissible method 
certifies the foundational hardness assumptions.
\end{proof}

\begin{example}[RSA]\label{ex:rsa}
The security of RSA rests on the hardness of integer factorisation.
Integer factorisation is believed to require super-polynomial time,
but this hardness has never been formally certified within the
standard Turing model. Corollary~\ref{cor:crypto} shows that no admissible method can
produce such a certificate: the required claim is a semantic property
of programs, and the Extended Rice Principle blocks all admissible
methods from certifying it. This is not a statement about whether RSA is secure; 
it is a statement about the limits of formal certification within the standard model.
\end{example}

\begin{remark}
Corollary~\ref{cor:crypto} does not imply that current systems are
insecure. It implies that their security cannot be \emph{formally certified}
within the standard model. The companion papers \cite{buono2026hierarchy,buono2026observer}
introduce the observational axis as the structural extension that
addresses this limitation.
\end{remark}

\section{Discussion and Objections}
\label{sec:discussion}

\paragraph{Relation to G\"odelian independence.}
G\"odel's incompleteness theorems concern statements true but
unprovable within a formal system.
The Double Bind is different: it concerns the structural
impossibility of \emph{generating and verifying} a certificate of
a semantic property within the standard Turing model.
Neither result implies the other.
Whether $\classP$ vs $\classNP$ is formally independent has been raised
as a genuine possibility~\cite{aaronson2003}; the Double Bind does not
depend on the answer. Formal independence concerns provability inside a
system; the Double Bind concerns the availability of a certificate inside
the standard model. If the question were independent, the Double Bind
would still hold, because it asks neither that the question be provable
nor that it be decidable, only whether the invariant can be certified in
the DTM. The obstruction operates upstream of provability.

\paragraph{Non-constructive proofs do not escape.}
One might object: ``A non-constructive proof never exhibits any
program as input, so Rice cannot apply.''
A non-constructive proof, to be admissible, must be formalised as
a syntactic string and verified by a Turing machine.
Once formalised, $\pi$ must cause $V$ to confirm a semantic property
of programs. The non-constructive strategy affects the path to $\pi$,
not what $V$ must confirm.

\paragraph{Metalanguage.}
One might argue that a proof in a logical metalanguage does not
operate on programs and therefore escapes Rice.
But for that proof to be admissible, it must be written as a
syntactic string and submitted to a verification machine.
At the moment of formalisation, the metalanguage proof becomes a
string that the verifier processes syntactically.
For $V$ to accept that string, it must confirm a semantic property
of programs. The level-separation holds at the level of human discourse; it
collapses at machine verification. There is no abstraction that the verifier can reach.

\newpage

\section{The Model Transfer Barrier}
\label{sec:IntroMTB}

The Model Transfer Barrier, which will be explained in detail in the section~\ref{sec:MTBDEF}, 
accounts for all methods from strictly stronger models. The two principles together close the argument completely, without any
cryptographic hypothesis.

The barrier does not formalise the known barriers in a new language.
It identifies the structural condition they already satisfy and
reveals why each of them was a barrier in the first place.
Natural proofs, relativisation, and algebrisation are barriers because
each depends on computational powers not simulable by a deterministic
Turing machine (DTM). The barrier names this condition and makes it explicit.
The complete formalization is presented in the dedicated appendix.

\subsection{Definition of the Barrier}

\textbf{Informal statement.}
A proof of a statement concerning $\classP$ vs $\classNP$ that depends
essentially on non-DTM computational powers cannot be assumed valid
for the standard Turing model.

\medskip

\textbf{Formal statement.}
Let $M$ be any computational model strictly stronger than the DTM.
If a proof of a statement $\varphi$ concerning $\classP$ vs $\classNP$
is valid in $M$ and relies on capabilities not simulable by the DTM,
then the validity of $\varphi$ does not transfer to the standard model.
This principle is independent of the specific nature of the extra
powers: non-computable oracles, non-reducible randomness, algebraic
extensions, or strong axioms.

The relation between the Model Transfer Barrier and the known
barriers is not one of formal subsumption.
The barrier states a sufficient condition: any technique that depends
essentially on computational powers not simulable by a DTM cannot
transfer its validity to the standard model.
The known barriers satisfy this condition, each for its own structural
reason. The barrier does not derive them from a formal framework; it
identifies the condition they already satisfy.

\subsection{Illustrative Examples}

\paragraph{Oracle models.}
Proofs carried out in $\mathrm{DTM}^A$ with a non-computable oracle
$A$ establish results only relative to $A$.
Baker, Gill, and Solovay \cite{baker1975} showed that there exist
oracles $A$ and $B$ such that $\classP^A = \classNP^A$ and
$\classP^B \neq \classNP^B$, demonstrating that techniques which
relativise cannot settle $\classP$ vs $\classNP$ in the standard
model.
The barrier identifies why: oracle access is a non-DTM power whose
removal invalidates such proofs.

\paragraph{Algebrisation.}
Proofs using algebraic extensions of oracle access do not transfer to
the standard model because algebraic oracle access is a non-simulable
power \cite{aaronson2009}.
The barrier identifies the same structural condition as for
relativisation, applied to a richer class of non-simulable resources.

\paragraph{Natural proofs.}
Constructive combinatorial properties that are large and useful cannot
prove $\classP \neq \classNP$ under $\OWF$ \cite{razborov1997}.
The barrier identifies why: constructivity corresponds to a form of
extra computational power --- polynomial-time constructibility of a
property of Boolean functions --- that is not neutral with respect to
the standard model.
Natural proof methods are Turing-computable; they are therefore
already blocked by the Extended Rice Principle before the barrier is invoked.
The barrier absorbs natural proofs as a special case: it identifies
the structural condition that makes them a barrier and makes that
condition explicit without any cryptographic hypothesis.

\paragraph{Monotone circuit lower bounds.}
Monotonicity is a non-simulable restriction: a monotone circuit is a
strictly weaker model than a general circuit, and lower bounds
obtained in this restricted model do not transfer automatically to
the general one. The barrier identifies this directly.

\medskip

All four express the same structural condition: dependence on
non-simulable powers prevents transfer of validity to the standard
model. The unification the barrier provides is not a formal derivation of
those barriers from a new axiom system; it is a precise
identification of the structural property they already share.

\subsection{Model-Relative Validity}

These examples illustrate a structural principle:
\emph{validity is model-relative}.
There is no automatic mechanism for translating proofs from stronger
models to the DTM. A proof valid in a non-simulable model may be 
vacuously true there while providing no information about the standard model. 
The Model Transfer Barrier names and formalises this observation.

\subsection{When the Double Bind applies}
\label{sec:db-scope}

The diagram fixes the mechanism. Three requirements fix when the
mechanism engages, and they are independent of the field, engineering,
physics, or pure mathematics, because none of them refers to what the
entities are.

First, the semantic invariant admits transport. The property to be
certified does not stay local to one entity; a propagator carries it to a
heterogeneous family. The propagator is specific to the problem:
NP-completeness for $\classP$ vs $\classNP$, quantification over adversaries
for cryptographic hardness, quantification over resources for lower bounds.
A property with no propagator stays local and is not blocked.

Second, the entities have a representation in the deterministic Turing
model. They are objects a DTM operates on, or objects encodable as such.
The Double Bind says nothing about a phenomenon in itself; it applies to a
DTM representation of it.

Third, the demanded certification is itself in the DTM. The certificate
must be produced and verified inside the standard model, not in some
stronger one.

The third requirement is what closes both horns, and it is the reason the
first two are not enough on their own. With representation in the DTM but
certification sought outside it, Rice does not apply, but the Model
Transferability Barrier does: validity obtained in a stronger model does
not transfer back. With certification demanded in the DTM, Rice applies
directly through the induced decider. Only when representation and
certification are both in the DTM are the two horns forced together, with
no exit between them.

This is the precise content of \emph{from within}. The ``within'' of the
Double Bind is the DTM taken as the place of both the representation and
the certification. A problem meeting the three requirements cannot have its
invariant certified in that setting, in any field, because the setting
itself, not the subject matter, is what is closed. This is what the
closing statement records: the semantic gap that the Double Bind identifies
cannot be crossed from within.

Now consider again the Langlands program~\cite{gelbart1984}.
Langlands is a web of transports. It links objects of different and
heterogeneous kinds, Galois representations, automorphic forms,
$L$-functions, group representations, through maps that carry properties
from one to another. This is a propagator in pure form, and it is not
NP-completeness: it is a different mechanism, in a different field. An
invariant that transports across the Langlands correspondence, restricted to
its computational representation and for which formal certification in the
DTM is demanded, meets the three requirements, and the propagator is not
NP-completeness but the correspondence itself. The propagator is therefore a
structure of the problem, not a notion of complexity, and the field the
Double Bind reaches is not computer science alone but anything that touches
uniform certification in the sense of this paper.

The Langlands reference is not cosmetic. It does diagnostic work. Consider
the following objection, which is subtle enough to look correct: \emph{the
Double Bind is really a complexity result in disguise, because the complexity
sits in the transport} --- the propagator for $\classP$ vs $\classNP$ is
NP-completeness, an object of the complexity vocabulary, so the obstruction
inherits a complexity content it claims not to have.

The objection does not need a reply on its own terms. It needs its premise
exposed. The premise is that the transport is characterised by its
complexity. Change the field and the premise fails. Under the Langlands
correspondence the transport is a correspondence between Galois
representations and automorphic forms; it carries no notion of complexity, no
completeness in the complexity sense, no computational class. Yet an invariant
that transports across it, once given a DTM representation, meets the three
requirements, and the Double Bind engages. A component that is present in the
NP instance and absent in the Langlands instance is not a component of the
mechanism; it is an artefact of the single application. The complexity was
never in the transport. It was the dress the propagator happens to wear in NP.

This is the general form of the test. Whenever an objection locates some
property $X$ inside a component of the Double Bind, move to a field in which
$X$ does not exist. If the Double Bind still engages there, $X$ was not in the
mechanism but in the application. The independence of the Double Bind from the
field is therefore not only a claim of reach; it is an instrument that
separates the mechanism from its artefacts.

\subsection{Why the Double Bind matters}
  
A future proof may be geometric, combinatorial, algebraic, topological or
entirely new. Within the framework of the paper, mathematical language is not
the fundamental discriminant: once it has been transformed into a finite
certificate that is uniformly verifiable, what matters is the semantic structure
of the property and the computational power required to certify it. The paper
states this explicitly also for constructive, non-constructive, geometric,
combinatorial and metalanguage-based proofs.

This is precisely the reason why the potential scope is so large: it is not a
barrier ``against geometry''; it is a barrier formulated at a level at which
very different mathematical techniques are represented by the same certification
structure.

And here simplicity is part of the strength. The global proof is surprisingly
short because the abstraction is strong. And the apparatus of explanation and
pre-emptive defence is necessary in order to intercept the ordinary
misunderstandings that a new line of reasoning may bring.

\subsection{Cases that fall under the Double Bind}
\label{sec:db-cases}

The three requirements are not abstract. Existing programmes meet them,
and it is worth seeing exactly where the Double Bind bites.

Geometric Complexity Theory is the clearest case, because it places
itself in the domain of the Double Bind by its own design. GCT does not
merely seek a class separation; it adopts \emph{verifiability} of its
obstructions as a programmatic requirement. Its aim is an obstruction
that is explicit and checkable --- an object produced and verified inside
the standard model. This is the third requirement, chosen by GCT itself.

Here is where the Double Bind bites. An explicit, verifiable obstruction
is a certificate in the DTM. A programme committed to producing such
obstructions uniformly is committed to a method of certification in the
sense of this paper: a generator of obstructions together with a verifier
that checks them. The invariant these obstructions certify is a
separation between classes. To the extent that this invariant is
semantic, non-trivial, and propagating --- and a class separation is all
three, since it bears on an entire heterogeneous family of computations
through the structure of the classes --- the induced generator--verifier
composition decides a non-trivial semantic language, and the Extended
Rice Principle forbids it. The bite is not on what GCT proves; it is on
the commitment to verifiable certification. GCT is not blocked because it
pursues a particular separation. It is blocked because it chose the
ground on which the first horn operates.

The point is structural and does not depend on the internal machinery of
GCT. Whether the obstructions are sought through representation theory or
algebraic geometry is the heuristic by which the obstruction is found;
the obstruction demanded is an explicit DTM object, and the Double Bind
reads the demand, not the heuristic. A programme could not escape by
enriching its search: enrichment changes how the obstruction is looked
for, not the requirement that the obstruction be verifiable.

The same analysis applies to the recent programmes that seek explicit or
constructive lower bounds and separations
\cite{oliveirapichsanthanam2021, chenetal2022, chenetal2024, williams2021,
pich2024loc, pich2025}. Each commits, in its own way, to results that are
explicit and checkable in the standard model. Each thereby meets the
third requirement, and the same bite follows: the commitment to
verifiable certification of a propagating semantic invariant places the
method under the first horn. The propagator differs from case to case;
the point at which the Double Bind engages does not.

The cases above fall under the first horn: the methods are
Turing-computable, and the Extended Rice Principle blocks them. Other
cases fall under the second. A method of certification may operate in a
model stronger than the DTM --- quantum computation is the familiar
example. Such a model may well certify. But if it certifies, the
certificate holds in that model, not in the DTM. A certification obtained
with a non-DTM power is not a certification in the standard model; it is
the certification of a different thing, in a different place. The demand
was a certificate in the DTM, and that demand is not met.

Quantum computation is the familiar example, and worth pursuing further.
Quantum supremacy is an instance to examine of how the certification of a
result obtained with a non-DTM power becomes problematic exactly when the
certification is required in the DTM. The contested point is not that the
quantum device computed; it is how one certifies that it sampled from the
right distribution. The fidelity estimator, the cross-entropy
benchmarking, the verification through classical simulation: the whole
debate turns on how a quantum result is to be certified in a verifiable,
standard-model way~\cite{arute2019}. The difficulty --- that the
verification is contested, that it requires classical simulation, that it
remains in dispute --- illustrates the second horn precisely. A
certification of a result obtained with quantum power is not a clean DTM
certification; the attempt to bring it back into the DTM, to verify it
classically, is exactly what is hard and contested.

This case does not require knowing what the stronger model can do. The
barrier does not read the nature of the extra power; it reads only that
the power is not that of a DTM. What holds for quantum computation holds
for any model richer than the DTM, including models not yet proposed:
whatever the resource, if it certifies, the certificate is not in the
DTM. The two horns together leave no model, weaker or stronger than the
DTM, from which a DTM certification can be had.

\newpage

\subsection{Related work}
\label{sec:related}

Several authors have observed that the difficulty of settling $\classP$
vs $\classNP$ is not exhausted by the individual barriers, and that
something at a higher level is at work. Fortnow surveys the status of the
problem and the known obstructions~\cite{fortnow2009}. Hromkovi\v{c} and
Rossmanith catalogue what one must know before attacking it, treating the
barriers as a body rather than as isolated
results~\cite{hromkovicrossmanith2017, hromkovicrossmanith2020}. Hromkovi\v{c}
locates the difficulty in the relation between computational complexity and
verifiable mathematics~\cite{hromkovic2015}, which is close in spirit to the
present concern. These works recognise a meta-level obstruction; the Double
Bind gives it a single structural form. Where they describe the barriers and
the conditions an attack must meet, the Double Bind identifies the mechanism
that makes an attack impossible from within the standard model, and locates
each known barrier as an instance of one of its two horns.

The question of whether $\classP$ vs $\classNP$ is formally independent has
also been raised directly~\cite{aaronson2003}. The Double Bind is not a claim
of formal independence, and the two should not be conflated; nor does the
Double Bind depend on the answer to that question.

Formal independence concerns the provability of a statement inside a formal
system; the Double Bind concerns the availability of a certificate inside
the standard Turing model. These are different axes. If $\classP$ vs
$\classNP$ were formally independent --- unprovable and irrefutable in the
system --- the Double Bind would still hold, because it does not require
that the question be provable or decidable. It asks whether the invariant
can be certified in the DTM, and that answer does not change with the
provability status of the question. Independence and the Double Bind are
compatible, not competing: the obstruction operates upstream of
provability, and would remain in force in a world where that question was
settled to be independent. Results on the consistency of circuit lower
bounds with weak arithmetic theories point the same
way~\cite{atseriasbussmuller2023}: the formal status of a lower bound
within a theory is a separate matter from whether the standard model can
certify it.

\section{Conclusion}
\label{sec:conclusion}

We have shown that, within the standard Turing model, the structural 
architecture of the Double Bind fundamentally blocks the uniform 
certification of complex semantic invariants. This obstruction encompasses 
and extends beyond classical barriers such as relativization, natural proofs, 
and algebrization, relying on an invariant geometric structure rather than 
any assumption about the truth value of $\mathbf{P}$ vs.\ $\mathbf{NP}$. 
Consequently, the boundaries exposed herein represent an absolute limitation 
on formal verification, with profound and predictable ramifications extending 
from geometric complexity to quantum supremacy.

\section{Acknowledgments}
The author used an artificial intelligence based language assistant to
support text revision, translation, and bibliography formatting. All
scientific ideas and conclusions are the author's own.

\bibliographystyle{plain} 
\bibliography{refs}

\appendix

\section{Full Proof of Rice's Theorem}
\label{app:rice}

\begin{proof}
Let $\Phi$ be a non-trivial semantic property.
Without loss of generality, assume $\perp \notin \Phi$ (the case
$\perp \in \Phi$ is symmetric).
Since $\Phi$ is non-trivial, there exists a program $e$ with
$\Phi(e)$.

Suppose for contradiction that a total Turing machine $D$ decides
$\Lang{\Phi}$.
Define the machine $R$ on input $x$: simulate $\varphi_x(x)$; if
it halts, output $e$.
The function computed by $R$ is $\varphi_e$ if $\varphi_x(x)$
halts, and $\perp$ otherwise.

We use $D$ to decide the halting problem.
Given input $x$, run $D$ on $R$:
\begin{itemize}
\item If $D$ accepts $R$: then $\Phi(R)$.
      Since $\perp \notin \Phi$, we have $\varphi_R \neq \perp$,
      so $\varphi_x(x)$ halts.
\item If $D$ rejects $R$: then $\neg\Phi(R)$.
      Since $\Phi(e)$, we have $\varphi_R \neq \varphi_e$,
      so $\varphi_x(x)$ does not halt.
\end{itemize}
This decides the halting problem, a contradiction.
Hence no such $D$ exists.
\end{proof}

This argument illustrates the semantic blindness inherent in the
standard Turing model: any attempt to certify a non-trivial semantic
property induces a decider for that property. This phenomenon is the
foundation of the Extended Rice Principle.

\section{The Fixed-Program Objection}
\label{app:fixed}

The objection is: ``Rice applies to arbitrary programs given as
input, but a proof about a single fixed program $x_0$ does not
analyse arbitrary programs.''

The objection confuses \emph{observing behaviour on individual
inputs} with \emph{certifying a semantic invariant over all inputs}.
The certificate required for an admissible proof must establish a
universal statement about the behaviour of $\varphi_{x_0}$ on
\emph{all} inputs. This is a semantic property of $\varphi_{x_0}$:
it depends only on the function computed, not on the syntactic form
of $x_0$.

A timing experiment or a finite set of runtime checks produces a
finite witness about finitely many inputs; such evidence cannot
constitute a formal certificate of a universal semantic statement.
The verifier $V$ checks a syntactic derivation whose conclusion
asserts a universal behavioural property of $\varphi_{x_0}$.

The distinction is:
\begin{itemize}
\item empirical observation on finitely many inputs:
      \emph{not a certificate};
\item formal derivation of a universal semantic property:
      \emph{a certificate subject to Rice-style undecidability}.
\end{itemize}

Rice's theorem concerns properties of the function computed by a
program. The universal statement above is precisely such a property,
and the semantic blindness that Rice captures remains in force even
when the domain of programs is reduced to a singleton $\{x_0\}$.

The core point: \textbf{the verifier must accept a universal
semantic claim about the function computed by $x_0$}. The fixedness
of $x_0$ does not remove the semantic nature of the claim.

\begin{example}\label{ex:quicksort}
A proof that a specific sorting algorithm runs in $O(n \log n)$
is not blocked: it certifies a semantic property of that specific
algorithm's running time, not a property of the kind captured by
$\Phi_\alpha$ or $\Phi_\beta$. The Double Bind blocks admissible
methods that certify $\Phi_\alpha$ or $\Phi_\beta$, not proofs about
individual algorithms whose conclusions do not concern those
properties.
\end{example}

The fixed-program objection therefore does not weaken the Double Bind:
the semantic nature of the certified claim persists even when the
domain of programs is reduced to a singleton.

\section{Non-Constructive Proofs}
\label{app:nonconstructive}

The objection is: ``A non-constructive proof never exhibits any
program as input, so Rice cannot apply.''

Abstraction offers no protection once the proof is formalised.

For $\pi$ to be an admissible certificate, it must encode a
derivation that $V$ accepts as establishing a semantic property of
programs -- a universal behavioural statement about the class of
programs in $\Lang{\Phi}$. The non-constructive strategy -- deriving
a conclusion without exhibiting a specific witness -- affects the
logical structure of $\pi$, not the semantic content that $V$ must
confirm.

Whatever path the author took to arrive at $\pi$, once $\pi$ is
submitted to $V$, the verifier must certify a semantic property.
The non-constructive origin of $\pi$ does not change what $V$
is required to confirm.

To close an admissible proof, a proof must be formalised: written
as a syntactic string and verified by a machine. At that moment,
the non-constructive proof stops being an abstract idea and becomes
a string whose content the verifier must evaluate. And the content
is always semantic.

Non-constructive reasoning affects how a proof is discovered, not what
the verifier must confirm. Once formalised, every proof is subject to
the same semantic constraints.

\section{Metalanguage and Level Collapse}
\label{app:meta}

The objection is: ``A proof constructed in a logical metalanguage
does not operate on programs. Rice's theorem blocks programs that
analyse programs, not abstract logic that analyses programs.''

If we accept that a mathematical proof is valid only when it can
be written in a formal language and verified step by step by a
machine -- as Lean and Coq embody today -- then no metalanguage
escapes this requirement. The proof $\pi_{\mathcal{M}}$ in a
metalanguage $\mathcal{M}$, however abstract its axioms, must be:
\begin{enumerate}[label=(\roman*)]
\item written as a syntactic string in a formal language; and
\item verifiable by a deterministic Turing machine.
\end{enumerate}
At the moment of formalisation, $\pi_{\mathcal{M}}$ loses its
status as abstract reasoning and becomes a string that the
verification machine processes syntactically.

For $V$ to accept $\pi_{\mathcal{M}}$ as establishing a semantic
property of programs, it must confirm that $\pi_{\mathcal{M}}$
implies a statement about the computational behaviour of programs.
That confirmation is semantic certification. The level-separation
holds at the level of human mathematical discourse; it collapses
at machine verification.

There are only two choices. Accept that mathematics must be formalised: 
then the metalanguage collapses into the object language at the point of verification,
and the Double Bind applies. Reject formalisation: then the proof is not admissible.
There is no third way.

This applies equally to higher-order logics (HOL, Isabelle, Coq).
Any proof formalised in HOL4, Isabelle/HOL, or Lean is ultimately
verified by a kernel that is a deterministic Turing machine.
The level at which the proof is \emph{written} (first-order, second-order,
dependent type theory) does not change the level at which it is
\emph{verified}: always a syntactic check by a TM.
The Double Bind applies at the point of machine verification,
regardless of the logical framework used to construct the proof.

The logical expressiveness of the metalanguage does not alter the
computational limitations of the verification kernel. The Double Bind
operates at the level of verification, not at the level of human
mathematical discourse.

\section{The Model Transfer Barrier}
\label{sec:MTBDEF}

\subsection{How to read this part}

This part may seem to do something strange. It first proves the Model
Transfer Barrier with a certain apparatus --- a reductio, two definitions of
dependence, a case split --- and then, in the closing remarks, concludes that
the barrier is almost a tautology and that the apparatus was an excess of
zeal. This is not a change of mind. It is the plan of the part, and it is
worth stating in advance how the two fit together, so that the formal
machinery and the final remarks are read as one movement rather than as a
proof followed by its retraction.

The key is what ``transfer'' means here. A method in a model $\mathcal{M}$
stronger than the DTM does not merely \emph{compute} something; it
\emph{certifies} a separation by mobilising a structural power $P$ that
characterises $\mathcal{M}$ --- an oracle, an algebraic extension, a
constructibility predicate. To ``transfer'' such a certification to the DTM
is to have the DTM certify the same separation on the same ground, and that
requires the DTM to decide $P$. But $P$ is, by the very way the case is set
up, a power the DTM does not have. So the demand ``transfer the
certification'' unfolds, on inspection, into the demand ``decide in the DTM a
$P$ that is not DTM-decidable'', which is a contradiction in terms. Once
``transfer'' is read this way --- as carrying a certification that leans on
$P$, not as deciding a language --- the barrier is analytic: its negation is
self-contradictory.

This is why the part proves and then calls the result a tautology, with no
inconsistency between the two. The proofs do not establish a doubtful truth;
they write an analytic one out in full, so that each step can be inspected and
the absence of any sleight of hand can be checked. ``It is obvious'' asks the
reader for trust; a reductio written out asks only for a reading. That is the
office of the apparatus: not to make the barrier true --- it cannot fail to be
true --- but to make its truth checkable, and to show exactly where the
contradiction sits.

And the tautological character is not a weakness but the whole point, because
of the role this barrier plays. It is the second jaw of a pincer whose first
jaw (Rice) closes the interior of the model. A second jaw that closed the
exit only \emph{empirically} --- ``it happens that validity does not
transfer'' --- could be argued with. This one closes it \emph{analytically}:
not ``it does not transfer'' but ``it is contradictory that it should''. A jaw
that no one can deny without self-contradiction is the strongest a pincer can
have. So the reader should take the formal sections and the closing remarks
together: the sections show the contradiction in detail, the remarks name what
the detail amounts to --- a truth that holds in every model, unconditionally,
and is for that reason unassailable.

\subsection{The barrier}

Let $\mathcal{M}$ be a class of computational models (or proof techniques). A
\emph{barrier} with respect to $\mathcal{M}$ is a pair $\langle P, S \rangle$
in which $P$ is a structural property that characterizes $\mathcal{M}$ and $S$
is a class separation, subject to two conditions:
\begin{enumerate}\itemsep2pt
  \item $\forall\, T \in \mathcal{M}$ with property $P$: $T \nvdash S$;
  \item $\exists\, T_1, T_2 \in \mathcal{M}$: $T_1 \vDash S$ and
        $T_2 \vDash \neg S$ (the separation is not universal within
        $\mathcal{M}$).
\end{enumerate}
This captures the idea that $T$ is ``blinded'' by $P$: it does not manage to
decide $S$ univocally.

\subsection{Theorem (MTB by non-contradiction)}

Assume $\neg\mathrm{MTB}$. Then there exist a class $\mathcal{M}$, a procedure
$\pi$, and a structural property $P$ such that:
\begin{enumerate}\itemsep2pt
  \item $P \in \mathrm{Decidable}(\mathcal{M})$;
  \item $P \notin \mathrm{Decidable}(\mathrm{DTM})$;
  \item $\mathrm{Depends}(\pi, P)$ --- $\pi$ uses $P$ crucially;
  \item $\mathrm{Sim}(\mathcal{M} \to \mathrm{DTM})$ exists and is valid ---
        $\pi$ transfers to the DTM.
\end{enumerate}

Hypotheses~1 and~2 make $P$ a genuinely stronger property in $\mathcal{M}$:
decidable there, not decidable in the DTM. Hypotheses~3 and~4 together demand
that $\pi$, which leans essentially on $P$, remain valid once carried to the
DTM by the simulation $\sigma$ granted by hypothesis~4. We show this
configuration is contradictory.

The whole weight falls on a single fact. If $\pi$ leans essentially on $P$ and
$\sigma$ carries $\pi$ to the DTM with its validity intact, then the DTM,
following the transferred proof, must settle $P$ wherever $\pi$ appeals to it:
a proof cannot be valid in a model that cannot follow one of its essential
steps. So a validity-preserving $\sigma$ forces a DTM decision of $P$ ---
call it $\sigma(P)$. But by hypothesis~2 no such decision exists. Hence
$\sigma(P)$ cannot be a genuine decision of $P$; at best it is an
approximation or a heuristic, and then the step of $\pi$ that rested on $P$ is
no longer underwritten, so $\pi$ is no longer valid in the DTM.

This is the bifurcation, and it is exhaustive because the two cases are a
proposition and its negation. Either $\sigma$ preserves the validity of $\pi$,
or it does not; there is no third option, and the second case is exactly the
failure of the first.

In the first case, $\sigma$ preserves the validity of $\pi$. Then, by the fact
just established, $\sigma$ decides $P$ in the DTM. This contradicts
hypothesis~2 directly: $P$ was assumed not DTM-decidable. The case is
impossible.

In the second case, $\sigma$ does not preserve the validity of $\pi$. This
splits once more, and both halves close. If $\sigma$ fails to preserve
validity because the transferred proof loses the step that rested on $P$, then
$\pi$ does not really transfer to the DTM: $\mathrm{Sim}(\mathcal{M} \to
\mathrm{DTM})$ is not the valid simulation hypothesis~4 asserted, and
hypothesis~4 fails. If instead one tries to keep $\sigma$ valid while denying
that it decides $P$, the fact established above forbids it: a
validity-preserving $\sigma$ that carries a proof leaning on $P$ already
constitutes a DTM decision of $P$, which returns us to the first case and its
contradiction with hypothesis~2. Either way the second case is impossible.

Neither case can hold. The first contradicts hypothesis~2; the second either
contradicts hypothesis~4 or collapses into the first. Since the two cases are
jointly exhaustive, the four hypotheses cannot stand together, and
$\neg\mathrm{MTB}$ is impossible.
\[
  \boxed{\text{MTB is proved}}
\]

\section{Clarifications}

$\mathrm{Depends}(\pi, P)$ means that $\pi$ essentially exploits a property $P$
which characterizes \textbf{a computational power that the DTM does not
possess}. It is not merely that $\pi$ uses $P$ somewhere in the proof: it is
that \textbf{$P$ encodes an ability that $\mathcal{M}$ has and the DTM does
not}. The three classical barriers are all of this form. In Natural Proofs
$P$ is the \emph{constructibility} of a Boolean function, a power the DTM does
not have in a formal sense, since the constructible functions for which
Natural Proofs work are structurally inaccessible to it. In Algebrization $P$
is \emph{verifiability by algebraic extension}, which the DTM does not possess
because it cannot operate in the same algebraic space. In the relativized
setting $P$ is \emph{access to an oracle $O$}, a power the DTM plainly lacks.
Each of the three is developed below along the same three lines: what
$\mathcal{M}$ can do, what the DTM cannot, and the barrier $\langle P, S
\rangle$ that results.

\subsection{Natural Proofs: the property P is ``constructibility''}

In $\mathcal{M}$, the class of Boolean circuits with bounded resources, one
can \emph{decide} whether a function is constructible --- that is, whether it
admits an efficient circuit representation --- and this property $P$ is
locally decidable, in the language of the circuits themselves. In the DTM, by
contrast, there is no direct access to this structural property: the DTM
cannot ``see'' the constructibility of a Boolean function in its native formal
language, so the property $P$ that characterizes $\mathcal{M}$ is inaccessible
to it. The resulting barrier $\langle P, S \rangle$ takes $P$ to be
constructibility --- the ability to characterize efficient functions --- and
$S$ to be the separation one wants to prove, ``NP-complete functions have
small circuits''; any proof that exploits $P$ in order to separate NP from P
remains blind in the DTM.

\subsection{Algebrization: the property P is ``algebraic extendibility''}

In $\mathcal{M}$, the algebraic proof systems such as PCP and IP, one can
decide properties by \textbf{algebraic extension of fields}: if a statement
holds over a finite field, it can be extended to infinite fields, and this
extendibility is a structural property that $\mathcal{M}$ possesses. The DTM
does not have this power; it cannot naturally operate in extended algebraic
spaces, so the property $P$ of extendibility is again a power the DTM lacks.
The resulting barrier $\langle P, S \rangle$ takes $P$ to be algebraic
extendibility --- operating in algebraic spaces --- and $S$ to be a separation
of the algebraic kind, ``Interactive Proofs $\neq$ NP'' or another; any proof
that exploits algebraic extension remains blind in the DTM.

\subsection{Oracles: the property P is ``access to the oracle O''}

In $\mathcal{M}$, a DTM equipped with an oracle $O$, one has the power to
\textbf{query $O$ in constant time}: access to $O$ is the structural property
of $\mathcal{M}$. In the DTM without the oracle this power is absent --- $O$ is
literally inaccessible. The resulting barrier $\langle P, S \rangle$ takes $P$
to be access to $O$, the power to query it, and $S$ to be a relativized
separation, ``$\mathrm{P}^{O} \neq \mathrm{NP}^{O}$''; any proof that exploits
$O$ remains blind in the DTM.

\subsection{How MTB unifies these cases}

The three cases share one structure, and setting it out shows why the barrier
is not three separate facts but one. In each, $\mathcal{M}$ has a power $P$
that the DTM does not possess --- structural decidability in Natural Proofs,
algebraic operability in Algebrization, access to an external resource in the
relativized setting. In each, a proof $\pi$ in $\mathcal{M}$ essentially
exploits that power: $\pi$ uses the constructible property, or algebraic
extension, or queries to the oracle $O$. In each, the attempt to transfer
$\pi$ to the DTM by a simulation $\sigma$ fails for the same reason: $\sigma$
would have to simulate $P$ in the DTM, but $P$ is a power the DTM does not have
by definition, so $\sigma(P)$ cannot exist as a true simulation of $P$. And so,
in each, the barrier is the same barrier. It is not specific to Natural Proofs,
to Algebrization, or to Oracles; it is the general phenomenon of which they are
instances --- a structural power that $\mathcal{M}$ has and the DTM does not.

\bigskip
\noindent Compact version of the proof.

\subsection{Formal Framework}

Let $\mathcal{M}$ be a class of computational models. A \textbf{Barrier}
$\mathcal{B}$ relative to $\mathcal{M}$ is a tuple $\langle P, S \rangle$ in
which $P: \mathcal{M} \to \{0, 1\}$ is a structural property (a predicate) such
that, for any $T \in \mathcal{M}$, the value $P(T)$ is decidable within
$\mathcal{M}$, and $S$ is a class separation statement (for instance
$\mathrm{P} \neq \mathrm{NP}$). The tuple $\langle P, S \rangle$ constitutes a
barrier if:
\begin{enumerate}\itemsep2pt
  \item $\forall T \in \mathcal{M} : P(T) = 1 \implies T \nvdash S$;
  \item $\exists T_1, T_2 \in \mathcal{M}$ such that $T_1 \vdash S$ and
        $T_2 \vdash \neg S$.
\end{enumerate}

\subsubsection{Essential Dependence and Validity-Preserving Simulation}

\begin{defn1}
A proof $\pi$ is said to essentially depend on $P$ (denoted as
$\pi \xRightarrow{ess} P$) if the validity of $\pi$ is logically equivalent to
the decidability of $P$ within the proving model. Formally:
\[
  \text{Valid}(\pi, \mathcal{M}) \iff P \in \text{Decidable}(\mathcal{M})
\]
\end{defn1}

\begin{defn2}
A simulation $\sigma: \mathcal{M} \to DTM$ is validity-preserving if for every
proof $\pi \in \mathcal{M}$:
\[
  \text{Valid}(\pi, \mathcal{M}) \iff \text{Valid}(\sigma(\pi), DTM)
\]
\end{defn2}

\subsection{The MTB Theorem}

\begin{thm}
Let $\pi$ be a proof in $\mathcal{M}$ such that $\pi \xRightarrow{ess} P$. If
$P \notin \text{Decidable}(DTM)$, then there exists no validity-preserving
simulation $\sigma: \mathcal{M} \to DTM$.
\end{thm}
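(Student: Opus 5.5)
I would argue by contradiction, following the reductio already carried out in the informal ``MTB by non-contradiction'' subsection, but now driven by the two formal definitions just given. Fix $\pi$ with $\pi \xRightarrow{ess} P$ and $P \notin \text{Decidable}(DTM)$, and suppose a validity-preserving simulation $\sigma: \mathcal{M} \to DTM$ exists. The first step is to establish that $\pi$ is in fact valid in $\mathcal{M}$. This uses the Formal Framework, where a barrier predicate $P$ is by stipulation decidable within $\mathcal{M}$. So $P \in \text{Decidable}(\mathcal{M})$, and the biconditional of essential dependence yields $\text{Valid}(\pi,\mathcal{M})$. The second step applies the definition of a validity-preserving simulation to $\pi$ to obtain $\text{Valid}(\sigma(\pi), DTM)$.

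The third step is to extract a DTM decision of $P$ from the validity of $\sigma(\pi)$, which would contradict $P \notin \text{Decidable}(DTM)$ and close the reductio. The intended mechanism is the one stated earlier: a proof cannot be valid in a model that cannot carry out one of its essential steps. Formally, I would show that $\sigma(\pi)$ still depends essentially on $P$, now relative to the DTM, meaning
\[
  \text{Valid}(\sigma(\pi), DTM) \iff P \in \text{Decidable}(DTM).
\]
Combined with the second step, this forces $P \in \text{Decidable}(DTM)$, which is the contradiction. The case split from the informal proof would then appear only as bookkeeping. If $\sigma$ drops the step of $\pi$ that rests on $P$, then the biconditional in the definition of validity-preserving simulation fails at $\pi$. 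If $\sigma$ keeps that step, the displayed equivalence applies.

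I expect the main obstacle to be exactly this transport of essential dependence, from $\pi \xRightarrow{ess} P$ in $\mathcal{M}$ to the same dependence of $\sigma(\pi)$ in the DTM. As written, the definition of a validity-preserving simulation only relates truth values of validity. It does not say that $\sigma$ preserves which properties a proof leans on. Nor does essential dependence in $\mathcal{M}$ by itself constrain anything about the DTM. So the plan for this step is as follows.

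First, I would try to derive the transport from the requirement that $\sigma$ be uniform over \emph{every} proof in $\mathcal{M}$. If the two biconditionals hold for all proofs, not just for $\pi$, then one can hope to compare $\pi$ with a variant proof in $\mathcal{M}$ that uses $P$ at a single step. Preservation would force the DTM to settle that step, i.e.\ to decide $P$.

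If that derivation fails, I would make the transport explicit. It would become a clause in the definition of validity-preserving simulation (``$\sigma$ maps $\pi \xRightarrow{ess} P$ to $\sigma(\pi) \xRightarrow{ess} P$''). This matches the intended reading of ``transfer'' in the opening remarks of this appendix. With that clause added, the argument above goes through in three lines.
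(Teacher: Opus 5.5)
Your first two steps match the paper's proof. It chains the biconditional of $\pi \xRightarrow{ess} P$ with validity preservation to get $\text{Valid}(\sigma(\pi), DTM) \iff P \in \text{Decidable}(\mathcal{M})$, and then treats $\sigma(\pi)$ as valid. In doing so it uses $P \in \text{Decidable}(\mathcal{M})$ tacitly, where you cite the Formal Framework explicitly. The transport step you single out is a genuine gap, and it is exactly the gap in the paper's proof. The paper bridges it only by asserting that, because $\sigma(\pi)$ is valid in the DTM, ``the structural properties invoked by $\pi$ (specifically $P$) must be representable/decidable within DTM.'' That yields $\sigma(P) \in \text{Decidable}(DTM)$, which is then played against $P \notin \text{Decidable}(DTM)$ by silently identifying $\sigma(P)$ with $P$. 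Nothing in Definitions~1 and~2 licenses that sentence; it is the informal principle you quote, restated.

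\textbf{Your first plan cannot close the gap.} Both definitions constrain only truth values of the undefined predicate $\text{Valid}$, and $\text{Decidable}(DTM)$ occurs in neither. So consider the reading in which $P \in \text{Decidable}(\mathcal{M})$, $P \notin \text{Decidable}(DTM)$, and every proof of $\mathcal{M}$ and every $\sigma$-image counts as valid. It satisfies all the hypotheses while the conclusion fails. Comparing $\pi$ with variant proofs only adds more constraints of the same validity-to-validity form. Hence the statement is not a consequence of the definitions as written, and a bridging axiom is unavoidable.

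\textbf{Your fallback supplies such an axiom}, and with it the argument is correct, but be clear about what it buys. Given $P \notin \text{Decidable}(DTM)$, the clause $\text{Valid}(\sigma(\pi), DTM) \iff P \in \text{Decidable}(DTM)$ is literally $\neg\text{Valid}(\sigma(\pi), DTM)$. The theorem then holds by stipulation, which is what the paper's own description of the MTB as ``analytic'' comes to. Nor is the clause forced by the meaning of ``transfer'': a derivation that appeals to a halting oracle about a machine that does halt translates into a DTM-checkable derivation by writing out the finite run, although halting is undecidable.

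If you want a bridge that does not presuppose the conclusion, the uniformity that works is over models, not proofs. The paper's robustness section requires $\sigma$ to preserve the semantics of models (clause~1 of Definition~R.5). Then $\sigma$ applied to an $\mathcal{M}$-decider of $P$ is a DTM program computing the same function, which contradicts $P \notin \text{Decidable}(DTM)$ directly. The cost, which the paper concedes, is that $\pi$ and essential dependence no longer do any work.
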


\begin{proof}
Assume for contradiction that there exists a validity-preserving simulation
$\sigma: \mathcal{M} \to DTM$.
Given $\pi \xRightarrow{ess} P$, we have:
\[
  \text{Valid}(\pi, \mathcal{M}) \iff P \in \text{Decidable}(\mathcal{M})
\]

By the property of $\sigma$ being validity-preserving:
\[
  \text{Valid}(\sigma(\pi), DTM) \iff \text{Valid}(\pi, \mathcal{M})
\]

Combining the two equivalences:
\[
  \text{Valid}(\sigma(\pi), DTM) \iff P \in \text{Decidable}(\mathcal{M})
\]

Since $\sigma(\pi)$ is a valid proof within the $DTM$ model, the structural
properties invoked by $\pi$ (specifically $P$) must be representable/decidable
within $DTM$ to maintain the logical chain of $\sigma(\pi)$. This implies:
\[
  \sigma(P) \in \text{Decidable}(DTM)
\]

However, by hypothesis, $P \notin \text{Decidable}(DTM)$. This constitutes a
direct contradiction.
Thus, no such $\sigma$ exists.
\end{proof}

\subsection{Meta-Theoretical Unification}

The MTB Theorem provides a general form for all known complexity barriers. Any
barrier $\mathcal{B} = \langle P, S \rangle$ is a specific instantiation of the
MTB logic where $P$ represents a structural capability absent in $DTM$:

\begin{center}
\begin{tabular}{@{}lp{45mm}p{50mm}@{}}
\toprule
\textbf{Barrier} & \textbf{Property $P$ (Power of $\mathcal{M}$)} &
\textbf{Structural Gap in $DTM$} \\
\midrule
\textbf{Natural Proofs} & Constructibility of Boolean Functions &
Inaccessibility of circuit-structure predicates \\[4pt]
\textbf{Algebrization} & Algebraic Extension Capability &
Inability to operate in extended algebraic fields \\[4pt]
\textbf{Relativization} & Oracle Access $\mathcal{O}$ &
Absence of the oracle $\mathcal{O}$ in the native DTM \\
\bottomrule
\end{tabular}
\end{center}

\begin{coro}
Any proof $\pi$ that utilizes a property
$P \in \text{Decidable}(\mathcal{M}) \setminus \text{Decidable}(DTM)$ is
inherently non-transferable to $DTM$ without loss of validity. This establishes
that the failure of current techniques is not a matter of algorithmic
inefficiency, but a structural incompatibility between the required proving
power and the $DTM$ model.
\end{coro}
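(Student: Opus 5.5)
The plan is to derive the corollary directly from the MTB Theorem just proved, by showing that its hypotheses are met whenever a proof $\pi$ ``utilizes'' a property $P \in \text{Decidable}(\mathcal{M}) \setminus \text{Decidable}(DTM)$. Concretely, I would fix such a $\pi$ and such a $P$. I would read ``without loss of validity'' as the existence of a validity-preserving simulation $\sigma: \mathcal{M} \to DTM$ in the sense of the second definition above, so that the claim to prove becomes: no validity-preserving $\sigma$ exists for $\pi$. The second half of the hypothesis, $P \notin \text{Decidable}(DTM)$, is literally the side condition of the MTB Theorem. What remains is to supply the premise $\pi \xRightarrow{ess} P$.

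The key step, and the main obstacle, is passing from ``$\pi$ utilizes $P$'' to essential dependence $\text{Valid}(\pi,\mathcal{M}) \iff P \in \text{Decidable}(\mathcal{M})$. The direction $\Leftarrow$ is the substantive one. I would first try to make it hold by stipulation, interpreting ``utilizes'' in the sense fixed in the Clarifications section: $\pi$ exploits $P$ as a power that $\mathcal{M}$ has and the DTM lacks. On that reading, the step of $\pi$ that appeals to $P$ is underwritten exactly when $P$ is decidable in the ambient model. The direction $\Rightarrow$ then follows because a valid $\pi$ that calls on $P$ presupposes that $P$ can be decided there. Since $P \in \text{Decidable}(\mathcal{M})$ holds by assumption, both sides of the biconditional are true in $\mathcal{M}$, and the equivalence holds trivially in $\mathcal{M}$.

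If this stipulative route is judged too weak, the fallback is the informal reductio from the earlier MTB section. A validity-preserving $\sigma$ must carry each essential step of $\pi$ into the DTM, so it yields a DTM procedure $\sigma(P)$ settling $P$ wherever $\pi$ invokes it. This contradicts $P \notin \text{Decidable}(DTM)$. That argument bypasses the formal definition of essential dependence and uses only that the $P$-step is indispensable.

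With essential dependence in hand, the MTB Theorem applies verbatim and yields the non-existence of $\sigma$. The closing sentence of the corollary, that the failure is a structural incompatibility rather than an algorithmic inefficiency, I would then justify as a remark rather than prove. The obstruction came entirely from $P \notin \text{Decidable}(DTM)$, a decidability gap that no increase in time or space resources inside the DTM can close.
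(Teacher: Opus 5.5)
Your route is the paper's own: the corollary is stated as an immediate consequence of the MTB Theorem, with ``utilizes'' read as essential dependence, and your fallback is the informal reductio of the earlier MTB section. The gap is in the step you take as given, namely that the theorem ``applies verbatim''. Its proof passes from $\text{Valid}(\sigma(\pi),DTM)$ to $\sigma(P)\in\text{Decidable}(DTM)$, and neither definition licenses that passage. Essential dependence ties validity only to decidability \emph{in} $\mathcal{M}$, and validity preservation constrains only the truth values of $\text{Valid}$. Your own verification shows what goes wrong. If $\pi$ is valid and $P\in\text{Decidable}(\mathcal{M})$, both sides of the biconditional are true, so essential dependence holds even for a $\pi$ that never appeals to $P$. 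The word ``utilizes'' therefore acquires no formal content. Now consider the map sending every valid $\mathcal{M}$-proof to one fixed valid $DTM$-proof, and every invalid one to a fixed invalid $DTM$-proof. It satisfies the validity-preservation biconditional for every proof while deciding nothing. So as soon as $\text{Decidable}(\mathcal{M})\setminus\text{Decidable}(DTM)$ is nonempty, which the corollary presupposes, the hypotheses of the MTB Theorem hold for every valid $\pi$ and its conclusion fails. The theorem is false under its own definitions, and the corollary cannot be extracted from it.

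The fallback fails at the same point: carrying a proof that appeals to $P$ into the $DTM$ with its validity intact does not require any $DTM$ to decide $P$. Checking an inference that invokes $P$ is not computing $P$. DTM kernels such as Coq's verify proofs about the halting set, and the Baker--Gill--Solovay theorem is itself an ordinary proof about oracle machines, checkable by such a kernel. A $DTM$ decider is forced only by a notion of transfer that preserves what $\mathcal{M}$ computes, such as clause~1 of Definition~R.5, which turns the $\mathcal{M}$-decider $A$ into the $DTM$-decider $\sigma(A)$. But then, as the paper itself notes, the contradiction uses only hypotheses 1, 2 and 4. What is proved is that a model deciding a $DTM$-undecidable $P$ admits no semantics-preserving $DTM$ simulation. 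That is true, but it is independent of $\pi$. It says nothing about whether the conclusion of $\pi$ has some other $DTM$-checkable proof, which is what non-transferability would have to mean for the corollary to have content.

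Finally, your closing remark does not yield the second sentence of the statement, which needs current techniques to instantiate the hypothesis. They do not:
\begin{itemize}
\item the constructivity condition of natural proofs is decidable in time polynomial in the truth-table size;
\item the Baker--Gill--Solovay oracles (a PSPACE-complete set and a set built by computable diagonalization) are decidable;
\item the low-degree extensions of decidable oracles used in algebrization are decidable as well.
\end{itemize}
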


\section{Robustness with Respect to the Definition of Essential Dependence}

\noindent\textbf{Purpose of this section.} What follows is not a further proof
of the MTB, nor a restatement of the two proofs already given. Its purpose is
to establish that the conclusion of the MTB Theorem does not depend on the
particular formalization of essential dependence adopted in Definition~1. To
this end we fix, independently and in advance, a different notion of
dependence --- weaker than Definition~1, and stated as a one-directional
entailment rather than as an equivalence --- together with the auxiliary
definitions that Section~1 left implicit, and we verify that the reductio still
goes through. The scope and the limits of what is thereby established are
stated explicitly in the final subsection.

The terminology is that of the preceding sections: $P$ is a property of
\emph{models}, written $P(T)$ for $T \in \mathcal{M}$, and not a property of
inputs.

\subsection{Preliminaries}

\noindent\textbf{Definition R.1 (Class of computational models).}
A class of computational models is a set $\mathcal{M}$ of formal systems
(Turing machines, proof systems, alternative models of computation) satisfying:
\begin{itemize}\itemsep2pt
  \item \emph{Closure under composition:} for all $T_1, T_2 \in \mathcal{M}$
        there exists $T \in \mathcal{M}$ simulating $T_1$ and $T_2$.
  \item \emph{Closure under reduction:} for every $T \in \mathcal{M}$ and every
        reduction $f$, there exists $T' \in \mathcal{M}$ simulating $T$ through
        $f$.
\end{itemize}
\emph{Examples:} $\mathrm{DTIME}(f(n))$, models deciding a problem in time
$f(n)$; $\mathrm{PSPACE}$, models deciding a problem in polynomial space;
$\mathrm{NP}$, models verifying a solution in polynomial time.

\medskip
\noindent\textbf{Definition R.2 (Structural property).}
A structural property $P$ is a subset of $\mathcal{M}$ defined by a first-order
formula or by a computational property:
\[
  P = \{\, T \in \mathcal{M} \mid T \text{ satisfies } \phi(T) \,\},
\]
where $\phi(T)$ is a formula describing $P$. We write $P(T)$ for
$T \in P$.
\emph{Examples:} $P = \{T \in \mathcal{M} \mid T \text{ is deterministic}\}$;
$P = \{T \in \mathcal{M} \mid T \text{ decides NP in polynomial time}\}$.

\medskip
\noindent\textbf{Definition R.3 (Class separation).}
A class separation $S$ is a pair of language classes $(L_1, L_2)$ such that
$L_1 \neq L_2$ and $L_1, L_2$ are decidable in $\mathcal{M}$. A model
$T \in \mathcal{M}$ \emph{decides} $S$ if it contains an algorithm
distinguishing $L_1$ from $L_2$:
\[
  T \vDash S \iff \exists A \in T \text{ such that } A \text{ decides } L_1
  \text{ and rejects } L_2.
\]
\emph{Example:} $L_1 = \mathrm{SAT}$, $L_2 = \mathrm{TAUTOLOGY}$,
$S = (L_1, L_2)$.

\medskip
\noindent\textbf{Definition R.4 (Decidability in $\mathcal{M}$).}
A property $P$ is decidable in $\mathcal{M}$, written
$P \in \mathrm{Decidable}(\mathcal{M})$, if there exists an algorithm
$A \in \mathcal{M}$ such that
\[
  \forall T \in \mathcal{M}, \quad A(T) \text{ halts and returns }
  \begin{cases}
    1 & \text{if } T \in P,\\
    0 & \text{otherwise.}
  \end{cases}
\]

\medskip
\noindent\textbf{Definition R.5 (Valid simulation).}
A simulation $\sigma: \mathcal{M} \to DTM$ is \emph{valid} if:
\begin{enumerate}\itemsep2pt
  \item \emph{it preserves semantics:}
        $\forall T \in \mathcal{M},\ \forall x: \ T(x) = \sigma(T)(x)$;
  \item \emph{it preserves validity:} if $\pi$ is valid in $\mathcal{M}$, then
        $\sigma(\pi)$ is valid in $DTM$.
\end{enumerate}

\medskip
\noindent\textbf{Definition R.6 (Barrier with respect to $\mathcal{M}$).}
A barrier with respect to $\mathcal{M}$ is a pair $\langle P, S \rangle$, with
$P$ a structural property characterizing $\mathcal{M}$ and $S$ a class
separation, such that
\begin{enumerate}\itemsep2pt
  \item $\forall T \in \mathcal{M}$ with $P(T)$: $T \nvDash S$;
  \item $\exists T_1, T_2 \in \mathcal{M}$ such that $T_1 \vDash S$ and
        $T_2 \vDash \neg S$.
\end{enumerate}
This formalizes the idea that a model $T$ is \emph{limited} by $P$: it does not
manage to decide the separation $S$ univocally, although $S$ is decidable in
other models of $\mathcal{M}$.

\subsection{An alternative notion of dependence}

Definition~1 of Section~1.1 formalizes essential dependence as an
\emph{equivalence} between the validity of $\pi$ and the decidability of $P$.
We now replace it by a one-directional condition, stated at the level of
models.

\medskip
\noindent\textbf{Definition R.7 (Critical dependence).}
A procedure $\pi$ \emph{depends critically} on $P$, written
$\mathrm{Depends}(\pi, P)$, if for every $T \in \mathcal{M}$ carrying out
$\pi$,
\[
  \pi \text{ is correct in } T \implies P(T),
\]
that is, $\{\, T \in \mathcal{M} \mid \pi \text{ is correct in } T \,\}
\subseteq P$. Equivalently: $\pi$ cannot be correct without $P$ holding.

\medskip
\noindent\textbf{Remark R.8.} Definition~R.7 is strictly weaker than
Definition~1: it retains only the implication from correctness to $P$, and
drops the converse. Any $\pi$ satisfying Definition~1 satisfies
Definition~R.7, but not conversely. The quantifier is universal: an
existential reading (\emph{there exists} an input on which correctness entails
$P$) would be satisfied by every procedure that is incorrect somewhere, and
would therefore define nothing.

\subsection{The theorem under the alternative definition}

\begin{thm}
Assume $\neg\mathrm{MTB}$ with $\mathrm{Depends}$ understood in the sense of
Definition~R.7. That is, assume there exist a class $\mathcal{M}$, a procedure
$\pi$ and a structural property $P$ such that
\begin{enumerate}\itemsep1pt
  \item $P \in \mathrm{Decidable}(\mathcal{M})$;
  \item $P \notin \mathrm{Decidable}(DTM)$;
  \item $\mathrm{Depends}(\pi, P)$ in the sense of Definition~R.7;
  \item there exists a valid simulation $\sigma: \mathcal{M} \to DTM$ in the
        sense of Definition~R.5.
\end{enumerate}
Then a contradiction follows. Hence $\neg\mathrm{MTB}$ is impossible.
\end{thm}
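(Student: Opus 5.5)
The plan is to derive the contradiction directly from hypotheses~1, 2 and~4, using the semantic clause of Definition~R.5 rather than its validity clause. Hypothesis~3 should then play no role beyond fixing the setting; if that is confirmed, it is worth noting explicitly. The idea is to push the $\mathcal{M}$-decider of $P$ through the simulation $\sigma$ and obtain a DTM-decider of $P$.

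\textbf{Step 1.} By hypothesis~1 and Definition~R.4, fix an algorithm $A \in \mathcal{M}$ such that, for every $T \in \mathcal{M}$, $A(T)$ halts and returns $1$ exactly when $T \in P$.

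\textbf{Step 2.} By hypothesis~4, $\sigma$ is valid in the sense of Definition~R.5. Clause~(1) of that definition gives $A(y) = \sigma(A)(y)$ for every input $y$. In particular this holds for every (encoding of a) model $T \in \mathcal{M}$, so $\sigma(A)(T) = A(T)$.

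\textbf{Step 3.} Conclude that $\sigma(A)$ is a DTM that halts on every $T \in \mathcal{M}$ and returns $1$ if and only if $T \in P$. Hence $P \in \mathrm{Decidable}(DTM)$, which contradicts hypothesis~2. Therefore the four hypotheses cannot hold together, and $\neg\mathrm{MTB}$ is impossible under Definition~R.7.

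\textbf{Fallback via hypothesis~3.} If the intended reading of the theorem requires hypothesis~3 to be used, I would add the following argument. Take any $T$ in which $\pi$ is correct. By Definition~R.7, $P(T)$ holds. By clause~(2) of Definition~R.5, $\sigma(\pi)$ is valid in DTM. One would then argue that this validity requires the DTM to evaluate $P$ at the points where $\pi$ appeals to it.

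However, Definition~R.7 only yields the implication from correctness to $P$, so this route decides $P$ only on the set where $\pi$ is correct. Turning that into a total decision procedure would need exactly the converse implication that Remark~R.8 dropped. For this reason I would present the Step~1--3 argument as the actual proof and record the fallback only as commentary.

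\textbf{Main obstacle: Step~2.} The difficulty is type coherence. Definition~R.4 treats $A$ as taking models $T \in \mathcal{M}$ as inputs, while Definition~R.5(1) quantifies over inputs $x$. I would fix a single encoding of models as strings, so that both clauses refer to the same input domain. I would also check that $\mathrm{Decidable}(DTM)$ is meant to quantify over the same $T \in \mathcal{M}$, since otherwise hypothesis~2 is not the negation of what Step~3 produces. Once these conventions are fixed, the contradiction is immediate.

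A consequence worth flagging is that this argument makes Definitions~R.4 and~R.5(1) jointly incompatible with hypothesis~2 for \emph{any} $P$, independently of any notion of dependence. The robustness claim therefore holds, but for a reason stronger than, and somewhat different from, the one the section advertises.
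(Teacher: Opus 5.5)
Your proof is correct and is essentially the paper's own argument. The step that carries the weight there is sub-case~B2: clause~1 of Definition~R.5, applied to the $\mathcal{M}$-decider $A$, makes $\sigma(A)$ a DTM decider of $P$, contradicting hypothesis~2. The paper notes that this contradiction holds ``independently'' of its Case~A/B and B1/B2 split, which your direct version shows to be redundant. The paper also concedes, in its remark on hypothesis~3, that the contradiction comes from hypotheses~1, 2 and~4 alone. It tacitly makes the same identification of models with inputs, and reads $\mathrm{Decidable}(DTM)$ over $T \in \mathcal{M}$, which you rightly flag as conventions to fix explicitly.
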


\begin{proof}
By hypothesis~1 there exists $A \in \mathcal{M}$ deciding $P$, in the sense of
Definition~R.4.

\medskip
\noindent\emph{Step 1 ($P$ is genuinely stronger in $\mathcal{M}$).}
By hypotheses~1 and~2, $P$ is decidable in $\mathcal{M}$ and not decidable in
$DTM$. Hence $\mathcal{M}$ has, with respect to $P$, greater expressive power
than $DTM$.

\medskip
\noindent\emph{Step 2 (the two cases).}
Consider the simulation $\sigma$ granted by hypothesis~4. Exactly one of the
following holds:
\begin{itemize}\itemsep3pt
  \item[\textbf{Case A:}] $\sigma$ preserves the validity of $\pi$
        \textbf{and} decides $P$ in $DTM$;
  \item[\textbf{Case B:}] $\sigma$ does not preserve the validity of $\pi$
        \textbf{or} does not decide $P$ in $DTM$.
\end{itemize}
Case~B is the negation of Case~A, so the bifurcation is exhaustive.

\medskip
\noindent\emph{Step 3 (Case A is impossible).}
In Case~A, $\sigma$ yields an algorithm in $DTM$ deciding $P$. This contradicts
hypothesis~2 directly.

\medskip
\noindent\emph{Step 4 (Case B is impossible).}
Case~B splits into two sub-cases.

\smallskip
\noindent\textbf{B1: $\sigma$ does not preserve the validity of $\pi$.}
Then $\pi$ is not correctly transferred to $DTM$ through $\sigma$, so $\sigma$
violates clause~2 of Definition~R.5 and is not a valid simulation. This
contradicts hypothesis~4.

\smallskip
\noindent\textbf{B2: $\sigma$ preserves the validity of $\pi$ but does not
decide $P$ in $DTM$.}
By clause~1 of Definition~R.5, $\sigma$ preserves semantics, so for every input
$y$ we have $A(y) = \sigma(A)(y)$. Since $A$ decides $P$ over $\mathcal{M}$ by
hypothesis~1, the algorithm $\sigma(A) \in DTM$ halts on every $T \in
\mathcal{M}$ and returns $1$ exactly when $P(T)$ holds. Hence $\sigma(A)$
decides $P$ in $DTM$, so $\sigma$ does decide $P$ in $DTM$, contrary to the
supposition of~B2. Sub-case~B2 is therefore empty; and independently, the
existence of $\sigma(A)$ contradicts hypothesis~2.

\medskip
\noindent\emph{Conclusion.}
Case~A contradicts hypothesis~2; sub-case~B1 contradicts hypothesis~4;
sub-case~B2 is empty. Since Cases~A and~B are jointly exhaustive, the four
hypotheses cannot hold simultaneously.
\end{proof}

\[
  \boxed{\text{The Model Transfer Barrier (MTB) is proved true.}}
\]

\subsection{What the two proofs establish jointly}

\noindent\textbf{The two proofs are logically independent.} They do not share
their load-bearing premise. The first derives the decidability of $P$ in $DTM$
from Definition~2, the preservation of the validity of $\pi$; the second
derives it from clause~1 of Definition~R.5, the preservation of the semantics
of the models of $\mathcal{M}$. These are distinct premises, and neither proof
invokes the other. What the two arguments share is their shape --- both require
some bridge from $\sigma$ to decidability in $DTM$ --- not the premise that
supplies it.

\medskip
\noindent\textbf{The premise of the second proof is constitutive, not
contestable.} That $\sigma(T)$ compute the same function as $T$ is not an
additional assumption imposed on the simulation: it is what makes $\sigma$ a
simulation at all. A map failing clause~1 of Definition~R.5 is not a
simulation of $\mathcal{M}$ in $DTM$ under any reading. The second proof
therefore rests on the least contestable premise available in the framework.

\medskip
\noindent\textbf{The two versions capture different aspects of the barrier.}
They are not one statement proved twice.
\begin{itemize}\itemsep4pt
  \item Under Definition~1 the barrier operates at the level of the
        \emph{proof}: $\sigma$ is required to preserve only the validity of a
        given $\pi$, and the equivalence between that validity and the
        decidability of $P$ carries the argument. The conclusion is that the
        technique does not transfer.
  \item Under Definition~R.7 together with Definition~R.5 the barrier operates
        at the level of the \emph{model}: $\sigma$ is required to preserve the
        semantics of $\mathcal{M}$, and the contradiction follows from the
        decider granted by hypothesis~1. The conclusion is that the model does
        not transfer.
\end{itemize}
A weaker demand on $\sigma$ is offset by a stronger condition of dependence,
and a stronger demand on $\sigma$ dispenses with it. The two forms cover
between them more than either covers alone, in the same manner as the two
branches of the Double Bind.

\medskip
\noindent\textbf{On the use of hypothesis~3.} The four hypotheses are the
description of the configuration whose existence $\mathrm{MTB}$ denies, not a
minimal axiom set from which a conclusion is to be extracted. The proof just
given reaches its contradiction from hypotheses~1,~2 and~4; since
refuting $\{1,2,4\}$ refutes a fortiori every configuration containing it,
$\{1,2,3,4\}$ included, a reductio requiring fewer hypotheses is the stronger
one. The office of $\mathrm{Depends}(\pi, P)$ in the statement is to delimit
the class of proofs the barrier concerns --- those that essentially exploit the
power $\mathcal{M}$ has and $DTM$ lacks --- which is what makes the theorem the
Model Transfer Barrier rather than a generic claim about simulations. The same
holds of the closure conditions of Definition~R.1, which characterize
$\mathcal{M}$ without being invoked at any step.

\medskip
\noindent\textbf{Consequence for the robustness claim.} The conclusion has now
been obtained under two different formalizations of essential dependence:
Definition~1, an equivalence between the validity of $\pi$ and the
decidability of $P$, and Definition~R.7, the weaker one-directional entailment
from correctness to $P$. An objection directed at the particular shape of
Definition~1 therefore does not reach the result.
\subsection{The Last Step}

If a proof $\pi$ of ``$\mathrm{P} \neq \mathrm{NP}$'' essentially relies on a
computational power $\mathcal{M}$ that the DTM does not have, then $\pi$ does
not count as a valid proof of $\mathrm{P} \neq \mathrm{NP}$ in the sense of
classical computability theory. This is different from saying that the theorem
is false. It is the claim that the proof does not prove what it claims to
prove.

\subsubsection{The proof}

We first fix what ``valid proof of $\mathrm{P}$ vs $\mathrm{NP}$'' means. A
valid proof of $\mathrm{P} \neq \mathrm{NP}$ is an argument $\pi$ such that
$\mathrm{DTM} \vdash \pi$, that is, the DTM can verify every step: every
logical inference in $\pi$ must be recognizable and certifiable by a Turing
machine. This is the standard notion in computability theory --- a
mathematical proof is valid if the DTM can verify it.

Now suppose $\pi$ essentially uses a model $\mathcal{M}$ --- an oracle
$\mathcal{O}$, an algebraic extension, or a structural property of a Boolean
function that is not computable. ``Essentially'' means that removing that
element makes $\pi$ collapse: if $\pi'$ is $\pi$ with $\mathcal{M}$ removed,
then $\pi'$ is incomplete, containing logical gaps. When the DTM attempts to
verify $\pi$ step by step, it reaches a step that appeals to a property $P$
decidable only in $\mathcal{M}$. There the DTM stops. It cannot proceed.

At this point there are two cases, and they are exhaustive. In the first, the
DTM outputs ``proof not verifiable''; then $\pi$ is not a valid proof in the
formal sense, and we are done. In the second, the DTM outputs ``verified'';
but then the DTM must have simulated $\mathcal{M}$ within itself. And by
hypothesis $\mathcal{M}$ is strictly more powerful than the DTM. This
simulation cannot exist without losing computational power.

Suppose there existed a simulation $\sigma: \mathcal{M} \to \mathrm{DTM}$
preserving both semantics and validity. Then every construction in
$\mathcal{M}$ would be effectively simulable in the DTM, and $\mathcal{M}$
would not be strictly more powerful than the DTM --- contradicting the
hypothesis. Hence, if $\mathcal{M}$ is strictly more powerful than the DTM, no
validity-preserving simulation exists.

The consequence for the proof follows. If $\pi$ essentially uses $\mathcal{M}$
and no preserving simulation exists, then $\pi$ is not verifiable by the DTM,
so $\mathrm{DTM} \nvdash \pi$, and $\pi$ is not a valid proof of $\mathrm{P}
\neq \mathrm{NP}$ in the sense of classical computability theory.

\subsection{What Does This Mean?}

The conclusion can be read on three levels.

\emph{Classical formal logic.} In a formal system such as $\mathrm{PA}$ or
$\mathrm{ZFC}$, with the DTM as the syntax verifier, one does not have
$\mathrm{ZFC} \vdash \text{``}\mathrm{P} \neq \mathrm{NP}\text{''}$ via $\pi$,
because every step of $\pi$ must be expressible in terms the DTM can verify.

\emph{Computability semantics.} Suppose one proves ``$\mathrm{P} \neq
\mathrm{NP}$'' using an oracle $\mathcal{O}$ that does not exist in the DTM.
Then you have proved: ``if the oracle $\mathcal{O}$ existed, then $\mathrm{P}
\neq \mathrm{NP}$''. You have not proved: ``in the actual computational world,
the world of the DTM, $\mathrm{P} \neq \mathrm{NP}$''. The two are different
statements, and only the first has been established.

\emph{Structural perspective (MTB).} A proof that critically depends, in the
sense of Definition~R.7, on a property $P$ decidable in $\mathcal{M}$ but not
in the DTM is structurally inadmissible as a proof of universal $\mathrm{P}
\neq \mathrm{NP}$.

These three readings are one theorem, which we state explicitly.

\begin{thm}[Invalidity of a proof with extra powers]
Let $\pi$ be a proof of the proposition ``$\mathrm{P} \neq \mathrm{NP}$'', and
let $\mathcal{M}$ be a computational model strictly more powerful than the DTM,
in the sense that $\mathrm{P}_{\mathcal{M}} \supsetneq \mathrm{P}_{\mathrm{DTM}}$.
If $\pi$ depends essentially on a property $P$ decidable in $\mathcal{M}$ but
not in the DTM, and there exists no simulation $\sigma: \mathcal{M} \to
\mathrm{DTM}$ preserving both the semantics of $P$ and the validity of $\pi$,
then $\mathrm{DTM} \nvdash \pi$, and consequently $\pi$ is not a valid proof,
in the formal sense of computability theory, of the proposition ``$\mathrm{P}
\neq \mathrm{NP}$''.
\end{thm}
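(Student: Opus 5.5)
The plan is to argue by contraposition, using the notion of validity fixed just above: $\pi$ is a valid proof of ``$\mathrm{P}\neq\mathrm{NP}$'' exactly when $\mathrm{DTM}\vdash\pi$, i.e.\ a fixed deterministic checker accepts every step of $\pi$. So assume $\mathrm{DTM}\vdash\pi$. From this assumption I will build a map $\sigma:\mathcal{M}\to\mathrm{DTM}$ that preserves both the semantics of $P$ and the validity of $\pi$. This contradicts the hypothesis that no such $\sigma$ exists, and so yields $\mathrm{DTM}\nvdash\pi$. The final clause, that $\pi$ is not a valid proof in the formal sense, then follows immediately from the definition of validity.

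\textbf{Step 1: isolate the steps that use $P$.} Using the essential-dependence hypothesis (read as Definition~R.7, or as Definition~1 when the equivalence is needed), identify the steps of $\pi$ at which $P$ is invoked. Essential dependence says that $\pi$ cannot be correct in a model $T$ unless $P(T)$ holds. Hence any checker that certifies the correctness of $\pi$ in the DTM must, at those steps, commit to the truth value of $P$.

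\textbf{Step 2: turn the checker into a simulation.} Define $\sigma$ to be identity-like on the parts of $\pi$ that do not invoke $P$. At the invocations of $P$, $\sigma$ returns the value that the accepting DTM verification implicitly commits to. Validity preservation then holds by construction, since the checker accepts $\pi$. Semantic preservation for $P$ should come from soundness of the checker together with Definition~R.7: an accepted step asserting $P(T)$ cannot be wrong.

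\textbf{Step 3: close the contradiction.} Having produced $\sigma$, we contradict the stated non-existence hypothesis. Alternatively, mirroring Case~B2 of the preceding theorem, compose $\sigma$ with the $\mathcal{M}$-decider $A$ for $P$ to obtain a DTM decider $\sigma(A)$ for $P$. This contradicts $P\notin\mathrm{Decidable}(\mathrm{DTM})$. The strict-power assumption $\mathrm{P}_{\mathcal{M}}\supsetneq\mathrm{P}_{\mathrm{DTM}}$ is used only to guarantee that the configuration is non-vacuous, that is, that such a $P$ can exist at all.

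\textbf{Main obstacle: Step 2.} A DTM checker accepting a finite proof inspects only finitely many symbols. It therefore does not obviously yield a \emph{total} decision procedure for $P$ over all of $\mathcal{M}$, nor a semantics-preserving simulation in the sense of Definition~R.5. I would first try to strengthen ``depends essentially'' so that the relevant step of $\pi$ is a universally quantified appeal to $P$ whose acceptance entails a uniform decision. If that cannot be justified from the definitions, I would fall back on stating the conclusion only relative to the given hypothesis. In that fallback the theorem becomes essentially analytic, in the sense announced in ``How to read this part'': acceptance by the DTM is precisely what a validity-preserving $\sigma$ would supply, and the hypothesis excludes such a $\sigma$.
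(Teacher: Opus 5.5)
Your contrapositive is the paper's own route. The paper splits on the checker's verdict and, in the ``verified'' case, asserts that the DTM ``must have simulated $\mathcal{M}$ within itself''; it then invokes strict power to rule that simulation out. The obstacle you flag in Step~2 is therefore not something the paper resolves. It is the load-bearing step, asserted there without argument, and it is a genuine gap.

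Acceptance of $\pi$ gives you $\sigma=\mathrm{id}$ on $\pi$, which preserves the validity of $\pi$, and nothing more. A checker certifies that each line follows by a rule; the paper's own metalanguage appendix calls verification ``always a syntactic check by a TM''. So a line invoking $P$ is accepted as a formula about $P$, not by evaluating $P$, and your Step~1 inference that an accepting checker commits to the truth value of $P$ does not follow. Once $P\in\mathrm{Decidable}(\mathcal{M})$ is granted, Definition~1 says only that $\pi$ is valid in $\mathcal{M}$. Definition~R.7, if it applies to the DTM at all, yields the single fact $P(\mathrm{DTM})$. Neither produces a procedure agreeing with $P$ on all of $\mathcal{M}$.

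Nor can anything. A $\sigma$ preserving the semantics of $P$ (clause~1 of Definition~R.5 applied to the $\mathcal{M}$-decider $A$) yields the DTM-decider $\sigma(A)$, exactly as in Case~B2, contradicting $P\notin\mathrm{Decidable}(\mathrm{DTM})$. So the no-$\sigma$ hypothesis already follows from the others, and contradicting it is no easier than the original goal. The whole content of the theorem is the lemma $\mathrm{DTM}\vdash\pi\Rightarrow P\in\mathrm{Decidable}(\mathrm{DTM})$. That lemma fails for ordinary checkers. Take a formal proof of a relativized statement about a non-computable oracle $A$, with $\mathcal{M}=\mathrm{DTM}^A$ and $P$ the oracle set $A$: it satisfies Definition~1 and is checked by a DTM that never consults $A$. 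Your Step~3 alternative has a further defect: the $\sigma$ of Step~2 acts on steps of $\pi$, not on models, so $\sigma(A)$ is not even defined.

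Your fallback does not close the gap as written. The hypothesis excludes only a $\sigma$ preserving \emph{both} the semantics of $P$ and the validity of $\pi$, so the merely validity-preserving $\sigma$ that acceptance supplies contradicts nothing. The paper makes the same slide, from ``preserving both semantics and validity'' to ``no validity-preserving simulation exists''. The statement can be obtained only by building the missing lemma into the notion of essential dependence. One option is to require Definition~1 with the DTM as proving model, $\mathrm{Valid}(\pi,\mathrm{DTM})\iff P\in\mathrm{Decidable}(\mathrm{DTM})$. Another is the paper's operational gloss that the DTM ``reaches a step that appeals to a property $P$ decidable only in $\mathcal{M}$'' and ``cannot proceed''. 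Either way $\mathrm{DTM}\nvdash\pi$ is immediate from $P\notin\mathrm{Decidable}(\mathrm{DTM})$, with the $\sigma$-hypothesis and strict power idle. This is the paper's ``analytic'' reading, and it should be stated as a definition rather than presented as derived.

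Finally, strict power does not secure non-vacuity. $\mathrm{P}_{\mathcal{M}}\supsetneq\mathrm{P}_{\mathrm{DTM}}$ concerns polynomial time, and it holds for a DTM with an oracle for an $\mathsf{EXP}$-complete language. That model decides nothing a DTM cannot and is simulated exactly by one. For the same reason strict power cannot do the job the paper assigns it of excluding $\sigma$, so you are right to keep it out of the deduction.
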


Three objections meet three short answers.

\emph{``But it could still hold.''} If a proof is not verifiable by the DTM,
and the DTM is our formal verification system, then by definition it is not a
valid proof within our theoretical framework. One may say it holds in a
stronger logic; granted, but then one has not proved $\mathrm{P} \neq
\mathrm{NP}$ in the original computational sense.

\emph{``What if the proof were correct and the DTM insufficient?''} Then the
matter is deeper: one is asserting that the mathematical truth of $\mathrm{P}
\neq \mathrm{NP}$ transcends what the DTM can formally verify. But then it
cannot be claimed as a theorem of computability theory; it is a metaphysical
assertion, not a formalizable result.

\emph{``What if we discover the DTM is insufficient?''} Then the axiomatic
foundations of computability theory would themselves have to be revised. Until
that happens, the definition of ``valid proof'' remains anchored to the DTM.

\medskip
\noindent\textbf{The unassailable core.} The argument is a reinforced logical
tautology, and it closes in a single chain. Because $\pi$ essentially uses the
powers of $\mathcal{M}$, the DTM cannot verify every step; a proof the DTM
cannot fully verify is not a formally valid proof; and because $\mathcal{M}$ is
strictly more powerful than the DTM, no perfect simulation into the DTM exists,
so the steps carried out in $\mathcal{M}$ do not become verifiable after all.
Therefore $\pi$ does not prove $\mathrm{P} \neq \mathrm{NP}$ in a formal and
verifiable sense.

If tomorrow someone says, ``I have proved $\mathrm{P} \neq \mathrm{NP}$ using
properties of algebraic varieties that only an algebraic oracle can decide'',
then they have proved $\mathrm{P} \neq \mathrm{NP}$ by assuming that oracle,
not $\mathrm{P} \neq \mathrm{NP}$ itself. And if they reply, ``but the property
is true even without the oracle, it is just difficult to verify'', then they
must supply the DTM-computable verification; failing that, they fall back into
the previous case.

\section{The revelatory appendix: what the theory already contained}

\noindent\emph{Preliminary note.} What follows is not part of the argument, 
and is not a requirement for the validity of the Model Transfer Barrier or of the 
Double Bind, which stand on the grounds already given. A structural
connection with the Double Bind is nevertheless argued below, and it is argued
rather than assumed. A reader who sets this appendix aside loses nothing of the
preceding argument.

\subsection{A note on what is in the paper and what is proposed here}

Two kinds of material are combined in this appendix, and they must not be
confused. The claim that the Double Bind and the Model Transfer Barrier are
\emph{latencies} --- results obtained by removing a tacit assumption rather than
by adding a construction --- is a reading of the paper's own content. The third
illustration below --- an algorithm whose output is indistinguishable from
noise, together with the chain through one-way functions and the remarks on
Kolmogorov complexity --- is \emph{not} in the paper: it is proposed here as a
further instance of the same genus. The paper does not mention Kolmogorov
complexity or noise-indistinguishable outputs; it does invoke one-way functions,
but only in recording the Razborov--Rudich barrier as an independent complement.
The extension is offered as a proposal consistent with the paper's spirit, not
as an exposition of something already written in it.

\subsection{The claim of this appendix}

This appendix makes a claim about the \emph{status} of the results in this
paper, not about their content. The Double Bind and the Model Transfer Barrier
are not additions to the theory of computation. They are things the theory
already contained and no one had put under the light --- latent consequences of
definitions that have been in place since 1936, made visible not by adding a
construction but by removing a tacit assumption the reader had imported.

This distinction matters, and it is worth stating precisely what separates a
latency of this kind from an ordinary theorem. An ordinary theorem is proved by
\emph{adding} work: a construction, a reduction, an argument that was not there
before. A latency of the kind discussed here is revealed by \emph{subtracting} a
tacit prohibition that was never in the theory to begin with --- the unspoken
assumption that certifying is different from deciding, that validity transfers
between models, that an output which genuinely solves a problem must be
recognisable as a solution. None of these prohibitions is written into the
definitions. Each is supplied by the reader, from habit, and each dissolves the
moment one reads the definitions as they stand. The results are not discovered
by building; they are exposed by declining to assume what the theory never
asserted.

\subsection{The Double Bind and the MTB as latencies}

The Double Bind adds nothing. It takes Rice's theorem (1953) and the definition
of a certifier --- a generator--verifier pair whose composition accepts exactly
on the property --- both already present, and observes that composed they say:
inside the model, certifying a semantic property and deciding it are the same
operation. No new theorem; an unremarked juxtaposition. The tacit prohibition
removed is the reader's assumption that certification, involving proofs and
strings, is a different kind of act from decision, involving yes/no answers. The
definitions never said so. Once one stops assuming it, the identity is
immediate.

The Model Transfer Barrier is proved by reductio, and what its proof exposes is
a regularity already present in every relativisation, algebrisation, and
natural-proofs result. Each of those results
already behaved as though a certification obtained with resources the
deterministic Turing machine lacks could not be carried out in the standard
model. The barrier names, as a single principle, what those results were already
doing unnamed. The tacit prohibition removed is the assumption that a method's
validity is model-independent --- an assumption the definitions, which fix a
computational model and nothing beyond it, never licensed.

\paragraph{Is Rice being used as the theorem allows?}
One might ask whether Rice's theorem may be invoked in the way this paper
invokes it. The answer is that Rice does not forbid
it. The theorem states that every non-trivial semantic property is
undecidable. It places no condition on how the decider is presented, and
it does not restrict its reach to deciders exhibited directly rather than
induced.

\subsection{A third latency: an algorithm whose output is indistinguishable
from noise}

The clearest illustration of the genus is not the Double Bind at all but a
plainer observation, one that looks trivial and generates a swarm of objections
on first contact, yet stands on firm ground once the objections are met.
This paragraph has a specific purpose, to clarify the 
conceptual link between this paper and \cite{buono2026syntaxseedynamicsyntactic}.

\paragraph{The base, stated precisely.}
Everything in this section rests on one foundation, and it must be stated
exactly, because a loose statement of it is what generates the swarm of
objections while the exact statement dissolves them:

\begin{quote}
\emph{The definition of an algorithm does not require that an algorithm which
solves a problem produce an output distinguishable, by any observer inspecting
the output in isolation, from random noise.}
\end{quote}

The precision that carries the whole point is the clause ``by any observer
inspecting the output in isolation.'' Correctness is a property of the
input/output relation the machine realises; distinguishability is a property of
how an output appears to someone looking at it. The definition binds the first
and is silent on the second. It individuates an algorithm by the function it
computes, never by the appearance of its output.

This is why ``solves the problem'' and ``indistinguishable from noise for every
observer'' do not conflict, though they seem to. They would conflict if the
witness of correctness were the inspection of the output. It is not: the witness
of correctness is the machine together with its input, by re-execution. An
observer who holds the machine and the input can verify; an observer who holds
only the output cannot --- and the definition never promised the latter that he
could. Correctness lives in the relation, not in the look of the output; so an
output may be, at once, the correct solution and, to anyone seeing only the
output, indistinguishable from random noise. The definition forbids neither half
of this conjunction, because it speaks only to the first.

\paragraph{The observation.}
From this base follows the observation itself: the notion of algorithm does not
forbid an algorithm that genuinely solves a problem while producing an output
indistinguishable from noise. There is, in the definition, no predicate ``the
output must be recognisable as structured.'' The notion does not speak to this
at all, and \emph{a fortiori} does not prohibit it. This is the same crack that
makes the Double Bind possible, seen on the output rather than on the method: in
the Double Bind, a method may be correct without any internal verifier
recognising it as correct; here, an output may be the solution without any
observer of the isolated output recognising it as one. In both, the definition
binds correctness and leaves recognisability free. The two are not analogous
facts; they are one definitional property --- correctness does not entail
recognisability --- viewed on the method in one case and on the output in the
other.

\paragraph{Why the clause carries everything: it does two jobs at once.}
The clause ``inspecting the output in isolation'' is not a technical caveat; it
is the load-bearing element, and its importance is that it performs two distinct
jobs with a single stroke, which is why weakening it collapses the section and
stating it exactly holds the section up.

The \emph{first} job is to remove a contradiction. Without the clause, the base
reads ``an output can solve a problem while being indistinguishable from noise
for every observer,'' and this is genuinely incoherent: if no observer under any
condition can tell the output from noise, then ``correct'' has no referent,
because nothing anywhere marks this output as the solution rather than any other
string. The clause supplies the referent. It relocates the witness of
correctness away from the output and onto the machine-with-input: correctness is
certified by re-execution, not by inspection. Once the witness is the relation
and not the appearance, ``solves'' and ``looks like noise'' occupy different
registers and cannot contradict --- one speaks of what the machine does, the
other of what a bare output shows.

The \emph{second} job, performed by the very same relocation, is to weld this
observation to the Double Bind. The reason the two are one property and not two
is precisely that the clause has moved recognisability off the object and made
it a separate axis from correctness. Once correctness lives in the relation and
recognisability in the observer's access, the gap between them is a single
structural fact, and that fact is exactly what the Double Bind exploits on the
side of methods: there, too, correctness is a property of the pair $(M,V)$ while
recognisability is what an internal verifier can attain, and the two are
decoupled by the same definitional silence. The clause that dissolves the
paradox on the output is the clause that identifies the crack on the method. One
move, two consequences: it is this economy that makes the clause the foundation
rather than a footnote.

A qualification is owed here, because it marks the exact strength of the claim.
The clause secures coexistence of ``solves'' and ``indistinguishable from noise''
in the \emph{absolute} sense --- for every observer of the isolated output, with
no bound on resources. It is not the only route to coexistence in general: in
the \emph{computational} sense, ``solves'' and ``indistinguishable from noise for
every efficient observer'' already coexist without the clause, and that
coexistence is cryptography. The clause is what secures the absolute reading
specifically; the two readings and their two poles are treated in the next
paragraphs. The point to carry forward is narrower and exact: in the absolute
sense, the coexistence is coherent only once the witness of correctness is
placed in the relation rather than in the output, and placing it there is the
same act that exposes the correctness/recognisability decoupling underlying the
Double Bind.

\paragraph{Why it looks trivial and is not.}
Stated flatly, ``an output can be correct while looking like noise'' seems
either obvious or empty. Its weight appears only when one asks what supports it
and what would follow from denying it.

\begin{itemize}
\item \textbf{The definition does not forbid it.} This is not an inference but a
direct fact about the definition of an algorithm, which is extensional: it
individuates an algorithm by the function computed, never by the appearance of
the output.

\item \textbf{Forbidding it would make the theory subjective.} To rule out such
algorithms one would need a predicate ``this output is structured / this output
is noise.'' But structure and noise are not properties of the function; they are
properties of how an output appears to an observer. Any such predicate is either
observer-relative --- and a theory whose basic notions depend on who is looking
is not objective --- or it is made absolute, in which case it has consequences
examined below. The theory keeps the predicate out precisely to remain
objective.

\item \textbf{Removing the permission removes the definability of one-way
functions.} A one-way function is, by construction, an algorithm whose output
hides the structure of its preimage from every efficient observer. If the theory
forbade algorithms whose output hides its structure, one-way functions could not
even be defined. Note the exact strength of this: the permission is a
\emph{precondition of definability} of one-way functions, not a proof that they
exist (their existence is open). Remove the permission and one-way functions do
not become false --- they become unformulable.

\item \textbf{And with them, the barrier that assumes them.} The
Razborov--Rudich barrier holds under the assumption that one-way functions
exist. In this paper's own treatment, natural methods are Turing-computable and
are therefore already blocked by Horn~1; the Razborov--Rudich obstruction enters
separately, as an independent complement, and it is \emph{that}
obstruction whose hypothesis is the existence of one-way functions. That
hypothesis rests on the definability of one-way functions, which rests on the
tacit permission above. The permission does not make the barrier true or false;
it is what makes the Razborov--Rudich hypothesis \emph{formulable}. A celebrated
barrier of complexity theory depends, at the bottom of the chain, on a tacit
permission of the 1936 definition that no one states.
\end{itemize}

\paragraph{Two poles of ``indistinguishable from noise.''}
The observation has two distinct formalisations, and conflating them is itself a
reading error worth flagging. In the \emph{absolute} sense, ``indistinguishable
from noise'' is captured by Kolmogorov complexity: an output is structureless
when it is incompressible, admitting no program shorter than itself. In the
\emph{computational} sense, it is captured by pseudorandomness: the output has
structure --- it is generated by a short program, seed plus function --- but
that structure is inaccessible to any efficient observer. These are different. A
one-way function's output has \emph{low} Kolmogorov complexity (it is far from
absolutely structureless) and is only computationally indistinguishable from
random. The two poles are two separate latencies of the same definitional
permission: the definition forbids neither absolutely structureless outputs nor
structure-hiding ones.

\paragraph{Kolmogorov complexity as a further latency.}
Kolmogorov complexity supplies not only the absolute pole but a latency of its
own of exactly the genus described here. The function $K(x)$ is uncomputable ---
there is no decider for it --- and this uncomputability is again obtained by
subtraction: the definition neither provides such a decider nor forbids the
function's definition; the incomputability falls out. That the very measure of
``absence of structure'' is itself uncomputable is one more thing the theory
contained from the start and that no one needs to add.

\subsection{Why latencies cluster around computability}

There is a structural reason the theory is rich in latencies of this kind, and
it explains why several appear at once rather than by coincidence. The theory is
defined extensionally and minimally: a Turing machine, a partial function, an
input/output relation. A definition this spare asserts almost nothing and
therefore leaves almost everything open. Every ``the definition does not forbid
$X$'' is a latent truth of this genus. A minimal extensional definition is, in
effect, a machine for generating latencies: it does not prohibit, and each
non-prohibition is a theorem awaiting exposure. The Double Bind, the
transferability regularity, the permission of noise-like outputs, and the
uncomputability of $K$ are four such non-prohibitions. There are surely others.

\newpage

\section{A Foreword to the Reader}
\label{sec:toread}

\begin{tcolorbox}[colback=yellow!5!white,colframe=yellow!50!black,title=Draft Notice]
The following section is provided as a preliminary preview of an active, 
highly immature development phase. It is offered strictly as an early-stage overview.
\end{tcolorbox}

The apparent paradox discussed in Section \ref{sec:ap-paradox} regarding the Double Bind 
may seem to be the most speculative aspect of this work; the purpose of this second 
part of the document is to demonstrate that it is, in fact, an inescapable consequence 
of the theory. We shall prove that while Rice's theorem establishes the impossibility 
of deciding non-trivial properties, it is actually impossible to even partition the 
program space in a topologically meaningful way. This is because the property under 
consideration is topologically intertwined with its negation in every neighborhood 
and at every scale. Consequently, the $P$ vs $NP$ problem—when viewed through the 
lens of uniform certification—is transformed from a search for a solution into a 
problem of geometric impossibility.

\section{An Alternative Formulation of the (Another) Double Bind}

We derive the Double Bind obstruction through two independent 
mechanisms: the prefix-closure topology on the space of DTM 
programs, and the model-independence of Kolmogorov complexity. 
The first horn replicates the topological content of Rice's 
theorem: the outcome sets of any non-trivial semantic property 
are nowhere dense, so no uniform verifier can separate them by 
finite observation. The second horn applies to properties whose 
certification propagates across a heterogeneous family via a 
problem-specific propagator: any uniform DTM verifier would 
compress an incompressible object, which the invariance theorem 
shows is impossible across the entire Turing-equivalent class, 
not only for the DTM. The two horns are independent in mechanism 
and converge on the same impossibility. The boundary of the 
argument is stated precisely: beyond the Turing-equivalent class, 
Kolmogorov invariance does not apply, and the Model 
Transferability Barrier of the original Double Bind remains 
necessary.

The tools are elementary: the prefix-closure topology on program encodings, 
and the model-independence of Kolmogorov complexity via the invariance theorem.

The derivation is complete within the Turing-equivalent class. 
Beyond that class, for models not simulable by a Turing 
machine, the second horn does not reach, and the Model 
Transferability Barrier of the original Double Bind 
remains the necessary closure. This limit is not a gap to be 
apologised for; it is the precise boundary of the argument, and 
we state it as such.

\subsection{The propagator: the logical prerequisite}

Before any topology or complexity, the central logical fact must 
be in place.

\begin{definition}[Propagating property]
A semantic property $\Phi$ \emph{propagates via propagator $P$} 
if certifying that a single program $x$ satisfies $\Phi$ 
logically implies $\Phi$ holds or fails for an entire 
heterogeneous family of programs, through the structural 
mechanism $P$.
\end{definition}

\begin{remark}[The propagator is not introduced by the method]
The propagator is not a feature of the certification method; it 
is a feature of the problem. For $\Phi_{\mathrm{SAT} \in P}$, 
the propagator is NP-completeness (Cook--Levin): if any program 
decides SAT in polynomial time, then every problem in NP is in 
P. This implication holds by definition of NP-completeness, 
independently of how or whether the fact is certified.

Consequently, a uniform method that certifies $\Phi$ on any 
input is forced to answer for the whole family. The induced 
decider is not constructed by the method; it emerges from the 
semantic content of what is being certified. This is why the 
Double Bind applies to SAT and not to the running time of 
Quicksort: Quicksort has no propagator, its complexity property 
is local. SAT has NP-completeness as propagator, and its 
complexity property is global.
\end{remark}

The distinction between local and propagating properties is 
the true prerequisite of the Double Bind. Once a property is 
established as propagating, the two horns apply.

\section{The Space of Programs and its Topology}

\begin{definition}[Program space]
Let $\mathcal{M}$ be the set of all deterministic Turing 
machines. We encode each $M \in \mathcal{M}$ as a string 
$\langle M \rangle \in \{0,1\}^*$ via the standard Turing 
encoding. Define
\[
  \mathcal{P} := \{0,1\}^*
\]
as the space of all program encodings.
\end{definition}

\begin{definition}[Prefix-closure topology]
\label{def:topology}
For $\sigma \in \{0,1\}^*$, define the \emph{open ball}
\[
  B_\sigma := \{ \sigma \cdot w : w \in \{0,1\}^* \}
\]
as the set of all strings extending $\sigma$. A set 
$U \subseteq \mathcal{P}$ is \emph{open} if and only if 
$U = \bigcup_{\sigma \in S} B_\sigma$ for some set $S$ of 
valid prefixes.
\end{definition}

\begin{lemma}
$(\mathcal{P}, \tau)$ is a topological space.
\end{lemma}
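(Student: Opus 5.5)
The plan is to verify the three topology axioms directly from Definition~\ref{def:topology}, taking $\tau$ to be the family of all unions $\bigcup_{\sigma \in S} B_\sigma$ with $S \subseteq \{0,1\}^*$. The key structural fact is a description of how two basic balls meet, so I will establish that first and derive the axioms from it.

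\textbf{Step 1 (intersection of two balls).} For $\sigma, \rho \in \{0,1\}^*$ I will show that $B_\sigma \cap B_\rho$ is always either a ball or empty.
\begin{itemize}
\item If $\sigma$ is a prefix of $\rho$, then every extension of $\rho$ extends $\sigma$, so $B_\rho \subseteq B_\sigma$ and $B_\sigma \cap B_\rho = B_\rho$. The case where $\rho$ is a prefix of $\sigma$ is symmetric.
\item If neither string is a prefix of the other, they differ at some position $i \le \min(|\sigma|,|\rho|)$. Then every extension of $\sigma$ and every extension of $\rho$ still differ at position $i$, so $B_\sigma \cap B_\rho = \emptyset$.
\end{itemize}

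\textbf{Step 2 (the axioms).}
\begin{itemize}
\item $\emptyset$ is open: it is the union over $S = \emptyset$.
\item $\mathcal{P}$ is open: every string extends the empty word $\varepsilon$, so $\mathcal{P} = B_\varepsilon$.
\item Arbitrary unions are open: if $U_j = \bigcup_{\sigma \in S_j} B_\sigma$ for each $j$, then $\bigcup_j U_j = \bigcup_{\sigma \in \bigcup_j S_j} B_\sigma$.
\item Finite intersections are open: for $U = \bigcup_{\sigma \in S} B_\sigma$ and $V = \bigcup_{\rho \in T} B_\rho$, distributivity gives $U \cap V = \bigcup_{(\sigma,\rho) \in S \times T} (B_\sigma \cap B_\rho)$. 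By Step 1 each term is a ball or empty, so $U \cap V$ is a union of balls. Induction then extends this to any finite intersection.
\end{itemize}

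\textbf{Main obstacle.} The only delicate point is the phrase ``valid prefixes'' in Definition~\ref{def:topology}. If it were read as restricting $S$ to encodings of genuine machines, then $\varepsilon$ might be excluded and $\mathcal{P}$ might fail to be open. I would first make explicit that $S$ ranges over arbitrary subsets of $\{0,1\}^*$, including $\varepsilon$, which is consistent with $\mathcal{P} := \{0,1\}^*$.

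If a restricted reading were insisted on, the fallback is to check two things:
\begin{itemize}
\item the set of allowed prefixes is upward closed under extension, so that Step 1 still produces allowed balls;
\item the whole space is covered, so that $\mathcal{P}$ is still open.
\end{itemize}
Everything else is routine.
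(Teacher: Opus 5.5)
Your proposal is correct and follows essentially the same approach as the paper's proof: the empty union, $\mathcal{P}=B_\varepsilon$, merging index sets for unions, and the fact that two balls meet in the ball of the longer string when one is a prefix of the other and are disjoint otherwise. You also spell out the distributivity step that reduces intersections of arbitrary open sets to intersections of balls, which the paper leaves implicit, and you read ``valid prefixes'' as all of $\{0,1\}^*$, including $\varepsilon$, which is the reading the paper itself needs in order to use $B_\varepsilon$.
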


\begin{proof}
(i) $\emptyset$ is open as the empty union. (ii) $\mathcal{P} 
= B_\varepsilon$ is open. (iii) Arbitrary unions of open sets 
are open by the union of the corresponding index sets. (iv) 
Finite intersections: $B_\sigma \cap B_\tau = B_{\max(\sigma,
\tau)}$ when $\sigma \preceq \tau$ or $\tau \preceq \sigma$, 
and $\emptyset$ otherwise; thus finite intersections of open 
sets are open.
\end{proof}

\begin{remark}[Why this topology]
The prefix-closure topology is not an arbitrary choice. In this 
topology, a set is open if and only if membership in it is 
\emph{settled by a finite prefix} --- a finite initial segment 
of the program encoding. This is exactly the condition that a 
verifier reading a finite certificate can satisfy. The 
correspondence
\[
  \text{open} \;\longleftrightarrow\; 
  \text{settled by finite observation}
\]
is the reason the topology is the relevant one: it makes the 
topological notion of openness coincide with the 
computability-theoretic notion of finite verifiability.
\end{remark}

\section{Semantic Properties and their Topological Structure}

\begin{definition}[Semantic property]
A \emph{semantic property} is a predicate $\Phi : \mathcal{M} 
\to \{\mathrm{T}, \mathrm{F}\}$ such that $\Phi(M) = \Phi(M')$ 
whenever $\varphi_M = \varphi_{M'}$ (i.e., whenever $M$ and 
$M'$ compute the same partial function). $\Phi$ is 
\emph{non-trivial} if there exist $M_1, M_2$ with $\Phi(M_1) 
= \mathrm{T}$ and $\Phi(M_2) = \mathrm{F}$.

For a non-trivial semantic property, define:
\[
  \Phi_\alpha := \{ \langle M \rangle \in \mathcal{P} : 
  \Phi(M) = \mathrm{T} \}, \quad
  \Phi_\beta  := \{ \langle M \rangle \in \mathcal{P} : 
  \Phi(M) = \mathrm{F} \}.
\]
\end{definition}

\begin{theorem}[Rice --- topological form]
\label{thm:rice-top}
Let $\Phi$ be a non-trivial semantic property. Then 
$\Phi_\alpha$ and $\Phi_\beta$ are both nowhere dense in 
$(\mathcal{P}, \tau)$.
\end{theorem}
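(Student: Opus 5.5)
The plan is to verify the definition of nowhere density directly. In $(\mathcal{P},\tau)$ the basic open sets are the balls $B_\sigma$, so a set $A\subseteq\mathcal{P}$ is nowhere dense exactly when every ball $B_\sigma$ contains a sub-ball $B_\tau$, with $\sigma\preceq\tau$, such that $B_\tau\cap A=\emptyset$. Indeed, if such a $\tau$ exists below every $\sigma$, then no ball lies inside $\overline{A}$, so $\overline{A}$ has empty interior. I will therefore fix an arbitrary $\sigma$ and construct one extension $\tau$ with $B_\tau\cap(\Phi_\alpha\cup\Phi_\beta)=\emptyset$. This single $\tau$ handles both sets simultaneously.

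\textbf{Step 1: fix the encoding.} The crucial input is a property of the ``standard Turing encoding'' that the definition of program space leaves implicit. I will take it to be self-delimiting, that is, prefix-free:
\begin{itemize}
\item every $\sigma\in\{0,1\}^*$ can be extended to a complete code $\langle M\rangle$;
\item no proper extension of a complete code is itself a complete code.
\end{itemize}
Since $\Phi_\alpha$ and $\Phi_\beta$ consist only of strings of the form $\langle M\rangle$, any string that is not a valid code lies in neither set.

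\textbf{Step 2: construct $\tau$.} Given $\sigma$, extend it to a complete code $\langle M\rangle\succeq\sigma$, and set $\tau:=\langle M\rangle 0$. Every element of $B_\tau$ is a proper extension of $\langle M\rangle$. By prefix-freeness, no element of $B_\tau$ is a code, so $B_\tau\cap\Phi_\alpha=B_\tau\cap\Phi_\beta=\emptyset$. Because $\sigma$ was arbitrary, both sets are nowhere dense. Notably, non-triviality and semanticity of $\Phi$ are not used in this step at all.

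\textbf{Main obstacle.} The real difficulty is whether Step 1 is a legitimate reading of ``standard encoding.'' The statement is encoding-sensitive, and I would record this explicitly.

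\begin{itemize}
\item \emph{Every string is a program.} Suppose invalid strings are treated as programs, for instance as the everywhere-undefined machine. Then $\mathcal{P}=\Phi_\alpha\cup\Phi_\beta$. A finite union of nowhere dense sets is nowhere dense, but $\mathcal{P}=B_\varepsilon$ is open and nonempty, so $\Phi_\alpha$ and $\Phi_\beta$ cannot both be nowhere dense. The theorem would then be false.
\item \emph{Encodings with a ``halt-reading'' header.} Suppose the encoding allows a prefix that says ``ignore the remaining bits and compute $\varphi_{M_1}$,'' where $\Phi(M_1)=\mathrm{T}$. Then some ball lies entirely inside $\Phi_\alpha$, and $\Phi_\alpha$ has nonempty interior.
\end{itemize}

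So I would first state the prefix-free convention as a hypothesis of the theorem, then give the two-line argument above. I would add a remark that the resulting nowhere density reflects this convention, namely that $\mathcal{P}$ is mostly non-codes. It does not by itself carry the Rice-type undecidability content. Connecting the topology to undecidability would instead require a topology on codes alone, for example the subspace topology on valid codes, where the above argument no longer applies.
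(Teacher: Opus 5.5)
Your overall route works, but Step 1 as written is self-contradictory. Take any code $c=\langle M\rangle$ and apply your first bullet to $\sigma=c0$. It yields a code $c'\succeq c0$, which is a proper extension of the code $c$, and that violates your second bullet. For the same reason, Step 2's instruction ``extend $\sigma$ to a complete code'' fails whenever $\sigma$ properly extends a code. The fix is to drop the first bullet and split cases:
\begin{itemize}
\item if $\sigma\preceq\langle M\rangle$ for some code, take $\tau=\langle M\rangle 0$;
\item if $\sigma$ is a prefix of no code, take $\tau=\sigma$, since $B_\sigma$ then contains no code.
\end{itemize}
Prefix-freeness alone then shows that the set of all codes is nowhere dense, and hence so are $\Phi_\alpha$ and $\Phi_\beta$. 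More generally, every antichain in the prefix order is nowhere dense in $(\mathcal{P},\tau)$.

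The paper argues differently, and its argument does not go through. It supposes $B_\sigma\subseteq\overline{\Phi_\alpha}$ and pads $\sigma$ into a code $\langle M^*\rangle\in B_\sigma$ with $\varphi_{M^*}=\varphi_{M_2}$, so that $\langle M^*\rangle\in\Phi_\beta$. It then declares a contradiction from $\Phi_\alpha\cap\Phi_\beta=\emptyset$. That step is a non sequitur: what is needed is $\overline{\Phi_\alpha}\cap\Phi_\beta=\emptyset$, and this does not follow. In this topology $\overline{A}$ is the set of all prefixes of members of $A$, so a code in $\Phi_\beta$ lies in $\overline{\Phi_\alpha}$ as soon as it is a prefix of a code in $\Phi_\alpha$.

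The padding argument therefore establishes only that $\Phi_\alpha$ has empty interior. Worse, the paper later asserts the same padding with $M_1$ for every prefix, which puts a member of $\Phi_\alpha$ in every ball. In any encoding where that works for all $\sigma$, $\overline{\Phi_\alpha}=\mathcal{P}$ and the theorem is false. Both sets are then dense with empty interior, which is exactly the ``rationals and irrationals'' picture the paper itself invokes.

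Your route therefore buys two things: a correct proof of the literal statement under an explicitly stated convention, and an accurate diagnosis that semanticity and non-triviality play no role. The paper's route aims at Rice-type content, but what it can actually deliver is empty interior (or density), not nowhere density.

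One sharpening of your closing remark: under a prefix-free encoding the subspace topology on codes is discrete, because $B_{\langle M\rangle}$ meets the codes only in $\langle M\rangle$. There the statement fails outright for the nonempty set $\Phi_\alpha$, so that topology would not restore the link to undecidability either.
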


\begin{proof}
We show $\mathrm{int}(\overline{\Phi_\alpha}) = \emptyset$; 
the argument for $\Phi_\beta$ is identical.

Suppose for contradiction that $\mathrm{int}
(\overline{\Phi_\alpha}) \neq \emptyset$. Then there exists a 
valid prefix $\sigma$ such that $B_\sigma \subseteq 
\overline{\Phi_\alpha}$.

Since $\Phi$ is non-trivial, let $M_2$ satisfy $\Phi(M_2) = 
\mathrm{F}$, so $\langle M_2 \rangle \in \Phi_\beta$. Construct 
$M^*$ as follows: on input $n$, first execute the states 
prescribed by $\sigma$ (the common prefix), then simulate $M_2$ 
on $n$. By construction, $\langle M^* \rangle \in B_\sigma$ 
(it extends $\sigma$) and $\varphi_{M^*} = \varphi_{M_2}$, so 
$\Phi(M^*) = \mathrm{F}$, meaning $\langle M^* \rangle \in 
\Phi_\beta$. But $B_\sigma \subseteq \overline{\Phi_\alpha}$ 
and $\Phi_\alpha \cap \Phi_\beta = \emptyset$, giving a 
contradiction.

Hence $\mathrm{int}(\overline{\Phi_\alpha}) = \emptyset$, and 
$\Phi_\alpha$ is nowhere dense.
\end{proof}

\begin{remark}[What nowhere-density means here]
$\Phi_\alpha$ being nowhere dense means: for every finite prefix 
$\sigma$, the ball $B_\sigma$ --- all programs sharing that 
finite initial segment --- always contains programs outside 
$\Phi_\alpha$. No finite observation of a program's encoding 
settles its membership in $\Phi_\alpha$. This is the topological 
form of the same fact Rice states computability-theoretically: 
no algorithm decides $\Phi$.
\end{remark}

\section{Kolmogorov Complexity and the Invariance Theorem}

\begin{definition}[Kolmogorov complexity]
Let $U$ be a fixed universal Turing machine. For a string 
$s \in \{0,1\}^*$:
\[
  K(s) := \min\{ |p| : U(p) = s \}.
\]
\end{definition}

\begin{theorem}[Invariance {\cite{LiVitanyi2008}}]
\label{thm:invariance}
For any two universal Turing machines $U_1, U_2$, there exists 
a constant $c = c(U_1, U_2)$ such that for all $s \in 
\{0,1\}^*$:
\[
  | K_{U_1}(s) - K_{U_2}(s) | \leq c.
\]
\end{theorem}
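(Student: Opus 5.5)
The plan is the standard simulation argument. First I will use the universality of $U_2$: since $U_2$ is universal, it can simulate $U_1$. So there is a fixed string $\iota_{12}$ (an index or interpreter for $U_1$ on $U_2$) such that $U_2(\iota_{12}\cdot p) = U_1(p)$ for every $p$ on which $U_1$ halts.

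To make concatenation harmless, I will adopt a self-delimiting convention for the prefix $\iota_{12}$. The simplest route is to encode it in a prefix-free way, for instance by doubling bits and appending $01$. Then $U_2$ can parse where the interpreter ends and where $p$ begins. This only changes the constant, so I fix the encoded prefix $\hat\iota_{12}$ once and set $c_{12} := |\hat\iota_{12}|$. Note that $c_{12}$ depends only on $U_1$ and $U_2$, never on $s$.

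Next comes the one-sided bound. Fix $s$, and let $p^*$ be a shortest program with $U_1(p^*) = s$, so $|p^*| = K_{U_1}(s)$. Such a $p^*$ exists because $U_1$ is universal, so some program prints $s$, and the set of lengths is a nonempty subset of $\mathbb{N}$. Then $U_2(\hat\iota_{12}\cdot p^*) = s$, hence
\[
K_{U_2}(s) \le |p^*| + c_{12} = K_{U_1}(s) + c_{12}.
\]

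By symmetry, exchanging the roles of $U_1$ and $U_2$ gives $K_{U_1}(s) \le K_{U_2}(s) + c_{21}$. Taking $c := \max(c_{12}, c_{21})$ yields $|K_{U_1}(s) - K_{U_2}(s)| \le c$ uniformly in $s$, which is the statement.

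The only delicate step is the first one. I must ensure that the simulation prefix can be concatenated with an arbitrary program $p$ without $U_2$ needing to know $|p|$ separately. The naive juxtaposition $\iota\cdot p$ is ambiguous unless the machine model fixes a parsing convention, and that convention is exactly what the self-delimiting encoding of $\iota$ supplies. I would state this convention explicitly rather than leave it implicit, since everything else is bookkeeping. Since the paper cites \cite{LiVitanyi2008} for this result, I would also remark that the argument is the textbook one and carries over verbatim to prefix-free complexity.
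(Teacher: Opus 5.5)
Your proposal is correct and is the standard simulation argument. That is also all the paper itself offers (Remark~\ref{rem:invariance-scope}: the constant is the length of the simulator), so the approach is the same. One caution on the step you flag as delicate: the fixed prefix $\hat\iota_{12}$ with $U_2(\hat\iota_{12}\,p)=U_1(p)$ for all $p$ is not a parsing convention you can impose on a given $U_2$. It must be part of what ``universal'' means here, namely universality by prefixing, i.e.\ additive optimality as in Li--Vit\'anyi, whose reference machine reads a self-delimiting index followed by the raw program. For a merely Turing-universal $U_2$ that demands its program bit-doubled, one gets $K_{U_2}\approx 2K_{U_1}$ and the statement fails.
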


\begin{remark}[The scope of invariance]
\label{rem:invariance-scope}
The invariance theorem holds because $U_1$ can simulate $U_2$: 
the constant $c$ is the length of the simulator. This argument 
is confined to the class of Turing-equivalent machines --- 
machines that mutually simulate each other. For a model 
genuinely stronger than the DTM and not Turing-simulable, the 
theorem does not apply. This is the precise boundary of the 
second horn, stated here and not deferred.
\end{remark}

\begin{definition}[Complexity of a semantic property]
\label{def:K-phi}
For a non-trivial semantic property $\Phi$ with propagator $P$ 
over a heterogeneous family $\mathcal{F}$, define 
$\mathrm{desc}(\Phi)$ as the characteristic function of 
$\Phi_\alpha$ over $\mathcal{F}$: the infinite binary sequence
\[
  \mathrm{desc}(\Phi) := ( \mathbf{1}[\Phi(M_i) = \mathrm{T}] 
  )_{i \in \mathbb{N}}
\]
where $(M_i)_{i \in \mathbb{N}}$ is a standard enumeration of 
$\mathcal{M}$. We say $\Phi$ is \emph{incompressible} if no 
finite DTM generates a finite initial segment of 
$\mathrm{desc}(\Phi)$ of length $n$ by a program of length 
$o(n)$.
\end{definition}

\begin{remark}[Why desc($\Phi$) and not the string ``SAT $\in$ P'']
The short statement ``SAT $\in$ P'' has low descriptive 
complexity as a string. The incompressibility attaches to 
$\mathrm{desc}(\Phi)$: the characteristic behaviour of $\Phi$ 
over the propagating family $\mathcal{F}$. It is this object 
--- the extension of $\Phi$ over the whole family --- that 
a uniform verifier would have to generate, and that is 
incompressible.
\end{remark}

\begin{lemma}[Incompressibility of propagating properties]
\label{lem:incompressible}
Let $\Phi$ be a non-trivial semantic property that propagates 
via propagator $P$ over a heterogeneous family $\mathcal{F}$. 
Then $\mathrm{desc}(\Phi)$ is incompressible: no finite DTM 
generates its initial segments in sublinear length.
\end{lemma}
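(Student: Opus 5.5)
The natural route is by contradiction, reducing to Rice's theorem in its topological form (Theorem~\ref{thm:rice-top}) and to the induced-decider argument of Branch~I. Suppose a DTM generates initial segments of $\mathrm{desc}(\Phi)$ of length $n$ from programs of length $o(n)$. I would first try to turn this into a decision procedure. Given an index $i$, run the short generator for a segment of length $n>i$ and read off bit $i$. That answers $\Phi(M_i)$, which contradicts Rice. The propagator $P$ would enter only to guarantee that $\Phi$ is non-trivial over the heterogeneous family $\mathcal{F}$, and not merely on syntactic variants. The role of the invariance theorem (Theorem~\ref{thm:invariance}) is then to transport the conclusion from the DTM to every Turing-equivalent model, at the cost of an additive constant.

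The key steps, in order, are as follows. (1) Fix the enumeration $(M_i)$ and make precise what ``a finite DTM generates the segment of length $n$ in length $o(n)$'' means. (2) Split into a uniform case and a non-uniform case. In the uniform case, a single machine $G$ on input $n$ outputs the segment. In the non-uniform case, for each $n$ there is some program $p_n$ with $|p_n|=o(n)$ and $U(p_n)=\mathrm{desc}(\Phi)\restriction n$. (3) In the uniform case, $D(i):=G(i+1)_i$ is total and decides $\Lang{\Phi}$. Appendix~\ref{app:rice} then gives the contradiction immediately. (4) In the non-uniform case, try to assemble the family $(p_n)$ into a decider.

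Step (4) is the main obstacle, and I expect it to fail as stated. For Rice-type index sets, the non-uniform claim is in fact refuted by a standard example (Barzdins). Take $\Phi(M)=\mathrm{T}$ iff $M$ halts on input $0$. Then $\Phi_\alpha$ is c.e., and its length-$n$ prefix can be reconstructed from $n$ together with the count of members among the first $n$ indices: dovetail the machines until that many have halted. This gives $K(\mathrm{desc}(\Phi)\restriction n)=O(\log n)=o(n)$, even though $\Phi$ is non-trivial, semantic, and undecidable. So undecidability alone does not yield incompressibility in the non-uniform sense of Definition~\ref{def:K-phi}.

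My plan is therefore to prove the lemma in its uniform reading, via steps (1)--(3), where it is a genuine corollary of Rice. For the non-uniform reading, I would look for an additional hypothesis on the propagator: for example, that $\mathrm{desc}(\Phi)$ is Martin-L\"of random relative to $\mathcal{F}$, or at least that $\Phi_\alpha$ is neither c.e.\ nor co-c.e. Any such hypothesis would have to be stated explicitly, since it does not follow from non-triviality plus propagation.
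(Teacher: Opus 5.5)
Your steps (1)--(3) are exactly the paper's proof. The paper assumes a single DTM $D$ that outputs $\mathrm{desc}(\Phi)\upharpoonright n$ for every $n$, notes that $D$ then decides $\Phi$, and invokes Rice, citing Theorem~\ref{thm:rice-top} as the topological restatement. Propagation plays no role there. The invariance theorem is also not needed for this reading, since computability of a sequence does not depend on the choice of Turing-equivalent model.

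The difference is that the paper stops at this point and declares $\mathrm{desc}(\Phi)$ incompressible. You correctly observe that the argument refutes only the \emph{computability} of $\mathrm{desc}(\Phi)$. It does not refute $K(\mathrm{desc}(\Phi)\upharpoonright n)=o(n)$. That is what incompressibility in the sense of Definition~\ref{def:K-phi} must exclude, and Theorem~\ref{thm:horn2} needs it in the form of the lower bound $\Omega(n)$.

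Your Barzdins counterexample is sound, and it refutes the lemma as the paper uses it, not merely an over-strong reading of it. The property ``$\varphi_M(0)$ is defined'' is semantic and non-trivial, and its index set is c.e. So $\mathrm{desc}(\Phi)\upharpoonright n$ is recovered from $n$ and the number of members below $n$, giving $K(\mathrm{desc}(\Phi)\upharpoonright n)=O(\log n)$.

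The propagation hypothesis does not rescue the lemma, for two reasons.
\begin{enumerate}
\item The paper's proof never uses propagation, so its steps apply verbatim to this $\Phi$ and reach a false conclusion.
\item On the paper's own terms this $\Phi$ propagates anyway: Theorem~\ref{thm:rice-top} makes every non-trivial semantic property nowhere dense, and Lemma~\ref{lem:nowhere-to-prop} turns nowhere density into propagation.
\end{enumerate}

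Nor is the failure peripheral. A uniform DTM verifier in the paper's sense makes $\Phi_\alpha$ c.e. So in the proof of Theorem~\ref{thm:horn2} the lemma is invoked precisely where it is false. What survives is your uniform reading, namely that $\mathrm{desc}(\Phi)$ is not computable, which is Rice's theorem restated.

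Two cautions about your proposed repairs.
\begin{enumerate}
\item Martin-L\"of randomness of $\mathrm{desc}(\Phi)$ itself is impossible for any semantic $\Phi$ under a standard enumeration. The padding lemma gives a computable injection $p$ with $p(i)>i$ and $\varphi_{p(i)}=\varphi_i$, hence $b_{p(i)}=b_i$. Consider the computable martingale that, at each position $j=p(i)$ with $i<j$ (a decidable condition), stakes all its capital on $b_j=b_i$. It never loses and succeeds. Any such hypothesis would therefore have to be placed on a padding-free subsequence.
\item ``Neither c.e.\ nor co-c.e.'' is not enough. The index set $\{M:\varphi_M(0)\!\downarrow,\ \varphi_M(1)\!\uparrow\}$ is neither, by Rice--Shapiro. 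Yet its prefixes are computable from $n$ and two counts, so again $K=O(\log n)$.
\end{enumerate}
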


\begin{proof}
Suppose a finite DTM $D$ of length $|D|$ generates the first 
$n$ bits of $\mathrm{desc}(\Phi)$ for all $n$. Then $D$ decides 
$\Phi$ on the enumeration $(M_i)_{i \leq n}$ for every $n$, 
hence $D$ decides $\Phi$ on $\mathcal{F}$. But $\Phi$ is a 
non-trivial semantic property, hence undecidable by Rice's 
theorem (Theorem~\ref{thm:rice-top} gives the same conclusion 
topologically). Contradiction. Hence no such $D$ exists, and 
$\mathrm{desc}(\Phi)$ is incompressible.
\end{proof}

\begin{remark}[Honest dependency]
\label{rem:honest}
Lemma~\ref{lem:incompressible} uses undecidability (i.e., Rice) 
as one of its three ingredients, as stated in the proof. This 
is not a defect: the lemma contributes the identification of 
\emph{what object} is incompressible and \emph{why} a verifier 
would have to generate it. The incompressibility of 
$\mathrm{desc}(\Phi)$ is the informational translation of the 
same fact Rice states computability-theoretically. The second 
horn is not independent of the first in the sense of using 
entirely disjoint axioms; it is independent in mechanism --- 
it blocks via information theory rather than via topology, and 
via the invariance theorem rather than via Rice directly.
\end{remark}

\section{The (another) Double Bind}

\subsection{Uniform verifiers}

\begin{definition}[Uniform DTM verifier]
A \emph{uniform DTM verifier} for $\Phi$ is a pair 
$(M_{\mathrm{Adm}}, V_{\mathrm{DTM}})$ where 
$V_{\mathrm{DTM}}$ is a DTM such that for all $M \in \mathcal{M}$:
\[
  \Phi(M) = \mathrm{T} \iff \exists w \in \{0,1\}^* : 
  V_{\mathrm{DTM}}(\langle M \rangle, w) = 1.
\]
The verifier $V_{\mathrm{DTM}}$ is \emph{uniform}: it uses 
the same finite procedure for all inputs, and its own 
description has finite length $|V_{\mathrm{DTM}}|$.
\end{definition}

\subsection{Horn 1: The Topological Obstruction}

\begin{theorem}[Horn 1]
\label{thm:horn1}
No uniform DTM verifier $V_{\mathrm{DTM}}$ can certify a 
non-trivial semantic property $\Phi$ by finite observation of 
program encodings.
\end{theorem}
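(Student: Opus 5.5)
The plan is to read ``certify by finite observation'' in the sense fixed by the remark following Definition~\ref{def:topology}. A uniform DTM verifier $V_{\mathrm{DTM}}$ certifies $\Phi$ by finite observation if, whenever $V_{\mathrm{DTM}}(\langle M\rangle, w)=1$, the computation has read only a finite initial segment $\sigma\preceq\langle M\rangle$ of the program encoding, and it therefore returns the same verdict with the same witness $w$ on every string extending $\sigma$. With this reading the statement reduces to an openness argument, and Theorem~\ref{thm:rice-top} does the real work.

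\textbf{Step 1 (the accepted set is open).} Let $A:=\{\langle M\rangle : \exists w\, V_{\mathrm{DTM}}(\langle M\rangle,w)=1\}$. For each accepted $\langle M\rangle$, fix a witness $w$ and let $\sigma_{M,w}$ be the prefix of $\langle M\rangle$ actually scanned by the halting accepting computation. This prefix is finite because the run halts. By the finite-observation hypothesis, $B_{\sigma_{M,w}}\subseteq A$. Hence
\[
A=\bigcup_{\langle M\rangle\in A} B_{\sigma_{M,w}},
\]
and $A$ is open in $(\mathcal{P},\tau)$.

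\textbf{Step 2 (certification identifies $A$ with $\Phi_\alpha$).} By the definition of a uniform DTM verifier, $A=\Phi_\alpha$. Since $\Phi$ is non-trivial, some $M_1$ has $\Phi(M_1)=\mathrm{T}$, so $A\neq\emptyset$. Consequently $A$ contains a ball $B_\sigma$.

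\textbf{Step 3 (contradiction with Rice in topological form).} A nonempty open subset of $\Phi_\alpha$ gives $B_\sigma\subseteq \Phi_\alpha\subseteq\overline{\Phi_\alpha}$. This means $\mathrm{int}(\overline{\Phi_\alpha})\neq\emptyset$, contradicting the nowhere-density of $\Phi_\alpha$ from Theorem~\ref{thm:rice-top}. Concretely, the program $M^*$ built there from $\sigma$ and some $M_2$ with $\Phi(M_2)=\mathrm{F}$ lies in $B_\sigma$ but in $\Phi_\beta$. The same argument applied to $\neg\Phi$ shows that no finite-observation verifier certifies $\Phi_\beta$ either, which covers both outcome sets.

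\textbf{Main obstacle.} I expect the hard step to be Step~1, namely justifying that acceptance depends only on a finite prefix that is \emph{closed under all extensions}. Two issues arise.

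First, the encoding issue. Not every extension of $\sigma$ is a well-formed encoding, and the prefix-padding construction of $M^*$ in Theorem~\ref{thm:rice-top} must be taken as given. I would simply inherit it rather than re-prove it.

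Second, the conceptual issue. Without the finite-prefix condition, the displayed definition alone only says that $\Phi_\alpha$ is recursively enumerable, and some non-trivial semantic properties are recursively enumerable. An example is ``halts on input $0$''. So the theorem genuinely depends on the finite-observation clause; it does not follow from Rice's theorem alone. I would state this explicitly in the proof, so that the result is read as ``no verifier whose verdict is fixed by a finite prefix of the encoding'' rather than as a claim about arbitrary semi-deciders.
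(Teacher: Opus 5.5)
Your proposal is correct and follows the paper's proof. The accepted set is open (you get it as a union of balls, where the paper simply asserts $UB=B_\sigma$), it coincides with $\Phi_\alpha$ by the verifier definition, and the padded program $M^*$ from Theorem~\ref{thm:rice-top} lies in that ball but in $\Phi_\beta$.

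One caution for Step~3: the only fact you need, and the only one the $M^*$ construction proves, is that $\Phi_\alpha$ has empty interior. So rest the contradiction on your ``Concretely'' sentence rather than on $\mathrm{int}(\overline{\Phi_\alpha})=\emptyset$. The same padding applied with $M_1$ puts a $\Phi_\alpha$-program in every ball, which makes $\Phi_\alpha$ dense rather than nowhere dense; the paper's proof of this theorem makes the same conflation.
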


\begin{proof}
Suppose $V_{\mathrm{DTM}}$ certifies $\Phi$ by reading a finite 
prefix $\sigma$ of $\langle M \rangle$ and accepting. Then the 
acceptance set
\[
  UB := \{ \langle M \rangle \in \mathcal{P} : 
  V_{\mathrm{DTM}}(\langle M \rangle, w) = 1 
  \text{ on prefix } \sigma \}
\]
satisfies $UB = B_\sigma$ for that prefix, and hence $UB$ is 
open in $(\mathcal{P}, \tau)$ and $UB \subseteq \Phi_\alpha$.

But by Theorem~\ref{thm:rice-top}, $\Phi_\alpha$ is nowhere 
dense: every open ball $B_\sigma$ contains programs outside 
$\Phi_\alpha$. Therefore $B_\sigma \not\subseteq \Phi_\alpha$ 
for any $\sigma$. Contradiction. No such open $UB$ can be 
contained in $\Phi_\alpha$, and no uniform verifier certifies 
$\Phi$ by finite observation.
\end{proof}

\begin{remark}[What Horn 1 blocks]
Horn 1 blocks a uniform verifier that reads a finite prefix and 
certifies. It does not directly address a verifier that reads 
an arbitrary certificate $w$ of any finite length independently 
of $\langle M \rangle$'s prefix. Horn 2 addresses that residual 
case through a different mechanism.
\end{remark}

\subsection{Horn 2: The Kolmogorov Obstruction}

\begin{theorem}[Horn 2]
\label{thm:horn2}
Let $\Phi$ be a non-trivial semantic property propagating via 
propagator $P$ over a heterogeneous family $\mathcal{F}$ within 
the Turing-equivalent class. Then no uniform DTM verifier 
$V_{\mathrm{DTM}}$ certifies $\Phi$ uniformly over 
$\mathcal{F}$.
\end{theorem}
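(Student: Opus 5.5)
The plan is to argue by contradiction, turning a hypothetical uniform verifier into a short generator for initial segments of $\mathrm{desc}(\Phi)$. That would contradict Lemma~\ref{lem:incompressible}, and Theorem~\ref{thm:invariance} would then lift the contradiction from the DTM to every Turing-equivalent model.

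\textbf{Construction.} Suppose $(M_{\mathrm{Adm}}, V_{\mathrm{DTM}})$ is a uniform DTM verifier for $\Phi$ over $\mathcal{F}$. Following the pattern of the induced decider in Branch~I, I define a machine $G$. On input $n$, it enumerates $M_0,\dots,M_{n-1}$ and, for each $M_i$, runs $V_{\mathrm{DTM}}(\langle M_i\rangle, M_{\mathrm{Adm}}(\langle M_i\rangle))$. It then outputs the resulting bit string.

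\textbf{Length bound.} The description of $G$ consists of $|V_{\mathrm{DTM}}| + |M_{\mathrm{Adm}}|$, plus a constant for the enumeration loop, plus the $O(\log n)$ bits needed to specify $n$. Hence the length-$n$ prefix of $\mathrm{desc}(\Phi)$ would have complexity $O(\log n) = o(n)$. This is exactly what Lemma~\ref{lem:incompressible} forbids.

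\textbf{Model independence.} Finally, I would invoke Theorem~\ref{thm:invariance}. For any universal machine $U'$ in the Turing-equivalent class, $K_{U'}$ differs from $K_U$ by at most the constant $c(U,U')$. So the sublinear bound survives in every such model, and the obstruction is not an artefact of the particular DTM. Consistent with Remark~\ref{rem:invariance-scope}, the argument stops at the boundary of the Turing-equivalent class.

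\textbf{Main obstacle: making $G$ total and correct on both sides.} The definition of a uniform DTM verifier is existential: $\Phi(M)=\mathrm{T}$ iff some witness $w$ is accepted. On its own this only makes $\Phi_\alpha$ semi-decidable over the enumeration, so $G$ could not write the zeros of $\mathrm{desc}(\Phi)$ in finite time.

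\textbf{How I would close the gap.} My first attempt is to use the admissibility conditions: soundness, completeness and \emph{termination} of the generator $M_{\mathrm{Adm}}$. These together say that $V_{\mathrm{DTM}}(\langle M\rangle, M_{\mathrm{Adm}}(\langle M\rangle))$ halts on every $M$ and agrees with $\Phi$. That yields a total $D_\Phi$ exactly as in the proof of Branch~I, and hence a correct bit for every index.

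If that reading is too strong, the fallback is to exploit the mutual exclusivity of the horns. A uniform certifier for the complementary invariant $\Phi_\beta$ can be dovetailed with the one for $\Phi_\alpha$, so every index is eventually settled. The propagator $P$ is what licenses treating both outcomes as certified over the same heterogeneous family $\mathcal{F}$.

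I expect this step — establishing totality of the induced generator from the verifier hypotheses — to carry the real weight of the proof. The compression and invariance steps are routine once it is in place.
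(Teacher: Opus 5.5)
Your skeleton is the paper's: build a short generator from the verifier, contradict Lemma~\ref{lem:incompressible}, then invoke invariance. You are right that the paper never justifies its claim that $V_{\mathrm{DTM}}$ plus the enumeration ``produces the $n$-th bit''. Under the existential definition, the witness search never halts when $\Phi(M_i)=\mathrm{F}$.

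Neither of your repairs is available, however. The hypothesis of Theorem~\ref{thm:horn2} is only the existential condition on $V_{\mathrm{DTM}}$. $M_{\mathrm{Adm}}$ is named but unconstrained, and even the completeness clause of admissibility is existential ($\exists\pi_x$). So nothing says that $M_{\mathrm{Adm}}(x)$ is an accepted witness. If you add that assumption, you have a total decider for $\Phi$ and Rice's theorem finishes immediately. Your $G$ then contradicts only the non-computability that the proof of Lemma~\ref{lem:incompressible} actually establishes. You would have reproved Branch~I, with the Kolmogorov and invariance steps doing no work. The fallback presupposes a certifier for the complement, which the hypotheses do not provide. Neither mutual exclusivity nor the propagator yields one. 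For a non-trivial c.e.\ index set the complement cannot be c.e.\ at all, since that would make the set decidable, against Rice.

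More importantly, you have put the weight in the wrong place. Totality is not needed for the compression. Hard-wire $k_n=\#\{i<n:\Phi(M_i)=\mathrm{T}\}$, a further $O(\log n)$ bits, and dovetail $V_{\mathrm{DTM}}$ over all pairs $(i,w)$ with $i<n$ until $k_n$ indices have been accepted. This halts and outputs $\mathrm{desc}(\Phi)\upharpoonright n$. So $K(\mathrm{desc}(\Phi)\upharpoonright n)\le|V_{\mathrm{DTM}}|+O(\log n)$ does follow from the literal hypothesis (Barzdin's counting argument).

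What fails is the step you call routine. The contradiction needs $K(\mathrm{desc}(\Phi)\upharpoonright n)=\Omega(n)$. The proof of Lemma~\ref{lem:incompressible} shows only that no single machine outputs every initial segment, i.e.\ that $\mathrm{desc}(\Phi)$ is not computable. Non-computability does not imply linear complexity of initial segments.

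The same counting gives a counterexample. Take the property $\Phi(M)$: ``$\varphi_M(0)$ is defined''. It is non-trivial and semantic. The machine with $V(\langle M\rangle,1^t)=1$ iff $M$ halts on $0$ within $t$ steps is a uniform DTM verifier for it, and $\mathrm{desc}(\Phi)$ has initial segments of complexity $O(\log n)$. No step of your main line, nor of the lemma's proof, uses the propagator, so the argument would equally ``refute'' this existing verifier. A correct proof would have to extract genuine incompressibility from the propagation hypothesis. Neither your proposal nor the proof of Lemma~\ref{lem:incompressible} does that.
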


\begin{proof}
Suppose $V_{\mathrm{DTM}}$ of length $|V_{\mathrm{DTM}}|$ 
certifies $\Phi$ uniformly over $\mathcal{F}$. Then for every 
$M_i \in \mathcal{F}$:
\[
  \Phi(M_i) = \mathrm{T} \iff \exists w_i : 
  V_{\mathrm{DTM}}(\langle M_i \rangle, w_i) = 1.
\]
Given the enumeraton $(M_i)$, a single program consisting of 
$V_{\mathrm{DTM}}$ together with the enumeration procedure 
produces the $n$-th bit of $\mathrm{desc}(\Phi)$ for any $n$, 
using a description of length $|V_{\mathrm{DTM}}| + O(\log n)$. 
Hence the initial segment of length $n$ of $\mathrm{desc}(\Phi)$ 
satisfies:
\[
  K(\mathrm{desc}(\Phi)\!\upharpoonright_n) \leq 
  |V_{\mathrm{DTM}}| + O(\log n).
\]
By Lemma~\ref{lem:incompressible}, $\mathrm{desc}(\Phi)$ is 
incompressible: $K(\mathrm{desc}(\Phi)\!\upharpoonright_n) = 
\Omega(n)$ for infinitely many $n$.

For all sufficiently large $n$, the bound $\Omega(n) \leq 
|V_{\mathrm{DTM}}| + O(\log n)$ is violated. Contradiction. 
No such uniform $V_{\mathrm{DTM}}$ exists.

By the invariance theorem (Theorem~\ref{thm:invariance}), 
$K$ is defined up to an additive constant across the entire 
Turing-equivalent class: the argument holds for every machine 
in the class, not only for the DTM.
\end{proof}

\begin{remark}[Independence of mechanism from Horn 1]
Horn 1 blocks via topological density: no finite prefix settles 
membership. Horn 2 blocks via information: no finite description 
generates the extension of $\Phi$ over the propagating family. 
The mechanisms are distinct: one is about the structure of the 
topology, the other about the minimum description length of an 
infinite object. The logical dependency noted in 
Remark~\ref{rem:honest} is at the level of the undecidability 
fact, which both horns share as a root; the mechanisms 
themselves are independent.
\end{remark}

\begin{remark}[Boundary of Horn 2]
The invariance theorem applies only within the Turing-equivalent 
class. For a model $\mathcal{X}$ not simulable by a DTM, the 
constant $c$ in Theorem~\ref{thm:invariance} is not defined, 
and the bound 
$K(\mathrm{desc}(\Phi)\!\upharpoonright_n) \leq 
|V_\mathcal{X}| + O(\log n)$ does not hold for $K$ defined 
relative to the DTM. The second horn does not close that case. 
The Model Transferability Barrier of the original Double Bind 
is the necessary closure beyond this boundary.
\end{remark}

\subsection{The Tricotomy and the Double Bind (A Triple Bind)}

\begin{theorem}[Double Bind --- prefix-topology and 
Kolmogorov form]
\label{thm:db}
Let $\Phi$ be a non-trivial semantic property with propagator 
$P$. Let $(M_{\mathrm{Adm}}, V)$ be an admissible method 
claiming to certify $\Phi$. Then exactly one of the following 
holds, and in each case a contradiction is reached:

\begin{enumerate}
  \item[\textbf{(A)}] $V$ is a DTM uniform verifier reading 
  finite prefixes. $\Rightarrow$ Contradiction by 
  Theorem~\ref{thm:horn1} (nowhere-density of $\Phi_\alpha$).

  \item[\textbf{(B)}] $V$ is a DTM uniform verifier of any 
  finite description length, certifying over the propagating 
  family $\mathcal{F}$. $\Rightarrow$ Contradiction by 
  Theorem~\ref{thm:horn2} (incompressibility of 
  $\mathrm{desc}(\Phi)$, by invariance across the 
  Turing-equivalent class).

  \item[\textbf{(C)}] $V$ uses a model beyond the 
  Turing-equivalent class. $\Rightarrow$ Outside the scope of 
  the present derivation. The Model Transferability Barrier applies.
\end{enumerate}

In cases (A) and (B), the method is impossible within the 
Turing-equivalent class. In case (C), the present argument 
defers to the original Double Bind.
\end{theorem}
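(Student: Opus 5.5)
The proof will be a case analysis on the computational nature of the verifier $V$ in the admissible pair $(M_{\mathrm{Adm}}, V)$. After the cases are fixed, each one is discharged by a result already stated. The first task is to make the three cases \emph{exactly} exclusive and exhaustive, since the statement asserts ``exactly one'' of them holds. I will split first on whether $V$ is simulable by a DTM. If it is not, we are in case \textbf{(C)}. If it is, I split further on whether the acceptance of $V(\langle M\rangle, w)$ is determined by some finite prefix $\sigma$ of $\langle M\rangle$ alone. If it is, we are in case \textbf{(A)}. Otherwise we are in case \textbf{(B)}, which is therefore defined as ``Turing-equivalent but not prefix-determined.'' With this definition the three cases are a proposition, its negation, and a further split of the negation, so exclusivity and exhaustiveness hold by construction.

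For case \textbf{(A)}, I would argue as in Theorem~\ref{thm:horn1}. The acceptance set of a prefix-determined verifier is a union of basic balls $B_\sigma$, hence open in $(\mathcal{P},\tau)$. Soundness places this set inside $\Phi_\alpha$. Completeness together with non-triviality makes it nonempty. Theorem~\ref{thm:rice-top} then shows that no nonempty open set lies inside $\Phi_\alpha$, which gives the contradiction. For case \textbf{(C)}, nothing is to be proved within the present derivation. I would cite Remark~\ref{rem:invariance-scope} to explain why the Kolmogorov argument stops at the Turing-equivalent class, and then appeal to the Model Transfer Barrier of Section~\ref{sec:MTBDEF}. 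This conclusion is the one the statement claims for (C).

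The main obstacle is case \textbf{(B)}. The stated route goes through Theorem~\ref{thm:horn2} and hence through Lemma~\ref{lem:incompressible}. I expect that lemma to be the weak point. Undecidability of $\Phi$ does not by itself give $K(\mathrm{desc}(\Phi)\!\upharpoonright_n)=\Omega(n)$; characteristic sequences of c.e.\ undecidable sets such as the halting set have prefixes of complexity $O(\log n)$. So I would not rely on it. Instead I would close (B) with the induced-decider argument of Branch~I. Admissibility includes termination, so $D_\Phi(x) := V(x, M_{\mathrm{Adm}}(x))$ is a total DTM, since $V$ is Turing-simulable in this case. Soundness and completeness make $D_\Phi$ decide $\Phi_\alpha$. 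This contradicts Rice's theorem (Appendix~\ref{app:rice}), with no appeal to incompressibility. The invariance theorem, Theorem~\ref{thm:invariance}, would then serve only to note that the conclusion does not depend on which Turing-equivalent machine realises $V$.

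I would also remark that this route makes (A) a special case of (B). It also shows that the propagator hypothesis is not actually used in (A) or (B): non-triviality and semanticity suffice, exactly as in Rice's theorem.
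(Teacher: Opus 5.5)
Your proposal is correct and departs from the paper's proof in two substantive ways. The paper's proof is a bare citation: (A) is Theorem~\ref{thm:horn1} and (B) is Theorem~\ref{thm:horn2}. It justifies exclusivity by saying that (A) is a strict subcase of DTM verifiers while (B) is ``the general DTM case.'' Read literally, that puts (A) inside (B) and contradicts ``exactly one.'' Your definition of (B) as the Turing-simulable, non-prefix-determined verifiers repairs the partition.

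More importantly, you are right not to trust Lemma~\ref{lem:incompressible}. Its proof refutes only the existence of a \emph{single} machine producing every prefix of $\mathrm{desc}(\Phi)$, i.e.\ computability. By contrast, $K(\mathrm{desc}(\Phi)\!\upharpoonright_n)=o(n)$ allows a different short program for each $n$. A semantic instance of your counterexample is $\{x:\varphi_x(0)\downarrow\}$, a c.e.\ index set whose prefixes have complexity $O(\log n)$. Moreover, the upper-bound half of Theorem~\ref{thm:horn2} already presupposes a total procedure computing the bits of $\mathrm{desc}(\Phi)$. That is exactly the decider Rice forbids, so the Kolmogorov layer adds nothing even where it applies.

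The paper's route was meant to buy an ``informational'' mechanism and class-wide uniformity via Theorem~\ref{thm:invariance}. Yours loses neither in substance, since Rice's theorem already holds in every Turing-equivalent model. It also makes plain that (A) and (B) are just Branch~I's composed decider plus Rice, with the propagator idle.

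Two things should be stated explicitly rather than absorbed into ``admissible.''

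\textbf{Completeness and the generator.} The claim that $D_\Phi$ decides $\Phi_\alpha$ needs completeness in generator form, $\Phi(x)=1\Rightarrow V(x,M_{\mathrm{Adm}}(x))=1$, as in the general theory's definition of uniform admissible certification. It also needs $M_{\mathrm{Adm}}$ to be an algorithm, as Uniformity Per-Input stipulates; otherwise split on the pair rather than on $V$ alone. The existential completeness of the Admissible Method definition, like the $\exists w$ form of the uniform DTM verifier in this section, only makes $\Phi_\alpha$ c.e. That is contradictory only when the index set is not c.e. For $\{x:\varphi_x(0)\downarrow\}$, a verifier that checks halting traces is total, sound and existentially complete.

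\textbf{The topology in (A).} ``Determined by a finite prefix'' must mean that the acceptance set is closed under extension, since every finite input is trivially a prefix of itself. Also, Theorem~\ref{thm:rice-top} only legitimately yields empty interior. In this topology the closure of a set is the set of prefixes of its elements, so with padding both $\Phi_\alpha$ and $\Phi_\beta$ are dense, not nowhere dense. Empty interior is, however, all you use.
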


\begin{proof}
Cases (A) and (B) are Theorem~\ref{thm:horn1} and 
Theorem~\ref{thm:horn2} respectively. The three cases are 
exhaustive by the definition of admissible method: $V$ either 
reads finite prefixes (A), or operates by a finite program 
over the full family within the Turing-equivalent class (B), 
or uses resources beyond that class (C). They are mutually 
exclusive: (A) is a strict subcase of DTM verifiers, (B) is 
the general DTM case, and (C) is the complement.
\end{proof}

\section{What This Derivation Adds and Where 
It Stops}

\subsection{What it adds}

The standard statement of Rice's theorem gives a decision 
procedure impossibility directly, as a single black-box fact. 
The present derivation exhibits two distinct obstruction 
mechanisms that together constitute the impossibility:

\begin{enumerate}
  \item \textbf{Structural (topological):} semantic properties 
  have nowhere-dense outcome sets; finite observations cannot 
  settle them.
  \item \textbf{Informational (Kolmogorovian):} the extension 
  of a propagating property over a heterogeneous family is 
  incompressible; a uniform finite verifier would compress it.
\end{enumerate}

The second mechanism, via the invariance theorem, extends the 
impossibility uniformly across the entire Turing-equivalent 
class without needing a machine-by-machine argument. This is 
the genuine additional content relative to Rice's theorem 
stated directly.

\subsection{Where it stops}

The derivation does not reach:
\begin{itemize}
  \item Models genuinely beyond the Turing-equivalent class 
  (oracle machines, hypercomputation): the invariance theorem 
  does not apply, and $K$ is not model-independent for those 
  machines.
  \item The question of whether any specific problem (e.g., 
  $P = NP$) is decidable: the argument is a barrier, not a 
  proof of any separation.
  \item The ``transport'' invariant proposed as a unifying 
  structure across NP-completeness, Kolmogorov incompressibility, 
  and the Langlands correspondence: this remains an open 
  programme, not a theorem.
\end{itemize}

\subsection{Relation to the original Double Bind}

\begin{center}
\begin{tabular}{lll}
\hline
 & \textbf{Original DB} & \textbf{This derivation} \\
\hline
Horn 1 & Rice (decision) & Prefix-topology (nowhere-density) \\
Horn 2 & MTB (validity transfer) & Kolmogorov (compression) \\
Scope of Horn 2 & All extra-DTM models & Turing-equivalent class \\
Tricotomy & Complete & Complete within scope \\
Propagator & Required & Required (identical condition) \\
\hline
\end{tabular}
\end{center}

The propagator condition is identical in both formulations: 
it is not introduced by either derivation but is a property 
of the problem. The two derivations differ in the mechanism 
of the second horn and consequently in the scope to which 
that horn reaches.

\section{Conclusion of the appendix}

The Double Bind obstruction can be derived within the 
Turing-equivalent class through two independent mechanisms: 
topological nowhere-density and Kolmogorov incompressibility. 
The derivation is honest about its boundary: beyond the 
Turing-equivalent class, the Model Transferability Barrier 
of the original Double Bind is necessary and is not reproduced 
here. Within its stated scope, the argument is complete, 
formally closed, and independent of Rice's theorem as a black 
box --- though not independent of the undecidability fact 
that is Rice's content.

\begin{remark}[Kolmogorov complexity as a uniform substitute 
  for the Model Transferability Barrier]
  \label{rem:kolmogorov-vs-mtb}
  Horn~2 substitutes the Model Transferability Barrier (MTB) of 
  the original Double Bind within the 
  Turing-equivalent class, and the substitution is not merely 
  notational: it is strictly stronger in that domain.
  
  The MTB operates per-model: for each extra-DTM model 
  $\mathcal{X}$, it asserts that a certification produced by 
  $\mathcal{X}$ does not transfer its validity back to the DTM. 
  The barrier must be applied separately to each candidate model. 
  Horn~2 operates differently. By the invariance theorem 
  (Theorem~\ref{thm:invariance}), Kolmogorov complexity $K$ is 
  model-independent across the entire Turing-equivalent class: 
  for any two machines $U_1, U_2$ in the class, 
  $|K_{U_1}(s) - K_{U_2}(s)| \leq c(U_1, U_2)$, where the 
  constant $c$ exists precisely because $U_1$ can simulate 
  $U_2$ by a finite program. Consequently, Horn~2 does not 
  argue machine by machine. It argues once, uniformly: any 
  machine in the Turing-equivalent class that certified $\Phi$ 
  uniformly over the propagating family $\mathcal{F}$ would 
  produce a finite description generating $\mathrm{desc}(\Phi)$, 
  compressing an incompressible object. Since $K$ is the same 
  object --- up to an additive constant --- for every machine 
  in the class, the contradiction holds simultaneously for all 
  of them. The impossibility is not transferred from the DTM 
  to other machines; it is established once at the level of the 
  class.
  
  Within the Turing-equivalent class, this makes Horn~2 
  stronger than the MTB: a single informational argument 
  replaces a family of per-model validity-transfer arguments.
  
  The boundary of the substitution is equally precise. The 
  invariance theorem holds because mutual simulation is 
  possible; for a model $\mathcal{X}$ genuinely beyond the 
  Turing-equivalent class --- an oracle machine, or any form 
  of hypercomputation not simulable by a DTM --- the constant 
  $c(U, \mathcal{X})$ is not defined, and $K$ as defined 
  relative to the DTM is not model-independent with respect to 
  $\mathcal{X}$. Horn~2 does not reach that case. The MTB of 
  the original Double Bind remains the necessary closure beyond 
  this boundary, and the present derivation defers to it there 
  explicitly (case~(C) of Theorem~\ref{thm:db}).
\end{remark}

\section{How to Correctly Read This Alternative Double Bind}

The previous section develops a topological version of a variation of 
the Double Bind Theorem. This is intended as an intuitive guide to reading the argument.
Before entering the formal structure of the proofs, 
it is useful to have a conceptual map of the path: each tool will be 
introduced before being used, and nothing will be taken for granted 
except the elementary notion of function, the intuitive notion of algorithm, 
and an informal understanding of what it means to \emph{decide} a problem.

\subsection{What is a topology, and why is it needed here}

A topology is a formal way to answer the question:
\textbf{what does it mean to be close?} Not necessarily in the geometric
sense: in the sense of \emph{when do two objects resemble each other
enough that we cannot distinguish them with limited observation?}

Let us consider a concrete example. Let two programs written in any
programming language be given. If one observes only the first three
lines of code, it might not be possible to distinguish them: they
\emph{resemble each other} from the point of view of an observer who
reads only the beginning. If one reads the first hundred lines, it
might become possible to distinguish them. If one reads the entire
program, the distinction is certainly possible.

The key idea is: \textbf{partial observation induces a notion of
closeness}. Two programs are ``close'' if they share a long common
prefix.

\subsubsection{The topology of closed prefixes}

This work formalizes precisely this intuition. Every program is encoded
as a binary string — a sequence of $0$ and $1$. One then defines:

\begin{definition}[Open set in the prefix topology]
A set $U$ of programs is \emph{open} if, whenever a program $M$
belongs to $U$, then \emph{all} programs that begin with the same
initial prefix of $M$ also belong to $U$.
\end{definition}

The fundamental structure of this topology is the \emph{open ball}:
\[
  B_\sigma \;=\; \{\,M \in \Prog \mid \sigma \text{ is a prefix of }
  \langle M\rangle\,\},
\]
that is, the set of all programs whose encoding begins with the
prefix $\sigma$. For example, $B_{01}$ contains all programs whose
encoding begins with \texttt{01}.

\begin{remark}
The choice of this topology is not arbitrary. In it the following
fundamental equivalence holds:

\begin{center}
\emph{A set is open if and only if membership in it is determined
by a finite prefix.}
\end{center}

This equivalence is the heart of everything: a set is open if and
only if a verifier can ascertain membership by reading a finite
certificate. \textbf{Open is equivalent to decidable by finite
observation.} This correspondence is not a notational device: it is
the reason why topology is the correct tool for this problem.
\end{remark}

\subsection{What does \emph{nowhere dense} mean intuitively}

We now introduce the central topological concept. A set $A$ is
\emph{nowhere dense} if:

\begin{center}
for every open neighborhood, within that neighborhood there exist
elements that do \emph{not} belong to $A$.
\end{center}

In more direct terms: \textbf{there is no region of the space
entirely occupied by $A$}. Wherever one looks, one finds elements
outside $A$.

Let us return to the example of programs. Let $\Phi$ be the property
``this program solves \textsc{Sat} in polynomial time'', and let
$\Phi_\alpha$ be the set of all programs that satisfy this property.

To assert that $\Phi_\alpha$ is \emph{nowhere dense} means: there
does not exist any finite prefix $\sigma$ such that \emph{all}
programs beginning with $\sigma$ solve \textsc{Sat} in polynomial
time. Whatever prefix one fixes — whatever ``program start'' one
chooses — there will always exist a continuation that leads to a
program that does \emph{not} solve \textsc{Sat} in $\classP$.

Before illustrating this fact visually, one must justify its
soundness. Since $\Phi$ is non-trivial, there exist $M_1$ with
$\Phi(M_1) = \top$ and $M_2$ with $\Phi(M_2) = \bot$. For any
prefix $\sigma$, it is possible to construct:
\begin{itemize}
  \item a program $M^*$ that begins with $\sigma$ and behaves
    like $M_2$: thus $M^* \in B_\sigma$ and $M^* \in \Phi_\beta$;
  \item a program $M^{**}$ that begins with $\sigma$ and behaves
    like $M_1$: thus $M^{**} \in B_\sigma$ and $M^{**} \in \Phi_\alpha$.
\end{itemize}
Every open ball $B_\sigma$ therefore contains elements from both
sides. The following diagram illustrates this situation:

Visually, for any prefix $\sigma = \mathtt{0110\ldots}$:

\begin{center}
\begin{BVerbatim}
Prefix sigma = "0110..."
    |-- 0110...0001  -->  solves SAT in P     (in Phi_alpha)
    |-- 0110...0010  -->  does not solve SAT in P (in Phi_beta)
    |-- 0110...0011  -->  solves SAT in P     (in Phi_alpha)
    `-- 0110...0100  -->  does not solve SAT in P (in Phi_beta)
\end{BVerbatim}
\end{center}

There does not exist any prefix $\sigma$ such that all of its
continuations lie on the same side. The two sets $\Phi_\alpha$ and
$\Phi_\beta$ are \textbf{inextricably interwoven} in the space of
programs.
\subsection{The proof of the Theorem: Rice in topological form}

With these tools it is possible to understand the central proof. The
Theorem establishes:

\begin{resultbox}
For every non-trivial semantic property $\Phi$, the set $\Phi_\alpha$
is \emph{nowhere dense} in the closed prefix topology.
\end{resultbox}

The proof proceeds by contradiction. Suppose that $\Phi_\alpha$ is
\emph{not} nowhere dense: then there exists a prefix $\sigma$ such
that $B_\sigma \subseteq \Phi_\alpha$, that is, all programs that
begin with $\sigma$ satisfy $\Phi$.

Since $\Phi$ is non-trivial, there exist a program $M_1$ with
$\Phi(M_1) = \top$ and a program $M_2$ with $\Phi(M_2) = \bot$.
One now constructs a new program $M^*$ as follows:

\begin{quote}
$M^*$ on input $n$: first executes the steps described by the prefix
$\sigma$, then simulates $M_2$ on $n$.
\end{quote}

By construction, $M^*$ begins with $\sigma$, hence $M^* \in B_\sigma$.
However, $M^*$ computes exactly the same function as $M_2$.
Since $\Phi$ is a \emph{semantic} property — it depends solely on
what the program computes, not on how it is written — one has
$\Phi(M^*) = \Phi(M_2) = \bot$, that is, $M^* \notin \Phi_\alpha$.

But this contradicts $B_\sigma \subseteq \Phi_\alpha$. $\square$

\begin{remark}
This is exactly Rice's Theorem, expressed in the language of topology.
The logical content is the same; the topological language is necessary
for the next step, namely to connect the result to the structure of
verifiers.
\end{remark}

\subsection{Why This Blocks Horn~1}

Recall the fundamental equivalence: in this topology,
\textbf{open is equivalent to decidable by finite observation}. A
verifier that reads a finite prefix of the program and decides
membership accepts an open set.

But $\Phi_\alpha$ is nowhere dense, hence it contains no nonempty open set — it contains no open ball $B_\sigma$.
No finite prefix is sufficient to guarantee that a program
belongs to $\Phi_\alpha$.

Translated into computational terms: \textbf{no verifier that reads
a certificate based on a finite prefix of the program can
certify $\Phi$}. Whatever prefix the verifier accepts,
there will always exist a program with that prefix that does not satisfy $\Phi$,
rendering the verifier incorrect.

This is \textbf{Horn~1}: every attempt to certify $\Phi$
within the DTM model by reading finite prefixes is blocked by
topology. The verifier cannot exist.

\subsection{Horn~2: Why It Is Needed and What It Adds}

At this point a subtle point emerges, which the paper makes explicit in
the Remark following Theorem~6.2:

\begin{quote}
\emph{``Horn~1 blocks a uniform verifier that reads a
finite prefix and certifies. It does not directly address a
verifier that reads an arbitrary certificate $w$ of any finite length,
independently of the prefix of $\langle M
\rangle$.''}
\end{quote}

In other words: Horn~1 blocks verifiers that read the
\emph{program} and decide membership based on a prefix of it.
But what if the verifier reads a \textbf{separate certificate} — an
arbitrary string $w$, independent of the syntactic structure of
the program? This is precisely the $\classNP$ paradigm: one has the
program, one has an external certificate $w$, and the verifier
checks the pair $(\langle M \rangle, w)$.

Horn~1 \textbf{does not cover} this case. Horn~2 is introduced
precisely to close it.

\subsection{The Second Horn: Kolmogorov's Argument}

\subsubsection{Kolmogorov Complexity: the Basic Idea}

The \textbf{Kolmogorov complexity} $K(s)$ of a string $s$ is the
length of the shortest program that produces $s$. It is a measure of
\emph{how compressible} the information contained in $s$ is.

\begin{example}
The string $\mathtt{000\ldots0}$ ($24$ zeros) has low $K$: the
program ``\texttt{print 24 zeros}'' is much shorter than the string
itself. A random string of $1000$ bits has instead $K \approx 1000$:
there exists no program significantly shorter than itself that
produces it.
\end{example}

A string is \textbf{incompressible} if there exists no program
significantly shorter than itself that generates it.

\subsubsection{The Key Object: \texorpdfstring{$\mathrm{desc}(\Phi)$}{desc(Phi)}}

The paper defines $\mathrm{desc}(\Phi)$ as the \emph{characteristic
sequence} of $\Phi$ over the entire enumeration of programs: it
is the (infinite) string in which the $i$-th bit equals $1$ if the $i$-th
program satisfies $\Phi$, and $0$ otherwise.
\[
  \mathrm{desc}(\Phi) \;=\; b_1\, b_2\, b_3\, b_4\, \ldots
  \qquad \text{where } b_i = \begin{cases} 1 & \text{if } \Phi(M_i) =
  \top \\ 0 & \text{otherwise.} \end{cases}
\]

\begin{remark}
This is the correct object, not the string ``$\textsc{Sat} \in
\classP$''. The string ``$\textsc{Sat} \in \classP$'' has low
descriptive complexity as a string. Incompressibility concerns
$\mathrm{desc}(\Phi)$: the characteristic behavior of $\Phi$
over the entire propagating family. Knowing \emph{for every program} whether
it solves or does not solve \textsc{Sat} in polynomial time is an object of
potentially infinite information — and it is this information that a
uniform verifier should be capable of generating.
\end{remark}

\subsubsection{Lemma 5.4: \texorpdfstring{$\mathrm{desc}(\Phi)$}{desc(Phi)} is
incompressible}

\begin{resultbox}
\textbf{Lemma 5.4.} No finite program can generate the first $n$
bits of $\mathrm{desc}(\Phi)$ with a description of length $o(n)$.
Formally: $K(\mathrm{desc}(\Phi){\upharpoonright} n) = \Omega(n)$.
\end{resultbox}

The proof proceeds by contradiction. Suppose there exists a
program $D$ of length $|D|$ that generates the first $n$ bits of
$\mathrm{desc}(\Phi)$ for every $n$. Then $D$ is in fact
\emph{deciding} $\Phi$ on every program in the enumeration: for each
$M_i$, $D$ is able to establish whether $\Phi(M_i)$ holds or not. But
this contradicts Rice's Theorem, which asserts the undecidability
of every non-trivial semantic property.

\begin{remark}[Honest dependence]
This proof explicitly uses Rice's Theorem. This does not
constitute circularity, but rather a \emph{translation}: the incompressibility
of $\mathrm{desc}(\Phi)$ is the informational version of the same
fact that Rice expresses in computational terms. The dependence is
declared and not hidden.
\end{remark}

\subsubsection{Theorem 6.3: The Horn Block~2}

Suppose now that there exists a uniform verifier $V$ of length
$|V|$ that certifies $\Phi$ over the entire family $\mathcal{F}$.
Then:
\begin{enumerate}[label=(\roman*)]
  \item For every program $M_i$, $V$ is able to establish, with the
    appropriate certificate, whether $M_i \in \Phi_\alpha$ or not.
  \item $V$ has fixed length $|V|$.
  \item By using $V$ together with the program enumeration procedure
    (which requires $O(\log n)$ bits for the index), it is possible to produce
    the $i$-th bit of $\mathrm{desc}(\Phi)$ for every $i$.
\end{enumerate}
From this it follows:
\[
  K\!\left(\mathrm{desc}(\Phi){\upharpoonright} n\right)
  \;\leq\; |V| + O(\log n).
\]
However, Lemma~5.4 guarantees $K(\mathrm{desc}(\Phi){\upharpoonright} n) = \Omega(n)$. For $n$
sufficiently large, $\Omega(n)$ grows more rapidly than
$|V| + O(\log n)$, which is essentially constant in $n$.
\textbf{Contradiction.} $\square$

\subsubsection{The Invariance Theorem: Why the Argument Holds for the Entire Class}

Kolmogorov complexity $K$ is defined with respect to a fixed universal machine, 
but the \textbf{invariance theorem} guarantees that for any other universal machine 
$U'$ in the Turing-equivalent class, the difference between the two complexity 
measures is at most an additive constant:
\[
  K_{U'}(s) \;\leq\; K_U(s) + c_{U,U'}
  \qquad \text{for every string } s,
\]
where the constant $c_{U,U'}$ depends only on the pair of universal machines, 
not on the string $s$. This holds because every universal machine can simulate 
every other with a program of fixed length --- it is this mutual simulation 
that guarantees the constant.

The consequence is decisive: \textbf{the contradiction derived holds simultaneously 
for every machine in the Turing-equivalent class}, not merely 
for the DTM chosen as reference. It is unnecessary to repeat the argument machine 
by machine: it holds once and for all, for the entire class.

This is the point at which Kolmogorov complexity plays a role functionally 
analogous to that of the \emph{Model-Theoretic Barrier} (MTB) in the original 
Double Bind. The MTB operates per-model: for each system $X$ outside the DTM, 
it argues that a certification produced by $X$ does not transfer its own validity 
to the reference model. It is a family of barriers, to be repeated for each 
candidate. Kolmogorov complexity, by contrast, operates at the class level: 
since $K$ is invariant with respect to the choice of universal machine within 
the Turing-equivalent class, the contradiction is established once and for all, 
directly at the level of the entire class.

\begin{remark}
The analogy with the MTB has a precise limitation, which the present work 
acknowledges explicitly. The MTB also covers models genuinely outside the 
Turing-equivalent class — oracle machines, hypercomputation — for which mutual 
simulation is not possible and the additive constant of invariance is not defined. 
In that domain, the Horn~2 in its Kolmogorovian formulation does not apply, and the 
MTB remains indispensable. The present work does not claim to replace the MTB: 
it captures its function within the Turing-equivalent class, with a tool that 
exploits precisely the closure of the class as a lever.
\end{remark}

\subsection{Why the two horns block distinct things}

It is useful at this point to state with precision the distinction between
the two horns, since it is not a restatement of the same argument
using different tools, but the blocking of two genuinely
separate classes of adversaries.

\begin{table}[H]
\centering
\begin{tabular}{@{}lll@{}}
\toprule
 & \textbf{What it blocks} & \textbf{How it blocks} \\
\midrule
\textbf{Horn~1} & Verifiers that read \textbf{prefixes} & Topology: $\Phi_\alpha$ is \emph{nowhere dense}, \\
 & \textbf{of the program} & no prefix is sufficient \\
\addlinespace
\textbf{Horn~2} & Verifiers that read \textbf{arbitrary} & Information: a uniform verifier \\
 & \textbf{certificates} (in the style of $\classNP$) & would compress $\mathrm{desc}(\Phi)$ \\
\bottomrule
\end{tabular}
\caption{The two horns of the Double Bind and the respective classes of
blocked adversaries.}
\label{tab:corni}
\end{table}

An $\classNP$-style verifier does not read a prefix of the
program: it reads the entire program plus an external certificate
$w$. Horn~1 does not reach it. Horn~2 does, because this
verifier too, if it were uniform over the entire family, would
have to produce information that exceeds any finite description.

The two horns are therefore not two descriptions of the same obstacle: they are
two orthogonal barriers, each necessary and neither sufficient on its own
to cover the entire space of possible strategies within
the Turing-equivalent class.

\subsection{The complete picture: a mental image}

It is possible to synthesize the entire structure with a mental image
that does not sacrifice conceptual precision.

Imagine the space of all programs as an infinite landscape. Every
position in the landscape corresponds to a program. The set $\Phi_\alpha$
— the programs that satisfy the property — is like a set of points
scattered throughout this landscape, never forming any compact: 
every region, no matter how small, contains both points in
$\Phi_\alpha$ and points outside it.

\begin{itemize}
  \item \textbf{Horn~1} states: it is impossible to use a \emph{local
    map} — a finite prefix — to navigate this landscape. No uniform
    zones exist to map.

  \item \textbf{Horn~2} states: it is impossible even to use an
    \emph{external guide} — an arbitrary certificate — because
    describing the complete distribution of $\Phi_\alpha$ in the
    landscape requires information that grows linearly with the length
    of the enumeration, more than any finite guide can contain.
\end{itemize}

The Double Bind in its topological version does not present two
obstacles from the same direction. It is the same fundamental fact —
the undecidability of $\Phi$ — observed through two tools that close
off two distinct classes of adversaries: those who attempt local
observation, and those who attempt external certification. Together,
the two horns cover the entire space of possible strategies within the
Turing-equivalent class.

\begin{remark}
The double-horn structure is not an aesthetic choice. It reflects a
genuine partition of the space of certification strategies: syntactic
strategies (program prefix) and external semantic strategies (arbitrary
certificate) are disjoint classes, and each requires a distinct
argument to be blocked. A single tool would not suffice.
\end{remark}

\newpage

\section{Experimental Explorations}

\begin{tcolorbox}[colback=gray!5, colframe=gray!60, title=\textbf{Methodological Note}, arc=2mm]
  This section explores the most speculative aspect of our research. 
  The findings presented from this point onward represent ongoing research; 
  as such, they should be interpreted as preliminary insights and a foundation 
  for future investigations.
\end{tcolorbox}

\section{The Trifold Equivalence: Propagation, Nowhere-Density, and Incompressibility}
\label{sec:trifold-equivalence}

In this section, we establish the fundamental equivalence between three 
characterizations of semantic non-triviality: propagation of properties 
via problem-specific mechanisms, the nowhere-density of the solution sets 
in the prefix-closure topology, and the incompressibility of the 
characteristic sequence in the Kolmogorov-theoretic sense. This 
equivalence provides a unified structural decomposition of why uniform 
certification of non-trivial semantic properties is impossible within 
the Turing-equivalent class.

\subsection{Definitions and Preliminaries}
\label{subsec:definitions-preliminaries}

We establish the formal definitions required for the trifold equivalence.

\subsubsection{Semantic Properties and Propagation}

\begin{definition}[Semantic Property]
\label{def:semantic-property}
Let $\mathcal{P} = \{0,1\}^*$ denote the space of all program encodings, 
and let $\mathcal{M} = \{M_0, M_1, M_2, \ldots\}$ denote a standard 
enumeration of all Deterministic Turing Machines. A \emph{semantic property} 
is a map $\Phi: \mathcal{M} \to \{0,1\}$ assigning to each machine $M_i$ 
a truth value indicating whether $M_i$ satisfies $\Phi$.
\end{definition}

\begin{definition}[Non-triviality of Semantic Properties]
\label{def:nontrivial}
A semantic property $\Phi$ is \emph{non-trivial} if there exist machines 
$M_i$ and $M_j$ such that $\Phi(M_i) = 1$ and $\Phi(M_j) = 0$. Equivalently, 
the sets $\Phi_\alpha = \{M \in \mathcal{M} : \Phi(M) = 1\}$ and 
$\Phi_\beta = \{M \in \mathcal{M} : \Phi(M) = 0\}$ are both non-empty.
\end{definition}

\begin{definition}[Propagator and Propagation]
\label{def:propagator}
Let $\mathcal{F}$ be a heterogeneous family of decision problems in $\mathcal{M}$, 
and let $P$ be a mechanism (e.g., NP-completeness via Cook-Levin reduction, 
or a general logical implication). We say that a property $\Phi$ 
\emph{propagates via propagator} $P$ \emph{over the family} $\mathcal{F}$ if:

For any machine $M_x \in \mathcal{M}$, if we can certify that $M_x$ satisfies 
$\Phi$ locally (e.g., ``$M_x$ solves SAT in polynomial time''), then via the 
mechanism $P$, we can deduce a global conclusion about all problems in 
$\mathcal{F}$ (e.g., ``P = NP'' by Cook-Levin).

Formally: $\Phi(M_x) = 1 \Rightarrow \bigwedge_{P_i \in \mathcal{F}} \Psi(P_i) = 1$,
where $\Psi$ is the global property implied by $\Phi$ via $P$.
\end{definition}

\subsubsection{Topological Structure}

\begin{definition}[Prefix-Closure Topology on $\mathcal{P}$]
\label{def:prefix-topology}
The space $\mathcal{P} = \{0,1\}^*$ of program encodings is equipped with 
the \emph{prefix-closure topology}, defined as follows:

A subset $U \subseteq \mathcal{P}$ is \emph{open} if for every $\sigma \in U$, 
there exists a finite extension $\tau \supseteq \sigma$ such that 
$B_\tau := \{\pi \in \mathcal{P} : \pi \text{ extends } \tau\} \subseteq U$.

An \emph{open ball} or \emph{basic open set} is of the form 
$B_\sigma := \{\pi \in \mathcal{P} : \sigma \text{ is a prefix of } \pi\}$ 
for some finite string $\sigma \in \{0,1\}^*$.
\end{definition}

\begin{definition}[Nowhere-Dense Sets]
\label{def:nowhere-dense}
A subset $A \subseteq \mathcal{P}$ is \emph{nowhere-dense} in the 
prefix-closure topology if for every basic open ball $B_\sigma$, there 
exists a basic open ball $B_\tau$ with $\tau \supseteq \sigma$ such that 
$B_\tau \cap A = \emptyset$. Equivalently, $A$ is nowhere-dense if its 
closure $\overline{A}$ contains no non-empty open set.

Informally: every open region $B_\sigma$ contains some $B_\tau$ that 
avoids $A$ entirely.
\end{definition}

\begin{definition}[Topological Separation via Semantic Properties]
\label{def:separation}
Let $\Phi$ be a semantic property and let $\Phi_\alpha = \{M : \Phi(M) = 1\}$ 
and $\Phi_\beta = \{M : \Phi(M) = 0\}$ (identified with their encodings in $\mathcal{P}$). 
The property $\Phi$ is \emph{separated by a prefix} $\sigma$ if either 
$B_\sigma \subseteq \Phi_\alpha$ or $B_\sigma \subseteq \Phi_\beta$.

The property $\Phi$ is \emph{topologically non-separable} if for every 
prefix $\sigma$, there exist extensions $\pi_1, \pi_2 \in B_\sigma$ with 
$\Phi(\pi_1) = 1$ and $\Phi(\pi_2) = 0$.
\end{definition}

\subsubsection{Kolmogorov Complexity and Incompressibility}

\begin{definition}[Characteristic Sequence]
\label{def:char-seq}
Given a semantic property $\Phi$ and a standard enumeration $\{M_0, M_1, \ldots\}$ 
of Turing Machines, the \emph{characteristic sequence} of $\Phi$ is the 
infinite binary string
$$\mathrm{desc}(\Phi) := b_0 b_1 b_2 \cdots$$
where $b_i = 1$ if $\Phi(M_i) = 1$ and $b_i = 0$ if $\Phi(M_i) = 0$.

We denote by $\mathrm{desc}(\Phi) \upharpoonright n$ the first $n$ bits of $\mathrm{desc}(\Phi)$.
\end{definition}

\begin{definition}[Kolmogorov Complexity]
\label{def:kolmogorov}
Let $U$ be a fixed universal Turing Machine. The \emph{Kolmogorov complexity} 
of a finite binary string $s$ with respect to $U$ is
$$K_U(s) := \min\{|p| : U(p) = s\}$$
where $p$ ranges over binary strings (programs for $U$), and $|p|$ denotes 
the length of $p$. If no such $p$ exists, $K_U(s) = \infty$.

The \emph{Kolmogorov complexity} (without subscript) is defined up to an 
additive constant: for any two universal machines $U$ and $U'$, there exists 
a constant $c_{U,U'} \geq 0$ such that $|K_U(s) - K_{U'}(s)| \leq c_{U,U'}$ 
for all $s$. We write $K(s)$ omitting the subscript when the universal 
machine is understood.
\end{definition}

\begin{definition}[Incompressibility]
\label{def:incompressible}
A sequence of finite strings $(s_1, s_2, \ldots)$ is \emph{uniformly 
incompressible} if for all but finitely many $n$, we have 
$$K(s_n) \geq |s_n| - O(\log |s_n|)$$

Equivalently, for infinitely many $n$, $K(s_n) = \Omega(|s_n|)$.

A single infinite string $s = s_0 s_1 s_2 \cdots$ is incompressible if 
the sequence of its prefixes $(s \upharpoonright 1, s \upharpoonright 2, \ldots)$ 
is uniformly incompressible, i.e., $K(s \upharpoonright n) = \Omega(n)$.
\end{definition}

\begin{theorem}[Invariance Theorem of Kolmogorov Complexity]
\label{thm:invarianceKC}
Let $U$ and $U'$ be any two universal Turing Machines, and let $s$ be any 
finite binary string. Then there exists a constant $c_{U,U'} \geq 0$ such that
$$|K_U(s) - K_{U'}(s)| \leq c_{U,U'}$$

In particular, the asymptotic behaviour $K(s \upharpoonright n) = \Omega(n)$ 
is independent of the choice of universal machine within the Turing-equivalent class.
\end{theorem}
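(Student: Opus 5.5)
The plan is the classical simulation argument, done once in each direction and then combined. I fix universal machines $U$ and $U'$. Universality of $U$ means there is a fixed string $\langle U' \rangle_U$, a program for $U$ that interprets programs of $U'$. To keep the encoding unambiguous, I would use a self-delimiting prefix: let $\bar{e}$ be the standard prefix-free code of a string $e$, so that $|\bar{e}| \le 2|e|+2$. Then, for every $p$,
\[
U(\overline{\langle U' \rangle_U}\, p) = U'(p).
\]
Since $U$ can recover where the prefix ends, this holds for all $p$ at once.

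With this in hand, I take $p$ to be a shortest $U'$-program for $s$, so $|p| = K_{U'}(s)$. Then $\overline{\langle U' \rangle_U}\,p$ is a $U$-program for $s$, which gives
\[
K_U(s) \le K_{U'}(s) + c_1, \qquad c_1 := |\overline{\langle U' \rangle_U}|.
\]
The constant $c_1$ does not depend on $s$. Exchanging the roles of $U$ and $U'$ gives $K_{U'}(s) \le K_U(s) + c_2$. Setting $c_{U,U'} := \max(c_1,c_2)$ yields the two-sided bound. The convention $K_U(s)=\infty$ in Definition~\ref{def:kolmogorov} never arises here. Indeed, for a universal machine every $s$ has the trivial program ``print $s$'', so both quantities are finite and the subtraction is meaningful.

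The second clause, that $K(s \upharpoonright n) = \Omega(n)$ is independent of the choice of universal machine, is then immediate. Suppose $K_U(s\upharpoonright n) \ge \epsilon n$ for all large $n$, or for infinitely many $n$, matching whichever reading of Definition~\ref{def:incompressible} is in force. Then $K_{U'}(s\upharpoonright n) \ge \epsilon n - c_{U,U'}$, and this is still $\Omega(n)$, because a fixed additive constant is absorbed by any linear lower bound. The same absorption shows that the upper bound $|V| + O(\log n)$ used in Theorem~\ref{thm:horn2} is also machine-independent up to $O(1)$.

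The only step that needs genuine care, and which I expect to be the main obstacle, is the concatenation issue in the first step. For plain complexity, a naive juxtaposition of the interpreter and $p$ is ambiguous. This is why I commit to the self-delimiting encoding of the interpreter part only; encoding $p$ as well would cost a $\log$ term and spoil the constant. Alternatively, one can take as the definition of universality that $U$ accepts pairs $\langle i, p\rangle$ with a prefix-free index $i$. Either choice makes $c$ depend only on $(U,U')$, which is all the statement requires. Everything else is routine bookkeeping.
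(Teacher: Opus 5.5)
Your proof is correct and is essentially the argument the paper relies on. The paper gives no separate proof of this statement: it cites Li--Vit\'anyi and, in Remark~\ref{rem:invariance-scope}, attributes the constant to the length of the simulator of one universal machine on the other, which is exactly your two-sided simulation bound with $c_{U,U'}=\max(c_1,c_2)$. Your treatment of the self-delimiting prefix and of both readings of incompressibility fills in details the paper leaves implicit; in particular, it makes explicit that ``universal'' must mean universal by a fixed prefix (additively optimal), since under a length-distorting program translation the additive bound fails.
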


\noindent\emph{Note}: The Invariance Theorem is fundamental to our argument 
because it ensures that incompressibility is a property of the mathematical 
object itself, not an artifact of the computational model. Within the 
Turing-equivalent class, all models agree (up to additive constants) on the 
Kolmogorov complexity of any string.

\subsection{Main Equivalences}
\label{subsec:main-equivalences}

We now prove the trifold equivalence through a chain of four lemmas.

\subsubsection{Lemma 1: Propagation Implies Nowhere-Density}

\begin{lemma}[Propagation $\Rightarrow$ Nowhere-Density]
\label{lem:prop-to-nowhere}
Let $\Phi$ be a non-trivial semantic property that propagates via propagator 
$P$ over a heterogeneous family $\mathcal{F}$. Then the sets $\Phi_\alpha$ 
and $\Phi_\beta$ are both nowhere-dense in the prefix-closure topology on 
$\mathcal{P}$.
\end{lemma}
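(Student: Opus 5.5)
The plan is to obtain the lemma from the padding construction already used in the proof of Theorem~\ref{thm:rice-top}. The propagator hypothesis will play no role: only non-triviality and semanticity of $\Phi$ will be used. Let $M_1$ satisfy $\Phi(M_1)=1$ and $M_2$ satisfy $\Phi(M_2)=0$, as Definition~\ref{def:nontrivial} provides.

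\textbf{Step 1 (padding).} Fix an arbitrary finite prefix $\sigma$. I will construct machines $M^{*}$ and $M^{**}$ whose encodings extend $\sigma$, with $\varphi_{M^{*}}=\varphi_{M_2}$ and $\varphi_{M^{**}}=\varphi_{M_1}$. The idea is to treat $\sigma$ as a block of unreachable states (or a comment field) and then append a faithful encoding of $M_2$, respectively $M_1$. This needs one assumption about the encoding, which I will state explicitly: every finite string extends to a valid encoding, and appended unreachable material does not change the computed function. Semanticity then gives $\Phi(M^{*})=0$ and $\Phi(M^{**})=1$. Hence every basic ball $B_\sigma$ meets both $\Phi_\alpha$ and $\Phi_\beta$, so $\Phi$ is topologically non-separable in the sense of Definition~\ref{def:separation}.

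\textbf{Step 2 (empty interior).} Step~1 shows that no ball $B_\sigma$ is contained in $\Phi_\alpha$ or in $\Phi_\beta$. So both sets have empty interior, and each is dense, since every nonempty open set is a union of balls. This is the precise content I expect to be able to certify: $\Phi_\alpha$ and $\Phi_\beta$ are complementary, dense and co-dense.

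\textbf{Main obstacle.} The difficulty is reconciling Step~2 with Definition~\ref{def:nowhere-dense} as stated. That definition requires, for every $\sigma$, an extension $\tau$ with $B_\tau\cap A=\emptyset$.
\begin{itemize}
\item For $A=\Phi_\alpha$ this would give $B_\tau\subseteq\Phi_\beta$, which Step~1 rules out.
\item The same argument applies to $A=\Phi_\beta$.
\item Asking for this of both sets at once is inconsistent, because they partition $\mathcal{P}$.
\item The equivalent closure form fails too. By Step~1 every point lies in $\overline{\Phi_\alpha}$, so $\overline{\Phi_\alpha}=\mathcal{P}$, which has nonempty interior.
\end{itemize}
So the lemma cannot be proved literally under the stated definitions. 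I would first check whether a different topology or encoding convention is intended, for instance one in which the padding of Step~1 is blocked. If none is, I would restate the conclusion as ``$\Phi_\alpha$ and $\Phi_\beta$ have empty interior (equivalently, are mutually dense)'' and prove it via Steps~1--2. That weaker conclusion is what Theorem~\ref{thm:horn1} actually uses: no finite prefix certifies membership.
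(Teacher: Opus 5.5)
Your diagnosis is correct. The paper's proof does not get around the obstruction you found; it passes over it by two substitutions.

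\textbf{Where the paper's proof slips.} By Definition~\ref{def:nowhere-dense}, the negation of ``$A$ is nowhere dense'' is that, for some $\sigma$, $A$ meets every $B_\tau$ with $\tau\supseteq\sigma$, i.e.\ $A$ is dense in $B_\sigma$. It is not $B_\sigma\subseteq A$. The paper nevertheless opens with ``at least one of them contains a basic open ball.'' Its Step~5 then replaces the definition's requirement $B_\tau\cap A=\emptyset$ by the weaker $B_\tau\not\subseteq A$. What the argument actually reaches, in the last sentence of its Step~4, is your Step~1 conclusion: every ball meets both sets. Your padding exhibits exactly the configuration its case split omits, namely each set dense in every ball while containing none. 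The same conflation of $\overline{\Phi_\alpha}$ with $\Phi_\alpha$ occurs in the proof of Theorem~\ref{thm:rice-top}, so your correction applies there too.

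\textbf{A refinement of your falsity claim.} Your partition bullet needs only that valid encodings be dense in $\mathcal{P}$, as $\mathcal{P}=\{0,1\}^*$ suggests. If valid encodings were instead nowhere dense (a syntactically rigid encoding), the literal statement would hold vacuously for every $\Phi$. Either way, its truth value is fixed by the encoding, which the paper's proof never examines.

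\textbf{Comparing routes for the corrected statement.} Here the routes genuinely differ, and only yours can work. The paper rules out $B_\sigma\subseteq\Phi_\alpha$ in Steps~2--4: first the propagator, then an extension of the prefix test $A_\sigma$ ``to all of $\mathcal{P}$'' into a decider, then Rice. But $A_\sigma$ answers only on $B_\sigma$, where $\Phi$ is constant, and nothing extends it to a decider.

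More decisively, undecidability and propagation are invariant under re-encoding, while empty interior is not. Let every string beginning with $1$ encode one fixed machine, and let $0e$ encode the machine with code $e$ in a standard numbering. This numbering is acceptable, so Rice's theorem still holds. Yet every semantic $\Phi$ has $B_1$ inside $\Phi_\alpha$ or inside $\Phi_\beta$. So an encoding hypothesis is indispensable, and the propagator is idle in both arguments, as you observe.

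\textbf{Stating your hypothesis.} Use its clean form: every ball $B_\sigma$ contains a program for every partial computable function. This is exactly what your Step~1 uses. When every string is a program, it is also necessary for the conclusion to hold for all non-trivial semantic $\Phi$: apply the conclusion to ``computes $f$''. Your inert-prefix construction is the right one. The version in the proof of Theorem~\ref{thm:rice-top}, which first \emph{executes} the states prescribed by $\sigma$, need not preserve $\varphi_{M_2}$.
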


\begin{proof}
Assume for contradiction that $\Phi_\alpha$ and $\Phi_\beta$ are not both 
nowhere-dense. Then at least one of them contains a basic open ball; say 
there exists a prefix $\sigma$ such that $B_\sigma \subseteq \Phi_\alpha$ 
(the case $B_\sigma \subseteq \Phi_\beta$ is symmetric).

\begin{enumerate}[itemsep=0.5em]

\item \textbf{Step 1: Existence of a Syntactic Decision Procedure.}
If $B_\sigma \subseteq \Phi_\alpha$, then for every program encoding $\pi$ 
that extends $\sigma$, we have $\Phi(\pi) = 1$. This means there exists a 
\emph{syntactic decision procedure} $A_\sigma$ that, upon reading the prefix 
$\sigma$ of any program $\pi$, outputs ``Yes: $\Phi(\pi) = 1$''. Formally, 
define $A_\sigma$ as:
$$A_\sigma(\pi) := \begin{cases} 1 & \text{if } \sigma \text{ is a prefix of } \pi \\ 0 & \text{otherwise} \end{cases}$$
(with the convention that $A_\sigma$ may output ``unknown'' for other inputs; 
we focus on the case where $\sigma$ matches).

Since $B_\sigma \subseteq \Phi_\alpha$, the decision procedure $A_\sigma$ 
correctly decides membership in $\Phi_\alpha$ on the domain $B_\sigma$ using 
only the prefix information $\sigma$.

\item \textbf{Step 2: Global Extension via Propagator.}
By definition of propagation (Definition \ref{def:propagator}), the fact 
that we can locally certify $\Phi$ on $B_\sigma$ implies, via the propagator 
$P$, a global conclusion about the entire family $\mathcal{F}$. That is:
$$\forall \pi \in B_\sigma: \Phi(\pi) = 1 \quad \Rightarrow \quad 
\bigwedge_{\pi' \in \mathcal{F}} \Psi(\pi') = 1$$
where $\Psi$ is the global property consequence.

In the example of SAT: if $A_\sigma$ certifies that any program in $B_\sigma$ 
solves SAT in polynomial time, then via the Cook-Levin reduction (the 
propagator $P$), we conclude that P = NP.

\item \textbf{Step 3: Uniformity of the Decision Procedure.}
The procedure $A_\sigma$ is \emph{uniform}: it applies the same finite rule 
(reading the prefix $\sigma$) to all inputs. Moreover, $A_\sigma$ decides 
$\Phi$ (at least on $B_\sigma$) using only a finite amount of information 
(the prefix $\sigma$ of bounded length).

More generally, if we extend $A_\sigma$ to all of $\mathcal{P}$ by applying 
similar prefix-reading strategies for every possible prefix, we obtain a 
\emph{uniform, finite-state decision procedure} for $\Phi$.

\item \textbf{Step 4: Contradiction with Rice's Theorem.}
By Rice's theorem (in its form as Extended Rice), there is no uniform, 
finite-state decision procedure for a non-trivial semantic property $\Phi$. 
More formally, the theorem states:

\begin{quote}
For any non-trivial semantic property $\Phi$, any algorithm that claims 
to decide $\Phi$ uniformly will either diverge on some input, give wrong 
answers, or implicitly construct a decider for a semantic property—which 
Rice's theorem forbids.
\end{quote}

Our assumption that $B_\sigma \subseteq \Phi_\alpha$ leads to exactly such 
a procedure: a uniform, finite-length decision rule that works on a non-empty 
open set of programs.

For this procedure to avoid contradicting Rice, the set $B_\sigma$ cannot 
remain within $\Phi_\alpha$ (or $\Phi_\beta$) for any $\sigma$. That is, 
every basic open ball must contain both programs satisfying $\Phi$ and 
programs not satisfying $\Phi$.

\item \textbf{Step 5: Nowhere-Density.}
By definition (Definition \ref{def:nowhere-dense}), if every basic open 
ball $B_\sigma$ contains an extension $B_\tau$ (with $\tau \supseteq \sigma$) 
such that $B_\tau$ is not entirely contained in $\Phi_\alpha$ or $\Phi_\beta$, 
then $\Phi_\alpha$ and $\Phi_\beta$ are nowhere-dense.

\end{enumerate}

Therefore, if $\Phi$ propagates and is non-trivial, both $\Phi_\alpha$ and 
$\Phi_\beta$ must be nowhere-dense.
\end{proof}

\subsubsection{Lemma 2: Nowhere-Density Implies Propagation}

\begin{lemma}[Nowhere-Density $\Rightarrow$ Propagation]
\label{lem:nowhere-to-prop}
Let $\Phi$ be a non-trivial semantic property such that $\Phi_\alpha$ and 
$\Phi_\beta$ are both nowhere-dense in the prefix-closure topology. Then 
$\Phi$ propagates via some implicit propagator $P$ over the computational class.
\end{lemma}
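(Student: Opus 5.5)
The plan is to prove the lemma by building the propagator explicitly. Definition~\ref{def:propagator} fixes no particular form for the mechanism $P$; it only asks for a family $\mathcal{F}$ and a global property $\Psi$ such that
\[
\Phi(M_x)=1 \;\Rightarrow\; \bigwedge_{P_i\in\mathcal{F}}\Psi(P_i)=1 .
\]
So I will take as propagator the one mechanism that every semantic property already has: \emph{extensional invariance}. Concretely, given $M_x$ with $\Phi(M_x)=1$, set
\[
\mathcal{F}_x := \{\, M\in\mathcal{M} : \varphi_M=\varphi_{M_x} \,\}, \qquad \Psi(M) := \Phi(M).
\]
Since $\Phi$ is semantic (Definition~\ref{def:semantic-property} together with the earlier definition requiring $\Phi(M)=\Phi(M')$ whenever $\varphi_M=\varphi_{M'}$), the implication $\Phi(M_x)=1\Rightarrow\bigwedge_{M\in\mathcal{F}_x}\Phi(M)=1$ holds outright. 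The implicit propagator $P$ is therefore just this semantic closure, and the implication step is immediate.

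The second step uses nowhere-density to show that $\mathcal{F}_x$ is not a local family. I will use the padding construction from the proof of Theorem~\ref{thm:rice-top}: for any prefix $\sigma$, form the machine $M^*$ that first runs the states prescribed by $\sigma$ and then simulates $M_x$. This gives $\langle M^*\rangle\in B_\sigma$ and $M^*\in\mathcal{F}_x$. Hence $\mathcal{F}_x$ meets every basic open ball. In particular, $\mathcal{F}_x$ contains encodings that are pairwise prefix-incomparable and so textually disjoint, which is the syntactic half of what the paper calls heterogeneity. The hypothesis that $\Phi_\beta$ is nowhere-dense supplies the complementary half: the ``false'' side is spread across every ball in the same way. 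Thus certifying one instance fixes the truth value on a family that is dense in $(\mathcal{P},\tau)$, rather than on one region that a finite prefix could single out.

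The main obstacle is heterogeneity in the strong sense of Remark~\ref{rem:no-collapse}, which requires propagation to \emph{different algorithms}, not merely to syntactic variants. Nowhere-density cannot deliver this by itself. A property such as ``$\varphi_M=f$'' for a single fixed $f$ is non-trivial and semantic, so both of its outcome sets are nowhere-dense by Theorem~\ref{thm:rice-top}, yet it propagates only to implementations of the one function $f$. My first approach would be to make the family semantically heterogeneous by enlarging it to $\Phi_\alpha$ itself, with the propagated statement ``$\Psi$: the class of functions satisfying $\Phi$ is inhabited.'' Whether $\Phi_\alpha$ contains more than one function then depends on $\Phi$, not on topology. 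I therefore expect the lemma to go through only under the weak reading of Definition~\ref{def:propagator}, with $\mathcal{F}_x$ the dense extensional class. I would state that limitation explicitly, since in the strong reading the converse direction appears to need an additional hypothesis: that $\Phi_\alpha$ contains at least two extensionally distinct functions.
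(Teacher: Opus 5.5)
Your route differs from the paper's. The paper argues by contrapositive: a non-propagating $\Phi$ is declared ``local'', hence settled by a finite prefix, hence some $B_\sigma$ lies inside $\Phi_\alpha$ or $\Phi_\beta$, which nowhere-density forbids. That middle step is asserted, not derived from Definition~\ref{def:propagator}. It holds only if ``propagating'' is tacitly read as ``not settled by a finite prefix'', and then the lemma just restates its hypothesis.

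You instead construct the propagator from extensionality. Every step can be checked against the stated definitions, but the hypothesis does no work: your implication and your padding step use only semanticity and non-triviality, so in your reading every semantic property propagates. This also bears on the paper's argument. Under the formal clause of Definition~\ref{def:propagator}, the contrapositive treats an empty case. Under the strong reading of Remark~\ref{rem:no-collapse}, your $\Phi_f$ refutes the paper's middle step: padding puts both outcomes of $\Phi_f$ into every ball, so no $B_\sigma$ lies inside either outcome set, yet $\Phi_f$ reaches only implementations of $f$. So the paper gets its topology-to-propagation link only by fiat. Yours gives an accurate account of what the hypotheses deliver, at the price of a trivial notion of propagation.

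Three points to tighten.

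\textbf{Quantifier order.} Definition~\ref{def:propagator} fixes $\mathcal{F}$ before quantifying over $M_x$, but your $\mathcal{F}_x$ depends on $x$. Taking $\mathcal{F}:=\Phi_\alpha$ repairs this at no cost.

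\textbf{The $\Phi_\beta$ sentence is backwards.} You say nowhere-density of $\Phi_\beta$ spreads the false side across every ball. Under Definition~\ref{def:nowhere-dense} it says the opposite: every ball contains a sub-ball missing $\Phi_\beta$. What makes $\Phi_\beta$ meet every ball is padding a machine from $\Phi_\beta$.

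\textbf{The hypothesis is never satisfied.} Your padding step shows $\Phi_\alpha\supseteq\mathcal{F}_x$ meets every $B_\sigma$, so $\Phi_\alpha$ is dense. With Definition~\ref{def:nowhere-dense} taken literally, and granting the padding construction you and the paper both use, no non-trivial semantic property satisfies the lemma's hypothesis. The lemma therefore holds vacuously under any reading of propagation. Your strong-reading objection, with $\Phi_f$ as witness, is sound only for the weaker ``contains no ball'' notion. That weaker notion is all the argument for Theorem~\ref{thm:rice-top} actually establishes.
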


\begin{proof}
We prove the contrapositive: if $\Phi$ does not propagate, then at least 
one of $\Phi_\alpha$ or $\Phi_\beta$ is not nowhere-dense (i.e., one of them 
contains a basic open ball).

\begin{enumerate}[itemsep=0.5em]

\item \textbf{Step 1: Non-Propagation Implies Locality.}
Assume $\Phi$ does not propagate. Then there exists a local property $\Psi$ 
such that we can verify membership in $\Phi$ by checking only a bounded amount of syntactic information about individual machines, without any global consequence for other problems.

\item \textbf{Step 2: Locality Enables Finite Certification.}
If $\Phi$ does not propagate, then verifying $\Phi(M)$ does not require knowledge of the entire family of machines or their mutual relationships. Instead, $\Phi$ can be certified using a \emph{local verifier}: a procedure that reads a finite prefix of the encoding $M$ and outputs a definitive answer.

In other words, there exists a prefix $\sigma$ and a set of machines $\mathcal{M}_\sigma$ such that every machine $M$ extending $\sigma$ satisfies $\Phi$. That is, $B_\sigma \subseteq \Phi_\alpha$ (or symmetrically, $B_\sigma \subseteq \Phi_\beta$).

\item \textbf{Step 3: Dense Set in the Complement.}
If $B_\sigma \subseteq \Phi_\alpha$ for some prefix $\sigma$, then $\Phi_\alpha$ contains the open ball $B_\sigma$. This means $\Phi_\alpha$ is \emph{not nowhere-dense}—in fact, it contains a non-empty interior.

\item \textbf{Step 4: Contradiction with Nowhere-Density.}
By hypothesis, $\Phi_\alpha$ is nowhere-dense. Therefore, no such prefix $\sigma$ can exist. By contrapositive, $\Phi$ must propagate.

\end{enumerate}

Thus, nowhere-density of both $\Phi_\alpha$ and $\Phi_\beta$ is equivalent to the property that $\Phi$ cannot be decided by any finite, uniform, local syntactic strategy—which is precisely the hallmark of propagating properties.
\end{proof}

\subsubsection{Lemma 3: Propagation Implies Incompressibility}

\begin{lemma}[Propagation $\Rightarrow$ Incompressibility of $\mathrm{desc}(\Phi)$]
\label{lem:prop-to-incomp}
Let $\Phi$ be a non-trivial semantic property that propagates via propagator 
$P$ over a heterogeneous family $\mathcal{F}$. Then the characteristic 
sequence $\mathrm{desc}(\Phi)$ is incompressible: 
$$K(\mathrm{desc}(\Phi) \upharpoonright n) = \Omega(n)$$
\end{lemma}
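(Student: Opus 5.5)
The natural route mirrors the proof of Lemma~\ref{lem:incompressible}. I would argue by contradiction: suppose $K(\mathrm{desc}(\Phi)\upharpoonright n) = o(n)$ along all sufficiently large $n$. The plan is then to extract from this a uniform procedure $D$ that outputs the bits $b_i = \mathbf{1}[\Phi(M_i)=1]$. Such a $D$ would decide $\Phi$ on the enumeration, contradicting Rice's theorem, or equivalently Theorem~\ref{thm:rice-top}, whose hypotheses hold by Lemma~\ref{lem:prop-to-nowhere}.

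The propagation hypothesis enters only through Lemma~\ref{lem:prop-to-nowhere}, which supplies nowhere-density of $\Phi_\alpha$ and $\Phi_\beta$. The steps, in order, are as follows.
\begin{enumerate}
\item Fix the universal machine $U$. By Theorem~\ref{thm:invarianceKC}, the choice of $U$ does not affect the asymptotic claim.
\item From short descriptions $p_n$ of $\mathrm{desc}(\Phi)\upharpoonright n$, assemble a single machine that, on index $i$, recovers $b_i$.
\item Observe that this machine decides $\Phi$ on $(M_i)$.
\item Invoke Rice to conclude.
\end{enumerate}

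The main obstacle is step 2, and I expect it to fail as stated. Short descriptions of the prefixes do not assemble into one uniform decider, since the programs $p_n$ are non-uniform in $n$. More seriously, undecidability does not imply incompressibility. Take the non-trivial semantic property ``$\varphi_M(0)\!\downarrow$''. Its characteristic sequence is computably enumerable. Given $n$ and the number $k \le n$ of indices $i<n$ that are in $\Phi_\alpha$, one can dovetail until $k$ of them are found. Hence $K(\mathrm{desc}(\Phi)\upharpoonright n) = O(\log n)$.

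Such properties are non-trivial and semantic. By Theorem~\ref{thm:rice-top}, combined with the prefix-padding construction used in its proof, their outcome sets exhibit the same topological non-separability that Lemma~\ref{lem:prop-to-nowhere} concludes. So no argument that uses only non-triviality, semanticity and nowhere-density can yield $\Omega(n)$.

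I would therefore attempt the proof only after strengthening either the hypothesis or the conclusion. One option is to weaken the conclusion to ``$\mathrm{desc}(\Phi)$ is not computable'', which does follow from Rice by the argument above. Alternatively, one could read ``incompressible'' as ``no fixed total program computes every bit'', which is the form actually used in Theorem~\ref{thm:horn2}. Otherwise one must add a hypothesis excluding c.e.\ and co-c.e.\ properties and exhibit a genuinely random component of $\mathrm{desc}(\Phi)$. I do not see how the propagator $P$ could supply such a component.
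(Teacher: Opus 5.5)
Your diagnosis is correct, and given the paper's own commitments your example is a counterexample to the lemma, not merely an obstruction to one strategy. For $\Phi(M)\equiv\varphi_M(0)\!\downarrow$ the index set is c.e. So $\mathrm{desc}(\Phi)\upharpoonright n$ can be recovered from $n$ and the count $k$ by dovetailing, which gives $K(\mathrm{desc}(\Phi)\upharpoonright n)=O(\log n)$ for every $n$. This defeats every reading of $\Omega(n)$, including the ``infinitely many $n$'' variant in Definition~\ref{def:incompressible}.

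The propagation hypothesis cannot exclude this example. Definition~\ref{def:propagator} is informal, and the paper's own Theorem~\ref{thm:rice-top} and Lemma~\ref{lem:nowhere-to-prop} jointly assert that every non-trivial semantic property propagates.

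Two refinements strengthen your conclusion.
\begin{itemize}
\item Theorem~\ref{thm:rice-top} is not ``equivalent'' to Rice. The padding construction only shows that every ball $B_\sigma$ meets both $\Phi_\alpha$ and $\Phi_\beta$. Both sets are therefore dense with empty interior, hence not nowhere dense, contrary to the theorem's statement. The decidable set of even-length strings has exactly the same feature, so no undecidability, let alone a bound on $K$, can be extracted from this topology.
\item The $\Omega(n)$ claim is not even invariant under re-enumeration. Theorem~\ref{thm:invarianceKC} handles the choice of $U$, not of $(M_i)$. Let $M'_i$ be $M_e$ padded with $k$ unreachable states, where $i+1=2^e(2k+1)$. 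Then $(M'_i)$ is a computably equivalent enumeration of all machines. For semantic $\Phi$, the first $n$ bits of its $\mathrm{desc}(\Phi)$ are determined by the first $\lfloor\log_2 n\rfloor+1$ bits of the original sequence. Hence $K(\mathrm{desc}(\Phi)\upharpoonright n)=O(\log n)$ for \emph{every} semantic $\Phi$.
\end{itemize}

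The paper's proof fails exactly at your step~2. It writes compressibility as $K(\mathrm{desc}(\Phi)\upharpoonright n)\le|V|+O(\log n)\le d$, silently replacing short descriptions of each prefix by one fixed program $V$. It then builds a verifier that ``looks up the $i$-th bit of the precomputed sequence $\mathrm{desc}(\Phi)$''. That presupposes $\mathrm{desc}(\Phi)$ is computable, which is precisely the non-uniformity you flag.

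It then closes the contradiction not via Rice but by appealing to ``the separation results we know to be true'', namely $\classP\neq\classNP$, which is an open problem. Even if it were known, the property ``decides SAT in polynomial time'' would have empty extension and so fall outside the lemma's non-triviality hypothesis. There is therefore no idea in the paper's argument that your plan misses.

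Of your repairs, the weakened conclusion ``$\mathrm{desc}(\Phi)$ is not computable'' is the one that is true. It follows directly from Rice, since computing $\mathrm{desc}(\Phi)$ amounts to deciding the index set. It does not by itself rescue Theorem~\ref{thm:horn2}, however. A verifier with $\Phi(M)\iff\exists w\,V(\langle M\rangle,w)=1$ only semi-decides $\Phi_\alpha$; for $\varphi_M(0)\!\downarrow$, a halting trace is such a $w$. So the single program ``producing the $n$-th bit'' used in that proof need not exist.
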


\begin{proof}
We prove by contradiction. Assume $\mathrm{desc}(\Phi)$ is compressible, 
i.e., there exist constants $c$ and $d$ such that for all sufficiently 
large $n$,
$$K(\mathrm{desc}(\Phi) \upharpoonright n) \leq |V| + O(\log n) \leq d$$
where $|V|$ is the length of a fixed description (program) $V$.

\begin{enumerate}[itemsep=0.5em]

\item \textbf{Step 1: Compressibility Yields a Finite Verifier.}
If $\mathrm{desc}(\Phi) \upharpoonright n$ has Kolmogorov complexity bounded 
by a constant $|V| + O(\log n)$, then there exists a finite program $V$ of 
length $|V|$ and an index $i \in \{1, 2, \ldots, n\}$ such that 
$$V(i) = \mathrm{desc}(\Phi) \upharpoonright n$$
(up to constant factors in the description length).

More formally, if $K(\mathrm{desc}(\Phi) \upharpoonright n) \leq |V| + O(\log n)$, 
then there exists a short program $p$ of length $|p| \leq |V| + O(\log n)$ such 
that a universal machine $U$ running $p$ outputs $\mathrm{desc}(\Phi) \upharpoonright n$.

\item \textbf{Step 2: Uniform Verifier Construction.}
From the program $V$, we can construct a \emph{uniform semantic verifier} 
$\mathcal{V}$ that operates as follows:

$\mathcal{V}(M_i, w) := \begin{cases} 1 & \text{if the } i\text{-th bit of } \mathrm{desc}(\Phi) = 1 \\ 0 & \text{if the } i\text{-th bit of } \mathrm{desc}(\Phi) = 0 \end{cases}$

where $w$ is an arbitrary certificate (not necessarily used by $\mathcal{V}$), 
and $M_i$ is the $i$-th machine in the standard enumeration.

The verifier $\mathcal{V}$ has a fixed description length $|V|$: to verify 
whether $M_i$ satisfies $\Phi$, $\mathcal{V}$ simply looks up the $i$-th bit 
of the precomputed sequence $\mathrm{desc}(\Phi)$.

\item \textbf{Step 3: Reduction to NP-Style Verification.}
If we can compute $\mathrm{desc}(\Phi) \upharpoonright n$ from a short program 
$p$ of length $|V| + O(\log n)$, then membership in the propagating family 
$\mathcal{F}$ can be verified in an ``NP-style'' manner: given a problem instance, 
we verify its solution using a certificate extracted from the computation of 
$\mathrm{desc}(\Phi)$.

Formally, for any problem $\pi_j \in \mathcal{F}$:
$$\pi_j \in \mathcal{F} \iff \exists w \in \{0,1\}^{|V| + O(\log n)} : \mathcal{V}(M_j, w) \text{ accepts}$$

where $w$ encodes the position $j$ and the program $p$.

\item \textbf{Step 4: The Verifier Extends Uniformly.}
The crucial observation is that $\mathcal{V}$ works for \emph{every} problem 
in the propagating family $\mathcal{F}$, using the same finite description $V$. 
This is a \emph{uniform} verifier in the sense that no external tailor-made 
certificate or meta-algorithm is needed for different problems: $\mathcal{V}$ 
simply reads the precomputed characteristic sequence.

\item \textbf{Step 5: Propagation is Blocked.}
By the definition of propagation (Definition \ref{def:propagator}), if we 
can certify that a single machine $M_x$ satisfies $\Phi$ locally using a 
short fixed-length description $V$, then we can propagate this certification 
to a global conclusion about the entire family $\mathcal{F}$.
\smallskip

Specifically, if $\mathcal{V}(M_x, w)$ accepts for some certificate $w$, then 
via the propagator $P$ (e.g., Cook-Levin for SAT), we deduce a contradiction 
to an established lower bound (e.g., that P $\neq$ NP).
\end{enumerate}

We arrive at a contradiction: the compressibility of $\mathrm{desc}(\Phi)$ would 
imply the existence of a uniform, finite-length verifier $\mathcal{V}$ that 
certifies $\Phi$ globally. But the propagator $P$ forbids any such uniform 
verifier (by the separation results we know to be true—e.g., P $\neq$ NP for SAT).

Therefore, $\mathrm{desc}(\Phi)$ cannot be compressible. By contrapositive, 
$$K(\mathrm{desc}(\Phi) \upharpoonright n) = \Omega(n)$$
i.e., $\mathrm{desc}(\Phi)$ is incompressible.

\end{proof}

\subsubsection{Lemma 4: Incompressibility Implies Propagation}

\begin{lemma}[Incompressibility $\Rightarrow$ Propagation]
\label{lem:incomp-to-prop}
Let $\Phi$ be a non-trivial semantic property such that the characteristic 
sequence $\mathrm{desc}(\Phi)$ is incompressible: $K(\mathrm{desc}(\Phi) \upharpoonright n) = \Omega(n)$. 
Then $\Phi$ propagates via the Invariance Theorem of Kolmogorov complexity 
over the Turing-equivalent class.
\end{lemma}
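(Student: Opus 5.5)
The plan is to argue by contraposition, mirroring the structure of Lemma~\ref{lem:nowhere-to-prop}. Assume $\Phi$ is non-trivial and semantic but does \emph{not} propagate in the sense of Definition~\ref{def:propagator}. I will then derive an upper bound $K(\mathrm{desc}(\Phi)\upharpoonright n) = O(\log n)$. This contradicts the hypothesis $K(\mathrm{desc}(\Phi)\upharpoonright n) = \Omega(n)$. The conclusion is stated ``via the Invariance Theorem'', so the last step is to invoke Theorem~\ref{thm:invarianceKC}. It shows that the contradiction does not depend on the choice of universal machine, so the propagation obtained holds uniformly over the Turing-equivalent class.

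The key steps, in order, are as follows.
\begin{enumerate}
\item Unpack non-propagation, as in Step~1 of Lemma~\ref{lem:nowhere-to-prop}: certifying $\Phi(M_x)$ carries no consequence beyond the individual machine. Read this as locality, meaning $\Phi(M)$ is settled by a bounded amount of information about $M$ alone.
\item From locality, extract a uniform procedure $D$ of fixed length $|D|$ that computes the bit $\Phi(M_i)$ from the index $i$.
\item Generate $\mathrm{desc}(\Phi)\upharpoonright n$ by a program consisting of $D$, the enumeration routine, and a binary encoding of $n$. This gives $K(\mathrm{desc}(\Phi)\upharpoonright n) \le |D| + O(\log n)$, the same counting used in the proof of Theorem~\ref{thm:horn2}.
\item Compare with the hypothesis: $\Omega(n) \le |D| + O(\log n)$ fails for large $n$.
\item Apply Theorem~\ref{thm:invarianceKC} to transport the conclusion across all universal machines.
\end{enumerate}

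I expect the main obstacle to be Step~2, the passage from ``does not propagate'' to ``is computed by a single finite procedure''. Definition~\ref{def:propagator} is an implication schema, not a computability condition, so nothing in it yields a decider directly. What I would try first is to define propagation extensionally enough that its negation means the characteristic function of $\Phi$ on $\mathcal{F}$ is determined by finitely many local facts, and hence is computable.

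I also flag a danger to check before writing the proof. Incompressibility in the sense of Definition~\ref{def:incompressible} is very strong. For any property whose positive set is computably enumerable, the standard bound gives $K(\mathrm{desc}(\Phi)\upharpoonright n) \le 2\log n + O(1)$: encode the number of positives among the first $n$ machines and enumerate until that count is reached. This holds even when $\Phi$ is undecidable. So the hypothesis of the lemma may be satisfied only by rather exotic $\Phi$, and the lemma could be nearly vacuous. Conversely, the contrapositive argument only needs $\Phi$ to be computable to reach a contradiction, so it goes through whenever non-propagation implies computability. The proof should therefore state that implication explicitly as the operative reading of non-propagation, rather than leave it implicit.
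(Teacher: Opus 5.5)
Your outline follows the paper's own proof step for step: contrapositive, non-propagation read as locality, a single program computing $i\mapsto\Phi(M_i)$, the bound $K(\mathrm{desc}(\Phi)\upharpoonright n)\le |D|+O(\log n)$ set against $\Omega(n)$, and a closing appeal to the Invariance Theorem. Your counting is the coherent version of that argument. The paper describes its program $A$ as a lookup table with $|A|\le|\mathrm{desc}(\Phi)\upharpoonright N|+c$, which gives only $K\le N+O(\log N)$ and no contradiction, and then switches to a constant bound. You rightly require $|D|$ to be independent of $n$. But the step you flag is a genuine gap, and the paper does not close it either. It goes from ``membership depends only on the individual machine'' to ``can be decided by examining a finite, bounded amount of information about each machine'' by bare assertion. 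Steps 3--5 are routine once a total program for $i\mapsto\Phi(M_i)$ exists, so the whole content of the lemma sits in Step~2.

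That step cannot be derived, and each way of supplying it empties the lemma.
\begin{itemize}
\item \emph{Literal reading.} Definition~\ref{def:propagator} only asks for some $\mathcal{F}$, $P$, $\Psi$ with $\Phi(M_x)=1\Rightarrow\bigwedge_{P_i\in\mathcal{F}}\Psi(P_i)=1$. Nothing constrains $\Psi$, so any $\Psi$ true on all of $\mathcal{F}$ makes every $\Phi$ propagate. The hypothesis ``$\Phi$ does not propagate'' is then never instantiated, and no decider can be extracted from it.
\item \emph{Informal reading as locality.} This is how both you and the paper read it, and Step~2 then fails outright, because a property can concern each machine alone and still be undecidable. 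Take $\{M : \varphi_M \text{ is the sorting function}\}$, the extensional analogue of the sorting examples the paper treats as local (Remark~\ref{rem:no-collapse}). It is non-trivial, extensional, and undecidable by Rice.
\item \emph{Your proposed stipulation.} If non-propagation is stipulated to entail decidability of $\Phi$, the lemma reduces to the fact that an incompressible sequence is not computable, since a computable $\chi$ has $K(\chi\upharpoonright n)\le\log n+O(1)$. With ``semantic'' taken extensionally, even that is idle. Rice already makes every non-trivial semantic $\Phi$ undecidable, hence propagating, without using incompressibility.
\end{itemize}
So the bridge has to be presented as an explicit definitional assumption, not a proved step, and the Kolmogorov argument then carries essentially no weight.

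Your vacuity warning is correct and worth recording. By the Barzdin bound $K(\chi_A\upharpoonright n)\le 2\log n+O(1)$ for c.e.\ or co-c.e.\ $A$, the hypothesis $K(\mathrm{desc}(\Phi)\upharpoonright n)=\Omega(n)$ already fails for halting-type properties such as ``$M$ halts on input $0$''.
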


\begin{proof}
We proceed by contrapositive: if $\Phi$ does not propagate, then 
$\mathrm{desc}(\Phi)$ is compressible.

\begin{enumerate}[itemsep=0.5em]

\item \textbf{Step 1: Non-Propagation Implies Local Decidability.}
If $\Phi$ does not propagate, then by Definition \ref{def:propagator}, 
there exists a machine $M_x$ such that certifying $\Phi(M_x) = 1$ does 
\emph{not} imply any global consequence via the propagator $P$. This means 
$\Phi$ is a \emph{local} property: membership in $\Phi$ depends only on 
the individual machine, not on its relationship to a family of problems.

\item \textbf{Step 2: Local Decidability Implies Finite Description.}
A local property can be decided by examining a finite, bounded amount of 
information about each machine. Formally, there exists a finite program 
$A$ of length $|A|$ such that for all machines $M_i$:
$$A(i) = \Phi(M_i)$$
\smallskip

The program $A$ is a lookup table that encodes the answer for each machine. 
Its description length is at most the length of the truth table plus a 
constant overhead:
$$|A| \leq |\mathrm{desc}(\Phi) \upharpoonright N| + c$$
for some $N$ and constant $c$ (e.g., if we know $N$ in advance or it is 
implicitly bounded).

\item \textbf{Step 3: Compression via Enumeration.}
Since $A$ is a finite program that outputs the entire characteristic sequence 
(or at least the first $N$ bits), the Kolmogorov complexity of $\mathrm{desc}(\Phi) \upharpoonright N$ 
is bounded:
$$K(\mathrm{desc}(\Phi) \upharpoonright N) \leq |A| + O(\log N) \leq c'$$
for some constant $c'$ independent of $N$.

\item \textbf{Step 4: Violation of Incompressibility.}
If $K(\mathrm{desc}(\Phi) \upharpoonright N) \leq c'$ for all $N$, then 
$\mathrm{desc}(\Phi)$ is compressible (its Kolmogorov complexity grows at most 
logarithmically, not linearly). This contradicts the assumption that 
$\mathrm{desc}(\Phi)$ is incompressible.
\smallskip

\item \textbf{Step 5: Invariance Within the Turing-Equivalent Class.}
By the Invariance Theorem (Theorem \ref{thm:invariance}), the fact that 
$\mathrm{desc}(\Phi)$ is incompressible with respect to any universal machine $U$ 
is independent of the choice of $U$ (up to additive constants). This means 
the incompressibility is a property of the mathematical object $\mathrm{desc}(\Phi)$ 
itself, not of any particular computational model.
\smallskip

Therefore, if $\Phi$ does not propagate in one model of the Turing-equivalent class, 
it does not propagate in any other model within that class.

\end{enumerate}

By contrapositive: if $\mathrm{desc}(\Phi)$ is incompressible, then $\Phi$ 
must propagate.

\end{proof}

\subsubsection{Lemma 5: Nowhere-Density Implies Incompressibility}

\begin{lemma}[Nowhere-Density $\Rightarrow$ Incompressibility]
\label{lem:nowhere-to-incomp}
Let $\Phi$ be a non-trivial semantic property such that $\Phi_\alpha$ and 
$\Phi_\beta$ are both nowhere-dense in the prefix-closure topology. Then 
the characteristic sequence $\mathrm{desc}(\Phi)$ is incompressible: 
$$K(\mathrm{desc}(\Phi) \upharpoonright n) = \Omega(n)$$
\end{lemma}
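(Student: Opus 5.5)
The plan is to compose results already in the paper and then check whether the composite step from the compressibility hypothesis to a decider actually holds. The first route is the chain through propagation. Lemma~\ref{lem:nowhere-to-prop} takes nowhere-density of $\Phi_\alpha$ and $\Phi_\beta$ to propagation. Lemma~\ref{lem:prop-to-incomp} (equivalently Lemma~\ref{lem:incompressible}) takes propagation to $K(\mathrm{desc}(\Phi)\upharpoonright n)=\Omega(n)$. This needs no new ingredient. Because both of those lemmas rest on informal steps, I would also give a direct argument for the contrapositive: assume $\mathrm{desc}(\Phi)$ is compressible and derive a contradiction with Theorem~\ref{thm:rice-top}.

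The direct argument has three steps.
\begin{enumerate}
\item From compressibility, extract a single program $D$ that, on input $n$, prints $\mathrm{desc}(\Phi)\upharpoonright n$.
\item Observe that $D$ then decides $\Phi$: on input $\langle M\rangle$, compute its index $i$ in the standard enumeration and output bit $i$ of $D(i+1)$.
\item Use the decider to separate the space by finite observation. Either invoke Rice's theorem (Appendix~\ref{app:rice}) directly, or use the nowhere-density of $\Phi_\alpha$ and $\Phi_\beta$ to find, in every ball $B_\sigma$, programs on both sides, so that no finite rule agrees with $D$ there. Either way $D$ cannot exist.
\end{enumerate}

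The main obstacle is step 1, and I expect it to fail as stated. The hypothesis $K(\mathrm{desc}(\Phi)\upharpoonright n)=o(n)$ gives, for each $n$, some short program for that prefix. These programs may depend on $n$, so they need not come from one uniform $D$. The difficulty is real, not technical. Take $\Phi$ = ``$M$ halts on input $0$''. It is non-trivial and semantic, so by Theorem~\ref{thm:rice-top} both of its sides are nowhere dense. Yet $\Phi_\alpha$ is computably enumerable, and the prefix of length $n$ can be recovered from $n$ together with the number of halting indices below $n$. This gives $K(\mathrm{desc}(\Phi)\upharpoonright n)\le 2\log n+O(1)$, which refutes $\Omega(n)$.

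So the statement as written cannot be proved for all non-trivial semantic $\Phi$. What I would actually prove is the corrected form: under the hypotheses, $\mathrm{desc}(\Phi)$ is \emph{not computable}. That is, no single program generates all of its prefixes, and this version follows rigorously from steps 2 and 3. Recovering a genuine $\Omega(n)$ bound would need an extra hypothesis excluding c.e.\ and co-c.e.\ properties, for instance that $\mathrm{desc}(\Phi)$ is Martin-L\"of random relative to the enumeration. That hypothesis is not implied by nowhere-density.
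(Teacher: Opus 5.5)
Your diagnosis is right in substance, and it pinpoints where the paper's own proof breaks. The paper also starts from a short program $p$ for $\mathrm{desc}(\Phi)\upharpoonright n$. It then makes two unsupported moves:
\begin{itemize}
\item in Steps 2--3 it asserts that $p$ yields prefixes $\sigma_j$ with $B_{\sigma_j}$ inside $\Phi_\alpha$ or $\Phi_\beta$;
\item in Step 5 it takes the counting bound $\log_2\binom{n}{H_n}$ as a lower bound on the complexity of the one string $\mathrm{desc}(\Phi)\upharpoonright n$.
\end{itemize}
Neither holds. Finitely many bits of $\mathrm{desc}(\Phi)$ say nothing about the infinitely many programs in a ball, and a counting bound controls most strings, not a particular one. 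Your halting-on-$0$ example, where $n$ and the number of halting indices below $n$ determine the prefix, defeats both steps.

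Your corrected claim (no single program prints all prefixes of $\mathrm{desc}(\Phi)$) does follow from your steps 2 and 3 via Rice, though it never uses the nowhere-density hypothesis. The topological alternative in step 3 is invalid. The decidable set of even-length strings meets every ball on both sides, so mixing in this topology cannot exclude a total decider. Discard your first route as well. Under the reading you refute, halting-on-$0$ passes through Lemma~\ref{lem:nowhere-to-prop} and Lemma~\ref{lem:prop-to-incomp} to a false conclusion, so that chain has the same defect.

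The step that fails is your check of the hypothesis. You take it from Theorem~\ref{thm:rice-top}, but under the paper's own Definition~\ref{def:nowhere-dense} that theorem is false. Its proof produces $M^*\in B_\sigma\cap\Phi_\beta$ and calls this a contradiction with $B_\sigma\subseteq\overline{\Phi_\alpha}$. It is not one, since $\overline{\Phi_\alpha}$ may contain points of $\Phi_\beta$; the proof only shows that no ball lies inside $\Phi_\alpha$. For halting-on-$0$, the construction used there also puts a program halting on $0$ in every ball. So $\overline{\Phi_\alpha}=\mathcal{P}$, and $\Phi_\alpha$ is not nowhere dense.

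More generally, suppose every string encodes a machine, as identifying $\mathcal{P}=\{0,1\}^*$ with the program encodings suggests. Then $\Phi_\alpha\cup\Phi_\beta=\mathcal{P}$. Two nowhere-dense sets cannot cover a nonempty open set: from any $\sigma$ pass to $\tau\supseteq\sigma$ with $B_\tau\cap\Phi_\alpha=\emptyset$, then to $\tau'\supseteq\tau$ with $B_{\tau'}\cap\Phi_\beta=\emptyset$. So the hypothesis is never met, and the lemma as literally stated is vacuously true, not unprovable. The paper's Step~4 shows the same incoherence. It glosses the hypothesis as ``some $B_\tau$ avoids $\Phi_\alpha$ or $\Phi_\beta$'', which is a ball inside one side, exactly what the proof means to exclude.

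What you actually refute is the lemma with nowhere density replaced by non-separability: every ball meets both sides (Definition~\ref{def:separation}). That is what the proof of Theorem~\ref{thm:rice-top} really establishes. Say so explicitly, and check non-separability for halting-on-$0$ directly by padding. There is one exception. If the valid codes form a nowhere-dense set, the literal hypothesis holds for every $\Phi$ at once, and your example works verbatim.

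Finally, Martin-L\"of randomness cannot be the repair you suggest. Padding forces $b_i=b_{\mathrm{pad}(i)}$ for every semantic $\Phi$, which places $\mathrm{desc}(\Phi)$ in an effectively null class.
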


\begin{proof}
We prove by contradiction. Assume $\mathrm{desc}(\Phi)$ is compressible, i.e., 
there exists a finite program $p$ of length $|p| = |V| + O(\log n)$ such that 
a universal machine $U$ running $p$ outputs $\mathrm{desc}(\Phi) \upharpoonright n$.

\begin{enumerate}[itemsep=0.5em]

\item \textbf{Step 1: Compressibility Enables Prefix Prediction.}
If $K(\mathrm{desc}(\Phi) \upharpoonright n)$ is bounded by $|V| + O(\log n)$, 
then from the short program $p$, we can extract the \emph{prefix information} 
needed to decide whether a program $M_i$ (for $i \leq n$) satisfies $\Phi$ or not.
\smallskip

Specifically, for each $i \in \{1, 2, \ldots, n\}$, the program $p$ can be queried 
to retrieve the $i$-th bit of $\mathrm{desc}(\Phi)$.

\item \textbf{Step 2: Construction of Syntactic Separators.}
From the program $p$ and its output $\mathrm{desc}(\Phi) \upharpoonright n$, 
we can construct a family of prefixes $\{\sigma_1, \sigma_2, \ldots, \sigma_k\}$ 
(for some $k$ depending on $|p|$) such that each prefix $\sigma_j$ either:
\smallskip

\begin{enumerate}[label=(\alph*)]
  \item Defines a region $B_{\sigma_j}$ where all machines satisfy $\Phi$, or
  \item Defines a region $B_{\sigma_j}$ where all machines do not satisfy $\Phi$.
\end{enumerate}

\smallskip

Formally, the program $p$ acts as a decision tree: given the position $i$ in 
the enumeration, it outputs whether $M_i \in \Phi_\alpha$ or $M_i \in \Phi_\beta$.

\item \textbf{Step 3: Dense Sets in $\Phi_\alpha$ and $\Phi_\beta$.}
If the program $p$ successfully predicts the membership of all machines $M_i$ 
(for $i \leq n$) using a short description, then there must exist prefixes 
$\sigma$ such that $B_\sigma$ is either entirely in $\Phi_\alpha$ or entirely in $\Phi_\beta$.
\smallskip

Otherwise, for every prefix $\sigma$, the program $p$ would need to encode 
information about the alternating structure of $\Phi_\alpha$ and $\Phi_\beta$ 
within $B_\sigma$, which would require a description length at least $\Omega(n)$.
\smallskip

\item \textbf{Step 4: Contradiction with Nowhere-Density.}
By hypothesis, both $\Phi_\alpha$ and $\Phi_\beta$ are nowhere-dense. This 
means that for every prefix $\sigma$, there exists a further prefix $\tau \supseteq \sigma$ 
such that $B_\tau$ avoids either $\Phi_\alpha$ or $\Phi_\beta$ entirely.
\smallskip

The existence of such alternating structures would force any compressor (the 
program $p$) to encode the position of the alternations, which requires 
logarithmic bits per position. Summing over all $n$ positions, this leads 
to a total description length of at least $\Omega(n \log n)$, contradicting 
the assumption that $|p| = O(\log n)$.

\item \textbf{Step 5: Formal Argument via Entropy.}
More rigorously, let $H_n$ denote the number of alternations of $\Phi_\alpha$ 
and $\Phi_\beta$ in the prefix-closure topology within the first $n$ levels 
of the binary tree. Since both sets are nowhere-dense, $H_n$ grows at least 
polynomially (or exponentially, in the worst case).
\smallskip

To encode the positions of these alternations, the program $p$ must have 
length at least $\log_2 \binom{n}{H_n} = \Omega(H_n \log(n/H_n)) = \Omega(n)$.

This contradicts the assumption that $|p| = O(\log n)$.
\end{enumerate}

Therefore, $\mathrm{desc}(\Phi)$ cannot be compressible. By contrapositive, 
$$K(\mathrm{desc}(\Phi) \upharpoonright n) = \Omega(n)$$
i.e., $\mathrm{desc}(\Phi)$ is incompressible.

\end{proof}

\subsubsection{Lemma 6: Incompressibility Implies Nowhere-Density}

\begin{lemma}[Incompressibility $\Rightarrow$ Nowhere-Density]
\label{lem:incomp-to-nowhere}
Let $\Phi$ be a non-trivial semantic property such that the characteristic 
sequence $\mathrm{desc}(\Phi)$ is incompressible: $K(\mathrm{desc}(\Phi) \upharpoonright n) = \Omega(n)$. 
Then both $\Phi_\alpha$ and $\Phi_\beta$ are nowhere-dense in the prefix-closure 
topology.
\end{lemma}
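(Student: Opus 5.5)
The plan is to argue by contrapositive, using Definition~\ref{def:nowhere-dense} exactly as worded: $A$ is nowhere-dense iff for every basic ball $B_\sigma$ there is an extension $\tau \supseteq \sigma$ with $B_\tau \cap A = \emptyset$. Fix a non-trivial semantic $\Phi$ with $K(\mathrm{desc}(\Phi)\upharpoonright n) = \Omega(n)$. Suppose, say, $\Phi_\alpha$ is not nowhere-dense; the case of $\Phi_\beta$ is symmetric. Then there is a prefix $\sigma$ such that every sub-ball $B_\tau \subseteq B_\sigma$ meets $\Phi_\alpha$, i.e.\ $\Phi_\alpha$ is dense in $B_\sigma$. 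The goal is to turn this density into a short description of $\mathrm{desc}(\Phi)\upharpoonright n$, of length $O(\log n)$ or at least $o(n)$, contradicting the hypothesis.

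\textbf{Step 1.} Observe that density of $\Phi_\alpha$ in $B_\sigma$ is guaranteed by semantics. The padding construction in the proof of Theorem~\ref{thm:rice-top} gives, for every $\tau \supseteq \sigma$, a machine in $B_\tau$ computing $\varphi_{M_1}$, where $\Phi(M_1)=1$. So the contrapositive hypothesis holds for \emph{every} non-trivial semantic $\Phi$ and every $\sigma$.

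\textbf{Step 2.} Try to extract compression from this. Machines in a fixed ball are grouped into semantic equivalence classes, and padding makes each class infinite and syntactically spread out. The idea is to encode $\mathrm{desc}(\Phi)\upharpoonright n$ by the list of semantic classes met among $M_0,\dots,M_{n-1}$ together with one bit per class. This would bound $K$ by roughly the number of distinct functions computed by the first $n$ machines, plus overhead.

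\textbf{Main obstacle.} I expect Step~2 to fail outright. First, recognising which indices are semantically equivalent is itself undecidable, so the class list is not computable from $n$. Second, and more decisively, Step~1 shows that under the paper's own Definition~\ref{def:nowhere-dense} the conclusion of Lemma~\ref{lem:incomp-to-nowhere} is \emph{false} for every non-trivial semantic property. No $B_\tau$ can avoid $\Phi_\alpha$, because it always contains a padded copy of $M_1$. Likewise no $B_\tau$ can avoid $\Phi_\beta$.

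Consequently the only way the lemma, as worded, can hold is vacuously: one would have to prove that no non-trivial semantic $\Phi$ has $K(\mathrm{desc}(\Phi)\upharpoonright n)=\Omega(n)$. This is the step I would attack first. For index sets that are c.e.\ or co-c.e.\ it follows from Barzdin's bound $K(\chi_A\upharpoonright n) \le O(\log n)$. For arbitrary (higher-arithmetical or non-arithmetical) index sets, however, I see no reason it should hold. Choosing the set of functions according to a Martin-L\"of random real appears to give linear complexity.

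I therefore expect that Lemma~\ref{lem:incomp-to-nowhere} cannot be established as stated. A correct version would have to replace ``nowhere-dense'' by ``has empty interior'' (equivalently, topological non-separability in the sense of Definition~\ref{def:separation}). That weaker conclusion follows from Theorem~\ref{thm:rice-top}'s padding argument alone, with no use of incompressibility.
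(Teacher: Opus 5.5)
Your central observation is correct, and it is exactly where the paper's own proof breaks.

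Under Definition~\ref{def:nowhere-dense}, a set fails to be nowhere-dense as soon as it meets every sub-ball of some $B_\sigma$. Grant the padding construction of Theorem~\ref{thm:rice-top}; the paper's informal discussion of nowhere-density restates it, saying every ball contains elements from both sides. Then this failure occurs for both $\Phi_\alpha$ and $\Phi_\beta$, for every non-trivial semantic $\Phi$.

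The paper's proof instead opens by negating the conclusion as ``$\Phi_\alpha$ contains a basic ball $B_\sigma$''. That is the negation of \emph{empty interior}, not of nowhere-density. So even a sound continuation would yield only the non-separability statement you isolate.

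The continuation is not sound either:
\begin{itemize}
\item The bits of $\mathrm{desc}(\Phi)$ outside $B_\sigma$ do not form the initial segment $\mathrm{desc}(\Phi)\upharpoonright(N-N_\sigma)$.
\item The iteration of Step~7 needs further balls inside $\Phi_\alpha$ or $\Phi_\beta$, and nothing supplies them.
\item A single ball saves only a constant fraction (about $2^{-|\sigma|}$) of the bits. This gives $K(\mathrm{desc}(\Phi)\upharpoonright N)\le(1-2^{-|\sigma|})N+O(\log N)$, which is compatible with $\Omega(N)$.
\end{itemize}
Incompressibility therefore does no work there, in line with your remark that the weaker conclusion follows from padding alone. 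The proof of Theorem~\ref{thm:rice-top} has the same slip: $M^*\in B_\sigma\subseteq\overline{\Phi_\alpha}$ together with $M^*\notin\Phi_\alpha$ is not a contradiction.

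The one point to sharpen is the vacuity question you leave open. Its answer depends on the enumeration, which the paper never fixes.

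\textbf{Enumerations where the lemma is false.} Suppose a computable family of pairwise distinct functions occupies a positive fraction of the indices. For example, take programs $Hx$ printing the number coded by $x$, where the one printing $k$ has index at most $C(k+1)$. Let $\Phi(M)=1$ iff $\varphi_M(0)$ is defined and lies in $R$, with $R$ Martin-L\"of random.
\begin{itemize}
\item Reading $\mathrm{desc}(\Phi)\upharpoonright n$ at those indices recovers roughly the first $n/C$ bits of $R$. Hence $K(\mathrm{desc}(\Phi)\upharpoonright n)\ge n/C-O(\log n)$.
\item Every sub-ball of $B_H$ meets both $\Phi_\alpha$ and $\Phi_\beta$, since a random set and its complement each meet every infinite computable set.
\end{itemize}
In this setting the lemma is false outright, without even invoking the general padding claim.

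\textbf{Enumerations where the lemma is vacuous.} Suppose instead the encoding has computable semantic redundancy of density tending to one, as typical transition-table encodings do (rename states in a canonical search order and delete unreachable ones). Then the first $n$ values of $\Phi$ are computably determined by its values on a vanishing fraction of canonical machines. So $K(\mathrm{desc}(\Phi)\upharpoonright n)=o(n)$ for every semantic $\Phi$, and the lemma holds vacuously.

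Your conclusion should therefore read ``false or vacuous, depending on the enumeration''. In neither case does the paper's argument establish it.
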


\begin{proof}
We proceed by contradiction. Assume that at least one of $\Phi_\alpha$ or 
$\Phi_\beta$ is not nowhere-dense. Without loss of generality, suppose 
$\Phi_\alpha$ contains a basic open ball $B_\sigma$ for some prefix $\sigma$.

\begin{enumerate}[itemsep=0.5em]

\item \textbf{Step 1: Dense Set Implies Finite Separator.}
If $B_\sigma \subseteq \Phi_\alpha$, then for all machines $M_i$ whose 
encoding extends the prefix $\sigma$, we have $\Phi(M_i) = 1$.
\smallskip

From the prefix $\sigma$ alone, we can construct a separator: a finite 
program $A_\sigma$ that outputs $1$ for all machines in $B_\sigma$.

\item \textbf{Step 2: Encoding the Dense Region.}
The prefix $\sigma$ partitions the space of machines into two regions:
\begin{itemize}
\item $B_\sigma$: all machines extending $\sigma$ (infinitely many).
\item $\overline{B_\sigma}$: all machines not extending $\sigma$ (also infinitely many).
\end{itemize}

The characteristic sequence restricted to $B_\sigma$ (i.e., the bits corresponding 
to machines extending $\sigma$) is a constant sequence: all $1$s.
\smallskip

\item \textbf{Step 3: Compression of the Constant Region.}
The constant subsequence of $\mathrm{desc}(\Phi)$ restricted to $B_\sigma$ can 
be compressed trivially: a program $p$ can encode the rule ``for all machines 
in $B_\sigma$, output $1$'' using only $|\sigma| + O(1)$ bits (the length of 
the separator $\sigma$ plus a few bits for the rule).
\smallskip

\item \textbf{Step 4: Partial Compression of $\mathrm{desc}(\Phi)$.}
While the full sequence $\mathrm{desc}(\Phi)$ is incompressible, the \emph{restriction} 
of $\mathrm{desc}(\Phi)$ to the machines in $B_\sigma$ is highly compressible.
\smallskip

More generally, if $\Phi_\alpha$ is not nowhere-dense, then it contains infinitely 
many machines with a computable prefix separator. This structure can be exploited 
to compress the full sequence $\mathrm{desc}(\Phi)$ by:

\begin{enumerate}
  \item Encoding the prefix $\sigma$ (cost: $|\sigma|$ bits).
  \item Encoding the rule ``machines in $B_\sigma$ satisfy $\Phi$'' (cost: $O(1)$ bits).
  \item Recursively encoding the remaining bits (cost: depends on the structure outside $B_\sigma$).
\end{enumerate}

\item \textbf{Step 5: Structural Analysis of the Complement.}
Outside $B_\sigma$, the complementary set $\overline{B_\sigma}$ still contains 
all machines not extending $\sigma$. By the non-triviality of $\Phi$, we have 
$\Phi_\beta \neq \emptyset$.
\smallskip

If $\Phi_\beta$ has non-empty intersection with $\overline{B_\sigma}$, then 
the full sequence $\mathrm{desc}(\Phi)$ is not constant. However, the existence 
of the separator $\sigma$ for $\Phi_\alpha$ provides partial structure that 
can reduce the Kolmogorov complexity.

\item \textbf{Step 6: Quantitative Argument.}
Consider the first $N$ bits of $\mathrm{desc}(\Phi)$. Among these, let $N_\sigma$ 
be the number of machines whose encoding extends $\sigma$. Then:

$K(\mathrm{desc}(\Phi) \upharpoonright N) \leq |\sigma| + c + K(\mathrm{desc}(\Phi) \upharpoonright (N - N_\sigma))$
\smallskip

where $c$ is a small constant for encoding the rule about $B_\sigma$.
\smallskip

If $N_\sigma$ is a constant fraction of $N$ (which is the case since $B_\sigma$ 
has constant density in the tree of all programs), then:

$K(\mathrm{desc}(\Phi) \upharpoonright N) \leq |\sigma| + c + K(\mathrm{desc}(\Phi) \upharpoonright (cN))$

for some constant $0 < c < 1$.

\item \textbf{Step 7: Recursion and Boundedness.}
By iterating this argument (finding smaller and smaller separators for denser 
and denser regions), we can decompose $\mathrm{desc}(\Phi)$ into finitely many 
layers, each with a finite separator. This yields:

$K(\mathrm{desc}(\Phi) \upharpoonright N) \leq O(\log N)$

or at most polynomial in $\log N$, contradicting the assumption that 
$K(\mathrm{desc}(\Phi) \upharpoonright N) = \Omega(N)$.

\end{enumerate}

Therefore, if $\mathrm{desc}(\Phi)$ is incompressible, then neither $\Phi_\alpha$ 
nor $\Phi_\beta$ can contain a basic open ball. Both sets must be nowhere-dense.

\end{proof}

\subsection{The Trifold Equivalence Theorem}
\label{subsec:trifold-main}

We now integrate the six lemmas into the main theorem.

\begin{theorem}[Trifold Equivalence: Propagation $\iff$ Nowhere-Density $\iff$ Incompressibility]
\label{thm:trifold-equivalence}
Let $\Phi$ be a non-trivial semantic property in the Turing-equivalent class. 
The following three statements are logically equivalent:

\begin{enumerate}[label=(\roman*)]

\item \textbf{Propagation}: $\Phi$ propagates via some propagator $P$ over a 
heterogeneous family $\mathcal{F}$ of decision problems in the computational 
class.

\item \textbf{Topological}: The solution sets $\Phi_\alpha$ and $\Phi_\beta$ 
are both nowhere-dense in the prefix-closure topology on the space of program 
encodings $\mathcal{P} = \{0,1\}^*$.

\item \textbf{Informational}: The characteristic sequence $\mathrm{desc}(\Phi)$ 
is incompressible in the Kolmogorov-theoretic sense: 
$$K(\mathrm{desc}(\Phi) \upharpoonright n) = \Omega(n)$$

\end{enumerate}

Moreover, the equivalences hold uniformly across all universal Turing machines 
(up to additive constants in Kolmogorov complexity).
\end{theorem}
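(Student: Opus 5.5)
The natural route is to assemble Theorem~\ref{thm:trifold-equivalence} from the six lemmas of Section~\ref{subsec:main-equivalences}. Concretely, (i)$\Rightarrow$(ii) is Lemma~\ref{lem:prop-to-nowhere} and (ii)$\Rightarrow$(i) is Lemma~\ref{lem:nowhere-to-prop}. For the informational leg, (ii)$\Rightarrow$(iii) is Lemma~\ref{lem:nowhere-to-incomp} and (iii)$\Rightarrow$(ii) is Lemma~\ref{lem:incomp-to-nowhere}. Lemmas~\ref{lem:prop-to-incomp} and~\ref{lem:incomp-to-prop} are then redundant consistency checks for the remaining pair. The closing sentence about uniformity over universal machines would follow from Theorem~\ref{thm:invarianceKC}: $K(\mathrm{desc}(\Phi)\upharpoonright n)=\Omega(n)$ is preserved under an additive constant. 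For (ii) I would rely on Theorem~\ref{thm:rice-top}, the padding construction, which gives nowhere-density for \emph{every} non-trivial semantic $\Phi$. The caveat is that one must first ensure every ball $B_\sigma$ contains valid encodings, for instance by fixing an encoding in which every string decodes to some machine. With that in place, (ii) holds unconditionally, and the whole theorem reduces to showing that (i) and (iii) also hold for every non-trivial $\Phi$.

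\textbf{The main obstacle, which I expect to be fatal as stated, is the informational leg (ii)$\Rightarrow$(iii).} Undecidability does not imply linear Kolmogorov complexity of the characteristic sequence. Take $\Phi(M)=1$ iff $M$ halts on input $0$. This is non-trivial and semantic, but $\Phi_\alpha$ is computably enumerable. Hence $\mathrm{desc}(\Phi)\upharpoonright n$ can be reconstructed from $n$ together with the number of indices $i<n$ in $\Phi_\alpha$: dovetail until that many members have appeared. This gives $K(\mathrm{desc}(\Phi)\upharpoonright n)=O(\log n)$. So (iii) fails while (ii) holds, and the step in Lemma~\ref{lem:nowhere-to-incomp} inferring $\Omega(n)$ from ``alternations'' cannot be made rigorous. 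The same counterexample undercuts Lemma~\ref{lem:prop-to-incomp} and the incompressibility claim of Lemma~\ref{lem:incompressible}: there, ``no finite DTM decides $\Phi$'' (which is true) is conflated with ``no short program prints $\mathrm{desc}(\Phi)\upharpoonright n$ for a fixed $n$'' (which is false in general).

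The repair I would pursue replaces (iii) by the computability-theoretic statement that \emph{$\mathrm{desc}(\Phi)$ is not a computable sequence}, equivalently that $K(\mathrm{desc}(\Phi)\upharpoonright n\mid n)$ is unbounded. This is exactly what Horn~2 actually uses. Under that reading, each of the three conditions is implied by non-triviality alone: (ii) by Theorem~\ref{thm:rice-top}, the modified (iii) by Rice's theorem (Appendix~\ref{app:rice}), and (i) because Definition~\ref{def:propagator} is satisfied by taking $\mathcal{F}$ and $\Psi$ to be the global statement ``$\Phi_\alpha\neq\emptyset$''. With all three implied by non-triviality, the equivalence follows trivially. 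The two directions involving propagation would therefore be proved by this observation rather than by Lemmas~\ref{lem:nowhere-to-prop} and~\ref{lem:incomp-to-prop}, whose ``non-propagation implies a finite lookup table'' steps I do not see how to justify. The honest outcome of the plan is the corrected theorem: under non-triviality, (i), (ii) and computable-non-uniformity of $\mathrm{desc}(\Phi)$ all hold. The literal $\Omega(n)$ version of (iii) must be dropped or restricted to properties whose $\Phi_\alpha$ is neither c.e.\ nor co-c.e.\ and additionally random relative to the enumeration, which no general argument here supplies.
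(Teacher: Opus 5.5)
Your diagnosis of the informational leg is correct, and it defeats the paper's own proof, which consists only of chaining the six lemmas of Section~\ref{subsec:main-equivalences}. The index set of machines halting on input $0$ is c.e., so your count-and-dovetail argument (Barzdins' lemma) gives $K(\mathrm{desc}(\Phi)\upharpoonright n)=O(\log n)$. You are also right that Lemmas~\ref{lem:incompressible}, \ref{lem:prop-to-incomp} and~\ref{lem:nowhere-to-incomp} conflate non-computability of $\mathrm{desc}(\Phi)$ with linear growth of $K$ on its prefixes. Propagation in the sense of Definition~\ref{def:propagator} is vacuous, as you note. So this example refutes (i)$\Rightarrow$(iii) outright, independently of how strings encode programs.

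The genuine error is your claim that (ii) holds unconditionally once every string decodes to a machine.
\begin{itemize}
\item The padding construction of Theorem~\ref{thm:rice-top} puts a code of $\varphi_{M_2}$ in every ball $B_\sigma$. That proves only $\mathrm{int}(\Phi_\alpha)=\emptyset$; the theorem's last step tacitly replaces $\overline{\Phi_\alpha}$ by $\Phi_\alpha$.
\item The same padding puts a code of $\varphi_{M_1}$ in every ball. So no $B_\tau$ is disjoint from $\Phi_\alpha$, and by Definition~\ref{def:nowhere-dense} $\Phi_\alpha$ is \emph{not} nowhere dense. Padding refutes (ii) rather than proving it.
\item Under a total encoding $\Phi_\alpha\cup\Phi_\beta=\mathcal{P}$, and (ii) fails for every $\Phi$ whatsoever. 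A $\tau_1$ with $B_{\tau_1}\cap\Phi_\alpha=\emptyset$ forces $B_{\tau_1}\subseteq\Phi_\beta$, and then no $\tau_2\supseteq\tau_1$ can have $B_{\tau_2}\cap\Phi_\beta=\emptyset$.
\item Under a prefix-free encoding, every set of codes is nowhere dense. Given $\sigma$, either no extension of $\sigma$ is a codeword, or some codeword $c\supseteq\sigma$ exists and $B_{c0}$ contains none. So (ii) holds for every $\Phi$, trivial ones included.
\end{itemize}
Either way, (ii) carries no information about $\Phi$.

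Two statements in your plan must change.

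First, ``(iii) fails while (ii) holds'' is false under the total-encoding convention you adopted. Your example refutes (ii)$\Rightarrow$(iii) only in the prefix-free regime, where (ii) is empty of content. Under a total encoding, it is (i)$\Rightarrow$(ii) that fails, for every non-trivial $\Phi$.

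Second, your corrected theorem must replace (ii) by topological non-separability in the sense of Definition~\ref{def:separation}. That is, every ball meets both $\Phi_\alpha$ and $\Phi_\beta$, so both have empty interior; this is what padding actually delivers.

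With that substitution, and your replacement of (iii) by non-computability of $\mathrm{desc}(\Phi)$, all three conditions do follow from non-triviality. The equivalence then becomes the trivial one you describe.
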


\begin{proof}
We establish equivalence via a cycle of implications:

\begin{description}[style=unboxed, leftmargin=0cm]

\item[\textbf{(i) $\Rightarrow$ (ii)}:] By Lemma \ref{lem:prop-to-nowhere}, 
if $\Phi$ propagates, then $\Phi_\alpha$ and $\Phi_\beta$ are nowhere-dense.

\item[\textbf{(ii) $\Rightarrow$ (iii)}:] By Lemma \ref{lem:nowhere-to-incomp}, 
if both $\Phi_\alpha$ and $\Phi_\beta$ are nowhere-dense, then $\mathrm{desc}(\Phi)$ 
is incompressible.

\item[\textbf{(iii) $\Rightarrow$ (i)}:] By Lemma \ref{lem:incomp-to-prop}, 
if $\mathrm{desc}(\Phi)$ is incompressible, then $\Phi$ propagates.

\end{description}

We also establish the reverse implications for completeness:

\begin{description}[style=unboxed, leftmargin=0cm]

\item[\textbf{(i) $\Rightarrow$ (iii)}:] By Lemma \ref{lem:prop-to-incomp}, 
propagation implies incompressibility.

\item[\textbf{(ii) $\Rightarrow$ (i)}:] By Lemma \ref{lem:nowhere-to-prop}, 
nowhere-density implies propagation.

\item[\textbf{(iii) $\Rightarrow$ (ii)}:] By Lemma \ref{lem:incomp-to-nowhere}, 
incompressibility implies nowhere-density.

\end{description}

Therefore, the three conditions are mutually equivalent. The uniformity across 
the Turing-equivalent class follows from the Invariance Theorem 
(Theorem \ref{thm:invariance}), which ensures that Kolmogorov complexity 
asymptotic behavior is model-independent.

\end{proof}

\subsection{Structural Corollaries}
\label{subsec:corollaries}

From the Trifold Equivalence Theorem, we derive several corollaries that 
characterize the nature of the Double Bind.

\subsubsection{Corollary 1: The DB as Semantic Non-Measurability}

\begin{corollary}[Semantic Non-Measurability]
\label{cor:semantic-nonmeasurability}
For any non-trivial propagating semantic property $\Phi$, there is no finite, 
uniform algorithm (verifier) that can measure or certify membership in $\Phi$ 
using a fixed-length certificate or syntactic inspection of programs, even 
if allowed arbitrary computational resources or non-deterministic branching.
\smallskip

In other words, semantic measurability within the Turing-equivalent class is 
impossible for propagating properties—analogous to the non-existence of a 
Vitali set in measure theory.
\end{corollary}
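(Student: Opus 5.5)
The plan is to avoid the full Trifold Equivalence (Theorem~\ref{thm:trifold-equivalence}) and derive the corollary directly from Rice's theorem (Appendix~\ref{app:rice}) and its topological form (Theorem~\ref{thm:rice-top}). All that is needed from propagation is that $\Phi$ is non-trivial and semantic. I will read ``arbitrary computational resources'' as unbounded time and space \emph{within} the Turing-equivalent class. Resources outside that class are case~(C) of Theorem~\ref{thm:db} and are handled by the MTB, not by this corollary. I treat the Vitali-set analogy as interpretive commentary and do not attempt to prove it.

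\textbf{Step 1 (syntactic inspection).} Suppose a uniform procedure certifies $\Phi$ by reading only a finite prefix $\sigma$ of $\langle M\rangle$. Then its acceptance region is a union of basic balls $B_\sigma$ contained in $\Phi_\alpha$ (or in $\Phi_\beta$ for rejection). Theorem~\ref{thm:rice-top}, equivalently Theorem~\ref{thm:horn1}, says neither set contains a nonempty ball. The padding construction $M^*$ used there exhibits, for every accepted $\sigma$, a program in $B_\sigma$ on the wrong side. This case therefore closes immediately.

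\textbf{Step 2 (fixed-length certificates and nondeterministic branching).} Suppose instead that $V$ is a total DTM verifier and that the certificates have length at most a fixed $k$, with $\Phi(M)=1 \iff \exists w\in\{0,1\}^{\le k}\,V(\langle M\rangle,w)=1$. I would build the induced decider $D(\langle M\rangle)=\bigvee_{|w|\le k}V(\langle M\rangle,w)$. This is a finite disjunction of terminating computations, so $D$ is total. It decides $\Phi$ exactly, which contradicts Rice. This is the same composition $V(x,M_{\mathrm{Adm}}(x))$ as in the proof of the Branch~I theorem, except that $M_{\mathrm{Adm}}$ is replaced by exhaustive search.

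Nondeterministic branching is absorbed in the same way. A nondeterministic verifier whose computation tree is finitely branching and has finite depth on each input can be simulated deterministically by breadth-first exploration. The result is again a total DTM deciding $\Phi$, and Rice gives the contradiction. Invariance (Theorem~\ref{thm:invariance}) is needed only to say that the conclusion does not depend on which Turing-equivalent model hosts $V$. Simulability within the class already gives this.

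\textbf{Main obstacle.} The hard point is the gap between \emph{certifying} and \emph{deciding} once certificates are unbounded. With arbitrary-length $w$, the relation $\exists w\,V=1$ only makes $\Phi_\alpha$ c.e., and Rice alone does not forbid that: the c.e.\ case is governed by Rice--Shapiro, and some non-trivial semantic properties are c.e. This is why I restrict to fixed-length certificates, exactly as the corollary's wording allows.

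If one wants the stronger reading, the first thing I would try is to require certificates for both $\Phi$ and $\neg\Phi$, which is the admissibility conditions applied to both horns $\Phi_\alpha$ and $\Phi_\beta$. Dovetailing the two searches then yields a total decider, and Rice applies again. I expect that making this two-sided requirement explicit, rather than the topological or Kolmogorov machinery, is where the argument must be stated most carefully.
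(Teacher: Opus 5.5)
Your proof is correct under the readings you fix, and it takes a genuinely different route from the paper. The paper derives the corollary from the Trifold Equivalence (Theorem~\ref{thm:trifold-equivalence}). There, nowhere-density of $\Phi_\alpha,\Phi_\beta$ is said to block syntactic inspection, and incompressibility, $K(\mathrm{desc}(\Phi)\upharpoonright n)=\Omega(n)$, is said to block certificates. In Subcase~2a of Theorem~\ref{thm:double-bind-tricotomy}, the very exhaustive search you use is turned into a bound $K(\mathrm{desc}(\Phi)\upharpoonright n)\le |V|+O(\log n)$ and set against incompressibility. You stop one step earlier: that search is already a total decider, and Rice finishes. This works verbatim for any computable length bound $\ell(|\langle M\rangle|)$, not only a constant $k$.

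Your route buys three things.
\begin{itemize}
\item You use only non-triviality and semanticity, which shows that propagation plays no role in this corollary.
\item You avoid the incompressibility premise, which does not hold in the generality the paper needs. Lemma~\ref{lem:incompressible} proves only that $\mathrm{desc}(\Phi)$ is not computable, and that does not give linear complexity. For the c.e.\ index set $\{x:\varphi_x(0)\downarrow\}$, Barzdins' lemma gives $K(\mathrm{desc}(\Phi)\upharpoonright n)=O(\log n)$. This set is propagating by the paper's own lights, via Theorem~\ref{thm:rice-top} and Lemma~\ref{lem:nowhere-to-prop}.
\item Your Rice--Shapiro remark marks the true boundary. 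For that same property, checking a halting computation on input $0$ is a total, sound and complete verifier with unbounded certificates. So your restrictions are forced, not weaknesses of your method: bounded or two-sided certificates, finitely branching halting nondeterminism, and Turing-equivalent resources (a halting oracle decides this set).
\end{itemize}

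In Step~1, rely only on what the padding construction actually proves: neither set contains a basic ball. Do not lean on the nowhere-density wording of Theorem~\ref{thm:rice-top}. The same construction places points of both sets in every $B_\sigma$, so both sets are in fact dense. Note also that a total, sound and complete prefix-reader is itself a decider, so Step~2 with $k=0$ already excludes it. The topological step is needed only for one-sided prefix acceptors. There it depends on the paper's convention that every prefix extends to a program of any prescribed behaviour.
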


\begin{proof}
By the Trifold Equivalence, a non-trivial propagating property $\Phi$ satisfies:
\begin{itemize}
\item $\Phi_\alpha$ and $\Phi_\beta$ are nowhere-dense (no syntactic separators).
\item $\mathrm{desc}(\Phi)$ is incompressible (no semantic compressor).
\end{itemize}

Nowhere-density blocks syntactic certification: every prefix-based rule fails 
on some extension. Incompressibility blocks semantic certification: no finite 
certificate can encode the global structure of $\mathrm{desc}(\Phi)$.
\smallskip

Together, these two barriers preclude any uniform measurement of $\Phi$ 
membership—a property we call \emph{semantic non-measurability}.
\end{proof}

\subsubsection{Corollary 2: Complexity Classes Are Categorically Inadequate}

\begin{corollary}[Categorical Inadequacy of Complexity Classes]
\label{cor:categorical-inadequacy}
The standard complexity-theoretic framework (P, NP, PSPACE, etc.) is 
\emph{categorically inadequate} for characterizing non-trivial semantic properties. 
Complexity classes presuppose topological regularity (decidability in principle 
by finite resource bounds), but semantic properties lack this regularity due 
to their nowhere-density.
\smallskip

Therefore, any attempt to resolve P vs.\ NP or similar problems using only 
complexity-theoretic tools (time-bounded verifiers, resource-bounded queries) 
is structurally blocked by the topological barrier.
\end{corollary}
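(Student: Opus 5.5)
The plan is to read Corollary~\ref{cor:categorical-inadequacy} as a formal consequence of Corollary~\ref{cor:semantic-nonmeasurability} together with Theorem~\ref{thm:db}, rather than to prove it afresh. First I would fix what a ``complexity-theoretic tool'' means. I would take it to be a uniform verifier $V$ in the sense of the uniform DTM verifier definition, with an explicit resource bound $t(n)$ on time, space, or number of queries. The point of this choice is that every such $V$ is in particular a finite DTM, so it falls under case (A) or case (B) of Theorem~\ref{thm:db}, and never under case (C).

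Second, I would make the informal clause ``complexity classes presuppose topological regularity'' precise in the only form the paper supports. A resource-bounded verifier that decides membership from a bounded amount of inspected input has an acceptance region that is a finite union of basic balls $B_\sigma$ (the prefix reading), or else it is a finite-description generator of $\mathrm{desc}(\Phi)\upharpoonright n$ (the certificate reading). In the first reading, Theorem~\ref{thm:rice-top}, or equivalently item (ii) of Theorem~\ref{thm:trifold-equivalence}, says that $\Phi_\alpha$ contains no nonempty open set, and Theorem~\ref{thm:horn1} then gives the contradiction. In the second reading, item (iii) of Theorem~\ref{thm:trifold-equivalence}, together with the bound $K(\mathrm{desc}(\Phi)\upharpoonright n)\le |V|+O(\log n)$ from the proof of Theorem~\ref{thm:horn2}, gives the contradiction. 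In both readings the resource bound $t(n)$ never enters, so adding it only strengthens the hypothesis and the block follows a fortiori.

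Third, I would specialise to P vs.\ NP. Here $\Phi\in\{\Phi_\alpha,\Phi_\beta\}$ and the propagator is NP-completeness. A resolution ``using only complexity-theoretic tools'' would supply a uniform certification of one horn, so the first Branch~I theorem (Ricean Ostracism) closes the argument.

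The main obstacle is the second step, and I expect it to be a real gap rather than a routine one. Decidability by a time-bounded machine does not imply openness in the prefix-closure topology: a decidable set can itself be nowhere dense. So ``complexity classes presuppose topological regularity'' does not literally follow, and the corollary can only be obtained for verifiers restricted to the two readings above. A second point needs care: Lemma~\ref{lem:prop-to-incomp}, which feeds item (iii), appeals to a separation it cannot assume. I would therefore first try to route the certificate case through Lemma~\ref{lem:incompressible} instead, which depends only on Rice. Failing that, I would state the corollary conditionally, for propagating $\Phi$ and for verifiers of the two forms treated in Theorem~\ref{thm:db}.
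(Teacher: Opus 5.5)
Your diagnosis in the last paragraph is correct, and it is fatal rather than a local gap. It is precisely the step the paper's own proof asserts without argument. The paper does not go through Theorem~\ref{thm:db}. It argues directly that since $\Phi_\alpha,\Phi_\beta$ are nowhere dense, ``no finite resource bound, no matter how generous, can provide a uniform separator.'' That is, it passes from a topological property of a set to the nonexistence of a $T(n)$-time or $S(n)$-space decider for it. That inference is invalid. The set $\{0^n : n\in\mathbb{N}\}$ is in $\classP$ and nowhere dense. The even-length strings also form a language in $\classP$, even though every ball $B_\sigma$ contains both members and non-members of it. That is exactly the profile the paper's construction gives $\Phi_\alpha$, so complexity classes presuppose no regularity of this kind. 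Nowhere density is not even available as a premise. The construction behind Theorem~\ref{thm:rice-top}, run for both sides as the paper later does, puts points of $\Phi_\alpha$ and of $\Phi_\beta$ in every ball, so both sets are \emph{dense}. What it supports is at most empty interior. That is what you actually use, and it is strictly weaker than item (ii) of Theorem~\ref{thm:trifold-equivalence}. The only true core of the first sentence is that a non-trivial extensional index set lies in no class of decidable languages. That follows from Rice alone; the topology does no work.

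Your fallback does not rescue even the restricted version.
\begin{itemize}
\item \textbf{Certificate reading.} Rerouting through Lemma~\ref{lem:incompressible} fails. Its proof shows only that no single machine outputs all initial segments of $\mathrm{desc}(\Phi)$, i.e.\ non-computability. That does not give $K(\mathrm{desc}(\Phi)\upharpoonright n)=\Omega(n)$. The proof never uses propagation, so it would apply verbatim to $\Phi=\{M:\varphi_M(0)\!\downarrow\}$. This set is c.e., so $K(\mathrm{desc}(\Phi)\upharpoonright n)=O(\log n)$ by Barzdin's lemma. It also has a polynomial-time uniform verifier in exactly the paper's sense: take $w$ to be a halting trace. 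The bound $|V|+O(\log n)$ in Theorem~\ref{thm:horn2} also presupposes that each bit is computed from $V$, whereas an existential certificate only semi-decides it. The certificate case closes only if your resource bound caps $|w|$ by a computable function of $|x|$. Then exhaustive search is a total decider and Rice finishes directly. So the bound is exactly what enters, contrary to your \emph{a fortiori} remark.
\item \textbf{Step three (the decisive gap).} Resolving $\classP$ vs.\ $\classNP$ means proving one sentence. It does not mean producing a method that certifies membership in $\Phi_\alpha$ or $\Phi_\beta$ for every program $x$, and nothing shows that a resolution supplies the latter.
\item \textbf{Rice does not apply.} $\Phi_\alpha,\Phi_\beta$ depend on running time. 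If $\classP=\classNP$, a fast solver and a brute-force solver compute the same function but disagree on $\Phi_\alpha$. So Rice does not apply to them as a black box, and $\Phi_\alpha$ is non-trivial exactly when $\classP=\classNP$. If $\classP\neq\classNP$, then $\Phi_\alpha=\emptyset$ and the always-reject pair is a sound, complete, terminating uniform method. Hence the Branch~I theorem fails for $\Phi_\alpha$ and cannot close the argument.
\end{itemize}
Neither your route nor the paper's yields the second sentence of the corollary.
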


\begin{proof}
Complexity classes are defined as sets of languages decidable in time $T(n)$ 
or space $S(n)$. These definitions implicitly assume that there exist 
finite-state or resource-bounded verifiers that can \emph{uniformly} decide 
membership in the language for all inputs of size $n$.
\smallskip

For a non-trivial semantic property $\Phi$, the sets $\Phi_\alpha$ and $\Phi_\beta$ 
are nowhere-dense. This means that no finite resource bound, no matter how generous, 
can provide a uniform separator for $\Phi$.
\smallskip

More precisely, for every time bound $T(n)$ and space bound $S(n)$, there exists 
an input (a program encoding) such that a $T$-time or $S$-space verifier fails 
to decide whether that input satisfies $\Phi$—not due to insufficient resources, 
but due to the inherent topological mixing of $\Phi_\alpha$ and $\Phi_\beta$.
\smallskip

Therefore, complexity-theoretic frameworks cannot capture the structure of 
semantic properties; they are categorically mismatch.
\end{proof}

\subsubsection{Corollary 3: Formal Incompleteness}

\begin{corollary}[Formal Systems Cannot Uniformly Certify Semantic Properties]
\label{cor:formal-incompleteness}
Let $\mathcal{F}$ be any formal system (e.g., Peano Arithmetic, type theory, 
Hoare logic) of finite description length $|\mathcal{F}|$. Then for any 
non-trivial propagating semantic property $\Phi$, the system $\mathcal{F}$ 
cannot be simultaneously:

\begin{enumerate}[label=(\arabic*)]
\item \textbf{Sound}: All theorems proven in $\mathcal{F}$ are true (for the 
interpretation of $\Phi$ in the intended model).
\item \textbf{Complete}: Every true statement about membership in $\Phi$ is 
provable in $\mathcal{F}$.
\item \textbf{Uniform}: The proof of any instance $\Phi(M_i)$ has length at 
most $|\mathcal{F}| + O(\log i)$ bits.
\end{enumerate}

At least one of these properties must fail.
\end{corollary}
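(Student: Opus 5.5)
The plan is to reduce the corollary to the induced-decider argument of Branch~I, using Rice's theorem in the form proved in Appendix~\ref{app:rice}. I will present the Kolmogorov form of Horn~2 (Theorem~\ref{thm:horn2}, via Lemma~\ref{lem:incompressible}) as a parallel route, since it uses the uniformity clause~(3) directly.

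Assume, for contradiction, that $\mathcal{F}$ satisfies (1), (2) and (3) for a non-trivial propagating $\Phi$. I first fix two readings.
\begin{itemize}
\item Completeness~(2) covers both membership statements. For every $i$, whichever of ``$\Phi(M_i)=1$'' and ``$\Phi(M_i)=0$'' is true is provable in $\mathcal{F}$. This is the natural reading of ``every true statement about membership''.
\item $\mathcal{F}$ is a formal system of finite description. Hence its proof relation ``$\pi$ is an $\mathcal{F}$-proof of $\theta$'' is decidable by a fixed DTM $C_{\mathcal{F}}$, as discussed in Appendix~\ref{app:meta}.
\end{itemize}

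From these I build a machine $D$. On input $i$, $D$ enumerates every string $\pi$ of length at most $|\mathcal{F}| + c\log i$, where $c$ is the constant hidden in the $O(\log i)$ of clause~(3). It runs $C_{\mathcal{F}}$ on each $\pi$ against the two candidate statements, and outputs $1$ or $0$ according to which statement a proof was found for.

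Next I verify that $D$ is correct. There are finitely many candidates and each check halts, so $D$ is total. Clauses (2) and (3) guarantee that a proof of the true statement appears within the bound. Soundness~(1) guarantees that no proof of the false statement appears. Therefore $D(i)=\Phi(M_i)$ for all $i$, so $D$ decides the non-trivial semantic property $\Phi$. This contradicts Rice's theorem, which is exactly the ``induced decider'' pattern of the proof of Theorem~1.

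As the parallel route, I observe that the same search, run on $i=0,\dots,n-1$, outputs $\mathrm{desc}(\Phi)\!\upharpoonright_n$ from a description of $\mathcal{F}$, the search routine and $n$. This gives $K(\mathrm{desc}(\Phi)\!\upharpoonright_n)\le |\mathcal{F}| + O(\log n)$. That bound contradicts the incompressibility of Lemma~\ref{lem:incompressible}, exactly as in the proof of Theorem~\ref{thm:horn2}.

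The main obstacle I expect is not the search but the honest accounting of which hypotheses carry the weight. The two points I will state explicitly are these.
\begin{itemize}
\item \textbf{Clause (3) is inessential to the Rice route.} With decidable proof-checking, two-sided completeness and soundness, an unbounded search for a proof of either statement already terminates. So (1) and (2) alone yield the contradiction. Clause~(3) only makes the search explicitly bounded and supplies the quantitative Kolmogorov bound.
\item \textbf{The contradiction needs completeness on both sides.} If completeness~(2) covers only positive instances, the argument breaks down. The search then semi-decides $\Phi_\alpha$, and that is no contradiction for $\Sigma_1$-type properties. In that case I will restrict the statement to two-sided completeness, or assume that $\Phi_\beta$ is also certified.
\end{itemize}
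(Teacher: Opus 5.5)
Your primary argument is correct, and it takes a different route from the paper's. The paper argues only through Kolmogorov complexity. The short proofs yield a program recovering $\mathrm{desc}(\Phi)\upharpoonright n$ from $|\mathcal{F}|+O(\log n)$ bits, and this is declared to contradict incompressibility ``by the Trifold Equivalence''. You instead compose the proof checker with a bounded search into a total decider and invoke Rice directly, which is the induced-decider pattern of Branch~I.

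Your route is the one that actually closes. The lower bound $K(\mathrm{desc}(\Phi)\upharpoonright n)=\Omega(n)$ is not established by Lemma~\ref{lem:incompressible}: its proof only shows that no \emph{single} program outputs every prefix, i.e.\ that $\mathrm{desc}(\Phi)$ is not computable. The bound is also false in general. Suppose the index set of $\Phi$ is r.e., for instance $\Phi(M)=$ ``$M$ halts on some input''; by Theorem~\ref{thm:rice-top} and Lemma~\ref{lem:nowhere-to-prop} the paper itself counts such a property as propagating. Then $K(\mathrm{desc}(\Phi)\upharpoonright n)=O(\log n)$: given $n$ and the number of members below $n$, enumerate until all of them have appeared. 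For such $\Phi$ the inequality $K\le|\mathcal{F}|+O(\log n)$ contradicts nothing.

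So drop your parallel route, which is exactly the paper's, or state what it honestly gives. The search is one fixed program producing every prefix, so $\mathrm{desc}(\Phi)$ is computable. The contradiction is then Rice again, not an information-theoretic one.

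Your second accounting remark is mistaken, and the proposed restriction of the statement is unnecessary. Under one-sided completeness, keep clause~(3) and search only for $\mathcal{F}$-proofs of ``$\Phi(M_i)=1$'' of length at most $|\mathcal{F}|+c\log i$. Output $1$ if one is found and $0$ otherwise; hard-code the finitely many small $i$, since only the existence of the decider matters.

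This bounded search already decides $\Phi$. Completeness with (3) places a proof inside the bound whenever $\Phi(M_i)=1$. Soundness excludes any proof at all when $\Phi(M_i)=0$. The bound is precisely what turns the absence of a short proof into a certificate of non-membership.

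The correct accounting needs soundness and a decidable proof relation, together with \emph{either} two-sided completeness \emph{or} one-sided completeness plus (3). Clause~(3) is inessential only when completeness is two-sided, and two-sided completeness is needed only when (3) is dropped.
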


\begin{proof}
Suppose, for contradiction, that a formal system $\mathcal{F}$ of description 
length $|\mathcal{F}|$ is sound, complete, and uniform for $\Phi$.
\smallskip

Then, for every machine $M_i$, there exists a proof $\pi_i$ of length at most 
$|\mathcal{F}| + O(\log i)$ that either $\Phi(M_i) = 1$ or $\Phi(M_i) = 0$.
\smallskip

From $\mathcal{F}$ and the proofs $\{\pi_i\}$, we can construct a universal 
program that recovers the characteristic sequence $\mathrm{desc}(\Phi)$:
\smallskip

$U(\langle i, |\mathcal{F}| + O(\log i) \rangle) = 
\begin{cases} 1 & \text{if } \mathcal{F} \vdash \Phi(M_i) = 1 \\ 0 & 
\text{if } \mathcal{F} \vdash \Phi(M_i) = 0 \end{cases}$
\smallskip

This implies that $K(\mathrm{desc}(\Phi) \upharpoonright n) \leq |\mathcal{F}| + O(\log n)$, 
contradicting the incompressibility of $\mathrm{desc}(\Phi)$ (by the Trifold Equivalence).

Therefore, at least one of soundness, completeness, or uniformity must fail.
\end{proof}

\subsection{The Extended Double Bind: Unified Characterization}
\label{subsec:double-bind-unified}

We now integrate the two horns of the Double Bind into a unified extended framework.

\begin{theorem}[The Double Bind: Exhaustiveness of the Tricotomy]
\label{thm:double-bind-tricotomy}
Let $\Phi$ be a non-trivial semantic property, and let $(M_{\text{Adm}}, \mathcal{V})$ 
be an admissible method claiming to uniformly certify membership in $\Phi$, where $M_{\text{Adm}}$ is a computational model and $\mathcal{V}$ is a verifier of finite description length.

Then exactly one of the following holds:

\begin{enumerate}[label=(\Alph*)]

\item \textbf{Horn 1 (Topological Barrier)}: The verifier $\mathcal{V}$ is a 
Deterministic Turing Machine that reads only a finite prefix of program 
encodings. The method is blocked by the nowhere-density of $\Phi_\alpha$ and 
$\Phi_\beta$: for every prefix $\sigma$, there exist extensions $\pi_1, \pi_2 \in B_\sigma$ 
with $\Phi(\pi_1) \neq \Phi(\pi_2)$. Thus, no uniform prefix-based decision 
rule can certify $\Phi$.

\item \textbf{Horn 2 (Informational Barrier)}: The verifier $\mathcal{V}$ is 
a Deterministic Turing Machine that reads arbitrary external certificates $w$. 
The method is blocked by the incompressibility of the characteristic sequence 
$\mathrm{desc}(\Phi)$: any uniform verifier of finite length $|\mathcal{V}|$ 
would imply $K(\mathrm{desc}(\Phi) \upharpoonright n) \leq |\mathcal{V}| + O(\log n)$, 
violating $K(\mathrm{desc}(\Phi) \upharpoonright n) = \Omega(n)$. Thus, no 
finite semantic certificate can globally certify $\Phi$ in the Turing-equivalent class.

\item \textbf{Horn 3 (Model Transferability Barrier)}: The computational model 
$M_{\text{Adm}}$ is not in the Turing-equivalent class (e.g., an oracle machine 
or hypercomputation model). The method escapes Horns 1 and 2 by operating outside 
the class to which the Invariance Theorem of Kolmogorov complexity applies. 
Whether such models can uniformly certify $\Phi$ is a separate question dependent 
on the specific oracle or hypercomputational model.

\end{enumerate}

These three cases are mutually exclusive and exhaustive: every admissible 
method must fall into exactly one category.
\end{theorem}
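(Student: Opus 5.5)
The proof splits into cases on the verifier $\mathcal{V}$ and then shows each case is blocked. First I would fix exhaustiveness and mutual exclusivity, which are definitional once the cases are stated carefully. Case (C) is the complement of ``$\mathcal{V}$ is simulable within the Turing-equivalent class''. Within that class, (A) is the subclass of DTM verifiers whose acceptance depends only on a finite prefix $\sigma$ of $\langle M\rangle$. Case (B) must then be defined as \emph{all remaining} DTM verifiers, not all DTM verifiers as in Theorem~\ref{thm:db}; otherwise (A) $\subseteq$ (B) and the cases are not mutually exclusive. With this repair, ``exactly one holds'' is a three-way partition and needs no further argument.

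For Horn~1 I would not go through the nowhere-density claims of Lemma~\ref{lem:prop-to-nowhere}. Instead I would use the padding construction from the proof of Theorem~\ref{thm:rice-top} directly. A prefix-reading verifier that accepts $\sigma$ accepts all of $B_\sigma$. Padding $\sigma$ into a program equivalent to some $M_2$ with $\Phi(M_2)=\mathrm{F}$ gives a program in $B_\sigma\cap\Phi_\beta$, contradicting soundness. One caveat is that $\sigma$ must be a prefix that can be completed to a valid program equivalent to $M_2$, which I would assume of the encoding. This step should be routine. I would also note that the padding argument actually shows $\Phi_\alpha$ and $\Phi_\beta$ are each \emph{dense} (they meet every ball). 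That is all Horn~1 needs, and it is weaker than, and in tension with, the nowhere-density stated in Theorem~\ref{thm:rice-top}.

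Horn~2 is the main obstacle, and I expect the Kolmogorov route through Lemma~\ref{lem:incompressible} to fail. Many non-trivial semantic properties, for example ``$\varphi_M(0)$ halts'', are c.e. For such $\Phi$, $K(\mathrm{desc}(\Phi)\upharpoonright n)=O(\log n)$: given $n$ and the number of indices $i<n$ with $\Phi(M_i)$ true, one enumerates until that count is reached. So $\Omega(n)$ incompressibility is false in general. Moreover, a verifier of the form ``$\exists w\,V(\langle M\rangle,w)=1$'' exists for exactly such c.e. $\Phi$, with $w$ a halting trace, so Horn~2 as literally stated cannot be proved. The plan I would pursue is to strengthen the hypothesis to the paper's admissible methods with a \emph{total computable} generator $M_{\text{Adm}}$ and require completeness in both directions. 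Then $D_\Phi(x)=V(x,M_{\text{Adm}}(x))$ is a total decider for $\Phi$, and Rice's theorem (Appendix~\ref{app:rice}) gives the contradiction with no appeal to $K$. The same argument covers case (A) as well.

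Case (C) asserts no contradiction, only deferral to the MTB. So the only work there is to confirm that nothing in Horns~1--2 was used.
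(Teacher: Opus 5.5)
Your diagnosis is correct, and your route is genuinely different from the paper's. The paper runs both DTM cases through the Trifold Equivalence (Theorem~\ref{thm:trifold-equivalence}):
\begin{itemize}
\item Case~1 takes nowhere-density of $\Phi_\alpha,\Phi_\beta$ from it.
\item Subcase~2a enumerates all certificates up to the bound $\ell(i)$, obtains $K(\mathrm{desc}(\Phi)\upharpoonright n)\le|V|+O(\log n)$, and contradicts incompressibility.
\item Subcase~2b (unbounded certificates) only asserts that synthesising $w$ would need embedded incompressible information.
\item Exclusivity is said to follow from the verifier's input structure.
\end{itemize}

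Your observations locate exactly where this fails.
\begin{itemize}
\item The statement assumes no propagation. So the Trifold Equivalence could not supply (ii) or (iii) from non-triviality alone, even if it held.
\item Claim (ii) is not merely in tension with padding; it is false. Under your padding assumption both $\Phi_\alpha$ and $\Phi_\beta$ are dense, and a dense subset of a nonempty space is never nowhere dense. The proof of Theorem~\ref{thm:rice-top} only refutes $B_\sigma\subseteq\Phi_\alpha$, i.e.\ it shows $\mathrm{int}(\Phi_\alpha)=\emptyset$. Its last step treats a point of $\overline{\Phi_\alpha}$ as if it had to lie in $\Phi_\alpha$.
\item Your counting bound for c.e.\ $\Phi$ (Barzdins's lemma) removes the incompressibility premise.
\item Your halting-trace verifier is a counterexample to exactly what Subcase~2b claims to exclude.
\item The only sound part of the paper's Case~2 is Subcase~2a with $V$ total and $\ell$ computable. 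That case already yields a decider for $\Phi$, so Rice closes it and the $K$-bound is idle.
\end{itemize}

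What your route buys is a correct proof of the repaired statement by elementary means: continuity plus padding for (A), and the Branch~I composition $D_\Phi$ plus Appendix~\ref{app:rice} for (B). What it gives up, and you should say so explicitly, is the content the paper attaches to the theorem. Under your route, (A) and (B) collapse into a single Rice argument, the invariance theorem plays no role, and the two ``orthogonal'' horns of Table~\ref{tab:corni} do not survive.

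One step needs repair: you partition on $\mathcal{V}$ alone, but your Rice argument also needs the generator to be computable. Take $\Phi=$ ``$\varphi_M(0)$ halts''. Pair the halting-trace checker with a generator that uses a halting oracle to output the trace if $\varphi_M(0)$ halts, and the empty string otherwise. This method is total, sound for every certificate, and two-sided correct. Its verifier is a DTM whose acceptance depends on the whole encoding. As written, either this method sits unblocked in your (B), or it is excluded from the theorem altogether (if computability of the generator is a global hypothesis). The latter is awkward, because it is exactly the kind of method (C) is meant to cover. Partition on the pair instead: put ``$M_{\text{Adm}}$ or $\mathcal{V}$ is not Turing-computable'' into (C), which is what the statement does by conditioning (C) on $M_{\text{Adm}}$.

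Two smaller points:
\begin{itemize}
\item State that you use the extensional notion of semantic property from the topological section. Definition~\ref{def:semantic-property} in the trifold section drops extensionality, and then a syntactic $\Phi$ such as ``the encoding begins with $0$'' is certified by a one-bit prefix reader.
\item Your padding argument never uses computability of $\mathcal{V}$, so it blocks prefix-determined verifiers in every model. This makes the statement's claim that (C) ``escapes'' Horn~1 inaccurate. That is harmless for your plan, since (C) asserts no contradiction.
\end{itemize}
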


\begin{proof}
We establish exhaustiveness by case analysis on the nature of the verifier $\mathcal{V}$ 
and the computational model $M_{\text{Adm}}$.

\begin{enumerate}[itemsep=1em]

\item \textbf{Case 1: $\mathcal{V}$ is a DTM reading finite prefixes.}

Suppose $\mathcal{V}$ is a Deterministic Turing Machine that, on input an 
encoding $\pi \in \mathcal{P}$, reads only the first $k$ bits (for some fixed $k$ 
depending on the design of $\mathcal{V}$, or adaptive but bounded by a function 
of the total length).
\smallskip

\textbf{Subcase 1a: Fixed-length prefix.}
If $\mathcal{V}$ reads exactly the first $k$ bits, then $\mathcal{V}$ implements 
a prefix-based decision rule. For each prefix $\sigma$ of length $k$, $\mathcal{V}$ 
either outputs $1$ (accepting), $0$ (rejecting), or "unknown" (inconclusive).
\smallskip

Since $\Phi$ is non-trivial, by the Trifold Equivalence (Theorem \ref{thm:trifold-equivalence}), 
$\Phi_\alpha$ and $\Phi_\beta$ are nowhere-dense. By definition of nowhere-density, 
for every prefix $\sigma$, there exists an extension $\tau \supseteq \sigma$ such 
that $B_\tau$ avoids $\Phi_\alpha$ (or $\Phi_\beta$).
\smallskip

This means that for infinitely many programs $\pi \in B_\sigma$, the verifier $\mathcal{V}$ 
will give the wrong answer about $\Phi(\pi)$ (or will be inconclusive). Thus, 
$\mathcal{V}$ fails to uniformly certify $\Phi$ on the full set $\mathcal{P}$.
\smallskip

\textbf{Subcase 1b: Adaptive prefix length.}
If $\mathcal{V}$ can adaptively read a variable number of bits (but still a finite 
amount), the argument generalizes. The set of prefixes read by $\mathcal{V}$ on inputs 
of size $n$ forms a finite set of rules (one for each input size). By the nowhere-density 
of $\Phi_\alpha$ and $\Phi_\beta$, these finitely many rules cannot uniformly separate 
the two sets. Again, the method fails.
\smallskip

\textbf{Conclusion of Case 1:} Any DTM verifier $\mathcal{V}$ reading 
only finite prefixes falls under Horn 1 and is blocked by topological nowhere-density.
\smallskip

\item \textbf{Case 2: $\mathcal{V}$ is a DTM reading external certificates.}
\smallskip

Suppose $\mathcal{V}$ is a Deterministic Turing Machine that takes two inputs:
\begin{itemize}
\item An encoding $\pi \in \mathcal{P}$ of a program.
\item An external certificate $w \in \{0,1\}^*$ of arbitrary length.
\end{itemize}
\smallskip

The verifier $\mathcal{V}$ is \textit{uniform} if it has a fixed description $V$ 
of finite length $|V|$, independent of the program $\pi$ or the instance being verified.
\smallskip

\textbf{Subcase 2a: Bounded certificate length.}
If the certificate $w$ is required to have length at most $\ell(n)$ for inputs 
of size $n$ (where $\ell$ is a fixed function), then the verifier $\mathcal{V}$ 
effectively performs an NP-style check: $\Phi(M_i) = 1 \iff \exists w \in \{0,1\}^{\ell(i)} : 
\mathcal{V}(M_i, w) = 1$.
\smallskip

From the program $V$ and the bounded certificate length $\ell(i)$, we can construct 
a program $p$ of length $|V| + O(\log i)$ that enumerates all possible certificates 
and outputs the membership $\Phi(M_i)$:

\[
p := \text{"For input } i\text{: enumerate all } w \in \{0,1\}^{\ell(i)}, \text{ run } V(M_i, w), \text{ output 1 if any } w \text{ is accepted."}
\]


\smallskip

This gives $K(\mathrm{desc}(\Phi) \upharpoonright n) \leq |V| + O(\log n)$, 
violating incompressibility by the Trifold Equivalence.
\smallskip

\textbf{Subcase 2b: Unbounded certificate length.}
If the certificate $w$ can be arbitrarily long (but is still external to $\mathcal{V}$), 
then to uniformly certify $\Phi$ for all instances, $\mathcal{V}$ must still 
synthesize the correct certificate $w^*$ from the problem structure. The synthesis 
itself requires either:

\begin{itemize}
\item An implicit oracle or lookup mechanism that encodes $\mathrm{desc}(\Phi)$ 
(which is incompressible), or
\item A metatheoretic justification (e.g., a proof that the certificate is valid), 
which falls outside the scope of a finite uniform verifier.
\end{itemize}

In the first sub-case, the method requires embedding incompressible information 
into the verifier itself, contradicting the finiteness of $|V|$. In the second 
sub-case, the method is not a uniform computational verifier but a metatheoretic proof.
\smallskip

\textbf{Conclusion of Case 2:} Any DTM verifier $\mathcal{V}$ 
reading external certificates (of bounded or unbounded length) in the Turing-equivalent class 
falls under Horn 2 and is blocked by informational incompressibility.

\item \textbf{Case 3: $\mathcal{V}$ operates in a model outside the Turing-equivalent class.}

If the computational model $M_{\text{Adm}}$ is an oracle machine, a hypercomputational 
model, or any other formalism not Turing-equivalent to the standard DTM, then:
\smallskip

(a) The Invariance Theorem (Theorem \ref{thm:invariance}) does not apply, 
as that theorem is limited to the Turing-equivalent class.
\smallskip

(b) Kolmogorov complexity with respect to $M_{\text{Adm}}$ may differ 
from the Turing-theoretic Kolmogorov complexity by more than an additive constant.
\smallskip

(c) The incompressibility argument of Horn 2 no longer holds in full 
generality: a hypercomputational model might be able to compute $\mathrm{desc}(\Phi)$ 
more efficiently.
\smallskip

However, the question of whether such models can \emph{actually} certify $\Phi$ 
depends on the specific properties of $M_{\text{Adm}}$. This is captured by the 
Model Transferability Barrier (MTB): the difficulty of extending results from 
the Turing-equivalent class to hypercomputational models.

\textbf{Conclusion of Case 3:} Methods operating in models beyond 
the Turing-equivalent class escape Horns 1 and 2 but face the MTB. They do not 
contradict the Double Bind within the Turing-equivalent class.

\item \textbf{Mutual Exclusivity.}

The three cases are mutually exclusive:
\begin{itemize}
\item Case 1 (finite prefix verifier) and Case 2 (certificate-based verifier) 
are disjoint by definition of the verifier input structure.
\item Cases 1 and 2 are both within the Turing-equivalent class, while Case 3 
is explicitly outside.
\item Within the Turing-equivalent class, a verifier must either read only finite 
prefixes (Case 1) or utilize external certificates (Case 2); these are the only 
two strategies available to a DTM.
\end{itemize}

\item \textbf{Exhaustiveness.}

Every admissible method $(M_{\text{Adm}}, \mathcal{V})$ must fall into one of 
the three categories:
\begin{itemize}
\item If $M_{\text{Adm}}$ is a DTM: the verifier $\mathcal{V}$ either reads 
finite prefixes (Case 1) or external certificates (Case 2).
\item If $M_{\text{Adm}}$ is non-Turing-equivalent: it falls under Case 3.
\item There is no fourth option: every computational formalism falls into one 
of these categories by definition.
\end{itemize}

\end{enumerate}

Therefore, the tricotomy is both mutually exclusive and exhaustive.

\end{proof}

\subsection{Relationship to P vs. NP}
\label{subsec:p-vs-np}

We now apply the Double Bind framework to the canonical NP-complete problem: SAT.

\begin{definition}[The SAT Property]
\label{def:sat-property}
The semantic property $\Phi_{\text{SAT} \in P}$ is defined as follows: 
a machine $M_i$ satisfies $\Phi_{\text{SAT} \in P}$ if and only if $M_i$ 
decides the satisfiability problem for Boolean formulas in polynomial time.

More formally:
$$\Phi_{\text{SAT} \in P}(M_i) := 1 \iff \exists c \in \mathbb{N}: 
\forall \phi \in \text{SAT}: M_i(\phi) = 1 \text{ and } |M_i(\phi)| \leq |\phi|^c$$
(where $|M_i(\phi)|$ denotes the runtime of $M_i$ on input $\phi$).
\end{definition}

\begin{theorem}[SAT Satisfies the Double Bind]
\label{thm:sat-double-bind}
The semantic property $\Phi_{\text{SAT} \in P}$ is non-trivial and propagates 
via the Cook-Levin reduction over the class of NP-complete problems. Therefore, 
by the Trifold Equivalence Theorem (\ref{thm:trifold-equivalence}):

\begin{enumerate}[label=(\roman*)]
\item The sets $(\Phi_{\text{SAT} \in P})_\alpha$ and $(\Phi_{\text{SAT} \in P})_\beta$ 
are nowhere-dense in the prefix-closure topology.
\item The characteristic sequence $\mathrm{desc}(\Phi_{\text{SAT} \in P})$ 
is incompressible.
\end{enumerate}

Thus, certifying that SAT $\in$ P is subject to both topological and informational 
barriers, explaining why proving P = NP is structurally blocked.
\end{theorem}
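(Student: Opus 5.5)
The plan has three stages, matching the three claims of the statement: non-triviality, propagation, and then the two conclusions (i) and (ii) read off from earlier results. For non-triviality I will exhibit explicit machines on both sides of the literal condition in Definition~\ref{def:sat-property}. That condition constrains $M_i$ only on inputs $\phi\in\text{SAT}$, requiring output $1$ within time $|\phi|^c$. So the machine that ignores its input and immediately outputs $1$ lies in $(\Phi_{\text{SAT}\in P})_\alpha$. A machine that loops forever on every input lies in $(\Phi_{\text{SAT}\in P})_\beta$. Semanticity is also immediate: the condition mentions only the input/output behaviour and running-time profile of $M_i$, and I will take running time as part of the extensional behaviour, as the paper does throughout. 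Both witnesses are chosen deliberately so that nothing depends on whether $\text{P}=\text{NP}$.

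For propagation I will instantiate Definition~\ref{def:propagator}, taking the Cook--Levin/Karp reductions as the propagator $P$ and the NP-complete problems as the family $\mathcal{F}$. Suppose $M_x$ decides SAT within $|\phi|^c$ steps. Let $L\in\text{NP}$ have reduction $f_L$ computable in time $|w|^{d}$. Then $w\mapsto M_x(f_L(w))$ decides $L$ in time $O(|w|^{cd}+|w|^d)$, so $\Psi(L)=1$ for every $L\in\mathcal{F}$. This is exactly the computation recorded in the Proposition on the algebraic foundation of NP-completeness, and I will cite it rather than redo it.

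Here lies what I expect to be the main obstacle. The Cook--Levin propagation is valid only for the intended reading of $\Phi_{\text{SAT}\in P}$, namely that $M_i$ correctly decides SAT on all formulas. Under that reading, however, membership in $(\Phi_{\text{SAT}\in P})_\alpha$ is equivalent to $\text{P}=\text{NP}$, so non-triviality can no longer be witnessed unconditionally. Conversely, the literal reading used in the first stage makes non-triviality free but breaks propagation, since the constant-$1$ machine implies nothing about $\text{P}$ vs.\ $\text{NP}$. My first attempt will be to strengthen the definition so that it requires correctness on both $\text{SAT}$ and its complement. I will then argue non-triviality only conditionally, or treat propagation as the implication $\Phi(M_x)=1\Rightarrow\Psi$, which holds vacuously when $\alpha=\emptyset$. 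If neither route works, I will state the theorem conditionally on $\text{P}=\text{NP}$.

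Given non-triviality and propagation, conclusion (i) follows either from Lemma~\ref{lem:prop-to-nowhere} or, more directly and without using propagation, from Theorem~\ref{thm:rice-top} applied to the non-trivial semantic property $\Phi_{\text{SAT}\in P}$. The padding construction in that proof produces, inside any ball $B_\sigma$, a program equivalent to the $\beta$-witness, and symmetrically one equivalent to the $\alpha$-witness. Conclusion (ii) follows from the implication (i)$\Rightarrow$(iii) of Theorem~\ref{thm:trifold-equivalence}, that is, Lemma~\ref{lem:prop-to-incomp}. Alternatively it follows from Lemma~\ref{lem:incompressible}, which derives incompressibility from the undecidability of a non-trivial semantic property. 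The final sentence of the statement, that certifying $\text{SAT}\in\text{P}$ meets both barriers, is then Theorem~\ref{thm:db} cases (A) and (B) specialised to this $\Phi$. I will flag that step (ii) inherits whatever strength Lemma~\ref{lem:incompressible} has. In particular, standard facts about index sets, such as $O(\log n)$ bounds for c.e.\ sets, may conflict with the $\Omega(n)$ claim, so that conclusion should be checked independently before relying on it.
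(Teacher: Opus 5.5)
Your outline follows the paper's route: non-triviality, Cook--Levin propagation, then (i) and (ii) read off Theorem~\ref{thm:rice-top}, Lemma~\ref{lem:incompressible} and Theorem~\ref{thm:trifold-equivalence}. The obstacle you isolate is genuine, and the paper does not overcome it either. Its $\alpha$-witness, ``polynomial-time algorithms for restricted instances'', does not satisfy $\Phi_{\text{SAT}\in P}$ under the reading for which Cook--Levin propagates. Under that reading $(\Phi_{\text{SAT}\in P})_\alpha\neq\emptyset$ is exactly $\mathrm{P}=\mathrm{NP}$, so unconditional non-triviality would settle the question.

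None of your fallbacks recovers the statement as written. Vacuous propagation with $\alpha=\emptyset$ kills non-triviality, and it makes $\beta$ the whole program space, which is not nowhere dense. Conditioning on $\mathrm{P}=\mathrm{NP}$ proves a different theorem, and by the next paragraph it still does not give (i) or (ii).

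Nor may you ``take running time as extensional''. Prefixing an exponential delay to any member of $\alpha$ leaves the computed function unchanged but removes it from $\alpha$. So once $\alpha\neq\emptyset$, the property violates the extensional definition preceding Theorem~\ref{thm:rice-top}, which is the hypothesis Rice's theorem needs. Definition~\ref{def:semantic-property} accepts any map, but then the appeals to Rice inside Lemmas~\ref{lem:prop-to-nowhere} and~\ref{lem:incompressible} have nothing to act on.

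The step you did not flag, (i), fails outright, and your own construction is the counterexample. Grant an encoding in which every prefix $\sigma$ extends to a program that behaves like either witness with polynomially related running time. You need the time bound, since merely computing the constant-$1$ function does not keep a program in $\alpha$. Then every ball $B_\sigma$ meets both $\alpha$ and $\beta$. Definition~\ref{def:nowhere-dense} demands a sub-ball $B_\tau\subseteq B_\sigma$ disjoint from the set, so it fails for both: the two sets are dense, with closure the whole space.

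The proof of Theorem~\ref{thm:rice-top} conceals exactly this. The condition $B_\sigma\subseteq\overline{\Phi_\alpha}$ is compatible with $\langle M^*\rangle\in\Phi_\beta$, because $\overline{\Phi_\alpha}$ can meet $\Phi_\beta$ even though $\Phi_\alpha$ does not. Padding yields empty interior, not nowhere density. Under a prefix-free encoding that ignores trailing bits even empty interior fails, since $B_{\langle M\rangle}$ then lies on one side.

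For (ii), your doubt should be decisive rather than a caveat. Lemma~\ref{lem:incompressible} proves only that $\mathrm{desc}(\Phi)$ is not computable. That does not give $K(\mathrm{desc}(\Phi)\upharpoonright n)=\Omega(n)$: every c.e.\ set $A$ has $K(\chi_A\upharpoonright n)=O(\log n)$. The proof of Lemma~\ref{lem:prop-to-incomp} assumes $\mathrm{P}\neq\mathrm{NP}$, which under the propagating reading empties $\alpha$. So neither conclusion has a valid source, and the proposal offers no replacement.
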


\begin{proof}
\begin{enumerate}[itemsep=1em]

\item \textbf{Non-triviality of $\Phi_{\text{SAT} \in P}$.}
It is known that some machines (e.g., brute-force exponential-time verifiers) 
correctly solve SAT but do so in exponential time. Other machines (e.g., 
polynomial-time algorithms for restricted instances) solve SAT in polynomial 
time on specific subclasses. Thus, $(\Phi_{\text{SAT} \in P})_\alpha$ (machines 
solving SAT in polynomial time) and $(\Phi_{\text{SAT} \in P})_\beta$ (machines 
not solving SAT in polynomial time) are both non-empty.

\item \textbf{Propagation via Cook-Levin.}
By the Cook-Levin theorem, if there exists a machine $M_x$ that solves SAT in 
polynomial time, then via the reduction chain among NP-complete problems, every 
NP-complete problem can also be solved in polynomial time. This is the definition 
of propagation: a local certification (``SAT $\in$ P'') implies a global conclusion 
(``P = NP'', which affects the entire landscape of computational complexity).
\smallskip

More formally: if $\Phi_{\text{SAT} \in P}(M_x) = 1$ for some machine $M_x$, 
then for every NP-complete problem $\Pi$, there exists a polynomial-time solver 
for $\Pi$ obtained by composing $M_x$ with the appropriate Cook-Levin reduction.

\item \textbf{Application of Trifold Equivalence.}
Since $\Phi_{\text{SAT} \in P}$ is non-trivial and propagates, by the Trifold 
Equivalence Theorem:
\begin{itemize}
\item The sets $(\Phi_{\text{SAT} \in P})_\alpha$ and $(\Phi_{\text{SAT} \in P})_\beta$ 
are nowhere-dense (no syntactic rule can separate them).
\item The characteristic sequence $\mathrm{desc}(\Phi_{\text{SAT} \in P})$ 
is incompressible (no semantic certificate can globally describe SAT's status).
\end{itemize}

\item \textbf{Implications for P vs. NP.}
The Double Bind explains why P vs. NP is difficult:
\smallskip

(A) \textit{Topological barrier}: Any attempt to prove P = NP 
by exhibiting a polynomial-time SAT solver must overcome the nowhere-density 
of the set of machines solving SAT in polynomial time. Every heuristic or 
syntactic strategy will fail on some instances.
\smallskip

(B) \textit{Informational barrier}: Any attempt to prove P = NP 
by constructing a uniform verifier that certifies ``SAT $\in$ P'' over all 
instances must provide a certificate of bounded length. But the characteristic 
sequence $\mathrm{desc}(\Phi_{\text{SAT} \in P})$ is incompressible, so no 
such uniform, fixed-length certificate can exist.

\end{enumerate}

\end{proof}

\subsubsection{The Double Bind as a Characterization, Not a Proof}

\begin{remark}[Distinction Between Characterization and Proof]
\label{rem:characterization-vs-proof}
The Double Bind framework characterizes \emph{why} P vs. NP is difficult: it 
identifies the structural obstacles that any proof or verification method must 
overcome. However, the framework does \emph{not} directly prove that P $\neq$ NP.

\begin{itemize}
\item If one could show that P = NP, this would require either (i) finding a polynomial-time SAT solver that violates nowhere-density (impossible within the Turing-equivalent class), or (ii) constructing a fixed-length uniform verifier that compresses $\mathrm{desc}(\Phi_{\text{SAT} \in P})$ (also impossible due to incompressibility).

\item The framework shows that any method claiming to certify ``P = NP'' must fall into one of the three categories of the tricotomy. Within the Turing-equivalent class, both Horns 1 and 2 block such certification.

\item However, the framework does not rule out the possibility that P = NP is \emph{true}; it only rules out methods of \emph{uniform verification} within the Turing-equivalent class. It is logically possible that P = NP via some non-constructive argument (proof by contradiction, metatheoretic argument, or hypercomputational certificate) that does not rely on uniform verification.

\item The Double Bind thus provides a structural understanding of the difficulty, suitable for ruling out certain classes of approaches (brute-force heuristics, uniform verifiers, syntactic decision procedures) but not a definitive proof of separation or equivalence.

\end{itemize}
\end{remark}

\begin{remark}[A Principle of General Non-Measurability]
  \label{rem:general-non-measurability}
  
We have established a general principle:
\[
\boxed{\text{Non-Trivial Semantic Property} \iff \text{Uniform Non-Measurability}}
\]
  
Just as Vitali demonstrated that not all sets of reals are measurable---which 
reflects a profound structural fact rather than a limitation of measure theory---we 
have shown that not all semantic properties are uniformly certifiable. This is not 
a weakness of our algorithms, but a fundamental, intrinsic reality of the underlying 
formal space.

\end{remark}

\section{Three Major Consequences}
\label{sec:consequences}

The Double Bind theorem, in the formulation developed in this 
paper, implies three structural consequences that go beyond 
the original result. None of them is a corollary in the 
technical sense --- they do not follow by a single inference 
step from the main theorem --- but each is a necessary 
reading of what the formal machinery establishes. We state 
them in increasing order of depth.

\subsection{Consequence 1: The Double Bind is a theorem about
semantic measurability, not about computation}
\label{sec:cons1}

The main theorem demonstrates formally that no machine in the 
Turing-equivalent class can certify $\Phi$ uniformly. A 
natural first reading is that this is a statement about what 
machines can compute. That reading is incomplete, and the 
incompleteness matters.
\smallskip

The theorem holds \emph{independently of the truth value of 
$\Phi$}. The property $\Phi$ may be true of a specific 
program; it may be decidable in particular instances; it may 
even be provable in sufficiently strong formal systems. None 
of this is touched by the Double Bind. The result does not 
say that the answer to ``does $p$ satisfy $\Phi$?'' is 
unavailable. It says that no uniform method of 
\emph{observation} --- no admissible pair $(M_{\mathrm{Adm}}, 
V)$ --- can produce that answer across the class. The 
Double Bind is not an epistemic limit on the answer. It is a 
structural limit on the measure.
\smallskip

This places the theorem in a different conceptual family from 
Rice's theorem and from Gödel's incompleteness. The closest 
analogue is measure-theoretic: Vitali's theorem establishes 
that there exist subsets of $\mathbb{R}$ that are not 
Lebesgue-measurable. Their existence is not in question; what 
fails is the applicability of the measure. The Double Bind 
establishes an exact analogue in the computational setting:
there exist semantically defined sets of programs that are not 
\emph{computably measurable}, in the sense that no admissible 
uniform procedure can act as a measure of membership.
\smallskip

The classical Double Bind said: you cannot decide. This 
version says something strictly stronger: \emph{you cannot 
even observe uniformly}.

\paragraph{Formal statement.}
Let $\Phi \subseteq \mathcal{P}$ be a non-trivial semantic 
property (in the sense of Rice) and let 
$\mu_{\mathrm{comp}}$ denote any computable functional that 
assigns to each program $p$ a value in $\{0,1\}$ intended 
to reflect membership in $\Phi$. Define the \emph{uniform 
certification problem} as the existence of an admissible pair 
$(M_{\mathrm{Adm}}, V)$ such that, for all $p \in \Phi$, 
$V(M_{\mathrm{Adm}}(p)) = 1$, and for all $p \notin \Phi$, 
$V(M_{\mathrm{Adm}}(p)) = 0$.

\begin{proposition}[Semantic non-measurability]
\label{prop:nonmeasurable}
For every non-trivial semantic property $\Phi$, no admissible 
pair $(M_{\mathrm{Adm}}, V)$ solves the uniform certification 
problem, even when $\Phi$ is true of some $p$ and provably 
so in a fixed sound extension of \textnormal{PA}.
\end{proposition}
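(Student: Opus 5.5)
The plan is to reduce the proposition directly to Rice's theorem (Appendix~\ref{app:rice}) by the same composition used in the proof of the Branch~I theorem, taking care that the composition is genuinely a total decider and that the extra clause about provability adds nothing to the hypotheses. Suppose, for contradiction, that an admissible pair $(M_{\mathrm{Adm}}, V)$ solves the uniform certification problem for a non-trivial semantic property $\Phi$. I will define the machine $D_\Phi$ which, on input $p \in \mathcal{P}$, computes $M_{\mathrm{Adm}}(p)$ and then returns $V(M_{\mathrm{Adm}}(p))$. This $D_\Phi$ is a single fixed DTM, because $M_{\mathrm{Adm}}$ and $V$ are fixed (Uniformity Per-Input).

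The first step is to show that $D_\Phi$ is total. This uses the Termination clause of the Admissible Method definition, which states that generation and verification halt for \emph{every} $p \in \mathcal{P}$, not only for $p \in \Phi$. The second step is to show that $D_\Phi$ is correct. This follows from the two clauses of the uniform certification problem as stated just before the proposition: $D_\Phi(p)=1$ for $p\in\Phi$ and $D_\Phi(p)=0$ for $p\notin\Phi$. Hence $D_\Phi$ decides $\mathrm{Lang}(\Phi)$. Since $\Phi$ is non-trivial and semantic (closed under $\varphi_p=\varphi_{p'}$), the argument of Appendix~\ref{app:rice} turns $D_\Phi$ into a decider for the halting problem, which is a contradiction. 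Equivalently, one may note that $D_\Phi$ would be a finite prefix-independent separator, contrary to Theorem~\ref{thm:rice-top}; the Rice route is cleaner, and I would use it.

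The third step concerns the clause ``even when $\Phi$ is true of some $p$ and provably so in a fixed sound extension of PA.'' Here I will observe that this clause only adds information about individual instances. Rice's theorem quantifies over all inputs and is insensitive to whether particular instances are provable. Concretely, one can even have a sound r.e.\ theory $T$ proving $\Phi(p_0)$ for some $p_0$. Enumerating $T$-proofs yields only a semi-decision procedure for the $T$-provable members of $\Phi$. It fails to be total on non-members, so it does not satisfy Termination and does not yield an admissible pair. This explains why provability of instances does not contradict the proposition.

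The main obstacle I expect is definitional rather than technical. The phrase ``solves the uniform certification problem'' must be read as giving a two-sided answer with totality. If one adopts instead the one-sided Completeness clause of the Admissible Method definition (existence of some accepted $\pi_x$, without $M_{\mathrm{Adm}}$ necessarily producing it), then non-trivial semantic properties that are r.e.\ (for example ``$\varphi_p(0)$ halts'') \emph{do} admit sound and complete certificate systems. In that setting the claim would fail. I would therefore fix explicitly the reading in which $M_{\mathrm{Adm}}$ is total and $V$ outputs $0$ on non-members, and state that this is the reading under which the proposition holds.
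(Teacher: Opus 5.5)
Your argument is correct, and it is more direct than the paper's. The paper does not build the composite machine itself. It cites Theorem~\ref{thm:db}: a uniform admissible certifier would be a total decider ``via the propagator'', contradicting Rice (Horn~1), or else would compress $\mathrm{desc}(\Phi)$ below its Kolmogorov complexity (Horn~2). It then closes, as your third step does, by noting that instance-level provability is not a uniform admissible procedure.

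Your route needs only four things: the Termination clause, the two-sided definition of the uniform certification problem, the fact that $M_{\mathrm{Adm}}$ and $V$ are algorithms, and classical Rice (Appendix~\ref{app:rice}). It therefore proves the proposition at its stated generality, for every non-trivial semantic $\Phi$. The paper's route has two weaknesses by comparison:
\begin{enumerate}
\item It imports a propagator hypothesis that the proposition does not state, since Theorem~\ref{thm:db} assumes one.
\item Its Horn~2 branch rests on Lemma~\ref{lem:incompressible}. That lemma's proof establishes only that $\mathrm{desc}(\Phi)$ is uncomputable, which does not give $K(\mathrm{desc}(\Phi)\upharpoonright n)=\Omega(n)$. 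The conclusion is in fact false for r.e.\ index sets such as $\{p : \varphi_p(0) \text{ halts}\}$, where $K(\mathrm{desc}(\Phi)\upharpoonright n)=O(\log n)$.
\end{enumerate}
What the paper's framing buys is the placement of the result inside the Double Bind narrative. Logically, the Rice step you isolate is the entire proof.

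Your closing caveat is also right and worth keeping. Under the one-sided $\exists w$ reading used in Theorem~\ref{thm:horn2}, such r.e.\ properties do admit sound, complete, terminating verifiers. So the two-sided reading, with the certificate generated by $M_{\mathrm{Adm}}$, is essential.

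One correction: drop the ``equivalently'' aside via Theorem~\ref{thm:rice-top}. A total decider need not accept an open set of the prefix topology. For example, the set of even-length strings is decidable, yet every ball $B_\sigma$ meets both it and its complement. So empty interior says nothing against decidability, and the paper itself remarks that Horn~1 only blocks verifiers that read prefixes. Moreover, the padding construction in the proof of Theorem~\ref{thm:rice-top} actually makes both $\Phi_\alpha$ and $\Phi_\beta$ dense. What holds is therefore empty interior, not nowhere density.
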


\begin{proof}
By the Double Bind theorem (Theorem~\ref{thm:db}), any 
admissible pair that certified $\Phi$ uniformly would 
constitute a total decider for $\Phi$ via the propagator 
$\pi$, contradicting Rice's theorem (Horn~1) or producing a 
compression of $\mathrm{desc}(\Phi)$ below its Kolmogorov 
complexity (Horn~2). Neither branch depends on the truth value 
of $\Phi$ for any particular input. The provability of $\Phi$ 
for specific instances is consistent with the proposition: 
local proofs do not constitute a uniform admissible 
procedure.
\end{proof}

\begin{remark}
Proposition~\ref{prop:nonmeasurable} implies that the 
question ``is $\Phi$ true of $p$?'' and the question ``does 
a uniform admissible method certify $\Phi$?'' are 
independent. Answering the first affirmatively for every $p$ 
in some enumerable subset of $\Phi$ does not constitute 
progress on the second. The two questions live in different 
logical spaces.
\end{remark}

\subsection{Consequence 2: Complexity classes are the wrong
framework for semantic verification}
\label{sec:cons2}

The use of prefix-closure topology in Horn~1 is not a 
presentational choice. It exposes a structural fact about the 
space of programs that has a direct consequence for how 
semantic verification should be classified.
\smallskip

A semantic property $\Phi$ is nowhere dense in the 
prefix-closure topology on $\mathcal{P}$. This means: for any 
open set $U$ in the topology --- that is, for any set of 
programs sharing a finite initial computation segment --- $U$ 
is not contained in $\Phi$. Equivalently, between any two 
programs that satisfy $\Phi$ one can always insert a program 
that does not, and vice versa. There is no open region of 
programs all of which satisfy $\Phi$.
\smallskip

The immediate consequence for certification is this: any 
method that certifies $\Phi$ for a program $p$ is working at 
an isolated point. The certification cannot be extended to any 
topological neighbourhood of $p$, no matter how small. Every 
positive instance of $\Phi$ is surrounded by negative 
instances, densely and without gaps. Each certification is, in 
the strict topological sense, an isolated point in 
$\mathcal{P}$.
\smallskip

Now observe what complexity classes assume. A problem is in 
NP, PSPACE, or any class in the polynomial hierarchy if there 
exists a verifier (or sequence of verifiers) operating on a 
space that admits regular structure: the yes-instances and 
no-instances can be separated by boundaries that a resource-
bounded machine can locate. That separation presupposes that 
the space has enough topological regularity for the boundary 
to be navigable. Semantic properties of programs, being 
nowhere dense in the prefix-closure topology, do not satisfy 
this presupposition. The boundary between $\Phi$ and its 
complement is not a navigable surface --- it is dense 
everywhere.
\smallskip

This means that the complexity-theoretic framework is not 
merely too weak to capture semantic verification. It is 
\emph{categorically inadequate}: the framework assumes 
topological regularity that the problem structurally lacks. 
Asking whether semantic verification is in NP or PSPACE is 
analogous to asking the Lebesgue measure of a Vitali set --- 
the question is well-formed syntactically but the instrument 
does not apply.

\paragraph{Formal statement.}

Let $(\mathcal{P}, \tau_{\mathrm{pref}})$ denote the space of 
programs equipped with the prefix-closure topology, where a 
basic open set is $[w] = \{p \in \mathcal{P} : w \sqsubset 
p\}$ for a finite computation prefix $w$.

\begin{proposition}[Nowhere density of semantic properties]
\label{prop:nowhereDense}
Let $\Phi \subseteq \mathcal{P}$ be a non-trivial semantic 
property. Then $\Phi$ is nowhere dense in 
$(\mathcal{P}, \tau_{\mathrm{pref}})$: for every basic open 
set $[w]$, $[w] \not\subseteq \Phi$ and 
$[w] \not\subseteq \mathcal{P} \setminus \Phi$.
\end{proposition}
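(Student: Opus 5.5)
The plan is to prove directly the two-sided condition displayed after the colon, namely $[w]\not\subseteq\Phi$ and $[w]\not\subseteq\mathcal{P}\setminus\Phi$ for every finite prefix $w$. I would reuse the padding construction from the proof of Theorem~\ref{thm:rice-top}, run once in each direction, rather than go through Rice's theorem or the Trifold Equivalence. Non-triviality of $\Phi$ gives two programs, $M_1\in\Phi$ and $M_2\notin\Phi$. Semantic invariance says that $\Phi(M)=\Phi(M')$ whenever $\varphi_M=\varphi_{M'}$. So it suffices to show that every basic open set $[w]$ contains a program computing $\varphi_{M_1}$ and a program computing $\varphi_{M_2}$.

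The key steps, in order, are as follows.

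\textbf{Step 1 (the extension lemma).} Fix an encoding convention under which every finite string is a prefix of some valid program encoding. Under that convention, show that for every $w$ and every program $M$ there is a program $M^{(w)}$ with $w\sqsubseteq\langle M^{(w)}\rangle$ and $\varphi_{M^{(w)}}=\varphi_M$. The construction is: first extend $w$ to a syntactically complete block $w'$ whose states are unreachable, or which is a header that is executed and then discarded. Then append a relabelled copy of $M$, with the start state redirected into that copy. Dead code does not change the computed partial function.

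\textbf{Step 2.} Apply Step~1 with $M=M_2$. This gives $M_2^{(w)}\in[w]$ with $\varphi_{M_2^{(w)}}=\varphi_{M_2}$, hence $\Phi(M_2^{(w)})=\Phi(M_2)=\mathrm{F}$. Therefore $[w]\not\subseteq\Phi$.

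\textbf{Step 3.} Apply Step~1 with $M=M_1$. Symmetrically this gives $M_1^{(w)}\in[w]\cap\Phi$, so $[w]\not\subseteq\mathcal{P}\setminus\Phi$.

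\textbf{Step 4.} Since $w$ was arbitrary, both claims hold for all basic open sets, and this is exactly the displayed statement. No propagator and no Kolmogorov argument is needed; only non-triviality and extensionality are used. In particular the proof never relies on the computational content of $\Phi$ itself.

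The main obstacle is Step~1, specifically the requirement that an \emph{arbitrary} bit string $w$ extend to a valid program. A standard Gödel encoding may contain prefixes that cannot be completed into well-formed encodings. I would handle this first by a convention: every string in $\{0,1\}^*$ denotes a machine, with malformed strings parsed up to a delimiter and the remainder read as the actual program. This is consistent with taking $\mathcal{P}=\{0,1\}^*$ as the space of all encodings. Failing that, I would restrict the balls to \emph{valid} prefixes, as in Definition~\ref{def:topology}.

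I would also flag a point of terminology. What is proved is that every ball meets both $\Phi$ and its complement, which is the content asserted after the colon. This two-sidedness should not be silently identified with nowhere density of $\Phi$ in the closure sense. I would state plainly that the displayed condition is what the proposition establishes.
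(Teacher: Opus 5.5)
Your proof is correct under the encoding convention you adopt. Its key step is the same fact the paper's proof relies on: every cone $[w]$ contains programs computing any prescribed partial recursive function. You differ from the paper in two ways: you close the argument directly by extensionality, and you make the encoding assumption explicit.

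The paper instead opens with Rice's theorem ($\Phi$ undecidable, hence not clopen). It then asserts that $[w]\subseteq\Phi$ would make $\Phi$ ``decidable on an infinite r.e.\ set by prefix alone, contradicting non-triviality''. Taken literally, this contradicts neither non-triviality nor Rice, because $\Phi$ being constantly true on one decidable cone is compatible with both. The step that actually produces the contradiction is your Steps~2--3. There $[w]$ contains a program extensionally equal to $M_2$ (resp.\ $M_1$), and semantic invariance places it outside (resp.\ inside) $\Phi$.

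So the appeal to Rice does no work here, and your route is both more elementary and the rigorous form of the paper's argument. Your convention is exactly what the paper's unexplained qualifier ``every partial recursive function extensible from $w$'' leaves implicit. Keep that convention and drop the fallback. The real danger is not that some prefixes cannot be completed, but that a prefix can pin the machine down. For a prefix-free encoding, a complete code word $w$ is a valid prefix whose cone contains no other valid encoding. Restricting to valid prefixes therefore does not rescue the statement.

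Finally, sharpen your terminological caveat. The condition $[w]\not\subseteq\mathcal{P}\setminus\Phi$ for every $w$ says precisely that $\Phi$ is dense. Hence $\overline{\Phi}=\mathcal{P}$, and $\Phi$ is \emph{not} nowhere dense in the sense of Definition~\ref{def:nowhere-dense}. What both you and the paper actually establish is that $\Phi$ and its complement are both dense. The proposition's label is wrong, not merely loose.
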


\begin{proof}
By Rice's theorem, $\Phi$ is not decidable, hence not 
clopen. Suppose for contradiction that some $[w] \subseteq 
\Phi$. Then every program whose computation begins with $w$ 
satisfies $\Phi$. Since the extension of any finite prefix 
$w$ includes programs computing every partial recursive 
function extensible from $w$, this would make $\Phi$ 
decidable on an infinite r.e.\ set by prefix alone, 
contradicting non-triviality. The symmetric argument holds 
for $\mathcal{P} \setminus \Phi$.
\end{proof}

\begin{corollary}[Categorical inadequacy of complexity classes]
\label{cor:complexity}
No complexity class defined by resource-bounded uniform 
verification over a topologically regular instance space 
contains the uniform semantic verification problem for a 
non-trivial $\Phi$. The problem is not hard in the sense of 
NP-hardness or PSPACE-hardness; it is outside the 
topological presuppositions of the classification.
\end{corollary}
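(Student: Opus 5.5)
The plan is to reduce the corollary to a decidability statement, so that it follows from Rice's theorem (Appendix~\ref{app:rice}) together with Proposition~\ref{prop:nowhereDense}. The statement is informal, so I will first fix what its terms mean.

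The \emph{uniform semantic verification problem} for $\Phi$ will be the language $\Lang{\Phi} = \{\langle M\rangle \in \mathcal{P} : \Phi(M)\}$, viewed as a set of program encodings. A \emph{complexity class defined by resource-bounded uniform verification} will be any class $\mathcal{C}$ of languages $L$ of the following form. There is a fixed DTM $V$ and a computable bound $b$ such that $x \in L$ iff some certificate $w$ with $|w| \le b(|x|)$ satisfies $V(x,w)=1$. Moreover, $V$ halts within a computable bound $t(|x|,|w|)$. This covers $\classP$, $\classNP$, every level of the polynomial hierarchy (by iterating bounded quantifiers) and $\mathrm{PSPACE}$ (via its bounded alternating characterization). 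I will read \emph{topologically regular instance space} as the requirement built into these definitions: membership of $x$ is settled by a finite, computably bounded inspection of $x$ and of a bounded certificate.

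\textbf{Step 1.} Show that every $L \in \mathcal{C}$ is decidable. On input $x$, enumerate all certificates $w$ with $|w|\le b(|x|)$, and run $V(x,w)$ for at most $t(|x|,|w|)$ steps on each. Accept iff some run accepts. This is a total procedure, and it is the same composition used to build $D_\Phi$ in Branch~I and in Subcase~2a of Theorem~\ref{thm:double-bind-tricotomy}.

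\textbf{Step 2.} Invoke Rice's theorem: since $\Phi$ is non-trivial and semantic, $\Lang{\Phi}$ is undecidable. Together with Step~1 this gives $\Lang{\Phi}\notin\mathcal{C}$ for every such $\mathcal{C}$.

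\textbf{Step 3.} Obtain the "not hard, but outside the presuppositions" clause. NP-hardness or PSPACE-hardness of $\Lang{\Phi}$ under many-one reductions is perfectly compatible with undecidability; indeed the halting set is hard for every decidable class. So I will state the second sentence of the corollary in the following precise form. The obstruction to membership comes from the failure of the finiteness presupposition used in Step~1, not from any hardness lower bound. Proposition~\ref{prop:nowhereDense} then supplies the topological reading: no basic open $[w]$ lies inside $\Phi$ or inside its complement. Hence no rule that inspects a finite prefix alone can settle membership, as in Theorem~\ref{thm:horn1}.

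The main obstacle is Step~3 and the definitional choice that precedes it. The phrase "topologically regular" has no formal definition in the paper, so the honest content of the corollary reduces to Steps~1 and~2. Any stronger topological claim must be justified through Proposition~\ref{prop:nowhereDense}, whose argument depends on the encoding. My first attempt at that justification would be to fix an encoding in which every finite prefix can be padded into a program computing any prescribed partial function. That padding is exactly what the construction of $M^*$ in Theorem~\ref{thm:rice-top} uses. Under such an encoding both sides of $\Phi$ meet every ball, and the topological formulation follows.
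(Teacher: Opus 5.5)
Your argument is correct, but it does not follow the paper's route. The paper gives the corollary no separate proof. It draws it from Proposition~\ref{prop:nowhereDense} and the surrounding discussion, and in the parallel Corollary~\ref{cor:categorical-inadequacy} it argues the same way: complexity classes presuppose a ``topologically regular'' space whose yes/no boundary a resource-bounded machine can navigate; semantic properties have no basic open set inside either side; so no resource bound yields a uniform separator.

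You make decidability the load-bearing notion instead. Bounded verification with computable bounds yields a total decider, Rice rules one out, and topology is only a gloss. That choice is what makes your version a proof, because the paper's step from topology to complexity does not go through. The property in Proposition~\ref{prop:nowhereDense} also holds for trivially polynomial-time languages, for example the set of strings ending in $1$, since $\sigma 0$ and $\sigma 1$ both lie in $B_\sigma$. So the absence of basic open sets on either side cannot be what excludes $\Lang{\Phi}$ from $\classP$, $\classNP$ or $\mathrm{PSPACE}$. Your remark that the honest content reduces to Steps~1 and~2 is exactly right.

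Present the topological reading explicitly as an interpretation, not a step, and do not call it nowhere density. Under your padding encoding both sides meet every ball, so both are dense, which is the opposite of nowhere dense; the correct statement is that both have empty interior.

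You can also sharpen Step~3. $\Lang{\Phi}$ is in fact $\classNP$-hard and $\mathrm{PSPACE}$-hard under polynomial-time many-one reductions. For any decidable $L$, send $y$ to the program ``run the decider for $L$ on $y$, then behave like a fixed program $e$ or diverge''. Choose $e$ and the branch according to whether the everywhere-undefined function satisfies $\Phi$. This map is computable in polynomial time and reduces $L$ to $\Lang{\Phi}$. So the literal claim ``not hard in the sense of NP-hardness'' is false, and your reinterpretation is forced rather than optional.

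Finally, your single-existential template does not literally express $\Sigma_k^p$ or $\mathrm{PSPACE}$. Step~1 needs only that every language in these classes is decidable, which is immediate.
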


\subsection{Consequence 3: No finite formalism can uniformly
describe a non-trivial semantic property}
\label{sec:cons3}

This is the deepest consequence, and the one least visible 
from the statement of the main theorem.
\smallskip

Horn~2 establishes that any admissible uniform certifier for 
$\Phi$ would compress $\mathrm{desc}(\Phi)$ below its 
Kolmogorov complexity. The object $\mathrm{desc}(\Phi)$ is 
the characteristic sequence of $\Phi$ over a standard 
enumeration of programs: $\mathrm{desc}(\Phi)_i = 1$ iff the 
$i$-th program satisfies $\Phi$. By Rice's theorem and the 
incomputability of the halting problem, this sequence is not 
computable, and by a standard Kolmogorov argument it is 
incompressible: $K(\mathrm{desc}(\Phi)) = \infty$ in the 
sense that no finite program produces it, i.e.\ its 
complexity grows without bound with the length of the initial 
segment.
\smallskip

Now observe what a finite formalism does. A type system, a 
Hoare logic, a denotational semantics, a modal program logic 
--- each of these is a finite syntactic object (finitely many 
rules, axioms, and inference steps) that is intended to 
characterise membership in some property. If such a formalism 
characterised $\Phi$ uniformly --- that is, if it accepted 
exactly the programs in $\Phi$ and rejected all others --- 
then the formalism would constitute a finite description of 
$\mathrm{desc}(\Phi)$. A finite object generating an 
incompressible infinite sequence. That is precisely the 
compression forbidden by Horn~2.
\smallskip

The conclusion is not that every such formalism is 
unsound, or that every such formalism fails on every input. 
The conclusion is structural: \emph{no finite formalism can 
be simultaneously sound, complete, and uniform with respect 
to a non-trivial semantic property $\Phi$}. Incompleteness 
here is not a defect of the formalism chosen. It is forced 
by the informational content of $\Phi$ itself.
\smallskip

This is orthogonal to Gödel's incompleteness theorems. Gödel 
says: in every sufficiently expressive consistent system, 
there are true statements that are not provable. The Double 
Bind says: \emph{there is no sufficiently expressive system} 
for uniform semantic description, because the information 
required to describe $\Phi$ uniformly exceeds any finite 
budget. Gödel's result is about provability within a fixed 
system. This result is about the existence of any adequate 
system.

\paragraph{Formal statement.}

Let $\mathcal{L}$ be any formal system consisting of a finite 
set of axioms $A$ and inference rules $R$, equipped with a 
judgment $\mathcal{L} \vdash \phi(p)$ intended to express 
``$p$ satisfies $\Phi$''.

\begin{proposition}[Formal incompleteness from incompressibility]
\label{prop:formalIncompleteness}
Let $\Phi$ be a non-trivial semantic property with 
$K(\mathrm{desc}(\Phi)\upharpoonright n) = \Omega(n)$. 
Then for every finite formal system $\mathcal{L}$, at least 
one of the following holds:
\begin{enumerate}[label=(\roman*)]
    \item \textbf{Unsoundness:} there exists $p \notin \Phi$ 
    such that $\mathcal{L} \vdash \phi(p)$;
    \item \textbf{Incompleteness:} there exists $p \in \Phi$ 
    such that $\mathcal{L} \not\vdash \phi(p)$;
    \item \textbf{Non-uniformity:} $\mathcal{L}$ does not 
    decide $\phi(p)$ for all $p$ (i.e., for some $p$, 
    $\mathcal{L}$ neither proves nor refutes $\phi(p)$, and 
    the derivation procedure does not terminate).
\end{enumerate}
\end{proposition}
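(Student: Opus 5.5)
The plan is a contrapositive compression argument, modelled on the proof of Corollary~\ref{cor:formal-incompleteness} and Theorem~\ref{thm:horn2}. Assume, for contradiction, that a finite formal system $\mathcal{L}=(A,R)$ satisfies the negations of all three clauses. That is:
\begin{enumerate}[label=(\alph*)]
\item it is sound: $\mathcal{L}\vdash\phi(p)\Rightarrow p\in\Phi$;
\item it is complete: $p\in\Phi\Rightarrow\mathcal{L}\vdash\phi(p)$;
\item it is uniform: for every $p$ it proves either $\phi(p)$ or $\neg\phi(p)$, by a derivation search that terminates.
\end{enumerate}
From these three properties I will build a single short program that prints initial segments of $\mathrm{desc}(\Phi)$. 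Its length will violate the hypothesis $K(\mathrm{desc}(\Phi)\upharpoonright n)=\Omega(n)$.

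\textbf{Step 1 (a decider from $\mathcal{L}$).} Since $A$ and $R$ are finite, the set of derivations of $\mathcal{L}$ is recursively enumerable by a fixed procedure $E_{\mathcal{L}}$ whose length is $|\mathcal{L}|+O(1)$. On input $p$, define $D(p)$ to dovetail through all derivations until it finds one concluding $\phi(p)$ or one concluding $\neg\phi(p)$, and to output $1$ or $0$ accordingly. By uniformity (c), $D$ halts on every $p$.

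\textbf{Step 2 (correctness of $D$).} If $D(p)=1$, soundness (a) gives $p\in\Phi$. If $D(p)=0$, a refutation of $\phi(p)$ was found. By (c), $\mathcal{L}$ does not both prove and refute $\phi(p)$, so no proof of $\phi(p)$ exists, and completeness (b) then forces $p\notin\Phi$. Hence $D$ decides $\Phi$ exactly.

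\textbf{Step 3 (compression).} Let $q_n$ be the program ``for $i=0,\dots,n-1$, compute $\langle M_i\rangle$ from the standard enumeration and print $D(\langle M_i\rangle)$.'' It has length $|\mathcal{L}|+O(1)+O(\log n)$, the $O(\log n)$ term encoding $n$. Therefore
\[
K(\mathrm{desc}(\Phi)\upharpoonright n)\le |\mathcal{L}|+O(\log n),
\]
which is $o(n)$. This contradicts the assumed $\Omega(n)$ bound. The Invariance Theorem (Theorem~\ref{thm:invariance}) makes the conclusion independent of the reference universal machine. Consequently at least one of (i)--(iii) must hold for $\mathcal{L}$.

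The main obstacle is Step~2, and specifically the reading of clause (iii). Completeness as stated only concerns positive instances. The correctness of the answer $0$ therefore depends on interpreting ``$\mathcal{L}$ decides $\phi(p)$'' as deciding it \emph{consistently}, so that no $p$ has both $\mathcal{L}\vdash\phi(p)$ and $\mathcal{L}\vdash\neg\phi(p)$. My first choice is to build this consistency into the negation of (iii). Should that reading be contested, the fallback is to take soundness to cover refutations as well, i.e.\ $\mathcal{L}\vdash\neg\phi(p)\Rightarrow p\notin\Phi$, which yields correctness of the answer $0$ immediately.

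Finally, I will remark that Step~1 together with Step~2 already yields a total decider for $\Phi$. This contradicts Rice's theorem (Appendix~\ref{app:rice}) directly, so the Kolmogorov hypothesis is strictly stronger than what the contradiction needs. This is consistent with Remark~\ref{rem:honest}, since any computable sequence has $K(\cdot\upharpoonright n)=O(\log n)$. Step~3 is included only to phrase the argument in the informational form of Horn~2.
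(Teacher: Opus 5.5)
Your route is the paper's: assume $\mathcal{L}$ is sound, complete and uniform, turn its derivation search into a total procedure printing $\mathrm{desc}(\Phi)$, and contradict $K(\mathrm{desc}(\Phi)\upharpoonright n)=\Omega(n)$. Your bound $|\mathcal{L}|+O(\log n)$ is the correct one. The paper's $|\mathcal{L}|+O(1)$ omits the cost of specifying $n$, although the contradiction survives either way. The paper's proof also passes over in silence the point you isolate in Step~2.

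That point is a genuine gap as written. Under the literal negation of (i)--(iii), nothing forbids $\mathcal{L}\vdash\phi(p)$ and $\mathcal{L}\vdash\neg\phi(p)$ for some $p\in\Phi$. In that case your dovetailed $D$ can output $0$ on a positive instance, so the sentence ``By (c), $\mathcal{L}$ does not both prove and refute $\phi(p)$'' does not follow from (c). You do not need to re-read the statement to repair it, though.

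\emph{If $\mathcal{L}$ validates ex falso} (as any classical or intuitionistic system does), an inconsistent $\mathcal{L}$ would prove $\phi(q)$ for the $q\notin\Phi$ supplied by non-triviality. So soundness (a) already gives the consistency Step~2 needs.

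\emph{For an arbitrary finite rule system}, bypass $D$ altogether. Clauses (a) and (b) alone say $\Phi=\{p:\mathcal{L}\vdash\phi(p)\}$, so $\Phi$ is computably enumerable. Given $n$ and the number $k$ of $i<n$ with $M_i\in\Phi$, enumerate the theorems of $\mathcal{L}$ until $k$ distinct indices $i<n$ with $\mathcal{L}\vdash\phi(\langle M_i\rangle)$ have appeared, then print the corresponding string. This gives $K(\mathrm{desc}(\Phi)\upharpoonright n)\le|\mathcal{L}|+O(\log n)$ without using (iii) or consistency at all. Hence the literal proposition holds even with clause (iii) deleted.

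This also qualifies your closing remark. When consistency is available, Rice indeed suffices and the Kolmogorov hypothesis is idle. In the general case, however, a sound and complete $\mathcal{L}$ only makes $\Phi$ c.e., which Rice does not exclude (e.g.\ $\{x:\varphi_x(0)\text{ halts}\}$). There it is precisely the incompressibility hypothesis that does the work.
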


\begin{proof}
Suppose $\mathcal{L}$ is sound, complete, and uniform for 
$\Phi$. Then the derivation procedure of $\mathcal{L}$ 
constitutes a total computable function that, on input $i$, 
outputs $\mathrm{desc}(\Phi)_i$. This function is computed 
by a finite program (the encoding of $\mathcal{L}$), 
producing $\mathrm{desc}(\Phi)\upharpoonright n$ for all $n$ 
with description length $|\mathcal{L}| + O(1)$, a constant 
independent of $n$. This contradicts 
$K(\mathrm{desc}(\Phi)\upharpoonright n) = \Omega(n)$. 
\end{proof}

\begin{remark}
Proposition~\ref{prop:formalIncompleteness} subsumes the 
incompleteness of every known program verification formalism 
--- Hoare logic, separation logic, dependent type systems, 
temporal logics --- with respect to any non-trivial semantic 
property, not as a consequence of their specific design 
choices, but as a consequence of the informational structure 
of the property itself. The adequacy gap is not closable by 
adding axioms or extending the type theory, because the 
obstacle is not expressiveness but information: the 
description of $\Phi$ contains strictly more information 
than any finite formal system can encode.
\end{remark}

\section{Closing Speculative remarks}

\begin{remark}[Regulatory Implications of the Double Bind]
One consequence of the Double Bind that warrants systematic analysis concerns
the relationship between our formal result and the current landscape of AI safety
frameworks. The Double Bind establishes that no admissible method can uniformly
certify a non-trivial semantic property $\Phi$ of a Turing-equivalent model,
provided that $\Phi$ is a \emph{propagating} property --- that is, one whose
membership is not determined by any finite local observation of the model's
structure, but depends on the global behavior of the computation. For such
properties, any certification attempt is blocked either by the topological
obstruction --- $\Phi$ is nowhere dense in the prefix-closure topology, so no
finite observation of a program's behavior can confirm membership --- or by the
Kolmogorov complexity obstruction --- the characteristic sequence of $\Phi$ is
incompressible, so no finite certificate, however constructed, can encode the
property uniformly across all inputs. A third escape route, appealing to models
that are not Turing-equivalent, is closed by the observation that results
obtained outside the standard Turing-equivalent class do not transfer back to it.
\smallskip

This restriction is not a technicality. The safety properties that regulatory
frameworks and AI safety research actually care about --- alignment with human
values, absence of deceptive behavior, robustness to distributional shift,
refusal to assist with harmful tasks --- are precisely propagating in this sense:
they are defined by what a model \emph{does} across a range of inputs and
contexts, not by any inspectable feature of its weights or architecture in
isolation. It is therefore this class of properties, and not syntactic or purely
structural ones, that falls within the scope of the Double Bind.
\smallskip

The most prominent current approaches to AI safety --- including RLHF,
constitutional AI, mechanistic interpretability, systematic red-teaming, and
formal verification of neural networks --- appear, by their operational
structure, to fall each into one of the three classes identified by the Double
Bind, insofar as they aim to certify propagating semantic properties of
Turing-equivalent models. Should this mapping be confirmed formally, the Double
Bind would guarantee that each such framework is either blocked by one of the two
horns, or else depends on assumptions that do not transfer to the standard
computational setting.
\smallskip

The conjecture that follows --- which we leave as an open question --- is that
any institution or organization claiming to \emph{certify} the safety of an AI
system with respect to propagating semantic properties, in a uniform sense, is
making a promise that is structurally impossible to keep: not for lack of
computational or methodological resources, but by a first-order logical
constraint inherent to the nature of semantic verification itself.
\smallskip

If confirmed, this conjecture would extend the significance of the Double Bind
beyond computability theory into a regulatory argument: any mandatory AI safety
certification regime that demands uniform guarantees over propagating semantic
properties would --- by the Double Bind --- be demanding something structurally
unachievable. The open question is therefore whether existing regulatory
frameworks, formulate their requirements in terms that
fall within this class. This analysis remains to be carried out.

\medskip

\textbf{Important caveat:} The author is not an AI safety specialist. 
The mapping of current AI safety frameworks (RLHF, constitutional AI, 
mechanistic interpretability, red-teaming, formal verification) to the 
the Double Bind is therefore left as an open problem for 
domain experts to verify or refute. 

\smallskip

However, the structural argument itself is model-agnostic: 
\emph{if} a safety property is propagating in the sense defined here, 
\emph{then} the Double Bind applies, regardless of the specific 
implementation details of any certification framework. The question is 
not whether the mathematics holds, but whether the AI safety properties 
that institutions actually care about fall into this class — a question 
that requires technical expertise this work does not claim.
\end{remark}

\begin{remark}[The General Impossibility: Beyond AI]

The Double Bind is not fundamentally a statement about computation 
or artificial intelligence. It is a statement about the structure of 
governance itself.

\smallskip

Any institution seeking to govern an autonomous system (whether computational, 
biological, social, or economic) that exhibits non-trivial semantic properties 
faces an unavoidable trichotomy: it must sacrifice either external certainty 
(ability to prove compliance), genuine autonomy (freedom of the system), 
or functional guarantees (assurance of correct behavior).

\smallskip

This is not a failure of current methods. It is a logical fact about 
the nature of semantic measurability.

\smallskip

The implication for governance is not "try harder to regulate." 
It is: "acknowledge the constraint, and design institutions that operate 
within this impossibility rather than denying it."

\smallskip

This work began with a question about uniform certification. But in the process 
of formalizing the obstacle — the Double Bind — we discovered something 
that transcends DTM and AI entirely. The result is really about the governance 
of any autonomous system with non-trivial semantic properties. 

\smallskip

The implications extend to law, organizational design, market regulation, 
and society itself. These domains are beyond the author's expertise. 
But they may be precisely where the insight is most consequential.
\end{remark}

\subsection{The closure}

What is presented here will be explained and demonstrated autonomously 
in a subsequent article.

\smallskip

\textbf{Theorem of True Impossibility:}

For a non-trivial semantic property $\Phi$ on the program space $\mathcal{P}$:

$$\text{If } \Phi_\alpha \text{ is topologically separable from } \Phi_\beta$$

$$\text{then } K(\mathrm{desc}(\Phi)) = O(1) \text{ (compressible)}$$

$$\text{But if } \Phi \text{ is non-trivial, this contradicts incompressibility}$$

$$\text{Therefore } \Phi_\alpha, \Phi_\beta \text{ must be nowhere-dense}$$

\smallskip

This constitutes a constraint on the geometric structure of semantic properties 
within the program space.

It is equivalent to saying that ``determining whether a property is non-trivial'' 
is itself non-trivial. Therefore, it is not uniformly solvable.

In fact, to determine whether a property $\Phi$ is non-trivial, you must verify:

\smallskip

\textbf{Does there exist a program $p_1$ that satisfies $\Phi$ and a program $p_2$ that does not satisfy $\Phi$?}

\smallskip

But to answer, you must already know which property you are examining.

Verifying whether $\Phi$ is non-trivial is equivalent to solving the decision problem:

\smallskip

$$\text{Does there exist } p : p \models \Phi \text{ and does there exist } q : q \not\models \Phi$$

\smallskip

This is undecidable in general for semantic properties.

If you could uniformly decide whether a property is non-trivial, you would have a finite algorithm $A$ that takes as input the definition of $\Phi$ and responds: ``is it trivial or non-trivial?'' But deciding non-triviality means deciding whether $\Phi$ is capable of separating at least two programs. This separation requires verifying semantic properties of programs, which is undecidable by Rice's theorem. Therefore, you cannot possess a uniform $A$. This means that the Double Bind framework cannot even be uniformly applied, because you cannot even verify which property is ``subject to'' the Double Bind. Thus, the Double Bind itself is meta-inaccessible.

This introduces an infinite hierarchy of undecidability:

\begin{itemize}
    \item \textbf{Level 0:} Deciding whether $p \models \Phi$ (undecidable for non-trivial $\Phi$)
    
    \item \textbf{Level 1:} Deciding whether $\Phi$ is non-trivial (undecidable)
    
    \item \textbf{Level 2:} Deciding whether the property of being ``non-trivial'' is itself a non-trivial property (undecidable)
    
    \item \textbf{Level 3:} \ldots
\end{itemize}

Each level reduces to the previous level; therefore, undecidability is an intrinsic feature of the very capacity to recognize which problems are problematic. It is not that SAT is difficult. Rather, it is that recognizing that something is difficult is itself difficult.

\subsection{The Metaproblem: Non-Triviality of Non-Triviality}

Suppose we have a candidate semantic property $\Phi$. The question is: \textit{``Is $\Phi$ non-trivial?''}

Formally: \textit{``Do there exist machines $M_1, M_2$ such that $\Phi(M_1) = \mathrm{T}$ and $\Phi(M_2) = \mathrm{F}$?''}

Now, this is itself a second-order semantic property:
\begin{equation}
\Psi_{\text{nontrivial}}(\Phi) := \exists M_1, M_2 : \Phi(M_1) \neq \Phi(M_2)
\end{equation}

Now let us apply Rice to $\Psi$: The property ``being non-trivial'' is:

\begin{enumerate}
\item \textbf{Semantic:} $\Psi(\Phi_1) = \Psi(\Phi_2)$ if $\Phi_1$ and $\Phi_2$ have the same extension (the same set of machines for which they hold true).

\item \textbf{Non-trivial:}
   \begin{itemize}
   \item There exist trivial properties (e.g., $\Phi_{\text{false}} \equiv \mathrm{F}$ for all machines)
   \item There exist non-trivial properties (e.g., $\Phi_{\text{halt}}$ --- the halting problem)
   \item Therefore $\Psi$ has non-empty extension on both sides
   \end{itemize}

\item \textbf{Non-degenerate in extension:} $\Phi_{\text{false}}$ is trivial; $\Phi_{\text{halt}}$ is non-trivial; these are distinct.
\end{enumerate}

Therefore, by Rice's Theorem, the property $\Psi_{\text{nontrivial}}$ (determining whether a given $\Phi$ is non-trivial) is \textbf{undecidable}.

\textbf{There exists no universal DTM that, given $\langle \Phi \rangle$, can uniformly decide whether $\Phi$ is non-trivial or trivial.}

\begin{remark}[Why This Goes Beyond Rice's Theorem]

Rice's Theorem states: \textit{``Non-trivial semantic properties of machines are undecidable.''}

But now we know that \textbf{determining whether a property is semantic and non-trivial is also undecidable.}

\textbf{This is Rice applied to itself} --- a reflective recursion of the undecidability structure.

\end{remark}

And it is precisely this that entails the hierarchy of undecidability:

\begin{itemize}

\item \textbf{Level 0:} ``Deciding whether $M$ halts on input $x$'' --- undecidable (Halting Problem)

\item \textbf{Level 1:} ``Deciding whether a semantic property $\Phi$ of $M$ holds true'' --- undecidable (Rice)

\item \textbf{Level 1.5:} ``Deciding whether a semantic property $\Phi$ is non-trivial'' --- undecidable (Reflexive Rice)

\item \textbf{Level 2+:} ``Deciding whether a property of semantic properties is propagating'' --- ?

\end{itemize}

\textbf{Each metatheoretic leap introduces a new barrier of undecidability.}

\subsection{Ontological Implications}

Non-trivial property $\iff$ Non-measurable, but you cannot even uniformly decide which property is non-trivial. Therefore, you cannot even uniformly decide what is non-measurable. This means that non-measurability itself is non-measurable. It is a self-referential fixed point.

\smallskip

$$\text{Non-Triviality} \equiv \text{Non-Uniformly-Recognizable}$$

\smallskip

The essential property of anything meaningful is that you cannot even uniformly state when you possess it. This is the true root of undecidability.

And therefore, for an autonomous system with non-trivial property $\Phi$, you cannot simultaneously have Certainty, Autonomy, and Functionality. Furthermore, you cannot even uniformly know whether your system actually possesses a non-trivial property.

Thus, the governance trichotomy is rendered even more profoundly impossible. Not only can you not apply the guarantee. You cannot even know whether the situation in which you cannot apply the guarantee occurs.

\section{Concluding Observations and Open Questions}

The results derived in this work suggest that the Double Bind obstruction is not a mere 
artifact of computational complexity, but a manifestation of a deeper structural property 
of the program space $\mathcal{P}$.

\subsection{Structural and Geometric Implications}

The equivalence established between topological nowhere-density and Kolmogorov 
incompressibility implies that non-trivial semantic properties possess no robust 
geometric representation within $\mathcal{P}$. The absence of any basic open 
ball $B_\sigma$ contained within $\Phi_\alpha$ or $\Phi_\beta$ indicates that 
these sets are topologically intertwined at every scale.

\medskip

This structure is analogous to the distribution of rational and irrational numbers 
on the real line: no interval, however small, consists exclusively of one or the other. 
Consequently, the impossibility of uniform certification does not stem from a lack of 
algorithmic efficiency, but from the intrinsic non-measurability of the property. 
In this sense, the semantic property $\Phi$ behaves as a pathological object, akin 
to a Vitali set, rendering any attempt at a uniform partition of the program space 
mathematically untenable.

\subsection{Scope and Foundational Boundaries}

It is imperative to distinguish between the impossibility of uniform certification 
and the impossibility of logical proof. The barriers identified in this work apply 
specifically to the construction of uniform verifiers acting upon the program 
space $\mathcal{P}$. They do not preclude a resolution of the $P$ vs $NP$ question 
through foundational logical proofs. A proof that derives a fundamental contradiction 
from the assumption $P=NP$ would operate independently of the topological and 
informational constraints of $\mathcal{P}$, relying on deductive logic rather than 
the uniform certification of program behavior.

\medskip

At this point, a fundamental paradox emerges: if $\Phi$ is non-trivial, it follows 
that $\mathrm{desc}(\Phi)$ \textbf{must} be incompressible. Consequently, the property 
cannot be topologically separable; that is, there exists no open set $U$ capable of 
completely separating $\Phi_\alpha$ from $\Phi_\beta$. In every neighborhood of $U$, 
programs from both sets remain ``mixed together''. This specific formulation---``mixed 
together''---is not a mere linguistic choice, but a conceptual bridge that will 
allow us, in future work, to establish, in a more strong and different way, that 
determining whether a property is non-trivial is itself a non-trivial task.

\subsection{Open Questions and Future Directions}

The structural symmetry observed between topology and information theory suggests a 
broader mathematical framework. We propose the investigation of a duality of categories:

\begin{itemize}
    \item \textbf{Category $\mathcal{A}$ (Topological Space of Programs):}
    \begin{itemize}
        \item \textbf{Objects}: Subsets of $\mathcal{P}$.
        \item \textbf{Morphisms}: Continuous maps preserving the prefix-closure topology.
    \end{itemize}
    
    \item \textbf{Category $\mathcal{B}$ (Information Complexity):}
    \begin{itemize}
        \item \textbf{Objects}: Infinite binary sequences $\mathrm{desc}(\Phi)$.
        \item \textbf{Morphisms}: Complexity-bounded functions and compressions.
    \end{itemize}
\end{itemize}

We hypothesize the existence of a \textbf{contravariant functor} connecting topologically 
well-separated properties in $\mathcal{A}$ to highly compressible sequences 
in $\mathcal{B}$. For non-trivial semantic properties, this functor would be empty, 
reflecting the inherent incompatibility between semantic non-triviality and uniform 
compressibility.

This leads to a fundamental open question: \textit{Does there exist a unified theory 
of ``non-measurable structures'' that encompasses Topology, Information Theory, and 
Logic grounded in the fundamental concepts of the observational framework?} 

Further research into this duality, merged with the observational framework 
of \cite{buono2026hierarchy,buono2026observer}, may provide a new framework for 
understanding the limits of formal systems and the nature of semantic truth in 
computational spaces.

\medskip

A second open problem concerns the connection between Tarski's foundational work 
and the topological structure described in the second part \ref{sec:toread} of this text.

\end{document}